%%%%%%%%%%%%%%%%%%%%%%%%%%%%%%%%%%%%%%%%%%%%%%%%%%%%%%%%%%%%%%%%%%%%%%%
\documentclass[11pt,leqno,textwidth=10cm]{amsart}
\usepackage{amssymb}
\allowdisplaybreaks
\setcounter{tocdepth}{1}

\usepackage{mathrsfs}
\usepackage{mathtools}
\usepackage{abstract}
\usepackage[hidelinks]{hyperref}
\usepackage{marginnote}

\usepackage{etoolbox}
\makeatletter
\patchcmd{\@maketitle}{\newpage}{}{}{} 
\makeatother
\numberwithin{equation}{section}

\setlength{\hoffset}{-.75in}
\setlength{\textwidth}{6.5in}
\setlength{\voffset}{-.5in}
\setlength{\textheight}{9.0in}
\setlength{\parindent}{2em}

%%%%%%%%%%%%%%%%%%%%%%%%%%%%%%%%%%%%%%%%%%%%%%%%%%%%%%%%%%%%%%%%%%%%%%%

%%%%%%%%%%%%%%%%%%%%%%%%%%%%%%%%%%%%%%%%%%%%%%%%%%%%%

\newtheorem{thm}{Theorem}[section]
\newtheorem{lem}[thm]{Lemma}
\newtheorem{prop}[thm]{Proposition}
\newtheorem{corol}[thm]{Corollary}

\theoremstyle{definition}
\newtheorem{defn}[thm]{Definition}

\theoremstyle{remark}
\newtheorem{rem}[thm]{Remark}

%%%%%%%%%%%Physical Variables
\newcommand{\tP}{\tilde{P}}
\newcommand{\tu}{\tilde{u}}
\newcommand{\hu}{\hat{u}^0}
\newcommand{\trho}{\tilde{\rho}}
\newcommand{\tEMT}{\tilde{T}}
\newcommand{\phFmet}{\bar{g}}
\newcommand{\phFnab}{\bar\nabla}
\newcommand{\bnab}{\hat{\nabla}}
\newcommand{\tlapse}{\tilde{N}}
\newcommand{\tshift}{\tilde{X}}
\newcommand{\phTmet}{\tilde{g}}

\newcommand{\delu}{\partial_{\mathbf{u}}}
\newcommand{\mL}{\mathcal{L}_{g, \gamma}}
\newcommand{\mD}{\hat{\Delta}_{g, \gamma}}
\newcommand{\EnergyDensity}{E}
\newcommand{\Ric}{\text{Ric}}
\newcommand{\Riem}{\text{Riem}}
\DeclareMathOperator{\ri}{Riem}
\newcommand{\Lgg}{L^2_{g,\gamma}}
\newcommand{\define}{:=}
\newcommand{\Edelu}{{E^g_{\delu,N-1}}}
\newcommand{\Etot}{E^g_{tot}}

%%%%%%%%%%%%%%%%%%%%%%%%%%%%%%%%%%%%%%%%%%%%%%%%%

\newcommand{\p}{\partial}
\newcommand{\Si}{\Sigma}
\newcommand{\hdel}{\hat{\partial}_0}
\newcommand{\di}{\textrm{d}}

\newcommand{\Eg}{{E^g}}
\newcommand{\hN}{\widehat{N}}
\newcommand{\alg}[1]{\begin{aligned}#1\end{aligned}}
\newcommand{\al}{\alpha}

\newcommand{\Chrtn}[4]{\/^{(#4)}\tilde{\Gamma}^{#1}_{#2 #3}}
\newcommand{\Chr}[3]{\Gamma^{#1}_{#2 #3}}

\newcommand{\na}{\nabla}

\newcommand{\mcr}[1]{\mathscr{#1}}
\newcommand{\eq}[1]{\begin{equation} #1 \end{equation}}

%%%%%%%%%%%%%%%%%%%%%%%%%%%%%%%%%%%%%%%%%%%%%%
\begin{document}

%%%%%%%%%%%%%%%%%%%%%%%%%%%%%%%%%%%%%%%%%%%%%%%%%%%%%%%%%%%%%%%%%%%%%%%
%%%%%%%%%%%%%%%%%%%%%%%%%%%%%%%%%%%%%%%%%%%%%%%%%%%%%%%%%%%%%%%%%%%%%%%

\title[]{Slowly expanding  stable dust spacetimes}
\author[D.~Fajman, M.~Ofner, Z.~Wyatt]{David Fajman, Maximilian Ofner, Zoe Wyatt}

\date{\today}

\subjclass[2010]{35Q75; 35Q31; 35B35; 83C05; 83F05}
\keywords{Relativistic fluids, Einstein-Dust system, Nonlinear Stability, Milne model}

\address{
\begin{tabular}[h]{l@{\extracolsep{8em}}l} 
David Fajman \& Maximilian Ofner,  & Zoe Wyatt, \\
Faculty of Physics, & Department of Pure Mathematics and  \\ 
University of Vienna, & Mathematical Statistics, \\
Boltzmanngasse 5, & Wilberforce Road,  \\
1090 Vienna, Austria. & Cambridge, CB3 0WB, U.K. \\
David.Fajman@univie.ac.at, \& & zoe.wyatt@maths.cam.ac.uk       \\
maximilian.ofner@univie.ac.at & 
\end{tabular}
}

\maketitle
%%%%%%%%%%%%%%%%%%%%%%%%%%%%%%%%%%%%%%%%%%%%%%
%%%%%%%%%%%%%%%%%%%%%%%%%%%%%%%%%%%%%%%%%%%%%%
\begin{abstract}

We establish the future nonlinear stability of a large class of FLRW models as solutions to the Einstein-Dust system. We consider the case of a vanishing cosmological constant, which in particular implies that the expansion rate of the respective models is linear i.e.~has zero acceleration. The resulting spacetimes are future globally regular. These solutions constitute the first generic class of future regular Einstein-Dust spacetimes not undergoing accelerated expansion and are thereby the slowest expanding generic family of future complete Einstein-Dust spacetimes currently known. 
\end{abstract}

%%%%%%%%%%%%%%%%%%%%%%%%%%%%%%%%%%%%%%%%
\section{Introduction}
\subsection{General Relativistic Hydrodynamics}
The Einstein-relativistic Euler system (EES)
\eq{\label{EES}\alg{
R_{\mu\nu}-\frac12 R g_{\mu\nu} &=T_{\mu\nu}\\
\nabla^{\mu}T_{\mu\nu}&=0\\
T_{\mu\nu}&=(\rho+p)u_{\mu}u_{\nu}+p g_{\mu\nu}
}}
describes the dynamical evolution of a four-dimensional spacetime $(\mathcal{M}, g)$ containing a relativistic perfect fluid with pressure $p$, energy density $\rho$ and 4-velocity vector $u^\mu$. 
Perfect fluids compatible with relativity were one of the earliest matter models  considered in general relativity \cite{Ch31,OS39} and have been extensively studied in the context of general relativistic hydrodynamics with numerous applications ranging from astrophysics to cosmological evolution (see e.g. \cite{EllisCosmology, WeinbergCosmology}). 

The equations \eqref{EES} are supplemented by specifying an equation of state which relates the energy density and pressure $p=f(\rho)$. Different choices of the function $f$ encode different behaviour of the fluid. 
We focus in the following on the class of linear, barotropic equations of state, $p=c_S^2 \rho$, where the constant $c_S$ denotes the speed of sound of the fluid with $0\leq c_S\leq 1$. This equation of state contains well-known fluid models: for $c_S=0$, i.e.~$p=0$, \eqref{EES} reduces to the \emph{Einstein-Dust} system, the case $c_S=1/\sqrt 3$ is the \emph{Einstein-radiation fluid} system and $c_S=1$ is the \emph{Einstein-stiff fluid} system.
The main result of the present paper can be roughly stated as:

\begin{thm}\label{thm:rough}
All four-dimensional FLRW spacetime models with compact spatial slices and negative spatial Einstein geometry are future stable solutions of the Einstein-Dust system.
\end{thm}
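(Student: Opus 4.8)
The plan is to recast the Einstein-Dust system as a first-order symmetric hyperbolic–elliptic system on a constant-mean-curvature foliation, rescale by the expansion factor, and then run a Fuchsian/energy bootstrap to prove global existence and convergence of the rescaled variables to the FLRW background. Concretely, I would work in the CMCSH (constant-mean-curvature spatially harmonic) gauge, writing $g = N$-lapse, $X$-shift data, with the mean curvature $\tau$ serving as a time coordinate. Introduce rescaled metric and second-fundamental-form variables normalized against the Milne-type background $\bar g$ (the negative Einstein metric $\gamma$), so that the background is a fixed point. The energy density $\rho$ and spatial velocity $u^j$ are rescaled by appropriate powers of $\tau$ dictated by the continuity equation $\nabla^\mu T_{\mu\nu}=0$ with $p=0$; the key preliminary computation is to read off these weights from the linearized dust equations so that the dust contribution to the rescaled Einstein equations decays.

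The key steps, in order: (i) set up the rescaled CMCSH equations, identifying the linear operator $\mL$ (the Lichnerowicz-type Laplacian $\mD$) governing metric perturbations and its spectral gap on the negative Einstein background, which is what forces the slow-but-definite decay; (ii) derive the rescaled dust evolution equations — a transport equation for the rescaled density and for the rescaled spatial velocity $\mathbf{u}$ — and verify that, for the rescaled quantities, the source terms carry a strictly decaying prefactor in $\tau$ (this is the crucial gain that distinguishes dust, $c_S = 0$, from stiffer fluids, where the weights degenerate); (iii) define total energies $\Etot$ combining the geometric energy $\Eg$ for $(g,\Sigma,N,X)$ at high Sobolev regularity and a fluid energy controlling $\rho$ and $\delu$-derivatives of $\mathbf{u}$; (iv) prove energy inequalities of the form $\p_\tau \Etot \lesssim -c\, \Etot + (\text{decaying})\,\Etot + \Etot^{3/2}$, where the negative definite term comes from the spectral gap via the elliptic gauge estimates for $N-\bar N$ and $X$; (v) close the bootstrap, obtaining decay of all rescaled perturbations, hence global existence in the rescaled time, i.e.~future completeness and the claimed stability.

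The main obstacle I expect is controlling the dust variables, since for pressureless matter the fluid equations lose the dissipative/dispersive structure present for $c_S>0$: the velocity $\mathbf{u}$ satisfies a pure transport equation along the fluid worldlines with no restoring force, so derivatives of $\mathbf{u}$ can only be estimated by differentiating the transport equation and paying attention to the commutator with the (rescaled) covariant derivative, which reintroduces metric perturbation terms. The delicate point is to choose the rescaling weights and the hierarchy of derivatives (the $\delu$-adapted energies $\Edelu$) so that (a) the transport estimate for $\mathbf{u}$ does not lose derivatives relative to the metric, and (b) the back-reaction of $\rho$, $\mathbf{u}$ on the constraint and evolution equations is compatible with the borderline decay rates dictated by the linear expansion (no $\Lambda$, hence no exponential damping — only polynomial rates, so every borderline term must be genuinely integrable in $\tau$). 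A secondary technical issue is propagating the CMCSH gauge globally and handling the shift equation's coupling to the fluid momentum; this is standard for vacuum/Milne-type analyses but must be re-examined with the dust source present. Once the correct weights are fixed, steps (i)–(v) are expected to proceed by now-standard energy methods adapted from the Milne stability literature.
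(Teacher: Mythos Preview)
Your overall framework is right and matches the paper: CMCSH gauge, rescaled variables, the spectral gap of $\mL$ on the negative Einstein background, coupled geometric and fluid energies, and a bootstrap. But you have misidentified the central obstruction, and with it the role of the $\delu$-adapted energy.

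The principal difficulty is not decay-rate borderline-ness but a \emph{regularity mismatch}. The continuity equation forces $\delu \rho \sim \nabla\cdot u$, so $\rho$ is controlled only one derivative below $u$. Yet $\rho$ and $u$ enter the Einstein equations at the \emph{same} order as sources; hence $u$ must sit one order above $\Sigma$. But the evolution equation for $u$ (the geodesic equation) requires $\Sigma$ at the same order as $u$. This closes into an apparent loss-of-derivative loop that no choice of rescaling weights can undo. Your proposal to use $\delu$-derivatives \emph{of the fluid} does not break this loop.

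The paper's resolution (following Had\v{z}i\'{c}--Speck, adapted to CMCSH) is the opposite of what you sketch: the $\delu$-boosted energy $\Edelu$ is built for the \emph{geometric} variables $(h,\Sigma)$, not the fluid. At top order one commutes the Einstein equations with $N-1$ spatial derivatives and one $\delu$; when $\delu$ lands on the matter sources $\rho,u$ one substitutes the fluid equations of motion as constraints, avoiding the extra spatial derivative on $\rho$. This produces an energy $\Edelu$ that closes without derivative loss, but it no longer directly controls $\|g-\gamma\|_{H^N}$ or $\|\Sigma\|_{H^{N-1}}$. A separate \emph{elliptic estimate} (rewriting the wave-type equation for $h$ as $\mL h \sim \mathcal{W}(h) + \delu\p_T h + \tau u\cdot\bnab\delu h$ and inverting $\mL$) is then needed to recover the genuine top-order Sobolev norms from $\Edelu$. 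Finally, two nontrivial auxiliary estimates on $\delu\nabla\nabla N$ and on $\delu\mL^{\ell-1}(\Delta\mathcal{L}_X g)$ are required to close the top-order geometric energy; the shift estimate in particular relies on a specific combination of the Bianchi identity and the CMCSH gauge and is not ``standard.'' None of these three ingredients appears in your plan, and without them step (iv) of your outline will not close at the top order.
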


To date, all known future stability results establishing the global existence, regularity and completeness of solutions to \eqref{EES} concern the regime of accelerated expansion. In such a setting, the fast decay rates of perturbations induced by the expansion have a strong regularization effect on the fluid. For slower expansion rates this effect becomes weaker and global regularity of solutions is less likely to hold.

Theorem \ref{thm:rough} establishes the \textbf{first nonlinear future stability result for a coupled Einstein-relativistic Euler system in the absence of accelerated expansion} and thereby initiates the study of the EES in the regime of non-accelerated expansion. Such a regime is also relevant in cosmology.  The epoch in the early universe, shortly after a hypothetical inflationary phase, is expected to not initially have exhibited accelerated expansion. It is this epoch, which is not covered by previous results on the EES, which we intend to make accessible by the research initiated in the present paper.

%%%%%%%%%%%%%%%%%%%%%%%%%%%%%%%%%%%%%%%%
\subsection{Background and Previous Results}
For the sake of the following presentation we consider four-dimensional FLRW spacetimes of the form
\eq{\label{eq:IntroMetric}
\left((0,\infty)\times M,-\di t_c^2+a(t_c)^2\cdot  \gamma\right),
}
where $(M,\gamma)$ is a complete Riemannian manifold. We remind the reader that in the standard FLRW-models the spatial slices appearing in \eqref{eq:IntroMetric} have constant sectional curvature $k_M\in \{ -1,0,1\}$ and so are taken to be one of $\mathbb{R}^3, \mathbb{S}^3, \mathbb{H}^3$ or quotients thereof (eg, $\mathbb{T}^3$). 
We distinguish three classes of scale factors: $\ddot a(t_c)>0$ are referred to as \emph{accelerated expansion}, $\ddot{a}(t_c)<0$ as \emph{deccelerated expansion} and $\ddot a(t_c)=0$ as \emph{linear expansion}. We introduce the notion of \emph{power law inflation}, where $a(t_c)=(t_c)^{p}$ for $p>0$. Finally, we recall that a cosmological constant can be included by adding $+ \Lambda g_{\mu\nu} $ to the LHS of \eqref{EES}. In the following discussion only $\Lambda\geq 0$ is relevant. 
 
\subsubsection{Shock Formation}
Relavistic and non-relativistic fluids are well-known to form shocks in finite time. This was first observed in the general relativistic context by Oppenheimer and Snyder when they investigated the collapse of spherically symmetric clouds of dust \cite{OS39}. In the terminology of current stability analysis this constitutes the instability of Minkowski spacetime as a particular solution to the Einstein-Dust system. 
Note that a part of Minkowski spacetime corresponds to $M = \mathbb{R}^3, \gamma=\delta, a(t_c) = 1$ in \eqref{eq:IntroMetric}. 

More recently, Christodoulou's monograph \cite{Ch07} demonstrated that under a very general equation of state, the constant solutions to the relativistic Euler equations on a fixed background Minkowski space are unstable, i.e.~even without gravitational backreaction, fluids form shocks from arbitrarily small initial inhomogeneities in finite time. This suggests that Minkowski spacetime is unstable as a solution to the EES for a large class of equations of state.
Note also that Christodoulou's monograph  gave a detailed description of the nature of the fluid shock formation, thus providing a major extension beyond work of Sideris \cite{Sideris} on the non-relativistic Euler equations.  

\subsubsection{$\Lambda$-induced Accelerated Expansion and Stabilisation of Fluids}\label{intro:lambda-induced}
As is clear from the previous paragraphs, a powerful dispersive mechanism is required to regularise fluids and to prevent  finite-time shock formation. The prime example of such a mechanism comes from cosmological models exhibiting exponential expansion. Heuristically speaking,  a cosmological constant $\Lambda>0$  generates expansion of the form $a(t_c) \sim e^{Ht_c}$ where $H = \sqrt{\Lambda/3}$, for \emph{all} cases of the sectional curvature $k_M$. The cosmological constant creates damping terms in the equations of motion for the fluid, which dilutes the fluid and causes fluid lines to `stretch apart', thus preventing shock formation. 

This effect was first observed by Brauer, Rendall and Reula \cite{BRR94} who studied Newtonian cosmological models with  $\Lambda>0$ and a perfect fluid (albeit for a slightly different equation of state). They found that the regularising effect from the exponential expansion was strong enough to prevent shock formation for small inhomogeneities of initially uniformly quiet fluid states. See also late-time asymptotics work by Reula \cite{Reula99} and Rendall \cite{Rendall04}.

Moving to the fully coupled Einstein-relativistic Euler system, there has been much research concerning spacetimes undergoing exponential expansion. 
The first result is by Rodnianski and Speck \cite{RoSp13}, who proved future stability to irrotational perturbations of uniformly quiet fluids with $0<c_S<1/\sqrt 3$ on FLRW-spacetimes with underlying spatial manifold $M=\mathbb{T}^3$. The irrotational restriction was later removed by Speck in \cite{Sp12}. 
An alternative proof of future stability for these FLRW-background solutions was later given by Oliynyk \cite{Oliynyk16}, whose Fuchsian techniques were able to uniformly cover the cases $0<c_S\leq 1/\sqrt 3$.

Moving to the case of dust $c_S = 0$, stability for FLRW-spacetimes with underlying spatial manifold $M=\mathbb{T}^3$ was given by Had\v{z}i\'{c} and Speck \cite{HaSp15}, while more general spatial manifolds were considered by Friedrich \cite{Fr17} using his conformal method. Indeed the work by L\"ubbe and Valiente-Kroon \cite{LVK13} treated the radiation case with $c_S = 1/3$ using an extension of Friedrich's conformal method. Finally, we note that very recent work of Oliynyk \cite{Ol20} has established stability for ultra-radiation fluids $1/\sqrt 3<c_S\leq 1/\sqrt 2$  on a fixed, exponentially expanding spacetime.

\subsubsection{Alternative Mechanisms for Accelerated Expansion}
A cosmological constant is not the only known mechanism for generating solutions to Einstein’s equations with accelerated expansion. In work that predates the references of Subsection \ref{intro:lambda-induced}, Ringstr\"om  \cite{Ri08} considered the future global stability of a large class of solutions to the Einstein-nonlinear-scalar field system with a scalar field potential $V(\Phi)$ that satisfied $V(0)>0, V'(0)=0, V''(0)>0$. Roughly speaking $V(0)$ emulates the cosmological constant $\Lambda$ and so these spacetimes undergo accelerated expansion. Ringstr\"om \cite{Ri09} later considered alternative potentials $V(\Phi)$ which relaxed the rate of spacetime expansion to the class of accelerated power law inflation, which in our terminology corresponds to $p>1$. 

Note that in Ringstr\"om's papers the global spatial topology becomes irrelevant for the long time behaviour of cosmological spacetimes in the small data regime. This is in sharp contrast to the Einstein \emph{vacuum} equations where the spatial topology does affect the long-time behaviour. In this case, only the Milne geometry with a negative spatial curvature yields future eternally expanding cosmological models with precisely linear expansion rate.

Finally we note that the Chaplygin equation of state, which describes a fluid with negative pressure, can also generate sufficient spatial expansion to ensure future stability results for the coupled EES system \cite{LeFloch-Wei}. 

\subsubsection{Critical Expansion Rates}\label{intro:CritExp}
Interpolating between Minkowski space (which can be considered as a cosmological spacetime with non-compact slices and no expansion) and exponentially expanding spacetimes, it is clear that there must be a transition between shock formation and stability. 
To investigate the expansion rate for which this transition occurs, and how it depends on the equation of state, it is useful to study the stabilisation of fluids on fixed Lorentzian geometries obeying power-law inflation. We consider $M = \mathbb{T}^3$ with $a(t_c) = (t_c)^p$ for $p>0$. The following table summarises some of the main results concerning linear equation of states $p= c_S^2 \rho$ from \cite{Sp13, FOW21}, see also \cite{Wei}:

\begin{center}
\begin{tabular}{c|  c |c |c  | c}
Case & Power-law rate &  Range of $c_S$ & Behaviour & Reference\\ 
\hline
No. 1 & $p>1$ & $0<c_S<1/\sqrt 3$ & Stable & \cite{Sp13} \\  
No. 2 & $p=1$ & $c_S = 1/\sqrt 3$ & Shocks  & \cite{Sp13} \\
No. 3 & $p=1$ & $0<c_S<1/\sqrt 3$ & Stable (irrot.)  & \cite{FOW21} \\
No. 4 & $p>\frac12$ & $c_S = 0$ & Stable & \cite{Sp13}
\end{tabular}
\end{center}  
 
In combination, these results indicate that in spacetimes undergoing power-law inflation whether shocks form from small data depends on the equation of state and, in particular in the linear case, on the speed of sound. Indeed the literature suggests that \emph{slower speeds of sound reduce the tendency of shock formation}. 
For the particular case of dust ($c_S=0$), case No. 4 shows that shocks are avoided even in deccelerating spacetimes with scale factors $a(t)=t^{1/2+\delta}$ for $\delta>0$.

%%%%%%%%%%%%%%%%%%%%%%%%%%%%%%%%%%%%%%%%
\subsection{Main Results}
In the present paper we consider the Einstein-Dust system in the regime of linear expansion. For the linearly expanding case, Cases No. 2 and 3 above show that even in the absence of backreaction the speed of sound determines whether shocks form or not. We prove that for the case of dust ($c_S=0$) shock formation does not occur under the full gravity-fluid dynamics. 

Our background geometry is that of the Milne model, which generalises the $k_M = -1$ FLRW vacuum spacetimes.
Let $(M, \gamma)$ be a closed, connected, orientable three-dimensional manifold  admitting a Riemannian Einstein metric $\gamma$ with negative Einstein constant. After rescaling, we suppose that 
$$ \Ric[\gamma] = -\frac29 \gamma .$$
The generalised Milne spacetime is the Lorentz cone spacetime $\mathcal{M} = (0,\infty) \times M$ with metric
\eq{\notag
g_M \define -\di t_c^2 + \frac{t_c^2}{9}\gamma_{ab} \di x^a \di x^b.
}
The spacetime $(\mathcal{M}, g_M)$ is globally hyperbolic and a solution to the four-dimensional vacuum Einstein equations.
We formulate the main theorem using terminology introduced in Section \ref{sec:FunctionsSpaces}. 
We let $\mathscr B_\varepsilon^{j,k,l,m}\big(\tfrac{t_0^2}{9}\gamma,-\tfrac{t_0}9 \gamma,0,0\big)$ denote the ball of radius $\varepsilon$ in the space $H^j\times H^k\times H^l\times H^m$ centred at $\big(\tfrac{t_0^2}{9}\gamma,-\tfrac{t_0}9 \gamma,0,0\big)$. 
Our main theorem reads:

\begin{thm}\label{thm-1}
Let $(\mathcal{M},g_M)$ be as above.
Let $\varepsilon>0$ and $(g_0,k_0,\rho_0,u_0)$ be initial data for the Einstein-Dust system at $t_c=t_0$ such that 
\eq{\notag
(g_0,k_0,\rho_0,u_0)\in \mathscr B_\varepsilon^{6,5,4,5}\left(\frac{t_0^2}{9}\gamma,-\frac{t_0}9 \gamma,0,0\right).
}
Then, for $\varepsilon$ sufficiently small the corresponding future development under the Einstein-Dust system is future complete and admits a CMC foliation labelled by $\tau\in[\tau_0, 0)$ such that the induced metric and second fundamental form on constant CMC slices converge as
\eq{\notag
(\tau^2 g,\tau k)\rightarrow \left(  \gamma,\frac{1}{3} \gamma\right) \mbox{ as } \tau \nearrow 0 \quad \text{i.e. as } t_c\nearrow \infty.
}
\end{thm}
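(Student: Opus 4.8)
\emph{Proof strategy.} The plan is to run a CMC plus spatial--harmonic gauge continuity argument, rescaling all quantities by the appropriate powers of the mean curvature $\tau$ so that the generalised Milne data $(\gamma,\tfrac{1}{3}\gamma,0,0)$ becomes a stationary solution, and then to show that this fixed point is an attractor for the rescaled flow. Concretely, first I would fix the CMCSH gauge --- a CMC time function together with spatially harmonic coordinates relative to the background Einstein metric $\gamma$ --- which recasts the Einstein-Dust system as a coupled elliptic--hyperbolic--transport system: a Lichnerowicz-type scalar elliptic equation for the rescaled lapse $\tlapse$, a vector elliptic equation for the rescaled shift $\tshift$, a second-order hyperbolic evolution for the rescaled spatial metric $\phTmet \define \tau^2 g$ and its canonical momentum, and transport equations for the rescaled dust unknowns, namely the rescaled energy density $\trho$ and the rescaled fluid velocity. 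Passing to a logarithmic time $T$ with $\p_T \sim \tau^{-1}\p_\tau$, the background becomes the fixed point $(\gamma,\tfrac{1}{3}\gamma,0,0)$ and the linearisation of the metric evolution is governed by the operator $\mL$ (equivalently $\mD$), whose spectral gap on a negative Einstein manifold supplies the geometric decay mechanism.

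Second, I would organise the energy hierarchy. For the geometry I would use energies $\Eg$ built from $\mL$ acting on the metric perturbation together with its time derivative up to order six (and order five for the momentum), exploiting that on a negative Einstein background $\mL$ has strictly positive spectrum away from the finite-dimensional moduli directions, so that --- modulo a time-dependent shift of the reference metric within moduli space --- one obtains a coercive energy satisfying a differential inequality of the schematic form $\p_T \Eg \lesssim -2\al\,\Eg + (\text{small})\cdot \Eg + (\text{dust source})$ for some $\al>0$. For the dust I would use energies such as $\Edelu$, controlling $\trho$ in $H^4$ and the rescaled velocity in $H^5$, the one-derivative mismatch reflecting the first-order character of the fluid equations; crucially these must be expressed in the geometrically natural rescaled variables so that the expansion contributes a favourable (nonpositive, or at worst critically bounded) damping coefficient to the transport equations, yielding $\p_T \Edelu \lesssim C\,\Edelu + (\text{small})$. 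The elliptic estimates then control $\tlapse - 1$ and $\tshift$, together with their time derivatives, in terms of $\Eg$ and $\Edelu$.

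Third, I would close a bootstrap on an interval $[T_0,T_1)$: assume $\Etot \define \Eg + \Edelu$ remains below a small threshold, insert this into the elliptic estimates and then into the two energy inequalities, and run Gr\"onwall to strictly improve the bounds. The structural facts to verify are (i) that the dust source terms entering the geometry inequality are quadratically small and preserve coercivity, so that $\Eg$ still decays at least until it reaches the size of the (bounded, but not necessarily decaying) dust contribution, and (ii) that $\Edelu$ does not grow --- or grows only slowly in a way still compatible with future completeness --- which rests on the damping structure of the rescaled transport equations and a careful accounting of the effective constant $C$. Global existence in rescaled time, $T\to\infty$, equivalently $\tau\nearrow 0$ and $t_c\nearrow\infty$, then follows by the standard open--closed argument; the convergence $(\tau^2 g,\tau k)\to(\gamma,\tfrac{1}{3}\gamma)$ is read off from the decay of the geometry perturbation, and future geodesic completeness is obtained by estimating proper lengths of causal curves using the controlled lapse and the lower volume bound.

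The main obstacle --- and the genuinely new difficulty relative to the $\Lambda>0$ dust results and the vacuum Milne stability theorem --- is precisely the borderline nature of linear expansion: accelerated expansion produces damping with a definite sign and exponential-in-$T$ decay of the fluid velocity, whereas here the damping is critical, so the rescaled fluid unknowns must be chosen with great care to keep the transport coefficients nonpositive at top order, and one must control the back-reaction of a fluid whose rescaled energy density does \emph{not} decay. Reconciling the two differential inequalities --- genuine decay for $\Eg$, driven by the spectral gap of $\mL$, versus mere boundedness (or tightly controlled slow growth) for $\Edelu$ --- while avoiding loss of derivatives in the metric--fluid coupling at top order, is where the bulk of the technical effort lies; the moduli space of Einstein structures and the attendant shift of the reference metric contribute a further, by now fairly standard, layer of bookkeeping.
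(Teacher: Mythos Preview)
Your outline captures the broad architecture --- CMCSH gauge, $\tau$-rescaling, logarithmic time, a geometric energy built from $\mL$ with spectral decay, fluid transport energies, and a bootstrap --- and you correctly flag the derivative-loss coupling at top order as the crux. But the proposal has a genuine gap: you have not identified the mechanism that actually closes the top-order estimates, and you have misread what $\Edelu$ is.

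In the paper, $\Edelu$ is \emph{not} a dust energy; it is a top-order \emph{geometric} energy for $(g-\gamma,\Sigma)$ in which one spatial derivative is replaced by the fluid derivative $\delu = u^0\p_T - \tau u^a\bnab_a$. The point is that the dust equations force $\rho$ one derivative below $u$, while the matter sources in the Einstein equations demand $\rho$ at the same level as $\Sigma$; a standard $H^N\times H^{N-1}$ hierarchy therefore cannot close. The resolution is to commute the $\Sigma$- and $g$-evolution at top order with $\delu$ rather than $\nabla$: when $\delu$ hits $\rho$ or $u^a$ in the source terms, the fluid equations of motion are invoked as constraints, so no derivative is lost. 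One then needs a separate step --- an elliptic estimate obtained by rewriting the wave-type evolution as $\mL h \sim \mathcal{W}(h) + \delu\p_T h + \tau u^a\bnab_a\delu h$ --- to show that $\Edelu$ genuinely controls $\|g-\gamma\|_{H^N}+\|\Sigma\|_{H^{N-1}}$. Finally, this scheme generates borderline terms of the form $\delu\nabla^{N-2}\nabla\nabla N$ and $\delu\mL^{\ell-1}(\Delta\mcr{L}_X g)$, which cannot be handled by naive commutation (one would exceed the available regularity of $u$); the paper resolves these with dedicated auxiliary elliptic estimates for the lapse and a Bianchi-identity computation for the shift. None of this machinery appears in your outline, and without it the ``avoiding loss of derivatives'' clause is a statement of the problem rather than its solution.

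Two smaller points. The fluid energies that do appear are standard $H^{N-2}$ and $H^{N-1}$ norms of $\rho$ and $u$; the rescaled density is merely bounded and the velocity is allowed $e^{\mu T}$ growth for $\mu\ll 1$, which is consistent with your intuition. And there is no moduli-space shadowing here: in three dimensions with negative Einstein constant one has $\ker\mL=\{0\}$, so the reference metric $\gamma$ is fixed throughout and your ``time-dependent shift of the reference metric'' is unnecessary.
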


\begin{rem}
The Milne model is known to be a stable solution solution to the Einstein vacuum
equations \cite{AM11}, the Einstein massive-Vlasov equations \cite{AF20}, the coupled Einstein-Maxwell-scalar field system arising from a Kaluza-Klein reduction \cite{BFK19}, and the
Einstein Klein-Gordon equations \cite{Wang-KG, FW21}. 
\end{rem}

\begin{rem}
Negative spatial curvature is crucial as spherical or toroidal spatial topologies would lead to recollapsing or slowly expanding matter dominated solutions, respectively. The asymptotic behaviour of the solutions in the theorem coincide with the corresponding vacuum solutions.
\end{rem}

\subsubsection{Structure and Key Novelties in the Proof}
\label{intro:Novelties}
The proof of Theorem \ref{thm-1} consists of three major parts: $(i)$ energy estimates for the perturbation of the spacetime geometry with sources given by the dust variables, $(ii)$ energy estimates for the dust variables in the perturbed spacetime geometry and $(iii)$ a bootstrap argument based on both sets of energy estimates establishing global existence and asymptotic behaviour. This rough approach is standard in the literature on the Milne stability problem (see e.g. \cite{AM11, AF20, BFK19}), however, for the Einstein dust system there are crucial difficulties caused by a regularity problem inherent to the dust equations, which turns out to affect all parts of the argument. We outline the difficulties and how these are overcome in the following.

To control the perturbed spacetime geometry throughout the evolution we use a CMC time-foliation in combination with a spatial-harmonic gauge \cite{AM03}. The existence of such a foliation for small perturbations of negative Einstein spaces is non-trivial but standard \cite{FK20}. The Einstein equations then take the form  of an elliptic-hyperbolic system (see \eqref{EoM})  where the lapse and shift are determined by elliptic PDEs with sources given in terms of metric, second fundamental form and the dust variables. This elliptic system  provides Sobolev estimates for the lapse and shift. 

The core idea to establish decay for the geometric variables in previous works on  Milne stability is a corrected energy ($\Eg_k$ in Definition \ref{defn:GeometricEnergy}) based on the modified Einstein-operator ($\mL$ in Definition \ref{defn:mL}) of the spatial Einstein geometry \cite{AM11, AF20}. For the Einstein-Dust system we must deviate  from this standard approach due to a regularity issue from the dust model, which in turn affects all parts of the proof. 

When expanded, the equations of motion for the dust variables take a form where the source term of the evolution equation for the energy density contains the spatial divergence of the fluid velocity (see \eqref{eq:EoM-fluid}). Consequently, the fluid energy density can be controlled only in one order of regularity below the order of regularity of the fluid velocity.
From the perspective of the Einstein equations this is very problematic as both components of the dust, energy density and fluid velocity, appear at the same order of regularity as source terms of the Einstein equations. As such, they are required to be controlled in suitable Sobolev spaces at the same order as the second fundamental form. Due to the required high regularity of the fluid velocity discussed previously, the velocity then needs to be controlled one order above the second fundamental form. However, the equation of motion for the fluid velocity requires the second fundamental form at the same order of regularity as the velocity itself  (see \eqref{eq:EoM-fluid}). This apparent inconsistency prevents one from establishing a standard and straightforward regularity hierarchy to analyse the fully coupled nonlinear system.

An approach to circumvent this issue has been introduced by Had\v{z}i\'{c} and Speck in \cite{HaSp15} and is modified in the present paper. The central idea is to use a fluid derivative $\delu \sim u^\alpha\nabla_\alpha [g_M]$ as a differential operator in the energies for the perturbations of the metric and second fundemental form ($\Edelu$ in Definition \ref{defn:delu-geom-energy}). At highest order of regularity, say $N$, where the loss of derivatives prevents the closure of the system of estimates, the Einstein equations are commuted with $N-1$ spatial derivatives and one fluid derivative.  When this derivative acts on the dust source terms in the Einstein equations,  in the subsequent calculations for the energy estimates, the equations of motion of the dust variables are used as constraint equations replacing $\delu \rho$ and $\delu  u^j$. In this way, no derivatives are lost and the corresponding auxiliary energies $\Edelu$ for the geometric variables can be estimated in terms of dust variables of one order of regularity below the expected one. 

In a follow-up step, we need to show that the auxiliary geometric energies $\Edelu$ in fact control the actual top-order regularity norms of the geometric variables. This is achieved by rewriting the wave-type evolution equation for the metric and second fundamental in terms of an elliptic part and certain mixed spatial and fluid derivative operators (see Proposition \ref{prop:Elliptic-Est}). 
Consequently, the auxiliary energies of the first step provide top-order estimates on the geometric variables (see Corollary \ref{corol:Coercive-top-order-geom})  and an overall strategy to close the estimates.

Two final major regularity issues arise when proving energy estimates for the auxiliary energies $\Edelu$ however. We end up needing to estimate one fluid derivative and a critical number of spatial derivatives on certain terms involving the lapse and shift, and we cannot commute the $\delu$ operator past the spatial derivatives without exceeding the assumed regularity of the fluid spatial velocity. 

To circumvent this problem, we only commute past some of the derivatives and instead derive two auxiliary estimates using the elliptic equations \eqref{eq:EoM-lapse-shift} for the lapse and shift. In the estimate on the lapse term (see Proposition \ref{prop:delu-commuted-lapse}) we crucially use the equations of motion of the dust variables  to replace a certain matter term $\delu \eta$ as a constraint, thus avoiding derivative loss. The estimate for the shift term (see Proposition \ref{prop:Lie-deriv-Shift}) proceeds differently, relying on a \emph{remarkable combination} of commutator estimates, the Bianchi identity and the Einstein equations in the CMCSH gauge.

\subsubsection{Final Remarks}
In the regime of non-accelerated expansion, the work \cite{BRR94} indicates that the backreaction between the fluid and the geometry cannot be ignored. The authors consider the case of dust with a Newtonian backreaction, finding that shocks form for arbitrarily small initial data in the regime where the homogeneous background spacetime, which is perturbed, expands like $a(t)=t^{2/3}$. 
This contrasts \emph{noteably} with case No. 5 above which does not include backreaction. While \cite{BRR94} concerns only Newtonian dynamics, it is nevertheless a fair indication that the fully coupled dynamics under the Einstein-fluid system will likely lead to the formation of shocks.
Continuing this line of reasoning, we note that although the work \cite{FOW21} also treated linear expansion, the full coupling between gravity and fluid makes our present work highly nontrivial.  Indeed the issues highlighted on the previous Section \ref{intro:Novelties} are indicative of the substantial technical difficulties that arise in the fully coupled EES.

Finally, it is interesting to recall that Sachs and Wolfe derived a linear \emph{in}stability result for the Einstein-Dust equations with $\Lambda = 0$, however their metric had underlying spatial manifold $M = \mathbb{R}^3$  \cite{SachsWolfe}. The fluid plays a major dynamical role in these flat FLRW models. Nevertheless one gleans the importance of the negatively curved spatial slices appearing in our nonlinear stability result.

%%%%%%%%%%%%%%%%%%%%%%%%%%%%%%%%%%%%%%%%
\subsection{Paper Outline}  
In Section \ref{sec:EoM} we introduce the system of equations and perform a natural rescaling of the variables. In Section \ref{sec:Prelims} we introduce function spaces and energy functionals controlling the metric perturbation and shear tensor. 

The main theorem is proved using continuous induction. In Section \ref{sec:LocalandBootstrap} we discuss the local existence theory and initiate the bootstrap argument. 
The remainder of the paper, beginning with Section \ref{sec:FurtherPrelims}, treats the individual estimates necessary to close the bootstrap argument. Section \ref{sec:FurtherPrelims} gathers various auxiliary estimates, which are used in later sections. Among those are estimates on the source terms of the evolution equations, estimates on the dust-derivative acting on various quantities, commutators of the dust derivative and other operators and estimates on high derivatives combining the dust-derivative and other operators.

Section \ref{sec:EllipticEstimate} derives the elliptic estimate for the Einstein operator and the evolution equations, which is crucial to turn estimates in terms of the dust derivatives into those in terms of standard energies. These estimates are then given subsequently. In Section \ref{sec:LapseShift} we provide the estimates on lapse function and shift vector field. A crucial set of lapse and shift estimates on highest order of regularity, involving also the dust derivative, are given here too.
In Section \ref{sec:GeomEnergyEst} we derive the central top-order energy estimate for the auxiliary energy controlling the geometric perturbations. In Section \ref{sec:EnergyEstMatter} we derive the estimates for the dust variables and in Section \ref{sec:EndBoot} we close the bootstrap. 

%%%%%%%%%%%%%%%%%%%%%%%%%%%%%%%%%%%%%%%%
\subsection*{Acknowledgements}
D.~F.~ and M.~O.~ have been supported by the Austrian Science Fund (FWF): [P 34313-N].

%%%%%%%%%%%%%%%%%%%%%%%%%%%%%%%%%%%%%%%%
\section{Equations of Motion} \label{sec:EoM}
\subsection{The Einstein--Dust system}
The Einstein--relativistic Euler system reads
\eq{\alg{\label{eq:Eulerform1}
R_{\mu\nu}[\phFmet]-\frac12R[\phFmet]\phFmet_{\mu\nu}&=2\tEMT_{\mu\nu},\\
\phFnab_{\mu}\tEMT^{\mu\nu}&=0,\\
\tEMT^{\mu\nu}&=(\trho+\tP) \tu^{\mu}\tu^\nu+\tP\phFmet^{\mu\nu} ,
}}
where we set $c=1$ and $4\pi G = 1$. 
We use $\phFnab$ to denote the Levi-Civita connection of the physical metric $\phFmet$. The four-velocity of the fluid $\tu^\mu$ is a future-directed timelike vectorfield normalised by 
\eq{\label{eq:FluidNormalisation}
\phFmet_{\mu\nu} \tu^\mu \tu^\nu=-1 \,.
}
We assume a linear, barytropic fluid equation of state $\tP= c_S^2 \trho$ where $c_S \geq 0$ is a constant, and $\tP\geq 0$ and $\trho \geq 0$ denote the pressure and energy density respectively. In the present paper, we \textbf{restrict ourselves to dust}, which means we set
\eq{\notag
c_S^2 \define 0 \,.
}

The fluid equations in \eqref{eq:Eulerform1} can equivalently (for $\trho>0$) be written as
\eq{\alg{\label{eq:Eulerform2}
\tu^{\alpha}\phFnab_{\alpha}\ln \trho+\phFnab_\alpha \tu^{\al}&=0\,,\qquad
\tu^{\alpha}\phFnab_{\alpha} \tu^{\mu}=0\,.
}}
The system \eqref{eq:Eulerform2} is overdetermined in the sense that $\tu^0$ can be determined from the other fluid velocity components via \eqref{eq:FluidNormalisation}. 

We will study the Einstein-Dust equations using the following ADM ansatz for the metric
\eq{\label{eq:ADM}
\phFmet = -\tlapse^2 \di t^2 +\phTmet_{ab}(\di x^a+\tshift^a\di t)(\di x^b + \tshift^b \di t).
}
Note that $\phFmet_{ab}=\phTmet_{ab}$ but in general $\phFmet^{ab}\neq\phTmet^{ab}$. 
On $t$=constant slices, we let $\tau$ be the trace of the second fundamental form $\tilde k$ with respect to $\tilde g$ and define $\Sigma$ to be the trace-free part of $\tilde k$. That is,
\eq{\alg{\notag
\tau&\define\text{tr}_{\tilde g} \tilde k = \phTmet^{ab} \tilde{k}_{ab},\qquad
\tilde k\define\tilde\Sigma+\tfrac13 \tau \tilde g.
}}

We use Roman letters ($a,b,i,j...$) to denote spatial indices. 
Let $\nabla$ denote the Levi-Civita connection of the spatial metric $\phTmet$. Using \eqref{eq:ADM} the Christoffel symbols of the 4-metric $\phFmet$ become (see e.g. \cite{Re08})
\eq{\alg{\notag
 \Chrtn 0004&=\tlapse^{-1}( \p_t \tlapse + \tshift^a\nabla_a\tlapse - \tilde{k}_{ab}\tshift^a\tshift^b ),\qquad
 &\Chrtn 0ab4&=-\tlapse^{-1}\tilde{k}_{ab},
 \\
\Chrtn 0a04&=\tlapse^{-1}(\nabla_a\tlapse - \tilde{k}_{ab}\tshift^b),
 \quad
&\Chrtn abc4 &=\Gamma^a_{bc}[\phTmet]+\tlapse^{-1}\tilde k_{bc}\tshift^a,\\
 \Chrtn a0b4&=-\tlapse\tilde k^a_b+\nabla_b\tshift^a-\tlapse^{-1}\tshift^a\nabla_b\tlapse+\tlapse^{-1}\tilde k_{bc}\tshift^c\tshift^a,\\
 \Chrtn a004&=\p_t\tshift^a+\tshift^b\na_b\tshift^a-2\tlapse\tilde k_c^a\tshift^c+\tilde{N}\nabla^a\tilde{N}\\
&\quad-\tlapse^{-1}(\p_t\tlapse+\tshift^b\nabla_b\tlapse-\tilde k_{bc}\tshift^b\tshift^c)\tshift^a.
}}
Noting the above, the fluid equations \eqref{eq:Eulerform2} reduce to
\begin{align*}
\tu^{\alpha}\p_\alpha \ln \trho+\p_\alpha \tu^{\alpha}+\Chrtn \alpha\alpha\nu4 \tu^\nu&=0,\qquad
\tu^{\alpha}\p_{\alpha} \tu^{\mu}+\tu^\alpha {\Chrtn \mu\alpha\nu4 \tu^\nu} =0.
\end{align*}

%%%%%%%%%%%%%%%%%%%%%%%%%%%%%%%%%%%%%%%%
\subsection{The Rescaled Einstein--Dust system in CMCSH gauge}
Following the work of Andersson and Moncrief \cite{AM03, AM11}, we hereon impose the CMCSH gauge which foliates by surfaces of constant mean curvature, taking advantage of the fact that on the Milne background  $\tau = \phTmet^{ab}\tilde{k}_{ab} = -3/t_c$. 
\begin{defn}[CMCSH gauge]\label{gauges}
\eq{\notag
\begin{split}
t=\tau,\qquad
H^a\define \tilde{g}^{cb}(\Gamma[\tilde{g}]^{a}_{cb}-\Gamma[\gamma]^a_{cb})&=0.
\end{split}
}
\end{defn}

We next rescale our variables with respect to the mean curvature $\tau$.

\begin{defn}[Rescaled variables $(g_{ab}, N, X^a, \Sigma_{ab}, u^a, u^0, \rho, \hN, \hu)$ and logarithmic time $T$]\label{defn:rescaling}

The rescaled geometric variables are defined as
\begin{subequations}
\eq{\label{GeometricRescaling}
\begin{array}{rlrl}
g_{ab}&\define\tau^2\tilde{g}_{ab},&g^{ab}\define&(\tau^2)^{-1}\tilde{g}^{ab},\\
N&\define\tau^2\tlapse, &X^a\define&\tau \tshift^a,\\
\Sigma_{ab}&\define\tau\tilde \Sigma_{ab}.
\end{array}
}
Note $\tshift_a = \phTmet_{ab}\tshift^b = \tau^{-3} X_a$. Let $\tu^0\define \tu^\tau$. The rescaled matter variables are defined as
\eq{\label{FluidRescaling}
u^a\define\tau^{-2}\tu^a , \qquad
\rho\define|\tau|^{-3}\trho, \qquad u^0 \define \tau^{-2} \tu^0 .
}
\end{subequations}
Denote $\widehat N\define\frac N3-1$ and $\hu \define u^0-1/3$.
Finally we define the logarithmic time 
\eq{\notag
T\define-\ln(\tau/(e\tau_0)),
}
which satisfies  $\p_T=-\tau \p_\tau$. 
\end{defn}
The above definition means we have the following ranges $\tau_0 \leq \tau\nearrow 0$ and $1 \leq T \nearrow \infty$ where $\tau\nearrow0$ corresponds to the direction of cosmological expansion (i.e. $t_c \nearrow \infty$). 

\begin{lem}\label{lem:Renorm-u0}
The normalisation condition \eqref{eq:FluidNormalisation} implies
\eq{\alg{\notag
u^0 
&= \frac{1}{(N^2 - X_a X^a)}\Big(\tau X_a u^a+ \Big[\tau^{2}(X_a u^a)^2 + (N^2 - X_aX^a)(\tau^{2}g_{ab} u^a u^b+1)\Big]^{1/2}\Big) .
}}
\end{lem}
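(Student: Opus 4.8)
The plan is to start from the normalisation condition \eqref{eq:FluidNormalisation}, expand it in ADM variables using \eqref{eq:ADM}, and then feed in the rescaling of Definition \ref{defn:rescaling} to obtain a quadratic equation in $u^0$ whose relevant root gives the claimed formula. First I would write out $\phFmet_{\mu\nu}\tu^\mu \tu^\nu = -1$ explicitly. From \eqref{eq:ADM} the components of $\phFmet$ are $\phFmet_{tt} = -\tlapse^2 + \phTmet_{ab}\tshift^a\tshift^b$, $\phFmet_{ta} = \phTmet_{ab}\tshift^b = \tshift_a$, and $\phFmet_{ab} = \phTmet_{ab}$. Writing $\tu^\mu = (\tu^0, \tu^a)$ (recall $\tu^0 = \tu^\tau$), the normalisation becomes
\eq{\notag
(-\tlapse^2 + \tshift_c\tshift^c)(\tu^0)^2 + 2\tshift_a \tu^a \tu^0 + \phTmet_{ab}\tu^a\tu^b = -1.
}
This is a quadratic in $\tu^0$: $(\tlapse^2 - \tshift_c\tshift^c)(\tu^0)^2 - 2(\tshift_a\tu^a)\tu^0 - (\phTmet_{ab}\tu^a\tu^b + 1) = 0$. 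Solving by the quadratic formula gives $\tu^0 = \frac{(\tshift_a\tu^a) \pm [(\tshift_a\tu^a)^2 + (\tlapse^2-\tshift_c\tshift^c)(\phTmet_{ab}\tu^a\tu^b+1)]^{1/2}}{\tlapse^2 - \tshift_c\tshift^c}$, where the $+$ root is selected because $\tu^\mu$ is future-directed (so $\tu^0 > 0$) and $\tlapse^2 - \tshift_c\tshift^c > 0$ for a Lorentzian ADM decomposition; the $-$ root would give a negative value of $\tu^0$ since the bracketed square root dominates $\tshift_a \tu^a$.

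The remaining step is bookkeeping with the rescaling. Using \eqref{GeometricRescaling} and \eqref{FluidRescaling}: $\tlapse = \tau^{-2}N$, $\tshift^a = \tau^{-1}X^a$, $\tu^a = \tau^2 u^a$, $\tu^0 = \tau^2 u^0$. Also $\tshift_a = \phTmet_{ab}\tshift^b = (\tau^{-2}g_{ab})(\tau^{-1}X^b) = \tau^{-3}X_a$, consistent with the remark $\tshift_a = \tau^{-3}X_a$ in the excerpt, and hence $\tshift_c\tshift^c = \tau^{-4}X_c X^c$, while $\tlapse^2 = \tau^{-4}N^2$, so $\tlapse^2 - \tshift_c\tshift^c = \tau^{-4}(N^2 - X_c X^c)$. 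Next $\tshift_a\tu^a = (\tau^{-3}X_a)(\tau^2 u^a) = \tau^{-1}X_a u^a$, so $(\tshift_a\tu^a)^2 = \tau^{-2}(X_a u^a)^2$. Finally $\phTmet_{ab}\tu^a\tu^b = (\tau^{-2}g_{ab})(\tau^2 u^a)(\tau^2 u^b) = \tau^2 g_{ab}u^a u^b$, so $\phTmet_{ab}\tu^a\tu^b + 1 = \tau^2 g_{ab}u^a u^b + 1$. Substituting all of this and noting $\tu^0 = \tau^2 u^0$ divides through:
\eq{\notag
\tau^2 u^0 = \frac{\tau^{-1}X_a u^a \pm \big[\tau^{-2}(X_a u^a)^2 + \tau^{-4}(N^2-X_cX^c)(\tau^2 g_{ab}u^a u^b + 1)\big]^{1/2}}{\tau^{-4}(N^2 - X_c X^c)}.
}
The square bracket equals $\tau^{-4}\big[\tau^2(X_a u^a)^2 + (N^2-X_cX^c)(\tau^2 g_{ab}u^a u^b+1)\big]$, so its square root is $|\tau|^{-2}$ times the square root in the statement; multiplying through by $\tau^4/\tau^2 = \tau^2$ and cancelling (using $|\tau|^{-2} = \tau^{-2}$) yields exactly the claimed expression, with the $+$ sign retained.

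The only genuinely delicate point — really a sanity check rather than an obstacle — is justifying the choice of the $+$ root and the positivity $N^2 - X_a X^a > 0$; these follow from the ADM ansatz being a bona fide Lorentzian splitting (the induced spatial metric is Riemannian and $\phFmet$ has Lorentzian signature, forcing $\tlapse^2 > \tshift_c\tshift^c$) together with $\tu^\mu$ being future-directed timelike, which is part of the standing hypotheses on the fluid. Everything else is the routine substitution outlined above, so I would present it compactly.
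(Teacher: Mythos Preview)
Your proposal is correct and follows essentially the same route as the paper: write \eqref{eq:FluidNormalisation} in ADM components to obtain a quadratic in $\tu^0$, solve via the quadratic formula, and substitute the rescalings of Definition~\ref{defn:rescaling}. Your justification for selecting the $+$ root (future-directedness of $\tu^\mu$ and $N^2 - X_aX^a > 0$) is slightly more explicit than the paper's, which simply picks the larger root.
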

\begin{proof}
Using \eqref{eq:FluidNormalisation} we have
\eq{\notag
0 = (-\tlapse^2 + \tshift_a\tshift^a)(\tu^0)^2 + 2\tshift_a \tu^a \tu^0 + \phTmet_{ab}\tu^a\tu^b+1.
}
This is a quadratic equation in $\tu^0$. The roots are
\eq{\notag
\tu^0 = \frac{1}{2(-\tlapse^2 + \tshift_a\tshift^a)}\Big( -2\tshift_a \tu^a\pm \Big[4(\tshift_a \tu^a)^2 - 4(-\tlapse^2 + \tshift_a\tshift^a)(\phTmet_{ab}\tu^a\tu^b+1)\Big]^{1/2}\Big) .
}
Applying the rescalings from Definition \ref{defn:rescaling} we find
\eq{\notag
\tu^0 = \frac{\tau^4}{(N^2 - X_a X^a)}\Big( \tau^{-1}X_a u^a\mp \Big[\tau^{-2}(X_a u^a)^2 + (N^2 - X_aX^a)(\tau^{-2}g_{ab} u^a u^b+\tau^{-4})\Big]^{1/2}\Big) .
}
Hence, we introduce the rescaled quantity $u^0 = \tau^{-2} \tu^0 $ for the larger root.
\end{proof}

 Let $\nabla, \bnab$ denote the Levi-Civita connection of the Riemannian metrics $g, \gamma$ respectively.
The Christoffel symbols of $\phFmet$ now become (see e.g. \cite{AF20})
\eq{\alg{\notag
 \Chrtn abc4&=\Chr abc[g]+\Gamma_{bc}X^a,\quad
 &\Chrtn a004 &=\tau^{-2}\Gamma^a,\\
 \Chrtn a0b4&=\tau^{-1}\left(-\delta_b^a+\Gamma_b^a\right),\quad
 &\Chrtn 0004&=\tau^{-1}(-2+\Gamma_R),\\
  \Chrtn 0ab4&=\tau\Gamma_{ab},\quad
 &\Chrtn 00a4&=\Gamma_a,\\
}}
where we have introduced the following rescaled geometric components:
 
\begin{defn}[Rescaled Christoffel components $\Gamma^a, \Gamma^a_b, \Gamma_R, \Gamma_{ab}, \overset{\circ}{\Gamma}{}^a$]\label{Def-Gammas-rescaled}
\eq{\alg{\notag
\Gamma^a&\define-\p_T X^a-X^a-2\widehat{N} X^a+X^b\na_b X^a-2 N \Si_c^a X^c+{N}{\na}^a{N}\\
 &\quad\,+ \Big(N^{-1}\p_T N-N^{-1} X^b\nabla_b N+N^{-1}\left(\Si_{bc}+\tfrac13g_{bc}\right) X^b X^c\Big)X^a, \\
\Gamma^a_b&\define-N\Si^a_b-\delta_b^a\widehat{N}+\nabla_b X^a- N^{-1} X^a\nabla_b N+ N^{-1}\left( \Si_{bc}+\tfrac13g_{bc}\right) X^c X^a,
\\
\Gamma_R&\define N^{-1}\big(-\p_T N+X^a\nabla_a N-(\Si_{ab}+\tfrac13g_{ab})X^aX^b\big),
\quad
\overset{\circ}{\Gamma}{}^a \define \Gamma^a - N \nabla^a N, \\
\Gamma_{ab}&=-N^{-1}(\Sigma_{ab}+\frac13g_{ab}),\qquad
\Gamma_a\define N^{-1}(\nabla_a N-(\Sigma_{ab}+\tfrac13g_{ab})X^b).
}}
\end{defn}

\begin{rem}[Background solutions]\label{rem:BackgroundSolution}
The rescaled background (B) Milne geometry written in CMCSH gauge is
\eq{\notag
(g_{ab}, \Sigma_{ab}, N, X^a)|_B\equiv (\gamma, 0, 3, 0).
}
Furthermore, 
\eq{\notag
(\Gamma^a, \Gamma^a_b, \Gamma_R, \Gamma_a) |_B \equiv 0, \quad \Gamma_{ab} |_B \equiv -\tfrac19 \gamma_{ab}.
}
Let $\rho_0'>0$ be a constant.
The background, uniformly quiet fluid solution in CMCSH gauge is
\eq{\notag
(u^0, u^i, \rho)|_B = (\tfrac13,0,\rho_0').
}
See also Appendix \ref{appendix:Background}.
\end{rem}

We next evaluate certain energy momentum and matter source terms arising from the dust. 
\begin{defn}[Matter source terms $\EnergyDensity, \jmath^a, \eta, S_{ab}, \underline{T}^{ab}$]\label{Def-resc-matter}
\eq{\alg{\notag
\EnergyDensity &\define \rho (u^0)^2 N^2,\qquad
&\jmath^a &\define\rho N u^0 u^a,  \\
\eta &\define E+ \rho g_{ab}(u^0 X^a+ \tau u^a)(u^0 X^b + \tau u^b), \\
S_{ab} &\define \rho( u^0 X_a+ \tau u_a )(u^0 X_b+ \tau u_b) + \tfrac{1}{2}\rho g_{ab},   \qquad
&\underline{T}^{ab} &\define \rho u^a u^b.
}}
For further details on these definitions see Appendix \ref{appendix:Matter}.
\end{defn}

\begin{defn}[Matter source terms $F_{u^j}, F_{u^0}, F_\rho$]\label{Defs-fluid-source}
\eq{\alg{\label{eq:Defs-fluid-source}
F_{u^j} &\define \tau^{-1}(u^0)^2\overset{\circ}{\Gamma}{}^j + ({\Chr jkl}[g]-{\Chr jkl}[\gamma])u^ku^l+2u^0u^i\Gamma^j_i+\tau u^k u^i \Gamma_{ki}X^j,
\\
F_{u^0} &\define (u^0)^2\Gamma_R+2\tau u^ju^0\Gamma_j + \tau^2 u^k u^j \Gamma_{kj},
\\
F_\rho &\define \tau \nabla_i u^i + \Gamma^i_i u^0 - \tau \Gamma_j u^j + \tau \Gamma_{ik}X^i u^k  - \tau \frac{u^j}{u^0} \nabla_j u^0 -\tau^2\frac{ u^k u^j}{u^0} \Gamma_{kj}.
}}
\end{defn}

%%%%%%%%%%%%%%%%%%%%%%%%%%%%%%%%%%%%%%%%
\subsection{Equations of Motion}
Bringing together all the previous notation, as well as using the general equations presented in \cite{AF20}, the equations of motion for the Einstein-Dust system in CMCSH gauge are the following. We have two constraint equations: 
\begin{subequations}\label{EoM}
\eq{\alg{\label{EoM-constraints}
\text{R}(g)-|\Sigma|_g^2+\tfrac{2}{3}&= 
	4 \tau \EnergyDensity ,
\\
\nabla^a \Sigma_{ab} &=
	2 \tau^2 \jmath_b ,
}}
and two elliptic equations for the lapse and shift variables
\eq{\alg{\label{eq:EoM-lapse-shift}
(\Delta - \tfrac{1}{3})N &= 
	N \left( |\Sigma|_g^2 - \tau \eta \right)-1, 
\\
\Delta X^a + \Ric[g]^a{}_b X^b &=
	2 \nabla_b N \Sigma^{ba} - \nabla^a \hN+ 2 N \tau^2 \jmath^a 
	- (2N \Sigma^{bc} - \nabla^b X^c)(\Gamma[g]^a_{bc} -  \Gamma[\gamma]^a_{bc}).
}}
We also have evolution equations for the induced metric and trace-free part of the second fundamental form 
\eq{\alg{\label{eq:EoM-pT-g-Sigma}
\p_T g_{ab} &=
	2N \Sigma_{ab} + 2\hN g_{ab} - \mcr{L}_X g_{ab}, 
\\
\p_T \Sigma_{ab} &=
	-2\Sigma_{ab} - N(\Ric[g]_{ab} +\tfrac{2}{9}g_{ab} ) + \nabla_a \nabla_b N + 2N \Sigma_{ac} \Sigma^c_b 
\\
& \quad 
	-\tfrac{1}{3} \hN  g_{ab} - \hN \Sigma_{ab} - \mcr{L}_X \Sigma_{ab} + N \tau S_{ab},
}}
and, finally, evolution equations for the fluid components:
\eq{\alg{\label{eq:EoM-fluid}
u^0\p_Tu^j
&= \tau u^a \nabla_a u^j + 
\tau^{-1}(u^0)^2 N\nabla^j N+F_{u^j},
\\ 
u^0\p_T u^0
& =\tau u^a \nabla_a u^0 + F_{u^0} ,
\\
u^0 \p_T \rho
&= \tau u^a \nabla_a \rho + \rho F_\rho .
}}
\end{subequations}

Using notation from \cite{AM03, AM11} we introduce new variables which allow us to rewrite \eqref{eq:EoM-pT-g-Sigma}.

\begin{defn}[Perturbation variables $h, v, w$ and geometric source terms $F_h, F_v$]\label{defn:Fh-Fv}
Define the variables
\eq{\notag
h_{ab}\define g_{ab}-\gamma_{ab}, \quad v_{ab}\define6\Sigma_{ab}, \quad w \define N/3, }
and the geometric source terms
\eq{\alg{\notag
(F_h )_{ab}&\define 2\hN g_{ab} + h_{ac}\bnab_b X^c + h_{cb}\bnab_a X^c ,
\\
(F_v )_{ab}&\define \nabla_a \nabla_b N + 2N \Sigma_{ac} \Sigma^c_b-\tfrac{1}{3} \hN  g_{ab} - \hN \Sigma_{ab}+ N \tau S_{ab} - v_{ac}\bnab_b X^c - v_{cb}\bnab_a X^c.
}}
\end{defn}

We start with  the following identity from \cite{AM03}:
\eq{\notag
\mcr{L}_X g_{ab} = X^c \bnab_c g_{ab} +g_{ac}\bnab_b X^c + g_{cb}\bnab_a X^c.
}
Due to rigidity properties of negative Einstein manifolds in three spatial dimensions (see e.g. \cite[\textsection 1.1]{AM11}), we have $\p_T \gamma = 0$. Thus the equations \eqref{eq:EoM-pT-g-Sigma} reduce to
\eq{\alg{\label{eq:EoM-hv}
\p_T h_{ab} &=
	wv_{ab}  - X^m \bnab_m h_{ab}  + F_h,
\\
\p_T v_{ab} &=
	-2v_{ab}  - 9 w\mL h_{ab}   
	 - X^c\bnab_c v_{ab}  +6F_v. }}
In Section \ref{sec:GeomEnergyEst} we will also write the first equation in \eqref{eq:EoM-hv} as 
\eq{\label{eq:EoM-h2}
\p_T h_{ab} =	wv_{ab} + 2\hN g_{ab} - (\mcr{L}_X g)_{ab} = wv_{ab} + 2\hN g_{ab} - g_{am}\nabla_b X^m - g_{bm} \nabla_a X^m.
}

Hereon we use the differential equations \eqref{eq:EoM-lapse-shift}, \eqref{eq:EoM-fluid} and \eqref{eq:EoM-hv} to analyse the solutions to our Einstein-Dust system. 

%%%%%%%%%%%%%%%%%%%%%%%%%%%%%%%%%%%%%%%%
\section{Preliminary Definitions}\label{sec:Prelims}
In this section we present several preliminary definitions concerning Sobolev spaces, norms, elliptic estimates and energy functionals. All of this is standard except for Definitions \ref{defn:delu} and \ref{defn:delu-geom-energy} where we introduce the fluid derivative $\delu$ and then the energy functionals for the geometric variables involving this fluid derivative. 
 
\subsection{Function Spaces and Norms}
\label{sec:FunctionsSpaces}
\begin{defn}[$\Lgg$-inner product]
Let $\mu_g = \sqrt{\det g}$ denote the volume element on $(M, g)$, similarly for $\mu_\gamma$.  Let $V, P$ be $(0,2)$-tensors on $M$.  Define an inner product  by
\eq{\notag
\langle V, P\rangle_\gamma \define V_{ij} P_{kl}\gamma^{ik}\gamma^{jl},
}
and define a mixed $L^2$-scalar product
\eq{\notag
(V, P)_{L^2(g, \gamma)} \define \int_M \langle  V, P\rangle_\gamma \, \mu_g ,
}
with corresponding norm
$
\| V\|_{\Lgg}^2 \define (V, V)_{L^2(g, \gamma)}.
$
\end{defn}

The following definition follows notation first introduced in \cite{AM11}.
\begin{defn}[$\ri{[}\gamma{]}\circ$ and $\mL$]\label{defn:mL}
Let $V$ be a symmetric $(0,2)$-tensor on $M$. Define the tensorial contraction
\eq{\notag
(\ri[\gamma]\circ V)_{ij} \define \ri[\gamma]_{iajb}\gamma^{aa'} \gamma^{bb'}V_{a'b'} \,,
}
where, following the convention of \cite{AM03}, the Riemann tensor is defined by $[\bnab_a,\bnab_i]V_b =(\bnab_a\bnab_i - \bnab_i\bnab_a) V_b\define -\Riem[\gamma]^c{}_{bai}V_c$. 
Define the following differential operators
\eq{\alg{\notag
\mD V_{ij} &\define (\sqrt{\det g})^{-1}\bnab_a\big(\sqrt{\det g}\cdot  g^{ab} \bnab_b V_{ij}\big), \\ 
\mL V_{ij} &\define -\hat{\Delta}_{g, \gamma} V_{ij} - 2 (\ri[\gamma]\circ V)_{ij} .
}}
\end{defn}

By using the gauge condition \eqref{gauges}, the operator $\mD$ can be rewritten as
\begin{align*}
\mD V_{cd}= g^{ab}\bnab_a \bnab_b V_{cd}- H^a \bnab_a V_{cd}.
\end{align*}
The operator $\mL$ is self-adjoint with respect to the mixed $L^2$-scalar product (see e.g. \cite{AM03})
\eq{\label{eq:mL-self-adjoint}
(\mL V, P)_{L^2(g, \gamma)} = (V, \mL  P)_{L^2(g, \gamma)}.
}
A self-adjoint elliptic operator on a compact manifold has a discrete spectrum of eigenvalues. 
Using eigenvalue estimates from \cite{KK-15}, we are led to the following result.

\begin{prop}[Estimates on $\lambda_0$]\label{prop:MilneGeometricRigidity}
Let $(M,\gamma)$ be a negative Einstein three-manifold with Einstein constant $k=-2/9$. Then the smallest eigenvalue of the operator $\mL$ satisfies $\lambda_0\geq 1/9$ and the operator also has trivial kernel $\ker(\mL)=\{0\}$. 
\end{prop}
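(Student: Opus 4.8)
The plan is to establish the two claims separately: first the lower bound $\lambda_0 \geq 1/9$ on the smallest eigenvalue of $\mL$, then the triviality of the kernel (which for $\lambda_0 > 0$ is in fact immediate from the first claim, so the real content is the eigenvalue estimate). Throughout I would work on the fixed background $(M,\gamma)$, where $g = \gamma$, $\sqrt{\det g}$ is constant, and $\mD$ reduces to the ordinary rough Laplacian $\bnab^a\bnab_a$ acting on symmetric $(0,2)$-tensors, so that $\mL V = -\bnab^a\bnab_a V - 2\,\ri[\gamma]\circ V$.

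\textbf{Reduction via the decomposition of symmetric 2-tensors.} The standard approach is to decompose an arbitrary eigentensor $V$ of $\mL$ into its pieces under the $L^2(\gamma)$-orthogonal splitting into transverse-traceless (TT) tensors, the image of the conformal Killing operator applied to vector fields, and the pure-trace part $\phi\gamma$. Since $\gamma$ is Einstein with $\Ric[\gamma] = -\tfrac{2}{9}\gamma$, these three subspaces are each preserved by $\mL$, so the spectral analysis reduces to understanding $\mL$ on each summand. On the trace part one computes $\mL(\phi\gamma) = (-\hat\Delta_\gamma \phi)\gamma - 2\phi\,\ri[\gamma]\circ\gamma$; using $\ri[\gamma]\circ\gamma = \Ric[\gamma] = -\tfrac29\gamma$ this becomes $(-\hat\Delta_\gamma\phi + \tfrac49\phi)\gamma$, whose eigenvalues are $\mu + \tfrac49 \geq \tfrac49$ where $\mu \geq 0$ runs over the scalar Laplace spectrum — comfortably above $1/9$. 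On the vector (longitudinal) part a Weitzenböck/integration-by-parts computation relates the quadratic form $(\mL V, V)$ to the spectrum of a Laplace-type operator on one-forms, and on an Einstein manifold with negative constant one again obtains eigenvalues bounded well below by a positive constant. The genuinely delicate case is the TT part, where positivity is not automatic: here $\ri[\gamma]\circ$ can have a negative-definite piece and one must invoke a genuine spectral gap result for $\mL$ restricted to TT tensors on negative Einstein three-manifolds.

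\textbf{The TT estimate and the role of \cite{KK-15}.} For the transverse-traceless part I would invoke the eigenvalue estimates of Kröncke (cited as \cite{KK-15}), which bound from below the smallest eigenvalue of the Lichnerowicz-type operator $\mL = -\hat\Delta_{g,\gamma} - 2\,\ri[\gamma]\circ$ on TT tensors over a compact negative Einstein manifold in terms of the Einstein constant (and in three dimensions, using that the Weyl tensor is algebraically determined, the full Riemann tensor is controlled by the Einstein constant, so the curvature term is completely explicit). With the normalization $k = -2/9$ the resulting lower bound is exactly $1/9$, which is the critical number appearing in all the Milne-stability literature; on the other two summands the bounds already exceed $1/9$, so the infimum over all three is $\geq 1/9$. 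I would then assemble these three cases: for any $V \in H^2$, decompose, apply self-adjointness \eqref{eq:mL-self-adjoint} to see the cross terms vanish, and conclude $(\mL V, V)_{L^2(g,\gamma)} \geq \tfrac19 \|V\|_{\Lgg}^2$, which gives $\lambda_0 \geq 1/9 > 0$. Triviality of the kernel is then immediate: $\mL V = 0$ forces $(\mL V, V) = 0$, hence $\|V\|_{\Lgg} = 0$, hence $V = 0$.

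\textbf{Main obstacle.} The main obstacle is the TT case: unlike the trace and longitudinal pieces, there is no elementary integration-by-parts argument yielding positivity, because the curvature term $-2\,\ri[\gamma]\circ$ genuinely competes with the rough Laplacian. Everything hinges on importing the sharp spectral bound of \cite{KK-15} and checking that, under the specific normalization $\Ric[\gamma] = -\tfrac29\gamma$, it produces the constant $1/9$ rather than something smaller — and in particular that the dimension $n = 3$ is used to pin down the full curvature tensor (so the estimate is not merely in terms of sectional curvature bounds but is genuinely sharp). A secondary, more bookkeeping-level point is verifying that the three subspaces of the tensor decomposition are genuinely $\mL$-invariant, which uses the Einstein condition in an essential way and the commutation of $\bnab$ with the relevant projections up to curvature terms that are themselves proportional to $\gamma$.
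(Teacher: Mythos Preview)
Your proposal is correct in spirit and ultimately rests on the same external input as the paper, namely the eigenvalue estimates of \cite{KK-15}. However, the paper itself gives no proof at all: the proposition is simply stated as a direct consequence of \cite{KK-15}, with the preceding sentence ``Using eigenvalue estimates from \cite{KK-15}, we are led to the following result'' serving as the entire justification.

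Your decomposition into trace, longitudinal, and TT parts is the standard scaffolding one would use to unpack that citation, and you correctly identify the TT piece as the only genuinely nontrivial case (where the curvature term competes with the rough Laplacian and no elementary integration-by-parts suffices). So your write-up is a reasonable elaboration of what the paper leaves implicit, but it is not a different route: both you and the paper are deferring the real work to \cite{KK-15}, and the paper simply does so without the surrounding discussion. One small caution: your explicit computation on the trace part should be checked against the paper's sign convention for $\Riem$ (Definition~\ref{defn:mL}), though since that summand lands comfortably above $1/9$ regardless, any sign slip there would not affect the conclusion.
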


\begin{defn}[Sobolev norms]
Let $k \in \mathbb{Z}_{\geq 0}$. 
For $\kappa$ a Riemannian metric on $M$, $f$ a function and $V$ a $(1,1)$-tensor, define 
\eq{ \alg{\notag
|\nabla^k f|_\kappa^2 &\define \kappa^{a_1 b_1} \cdots \kappa^{a_k b_k} (\nabla_{a_1}\cdots \nabla_{a_k} f )\cdot (\nabla_{b_1}\cdots \nabla_{b_k} f ),
\\
|\nabla^k V|_\kappa^2 &\define \kappa_{ij} \kappa^{kl} \kappa^{a_1 b_1} \cdots \kappa^{a_k b_k} (\nabla_{a_1}\cdots \nabla_{a_k} V^i{}_k)\cdot (\nabla_{b_1}\cdots \nabla_{b_k} V^j{}_l).
}}
The obvious extension to $(p,q)$-tensors holds. 
We write
\eq{\alg{\notag
\|V \|_{H^k} &= \Big(\sum_{0\leq \ell \leq k}\int_M |\nabla^\ell V|_g^2 \, \mu_g\Big)^{1/2}.
}}
Note that under a global smallness assumption on $g-\gamma$ guaranteed by the bootstrap assumptions, we have the norm equivalence $\| \cdot\|_{L^2} \cong \|\cdot \|_{\Lgg}$. 
\end{defn}

\begin{rem}\label{rem:u-norm}
When we write $\|u \|_{H^{k}}$ we are denoting a sum \emph{only} over the spatial components of the velocity vector-field $u^\mu$.  
\end{rem}

We frequently, and without comment, use the following product estimate:
\begin{lem}[Sobolev product estimates]\label{lem:SobProdEst}
If $s>n/p = 3/2$ then 
\eq{\notag
\| uv \|_{H^s} \lesssim \| u \|_{H^s}\| v \|_{H^s}.
}
\end{lem}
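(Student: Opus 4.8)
The plan is to reduce the estimate to the Leibniz rule combined with Gagliardo--Nirenberg interpolation and the Sobolev embedding $H^s\hookrightarrow L^\infty$. Since the Sobolev spaces in this paper are indexed by nonnegative integers, it suffices to treat $s$ as an integer $k\geq 2$ (so in particular $k>n/2=3/2$), which is the only case we actually invoke. First I would expand, for each $0\leq \ell\leq k$,
\[
\nabla^\ell(uv)=\sum_{a+b=\ell} c_{a,b}\,\nabla^a u\otimes\nabla^b v+(\text{curvature corrections}),
\]
where the correction terms arise from commuting covariant derivatives and involve at most $\ell-2$ derivatives of $u$ and $v$ together with covariant derivatives of $\Riem[g]$; in the (bootstrap) smallness regime for $g-\gamma$ these are dominated by the principal terms with one fewer derivative and are therefore harmless. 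Hence it is enough to bound $\|\nabla^a u\otimes\nabla^b v\|_{L^2}$ for $a+b\leq k$.

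For the interior range $1\leq a\leq k-1$ (and likewise $1\leq b=\ell-a\leq k-1$) I would apply H\"older with exponents $p_a=2k/a$ and $p_b=2k/b$, noting $1/p_a+1/p_b=(a+b)/(2k)\leq 1/2$ with the deficit absorbed by the finite volume of $M$, followed by the Gagliardo--Nirenberg inequality on the closed Riemannian manifold $(M,g)$,
\[
\|\nabla^a u\|_{L^{2k/a}}\lesssim \|u\|_{H^k}^{a/k}\|u\|_{L^\infty}^{1-a/k}\lesssim \|u\|_{H^k},
\]
using $H^k(M)\hookrightarrow L^\infty(M)$, valid since $k>n/2$. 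For the endpoint contributions $a=0$ (resp.\ $a=\ell$) one estimates directly $\|u\,\nabla^\ell v\|_{L^2}\leq \|u\|_{L^\infty}\|\nabla^\ell v\|_{L^2}\lesssim \|u\|_{H^k}\|v\|_{H^k}$, again by the embedding. Summing over $0\leq\ell\leq k$ and over all admissible $(a,b)$ then gives $\|uv\|_{H^k}\lesssim \|u\|_{H^k}\|v\|_{H^k}$.

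The main obstacle is exactly the endpoint terms $\|u\|_{L^\infty}\|\nabla^k v\|_{L^2}$: these close precisely because $k>n/2=3/2$, via $H^k\hookrightarrow L^\infty$, which is why the stated threshold is sharp. The only mildly technical point is that the interpolation and embedding inequalities must be run on the curved manifold $M$ rather than on $\mathbb{R}^n$; this is handled by fixing the background metric $\gamma$, for which these inequalities are standard on a closed manifold, and using that $g$ remains uniformly equivalent to $\gamma$ (indeed bounded in $H^k$) under the bootstrap assumptions, so all constants are uniform. For genuinely non-integer $s$ one would instead work in coordinate charts with a partition of unity and use the Euclidean convolution estimate $\langle\xi\rangle^s\lesssim\langle\xi-\eta\rangle^s+\langle\eta\rangle^s$ together with $\langle\cdot\rangle^{-s}\in L^2(\mathbb{R}^n)$ when $s>n/2$, transferring back to $M$ via the smooth transition maps.
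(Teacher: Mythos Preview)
Your proof is correct and follows the standard route via the Leibniz rule, Gagliardo--Nirenberg interpolation, and the Sobolev embedding $H^s \hookrightarrow L^\infty$ for $s>n/2$. The paper itself provides no proof of this lemma, treating it as a well-known fact; your write-up therefore supplies more detail than the paper does, and nothing in it is at odds with the paper's setting.
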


We conclude this subsection with a result concerning elliptic regularity, see e.g. \cite[App. H]{Besse}.
\begin{lem}[Elliptic regularity using $\mL$]
\label{lem:elliptic-regularity-mL}
Let $V$ be a symmetric $(0,2)$-tensor on $M$.
There exist constants $C_1, C_2>0$ such that for all $s \in \mathbb{Z}_{\geq 0}$
\eq{\notag
C_1\| V \|_{H^{k+2s}} \leq \| \mL^s V\|_{H^k} \leq C_2\|V\|_{H^{k+2s}}\,,
}
where $\mL^s$ denotes $s$-copies of $\mL$. 
\end{lem}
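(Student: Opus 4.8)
The statement to prove is Lemma~\ref{lem:elliptic-regularity-mL}: the elliptic regularity estimate
$$C_1\|V\|_{H^{k+2s}} \leq \|\mL^s V\|_{H^k} \leq C_2\|V\|_{H^{k+2s}}$$
for symmetric $(0,2)$-tensors on the closed manifold $M$.

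\textbf{Proof plan.} The plan is to reduce to the standard $s=1$ case and then iterate. First I would establish the base case $s=1$, i.e.\ the two-sided bound $C_1\|V\|_{H^{k+2}} \leq \|\mL V\|_{H^k} \leq C_2\|V\|_{H^k+2}$. The upper bound is immediate since $\mL V = -\mD V - 2(\ri[\gamma]\circ V)$ is a second-order differential operator with coefficients built from $g$, $\gamma$ and $\ri[\gamma]$ (all controlled, using the bootstrap smallness of $g-\gamma$ to bound the $g$-dependent coefficients); each term costs at most two derivatives, so $\|\mL V\|_{H^k}\lesssim\|V\|_{H^{k+2}}$. For the lower (coercive) bound one uses that $\mL$ is a self-adjoint, \emph{elliptic} second-order operator on the closed manifold $M$ (its principal symbol is that of $-\Delta_g$ acting componentwise, hence elliptic), so by the standard G\r{a}rding/elliptic estimate there is $C$ with $\|V\|_{H^{k+2}} \leq C(\|\mL V\|_{H^k} + \|V\|_{H^k})$. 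To upgrade this to the clean inequality without the lower-order term $\|V\|_{H^k}$, I would invoke Proposition~\ref{prop:MilneGeometricRigidity}: $\ker\mL=\{0\}$ and $\lambda_0\geq 1/9>0$, so $\mL$ is an isomorphism and $\|V\|_{L^2}\leq\lambda_0^{-1}\|\mL V\|_{L^2}\leq 9\|\mL V\|_{L^2}$; feeding this back absorbs the $\|V\|_{H^k}$ term (or, alternatively, a compactness/contradiction argument using that the inclusion $H^{k+2}\hookrightarrow H^k$ is compact and $\ker\mL$ is trivial).

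\textbf{Iteration.} Having the $s=1$ estimate, the general case follows by induction on $s$. Write $\mL^s V = \mL(\mL^{s-1}V)$. For the upper bound, apply the $s=1$ upper bound with $V$ replaced by $\mL^{s-1}V$ and $k$ replaced by $k$: $\|\mL^s V\|_{H^k} = \|\mL(\mL^{s-1}V)\|_{H^k}\leq C_2 \|\mL^{s-1}V\|_{H^{k+2}}$, then apply the inductive hypothesis (with $k\mapsto k+2$) to get $\leq C_2 C_2'\|V\|_{H^{k+2s}}$. For the lower bound, run the same chain in reverse: $\|V\|_{H^{k+2s}}\leq C_1^{-1}\|\mL^{s-1}V\|_{H^{k+2}}\leq C_1^{-1}(C_1')^{-1}\|\mL^s V\|_{H^k}$, using the $s=1$ lower bound with $V\mapsto\mL^{s-1}V$ together with the fact (from $\ker\mL=\{0\}$ and self-adjointness, or from regularity of $\mL^{-1}$) that $\mL^{s-1}V$ is itself a legitimate symmetric $(0,2)$-tensor in the appropriate Sobolev space, so the induction hypothesis applies to it. Relabelling constants yields the stated $C_1, C_2$.

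\textbf{Main obstacle.} The only genuinely non-routine point is the absorption of the lower-order term in the coercive estimate, which is precisely where triviality of $\ker\mL$ (Proposition~\ref{prop:MilneGeometricRigidity}) is essential --- without it $\mL^s$ would only be bounded below modulo its kernel. A secondary technical caveat is that the coefficients of $\mL$ depend on the dynamical metric $g$ rather than the fixed Einstein metric $\gamma$; one should note that the constants $C_1, C_2$ can be taken uniform over the set of $g$ satisfying the bootstrap smallness assumption $\|g-\gamma\|$ small, since ellipticity and the spectral gap are open conditions and the coefficient bounds are uniform there. Everything else is the textbook elliptic theory on a compact manifold, so I would simply cite \cite[App.~H]{Besse} for the core estimate and spell out only the iteration and the kernel-absorption step.
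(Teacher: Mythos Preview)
Your proposal is correct and is precisely the standard argument the paper is gesturing at: the paper gives no proof of this lemma at all, simply citing \cite[App.~H]{Besse} as a reference for elliptic regularity on compact manifolds. Your outline---base case via G\r{a}rding plus absorption of the lower-order term using $\ker\mL=\{0\}$ from Proposition~\ref{prop:MilneGeometricRigidity}, then induction on $s$---is exactly the textbook route one would extract from that reference, and your remarks about uniformity of constants in $g$ under the bootstrap smallness are apt and slightly more careful than the paper itself.

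One minor caveat: as literally stated, the lemma asserts constants $C_1,C_2$ valid \emph{for all} $s\in\mathbb{Z}_{\geq 0}$, whereas your iteration produces constants of the form $C_1^s, C_2^s$. In the paper only finitely many values $s\leq\ell$ are ever used, so this is harmless; if one really wanted $s$-uniform constants, the spectral decomposition of the self-adjoint $\mL$ (with $\lambda_0\geq 1/9$) would supply them at the $L^2$ level, and the rest follows by interpolation. This is a wording issue in the paper rather than a gap in your argument.
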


%%%%%%%%%%%%%%%%%%%%%%%%%%%%%%%%%%%%%%%%
\subsection{Energy for the Geometric Perturbations} \label{subsec:energy-geom-def}
As noted in \cite{AM03}, in the spatially harmonic gauge \eqref{gauges} we have
\begin{align*}
\Ric[g]_{ab}+\frac29 g_{ab} = \frac12 \mL(g-\gamma)_{ab} + J_{ab},
\end{align*}
where $J_{ab}$ are higher-order terms (writen as $S_{ab}$ in \cite[pg. 22]{AM03}) satisfying,  for $k\geq1$,
\begin{align*}
\| J\|_{H^{k-1}}\leq C \| g-\gamma \|_{H^k}.
\end{align*}
Following \cite{AM11, AF20}, we define an energy for the geometric perturbation of the first and second fundamental forms by using $\mL$. This energy will fulfill a strong decay estimate enabled by the inclusion of certain correction terms $\Gamma_{(m)}$.

\begin{defn}[Geometric energy $\Eg_k$]\label{defn:GeometricEnergy}
Let $\lambda_0$ be the lowest eigenvalue of the operator $\mL$, with lower bounds given in Proposition \ref{prop:MilneGeometricRigidity}. 
We define the correction parameter $\alpha=\alpha(\lambda_0,\delta_\alpha)$ by
\eq{\notag
\alpha\define
\begin{cases}
1& \lambda_0>1/9\\
1-\delta_\alpha& \lambda_0=1/9,
\end{cases}
}
where $\delta_\alpha=\sqrt{1-9(\lambda_0-\varepsilon')}$ with $1\gg\varepsilon'>0$ remains a variable to be determined in the course of the argument to follow. By fixing $\varepsilon'$ once and for all, $\delta_\alpha$ can be made suitably small when necessary.
The corresponding correction constant, relevant for defining the corrected energies, is defined by
\eq{\notag
c_E\define\begin{cases}
1& \lambda_0>1/9\\
9(\lambda_0-\varepsilon')& \lambda_0=1/9.
\end{cases}
}
We are now ready to define the energy for the geometric perturbation.
For $m,k \in \mathbb{Z}_{\geq 1}$ let
\begin{align*}
\mathcal{E}_{(m)}&\define\frac12\left( v,\mathcal{L}_{g,\gamma}^{m-1}( v) \right)_{\Lgg}+\frac92\left( h,\mathcal{L}_{g,\gamma}^{m}(h)\right)_{\Lgg},\qquad
\Gamma_{(m)}\define\left( v,\mathcal{L}_{g,\gamma}^{m-1}(h)\right)_{\Lgg}.
\end{align*}
The energy measuring the geometric perturbation is then defined by
\eq{\notag
\Eg_k\define \sum_{1\leq m\leq k} \big(\mathcal{E}_{(m)}+c_E\Gamma_{(m)}\big).
}
\end{defn}

We will see later that the corrected geometric energy $\Eg_k$ is in fact coercive over the standard Sobolev norms of the geometric variables $g, \Sigma$.

\begin{defn}[Operators $\hdel, \delu$]\label{defn:delu}
We define the operators on $M$
\eq{\notag
\hdel \define \p_T + \mathcal{L}_X, \qquad \delu \define u^0 \p_T - \tau u^a \bnab_a .
}
\end{defn}
The following identity, taken from \cite{CM01}, holds for some function $f$ on $M$:
\eq{\label{eq:hdelIntegral}
\p_T \int_M f \mu_g = 3 \int_M \hN f \mu_g + \int_M \hdel(f) \mu_g \,.
}

\begin{rem}[Regularity parameters $\ell, N$]
\label{rem:ell-N}
At top-order our bootstrap assumptions will involve Sobolev norms $H^N$ where $N$ is a large integer. It is convenient to require $N$ to be odd so that we can introduce $\ell\in\mathbb{Z}$ satisfying $\ell=\frac{N-1}{2}$.
\end{rem}

We end this section with the top-order geometric energy for the geometric variables $g, \Sigma$, which crucially involves the fluid operator $\delu$. This energy will also fulfill a strong decay estimate enabled by the inclusion of the correction terms. 

\begin{defn}[$E^g_{\delu, 2\ell}$ and $E_{tot}$]\label{defn:delu-geom-energy}
For $s \in \mathbb{Z}_{\geq 1}$, define 
\eq{\alg{\notag
\mathcal{E}^g_{\delu,2s}&\define \frac92 \left(\delu\mL^{s}(h),\delu\mL^{s}(h)\right)_{\Lgg}+ \frac12\left( \delu \mL^{s} v ,\delu\mL^{s-1} v \right)_{\Lgg} ,
\\
\Gamma^g_{\delu,2s}&\define \left(\delu\mL^{s-1}v,\delu\mL^{s}(h)\right)_{\Lgg}.
}}
The corrected \emph{$\delu-$boosted} geometric energy is then given by
\eq{\notag
\Edelu\define \sum_{s=1}^\ell \Big(\mathcal{E}^g_{\delu,2s}+ c_E  \Gamma^g_{\delu,2s}\Big).
}
Finally we define
\eq{\notag
\Etot \define \Edelu + E^g_{N-1}.
}
\end{defn}

%%%%%%%%%%%%%%%%%%%%%%%%%%%%%%%%%%%%%%%%
\section{The Bootstrap Argument}
In this section we first state the local-existence theory for the Einstein-Dust system in CMCSH gauge. Then we introduce the bootstrap assumptions on our solution and  give some immediate consequences of these estimates. 
 
\label{sec:LocalandBootstrap}
\subsection{Local Existence}
\begin{thm}
Let $N\geq 6$. Consider CMC initial data $(g_0,k_0,N_0,X_0,\rho_0,u_0)\in H^{N}\times H^{N-1}\times H^N\times H^N\times H^{N-2}\times H^{N-1}$ at $T=T_0$ such that the constraints \eqref{EoM-constraints} hold. Then there exists a unique classical solution $(g,k,N,X,\rho,u)$ on $[T_0,T_+)$ for $T_+>T_0$ to the system \eqref{EoM}, which is consequently also a solution to the Einstein-Dust equations.
The components have the following regularity features
\eq{\alg{\notag
g,N,X \in C^0([T_0,T_+),H^N)\cap C^1([T_0,T_+),H^{N-1}),\\
k\in C^0([T_0,T_+),H^{N-1})\cap C^1([T_0,T_+),H^{N-2}),\\
u\in C^{0}([T_0,T_+],H^{N-1}),\\
\rho,\partial_ \bold{u}\rho,\partial_{\bold u}u \in C^{0}([T_0,T_+],H^{N-2}).
}}
Furthermore, the time of existence and the norms of the solution depend continuously on the initial data. 
For the maximal time of existence $T_\infty$ we have either $T_\infty=+\infty$ or
\eq{\alg{\notag
\lim_{T\nearrow T_\infty} \sup_{[T_0,T_\infty]}& \|g-\gamma\|_{H^N}+ \|\Si\|_{H^{N-1}}+\|N-3\|_{H^N}+\|X\|_{H^N}\\
&+\|\partial_T N\|_{H^{N-1}}+\|\partial_T X\|_{H^{N-1}}+\|\rho\|_{H^{N-2}}+|\tau|\|u\|_{H^{N-1}}> \delta(\gamma),
}}
where $\delta(\gamma)$ is a positive fixed constant depending only on the background metric.
\end{thm}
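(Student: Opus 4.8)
The statement is the standard local well-posedness plus continuation criterion for the CMCSH-reduced Einstein--Dust system \eqref{EoM}. The plan is to recast the system \eqref{eq:EoM-lapse-shift}, \eqref{eq:EoM-fluid}, \eqref{eq:EoM-hv} as a coupled elliptic--hyperbolic--transport system for the unknowns $(h,v,N,X,\rho,u)$ and to run a first-order-in-time iteration argument, treating the lapse $N$ and shift $X$ as elliptically determined at each time slice. First I would observe that, given $(g,\Sigma,\rho,u)$ in the stated Sobolev spaces and $g$ close to $\gamma$, the linear elliptic operators $\Delta-\tfrac13$ and $\Delta + \Ric[g]$ in \eqref{eq:EoM-lapse-shift} are invertible (the latter because $\gamma$ is negative Einstein so $\Delta+\Ric[\gamma]$ has no kernel among vector fields, and invertibility is open), so standard elliptic regularity yields $N\in H^{N}$, $X\in H^{N}$ with the gain of two derivatives over the source terms; in particular $N,X$ are controlled at \emph{two} orders above $\Sigma$ and $\rho$, which is why they sit in $H^N$ rather than $H^{N-1}$. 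Differentiating the elliptic equations in $T$ and using the evolution equations for $g,\Sigma$ also gives $\partial_T N,\partial_T X\in H^{N-1}$, matching the claimed regularity.

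Next I would set up the iteration. Given an $n$-th iterate, solve the elliptic system for $(N^{(n)},X^{(n)})$; then solve the linear symmetric-hyperbolic system obtained by freezing coefficients in \eqref{eq:EoM-hv} for $(h^{(n+1)},v^{(n+1)})$ — here $\mL$ supplies the principal part and the first-order transport term $-X^m\bnab_m$ is lower order — and simultaneously solve the transport equations \eqref{eq:EoM-fluid} for $(u^{(n+1)},\rho^{(n+1)})$ along the fluid vector field $u^0\partial_T - \tau u^a\bnab_a$, which is timelike hence noncharacteristic for the slices. The key structural point making the transport step well-posed is that $u^0$ is bounded below away from zero (it is $\tfrac13$ on the background, by Lemma \ref{lem:Renorm-u0} and Remark \ref{rem:BackgroundSolution}), so one may divide by $u^0$ and integrate along characteristics. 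The regularity asymmetry is the crucial bookkeeping: from \eqref{eq:EoM-fluid}, $\rho$ loses one derivative relative to $u$ because $F_\rho$ contains $\nabla_i u^i$; hence if $u\in H^{N-1}$ then $\rho\in H^{N-2}$, and $\partial_{\mathbf u}\rho,\partial_{\mathbf u}u\in H^{N-2}$ follow by reading off the evolution equations. One then derives uniform energy estimates for the iterates on a common time interval $[T_0,T_+)$ (using Lemma \ref{lem:SobProdEst} to handle the nonlinearities, with $N\geq 6$ ensuring $H^{N-2}$ is an algebra controlling all source terms), shows the differences form a contraction in a lower norm, passes to the limit, and recovers persistence of the top regularity by a standard weak-convergence/Bona--Smith-type argument. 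Continuous dependence on the data follows from the same contraction estimates.

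Finally, the continuation criterion is obtained in the usual way: if the displayed norm
\[
\|g-\gamma\|_{H^N}+\|\Sigma\|_{H^{N-1}}+\|N-3\|_{H^N}+\|X\|_{H^N}+\|\partial_T N\|_{H^{N-1}}+\|\partial_T X\|_{H^{N-1}}+\|\rho\|_{H^{N-2}}+|\tau|\|u\|_{H^{N-1}}
\]
stays below $\delta(\gamma)$ on $[T_0,T_\infty)$, then the ellipticity and the energy estimates remain valid with uniform constants, the solution extends past $T_\infty$, contradicting maximality; here $\delta(\gamma)$ is chosen small enough that the elliptic operators stay invertible and $u^0$ stays bounded below. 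I expect the main obstacle to be the derivative-loss mismatch described in Section \ref{intro:Novelties}: naively, $\rho$ and $u$ enter the Einstein source terms $E,\jmath,\eta,S_{ab}$ at the \emph{same} order, yet the fluid transport equations force $u$ one order above $\rho$ while needing $\Sigma$ (equivalently $N,X$) at its own order — so closing the iteration at the stated regularity levels $(H^N,H^{N-1},H^{N-2},H^{N-1})$ requires carefully exploiting the two-derivative elliptic gain for $N,X$ together with the precise structure of $F_{u^j},F_{u^0},F_\rho$ in Definition \ref{Defs-fluid-source}, rather than a black-box hyperbolic existence theorem. This is exactly the delicate accounting that the local existence theorem codifies, and which the rest of the paper must respect at the nonlinear level.
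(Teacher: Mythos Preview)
Your proposal is correct and aligns with the paper's own treatment, which is itself only a one-paragraph reference: the paper simply states that the proof follows \cite[Theorem~3.5]{HaSp15} with the lapse/shift control replaced by the elliptic techniques of \cite{F16-2}, and that the smallness $\delta(\gamma)$ in the continuation criterion comes from the requirement that the elliptic lapse equation be solvable. Your sketch --- elliptic determination of $(N,X)$ with two-derivative gain, evolution of $(h,v)$ as a hyperbolic system with principal part governed by $\mL$, transport of $(\rho,u)$ along the fluid vector field with $u^0$ bounded below, and the regularity bookkeeping forcing $\rho$ one order below $u$ --- is precisely the content behind those citations, so there is nothing to correct.
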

\begin{proof}
The proof follows analogous to \cite[Theorem 3.5]{HaSp15} where the control on the lapse and shift are replaced by the elliptic techniques applied in \cite{F16-2}. The adaption of the regularity scheme to avoid the loss of derivatives from \cite{HaSp15} to the present case is executed in detail in the global analysis discussed in the remainder of the paper. The smallness condition in the continuation criterion stems from a smallness requirement in applying the corresponding elliptic equation for the lapse.
\end{proof}

\begin{rem}
Since we apply the local-existence theorem only for data close to the background solution the smallness condition required to extend the solution for arbitrarily large times is automatically fulfilled when our smallness conditions hold, which we prove by a bootstrap argument. Since the smallness parameter of the continuation criterion depends only on the background geometry we can choose the smallness of the initial data, which we do accordingly without mentioning it explicitly again.
\end{rem}
%%%%%%%%%%%%%%%%%%%%%%%%%%%%%%%%%%%%%%%%
\subsection{Bootstrap Assumptions}
Let $\mu, \lambda$ be fixed positive constants with $ \mu \ll 1$ and $\lambda <1$. 
We assume that, for all $T_0 \leq T\leq T'$, the following bootstrap assumptions hold \eqref{EoM}:
\eq{\alg{\label{Bootstraps}
\|{g-\gamma}\|_{H^N} +\|\Si\|_{H^{N-1}} &\leq C \varepsilon e^{-\lambda T},\\
\|N-3\|_{H^{N}}+\|X\|_{H^{N}}&\leq C\varepsilon  e^{-T},\\
\|\p_T N\|_{H^{N-1}}+\|\p_T X\|_{H^{N-1}}&\leq C\varepsilon  e^{-T},\\
\|\rho \|_{H^{N-2}}&\leq C \varepsilon, \\
 \|u \|_{H^{N-1}}&\leq C \varepsilon e^{\mu T},
}}
where $T'<T_\infty$ is fixed. 
We hereon assume that \eqref{Bootstraps} hold and \emph{do not repeat this fact}. Recall also Remark \ref{rem:u-norm} regarding the norm on $u$. 

\begin{defn}[$\Lambda(T)$]
It is convenient to introduce the following notation
\eq{\notag
\Lambda(T) \define \| \hN \|_{H^N} + \| X\|_{H^N} + \| \p_T N \|_{H^{N-1}} + \| \p_T X \|_{H^{N-1}} + |\tau| \| \rho\|_{H^{N-2}}
+ \tau^2 \| u \|_{H^{N-1}}^2.
}
Note that, under the bootstrap assumptions \eqref{Bootstraps}, $\Lambda(T) \lesssim \varepsilon e^{-T}$. 
\end{defn} 

We state some immediate consequences of the bootstrap assumptions regarding $u^0$ which we  use without further comment. 
Using Lemma \ref{lem:Renorm-u0}, we have
\eq{\alg{\label{eq:est-u0}
\|\hu\|_{H^{N-1}}
&\lesssim \varepsilon , \\ 
\| \nabla \hu \|_{H^{N-2}} &\lesssim \|\nabla N \|_{H^{N-2}} + \tau^2 \| u^a \nabla u^a \|_{H^{N-2}} + \text{h.o.t} \lesssim \Lambda(T). 
}}
Note the first estimate does not pick up any $\mu-$loss. By the Sobolev embedding $H^2 \hookrightarrow L^\infty$, $\|\hu\|_{L^\infty}= \|u^0-1/3\|_{L^\infty}\leq 1/10$ and thus 
\eq{\notag
\|u^0\|_{L^\infty}\lesssim 1, \quad \|u^0\|_{L^\infty}^{-1} \lesssim 1.
}

The following Lemma concerning the dust matter components is indicative of the good behaviour that, as we discussed in Section \ref{intro:CritExp}, we roughly expect as the speed of sound is reduced.

\begin{lem}[Estimates on matter components]
\label{lem:matter-first-estimates}
We have
\eq{\alg{\notag
|\tau|\big(\| \EnergyDensity\|_{H^{N-2}} + \| \eta\|_{H^{N-2}}+ \| S\|_{H^{N-2}} \big) 
&\lesssim \Lambda(T),\\
|\tau|^2 \| \jmath \|_{H^{N-2}}\lesssim \varepsilon e^{(-1+\mu)T} \Lambda(T), \quad
 |\tau|^3 \| \underline{T}\|_{H^{N-2}}
&\lesssim \Lambda(T)^2.
}}
\end{lem}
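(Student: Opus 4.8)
The plan is to estimate each matter source term directly from the defining formulas in Definition \ref{Def-resc-matter}, bounding every factor in the appropriate Sobolev space by means of the bootstrap assumptions \eqref{Bootstraps}, the consequences for $u^0$ in \eqref{eq:est-u0}, and the Sobolev product estimate of Lemma \ref{lem:SobProdEst}. The guiding principle is that every matter quantity carries at least one factor of $\rho$ (bounded by $\varepsilon$ in $H^{N-2}$ without any $\mu$-loss) together with factors of $u^0$ (bounded in $L^\infty$ and in $H^{N-1}$ by a constant, again without $\mu$-loss) and, in the more degenerate terms, extra powers of the spatial velocity $u^a$, which is only controlled as $\tau^2\|u\|_{H^{N-1}}^2 \lesssim \varepsilon e^{-T}$, i.e. at the level of $\Lambda(T)$ when weighted by $\tau^2$.

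Concretely, for $\EnergyDensity = \rho (u^0)^2 N^2$ I would write $|\tau|\,\|\EnergyDensity\|_{H^{N-2}} \lesssim |\tau|\,\|\rho\|_{H^{N-2}}\|u^0\|_{H^{N-1}}^2\|N\|_{H^{N-1}}^2$; since $\|u^0\|_{H^{N-1}}$ and $\|N\|_{H^{N-1}}$ are $O(1)$ by the bootstrap bounds and \eqref{eq:est-u0}, this is $\lesssim |\tau|\,\|\rho\|_{H^{N-2}} \lesssim \Lambda(T)$. For $\eta$ and $S_{ab}$ the new pieces beyond $\EnergyDensity$ (respectively $\tfrac12\rho g_{ab}$) are of the schematic form $\rho\,(u^0 X + \tau u)(u^0 X + \tau u)$; expanding, the $\rho (u^0)^2 XX$ term is $\lesssim |\tau|^{-1}\cdot|\tau|\|\rho\|\,\|X\|_{H^N}^2 \lesssim \Lambda(T)$ trivially, the cross term $\rho u^0 X \tau u$ contributes $|\tau|\cdot|\tau|\|\rho\|\|X\|\|u\|$ which is even smaller, and the $\rho\,\tau^2 u u$ term gives $|\tau|\cdot\|\rho\|\cdot \tau^2\|u\|_{H^{N-1}}^2 \lesssim \varepsilon\cdot\Lambda(T)$ after using Lemma \ref{lem:SobProdEst}; the bare $|\tau|\|\rho g\|_{H^{N-2}} \lesssim |\tau|\|\rho\|_{H^{N-2}} \lesssim \Lambda(T)$ handles the remaining piece of $S$. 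For $\jmath^a = \rho N u^0 u^a$ one factor of $u$ appears linearly, so $|\tau|^2\|\jmath\|_{H^{N-2}} \lesssim |\tau|^2\|\rho\|_{H^{N-2}}\|u\|_{H^{N-1}} \lesssim |\tau|^2 \varepsilon\cdot \varepsilon e^{\mu T} = \varepsilon e^{(-2+\mu)T}\cdot \varepsilon$; writing one power of $|\tau| = e^{-T}$ (up to the $\tau_0$ constant) as $\varepsilon e^{(-1+\mu)T}\cdot |\tau|\|\rho\|_{H^{N-2}} \lesssim \varepsilon e^{(-1+\mu)T}\Lambda(T)$ gives the claimed form. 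Finally $\underline{T}^{ab} = \rho u^a u^b$ carries two powers of $u$, so $|\tau|^3\|\underline{T}\|_{H^{N-2}} \lesssim |\tau|\|\rho\|_{H^{N-2}}\cdot \tau^2\|u\|_{H^{N-1}}^2 \lesssim \Lambda(T)\cdot\Lambda(T) = \Lambda(T)^2$.

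Throughout I would use that $N-2 > 3/2$ (guaranteed by $N\geq 6$) so Lemma \ref{lem:SobProdEst} applies in $H^{N-2}$, and the norm equivalence $\|\cdot\|_{L^2}\cong\|\cdot\|_{\Lgg}$ together with the global smallness of $g-\gamma$ to freely pass between $g$-norms and $\gamma$-norms; the $H^{N-1}$ bounds on $u^0$, $N$, $X$ used as $O(1)$ multipliers come directly from \eqref{Bootstraps} and \eqref{eq:est-u0} since $\varepsilon$ is small. The only mild subtlety — and the step most worth stating carefully rather than the calculation being hard — is the bookkeeping of powers of $\tau$: one must peel off exactly one factor $|\tau|$ to combine with $\|\rho\|_{H^{N-2}}$ into $|\tau|\|\rho\|_{H^{N-2}} \lesssim \Lambda(T)$, and track where the remaining $|\tau|$-powers and the $e^{\mu T}$-loss from the velocity bound land, so that the $\jmath$ and $\underline{T}$ estimates come out in precisely the stated weighted form rather than a weaker one. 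There is no genuine analytic obstacle here; the lemma is a direct consequence of the definitions and the bootstrap hierarchy, and its role is to record that the dust sources are, after the chosen rescaling, uniformly small and in fact gain smallness from the expansion.
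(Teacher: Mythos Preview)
Your proposal is correct and follows exactly the approach the paper takes: it simply distributes derivatives across the expressions in Definition~\ref{Def-resc-matter} and applies the Sobolev product estimate of Lemma~\ref{lem:SobProdEst} together with the bootstrap bounds. The paper's own proof is a one-line remark to this effect, and your write-up is just a more explicit unpacking of the same computation.
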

\begin{proof}
The estimates are immediate by distributing derivatives across the terms written in Definition \ref{Def-resc-matter} and using Lemma \ref{lem:SobProdEst}.
\end{proof}

\begin{lem}[Geometric coercivity estimate] \label{lem:Geom-coercive-est}
Let $s\in \mathbb{Z}$ such that $1\leq s \leq N-1$. 
There is a $\delta>0$ and a constant $C>0$ such that for $(g,\Sigma)\in B_{\delta}((\gamma,0))$ the following inequality holds
\eq{\notag
\| g-\gamma\|_{H^{s}}^2 + \| \Sigma \|_{H^{s-1}}^2 \leq C  E^g_{s}.
}
\end{lem}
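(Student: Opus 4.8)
\textbf{Proof plan for Lemma \ref{lem:Geom-coercive-est} (geometric coercivity).}

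The plan is to show that the corrected energy $\Eg_s = \sum_{1\le m\le s}(\mathcal{E}_{(m)} + c_E\Gamma_{(m)})$ dominates $\|g-\gamma\|_{H^s}^2 + \|\Sigma\|_{H^{s-1}}^2$ up to a uniform constant, using three ingredients: the spectral lower bound on $\mL$ from Proposition \ref{prop:MilneGeometricRigidity}, the elliptic regularity equivalence $\|V\|_{H^{k+2s}}\cong\|\mL^s V\|_{H^k}$ from Lemma \ref{lem:elliptic-regularity-mL}, and the norm equivalence $\|\cdot\|_{L^2}\cong\|\cdot\|_{\Lgg}$ valid under the bootstrap/$\delta$-smallness of $g-\gamma$. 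First I would recall that, in the $v=6\Sigma$, $h=g-\gamma$ variables, $\mathcal{E}_{(m)} = \tfrac12(v,\mL^{m-1}v)_{\Lgg} + \tfrac92(h,\mL^m h)_{\Lgg}$; since $\mL$ is self-adjoint and positive with $\lambda_0\ge 1/9$, one has $(v,\mL^{m-1}v)_{\Lgg}\gtrsim\|\mL^{(m-1)/2}v\|_{\Lgg}^2\gtrsim\|v\|_{H^{m-1}}^2$ and similarly $(h,\mL^m h)_{\Lgg}\gtrsim\|h\|_{H^m}^2$, via Lemma \ref{lem:elliptic-regularity-mL} (applied after the usual trick of writing $\mL^m = \mL^{\lfloor m/2\rfloor}\cdots$ and pairing, valid for both even and odd $m$ because $\mL$ is self-adjoint). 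Summing over $1\le m\le s$ and converting $H^m$-control of $h=g-\gamma$ into $H^s$-control, and $H^{m-1}$-control of $v\sim\Sigma$ into $H^{s-1}$-control, gives the desired lower bound on $\sum_m \mathcal{E}_{(m)}$.

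The main obstacle is the cross term $c_E\Gamma_{(m)} = c_E(v,\mL^{m-1}h)_{\Lgg}$, which is not sign-definite and must be absorbed. I would handle this exactly as in \cite{AM11, AF20}: in the case $\lambda_0>1/9$ we have $c_E=1$ and the gap between $\lambda_0$ and $1/9$ gives room, while in the borderline case $\lambda_0=1/9$ the parameter $\alpha = 1-\delta_\alpha$ with $\delta_\alpha=\sqrt{1-9(\lambda_0-\varepsilon')}$ and $c_E = 9(\lambda_0-\varepsilon')$ are chosen precisely so that the quadratic form $\mathcal{E}_{(m)}+c_E\Gamma_{(m)}$ stays positive-definite. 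Concretely, using Cauchy--Schwarz and Lemma \ref{lem:elliptic-regularity-mL} to bound $|\Gamma_{(m)}|\le \|\mL^{(m-1)/2}v\|_{\Lgg}\|\mL^{(m-1)/2}h\|_{\Lgg}$ and then Young's inequality with a weight tuned to $\lambda_0$, one checks that
\[
\mathcal{E}_{(m)} + c_E\Gamma_{(m)} \ \gtrsim\ \|\mL^{(m-1)/2}v\|_{\Lgg}^2 + \|\mL^{m/2}h\|_{\Lgg}^2 \ \gtrsim\ \|v\|_{H^{m-1}}^2 + \|h\|_{H^m}^2,
\]
where the implied constant depends only on $\lambda_0$ (through $\alpha$, $c_E$) and on the $\Lgg$-vs-$L^2$ equivalence constant, hence only on $\gamma$ once $\varepsilon'$ is fixed. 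Here the role of the smallness ball $B_\delta((\gamma,0))$ is twofold: it makes $\mu_g\cong\mu_\gamma$ so the mixed inner product is comparable to the background one, and it controls the lower-order $J_{ab}$-type discrepancy between $\Ric[g]+\tfrac29 g$ and $\tfrac12\mL h$, which is not actually needed for this lemma but keeps all norm equivalences uniform.

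Finally, summing the per-$m$ estimates over $1\le m\le s$, noting that $\|h\|_{H^m}\le\|h\|_{H^s}$ and $\|v\|_{H^{m-1}}\le\|v\|_{H^{s-1}}$ while the top contributions $m=s$ and $m=s$ respectively recover the full $H^s$ and $H^{s-1}$ norms, yields $\Eg_s\gtrsim \|g-\gamma\|_{H^s}^2 + \|\Sigma\|_{H^{s-1}}^2$, i.e.\ the claimed inequality with $C$ the reciprocal of this constant. The only genuinely delicate point is the borderline spectral case, and there the algebra is the standard Andersson--Moncrief correction-term computation; I would cite \cite[\textsection 1.1]{AM11} for the precise choice of weights rather than reproduce it.
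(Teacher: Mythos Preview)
Your proposal is correct and follows essentially the same approach as the paper: the paper's proof simply cites \cite[Lemma 19]{AF20} and \cite[Lemma 7.2]{AM11} together with the eigenvalue bound of Proposition~\ref{prop:MilneGeometricRigidity}, and your sketch is precisely an outline of what those references do --- use the spectral gap $\lambda_0\ge 1/9$ for $\mL$, the self-adjointness of $\mL$ to write each $\mathcal{E}_{(m)}+c_E\Gamma_{(m)}$ as a positive-definite quadratic form (the choice of $c_E$ being exactly what makes the cross term absorbable), and then elliptic regularity to convert to Sobolev norms. One small remark: the cleanest way to verify positivity of the form $\tfrac12 x^2+\tfrac{9\lambda}{2}y^2+c_E xy$ is the eigentensor expansion for $\mathcal{L}_{\gamma,\gamma}$ (as the paper does explicitly in the analogous Lemma~\ref{lem:Properties-of-Egdelu}), which gives the discriminant condition $c_E^2<9\lambda_0$ directly and avoids needing fractional powers $\mL^{(m-1)/2}$; your Cauchy--Schwarz/Young route is equivalent but slightly less transparent for the borderline case.
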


\begin{proof}
The proof of this lemma follows verbatim from \cite[Lemma 19]{AF20}, which itself follows from \cite[Lemma 7.2]{AM11}, and the eigenvalue estimates referred to in Proposition \ref{prop:MilneGeometricRigidity}. 
\end{proof}

The following Lemma is actually only used once in our entire argument. Its significance lies in the fact that it allows us to convert the quadratic derivative term $(\nabla V)^2$ appearing in the $H^1$ Sobolev norm into just one second-order derivative as $V \cdot \mL V$, which will more naturally be controlled by $E^g_{\delu, N-1} $.

\begin{lem}\label{lem:H1-mL}
Let $V$ be a symmetric $(0,2)$-tensor on $M$. Then,
\eq{\notag
\| V\|_{H^1}  \lesssim \| V\|_{L^2} + (V, \mL V)_{\Lgg}^{1/2}.
}
\end{lem}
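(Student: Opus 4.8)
The plan is to relate the $H^1$ norm of $V$ to the mixed $L^2$ pairing $(V,\mL V)_{\Lgg}$ by an integration by parts, exploiting that $\mL = -\mD - 2(\ri[\gamma]\circ\,\cdot\,)$ and that $\mD$ is (up to the gauge term $H^a$, which vanishes by \eqref{gauges}) the rough Laplacian $g^{ab}\bnab_a\bnab_b$. First I would write, using the gauge condition, $\mL V_{ij} = -g^{ab}\bnab_a\bnab_b V_{ij} - 2(\ri[\gamma]\circ V)_{ij}$, and then compute
\eq{\notag
(V,\mL V)_{\Lgg} = -\int_M \langle V, g^{ab}\bnab_a\bnab_b V\rangle_\gamma\,\mu_g - 2\int_M \langle V, \ri[\gamma]\circ V\rangle_\gamma\,\mu_g.
}
For the first term I would integrate by parts. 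The subtlety is that the volume form is $\mu_g$ while the connection is $\bnab$ (the Levi-Civita connection of $\gamma$, which is $\bnab$-compatible with $\gamma$ but not with $g$), so the integration by parts produces an extra term involving $\bnab(\mu_g/\mu_\gamma)$, i.e.\ $\bnab g$. Concretely,
\eq{\notag
-\int_M \langle V, g^{ab}\bnab_a\bnab_b V\rangle_\gamma\,\mu_g = \int_M g^{ab}\langle \bnab_a V, \bnab_b V\rangle_\gamma\,\mu_g + \int_M \langle V, \bnab_b V\rangle_\gamma\, \bnab_a\!\left(g^{ab}\tfrac{\mu_g}{\mu_\gamma}\right)\mu_\gamma.
}
The first term on the right is, up to the bootstrap-guaranteed equivalence of $g$ and $\gamma$, comparable to $\int_M |\bnab V|_\gamma^2\,\mu_\gamma \gtrsim \|\nabla V\|_{L^2}^2 - C\|V\|_{L^2}^2$ (the difference between $\nabla$ and $\bnab$ being lower order and controlled by $\|g-\gamma\|$ small); the second term and the curvature term are both bounded by $C\|V\|_{L^2}\|\bnab V\|_{L^2}$ and $C\|V\|_{L^2}^2$ respectively, the former absorbed into $\|\nabla V\|_{L^2}^2$ by Young's inequality.

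Rearranging, I obtain $\|\nabla V\|_{L^2}^2 \lesssim (V,\mL V)_{\Lgg} + \|V\|_{L^2}^2$, and adding $\|V\|_{L^2}^2$ to both sides and taking square roots gives
\eq{\notag
\|V\|_{H^1} \lesssim \|V\|_{L^2} + (V,\mL V)_{\Lgg}^{1/2},
}
as claimed. (One should note that $(V,\mL V)_{\Lgg}$ need not be nonnegative a priori, but the inequality is only asserting that when it is large and positive it controls $\|\nabla V\|_{L^2}$; the rearrangement above is valid regardless since all error terms are genuinely lower order.)

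The main obstacle is bookkeeping rather than conceptual: one must be careful that the integration by parts is performed with the correct connection and volume form, and that every term generated by the mismatch between $\bnab$, $\nabla$, $\mu_g$ and $\mu_\gamma$ is genuinely lower order — i.e.\ either quadratic in $V$ with no derivatives, or of the form $\|V\|_{L^2}\|\nabla V\|_{L^2}$ which can be absorbed. This is exactly where the smallness of $g-\gamma$ (from the bootstrap assumptions \eqref{Bootstraps}) and the norm equivalence $\|\cdot\|_{L^2}\cong\|\cdot\|_{\Lgg}$ enter, ensuring all implicit constants are uniform. An alternative, essentially equivalent route would be to invoke the elliptic regularity estimate of Lemma \ref{lem:elliptic-regularity-mL} with $s$ a half-integer — but since that lemma is only stated for integer $s$, the direct integration-by-parts argument above is cleaner.
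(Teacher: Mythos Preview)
Your proposal is correct and follows essentially the same approach as the paper: an integration by parts relating the $H^1$ seminorm to $(V,\mL V)_{\Lgg}$, with lower-order errors controlled by the boundedness of $\ri[\gamma]$ and the smallness of $g-\gamma$. The only cosmetic difference is the direction of the computation---the paper starts from $\|V\|_{H^1}^2$, integrates by parts using the $g$-connection $\nabla$ (so that no boundary or density terms appear against $\mu_g$), and then replaces $g^{ab}\nabla_a\nabla_b$ by $-\mL$ plus a curvature term using the closeness of $g$ and $\gamma$; you instead start from $(V,\mL V)_{\Lgg}$ and integrate by parts with $\bnab$, which generates the density correction you carefully track. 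Both routes arrive at the same inequality with the same ingredients.
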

\begin{proof}
We integrate by parts, use the closeness between the $g$ and $\gamma$ metrics and the boundedness of the $\Riem[\gamma]$ components: 
\eq{\alg{\notag
\| V\|_{H^1}^2 
&= \| V\|_{L^2}^2 + \int_M g^{ab} g^{ij} g^{kl} \nabla_a V_{ik} \nabla_b V_{jl} \, \mu_g
\leq  \| V\|_{L^2}^2 + \Big| - \int_M g^{ab} g^{ij} g^{kl} V_{ik} \nabla_a \nabla_b V_{jl} \, \mu_g \Big|
\\
&\leq  \| V\|_{L^2}^2 + \Big| \int_M \langle V,\mL V \rangle_\gamma \, \mu_g + 2 \int_M \langle V, \Riem[\gamma]\circ V \rangle_\gamma \, \mu_g \Big|
\\
&\lesssim
\| V\|_{L^2}^2 + (V, \mL V)_{\Lgg}.
}}
\end{proof}

%%%%%%%%%%%%%%%%%%%%%%%%%%%%%%%%%%%%%%%%
\section{Preliminary Estimates}\label{sec:FurtherPrelims}
This section is concerned with deriving several preliminary estimates that are required for our later energy inequalities. There are estimates on commutator terms (Sections \ref{sec:Comm1}, \ref{sec:Comm2}), estimates on matter terms (Section \ref{sec:matt1}) and also estimates on geometric variables (Sections \ref{sec:Geom1}, \ref{sec:Geom2}). 
We also present an integration by parts Lemma \ref{lem:IBP}, in particular \eqref{eq:symIBP}, which later plays an important role  in removing various critical terms that arise during the energy estimates. 

\subsection{First Commutator Estimates}\label{sec:Comm1}
In this subsection we let $V$ be an arbitrary symmetric $(0,2)$-tensor and $\phi$ a scalar, unless otherwise specified. 
We begin with the identity
\eq{\label{eq:pt-nabla}
[\p_T, \nabla_a]V_{i j} = (-\p_T \Gamma^c_{ai})V_{c j} + (-\p_T \Gamma^c_{aj})V_{i c}.
}
The terms $\p_T \Gamma[g]$ can be estimated (see \cite[(10.12)]{AF20}), for $0 \leq k \leq N-2$, by:
\eq{\label{eq:est-pt-Gamma}
\| \p_T \Gamma(g)\|_{H^k} \lesssim 
\| \Sigma\|_{H^{k+1}} + \| X\|_{H^{k+2}}
+ \| \hN \|_{H^{k+1}}.
}
We also have the following commutator identities: 
\begin{align}\label{eq:Commutator-pT-nablas}
[\p_T,\nabla_{a_1}\cdots \nabla_{a_k}]\phi 
&= [\p_T,\nabla_{a_1}]\nabla_{a_2}\cdots \nabla_{a_k} \phi 
+ \nabla_{a_1} [\p_T,\nabla_{a_2}]\nabla_{a_3}\cdots \nabla_{a_k} \phi 
\\ \notag 
&\quad
+\ldots +  \nabla_{a_1}\cdots \nabla_{a_{k-2}} [\p_T,\nabla_{a_{k-1}}] \nabla_{a_k} \phi 
+ \nabla_{a_1}\cdots \nabla_{a_{k-1}} [\p_T,\nabla_{a_{k}}] \phi ,
\\ \label{eq:Commutator-nabla-nablas}
[\nabla_i,\nabla_{a_1}\cdots \nabla_{a_k}]\phi 
&= [\nabla_i,\nabla_{a_1}]\nabla_{a_2}\cdots \nabla_{a_k} \phi 
+ \nabla_{a_1} [\nabla_i,\nabla_{a_2}]\nabla_{a_3}\cdots \nabla_{a_k} \phi 
\\\notag 
&\quad
+\ldots +  \nabla_{a_1}\cdots \nabla_{a_{k-2}} [\nabla_i,\nabla_{a_{k-1}}] \nabla_{a_k} \phi 
+ \nabla_{a_1}\cdots \nabla_{a_{k-1}} [\nabla_i,\nabla_{a_{k}}] \phi .
\end{align}
Note the last terms in each of \eqref{eq:Commutator-pT-nablas} and \eqref{eq:Commutator-nabla-nablas} will in fact vanish since the metric $g$ is torsion free.

In the next part of this subsection we state an important estimate, given in \eqref{eq:Estimate-Upsilon}, that allows us to turn background $\bnab$ derivatives into dynamical $\nabla$ ones. 

\begin{defn}[Difference tensor $\Upsilon$]\label{defn:Upsilon}
Recalling that $\nabla$ and $\bnab$ are the Levi-Civita symbols of $g$ and $\gamma$ respectively, we define
$\Upsilon$  a (1,2)-tensor by
\eq{\notag
\Upsilon^a_{bc} \define \Gamma^a_{bc}[g]-\Gamma^a_{bc}[\gamma].
}
\end{defn}
Let $V$ be a vector and $P$ a one-form. Then we have
\eq{\notag
\nabla_a V^{i}= \bnab_a V^i + \Upsilon^i_{ja}V^{j},
\qquad
\nabla_a P_i= \bnab_a P_i - \Upsilon^j_{ia}P_j.
}
We will often schematically write tensorial contractions using $\ast$. For example, 
\eq{\notag
\nabla_a V^{i}= \bnab_a V^i + \Upsilon^i_{ja}V^{j}, \quad \text{ becomes } \quad \nabla V = \bnab V + \Upsilon \ast V.
}
In local coordinates the components of the $\Upsilon$ tensor are given by
\eq{\alg{\notag
\Upsilon^a_{bc} &=-\frac12 \gamma^{ai}\left( \nabla_b \gamma_{ci} + \nabla_c \gamma_{ai} - \nabla_i \gamma_{bc} \right) 
=\frac12 \gamma^{ai}\left( \nabla_b h_{ci} + \nabla_c h_{ai} - \nabla_i h_{bc} \right) .
}}

\begin{lem} \label{lem:Commutator-nabla-bnab}
For  $0\leq k \leq N-2$, we have
\eq{\alg{\notag
\| [\bnab, \nabla] V\|_{H^k}
&\lesssim \| V\|_{H^k}  + \varepsilon e^{-\lambda T}\| V\|_{H^{k+1}}.
}}
\end{lem}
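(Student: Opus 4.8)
\textbf{Proof proposal for Lemma \ref{lem:Commutator-nabla-bnab}.}

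The plan is to express the commutator $[\bnab,\nabla]V$ pointwise in terms of the difference tensor $\Upsilon$ introduced in Definition \ref{defn:Upsilon}, and then bound each resulting contraction in $H^k$ by means of the Sobolev product estimate of Lemma \ref{lem:SobProdEst} together with the bootstrap bound on $\| g-\gamma\|_{H^N}$. First I would compute the commutator on a single covariant index: for a one-form $P$ one has $\nabla_a P_i - \bnab_a P_i = -\Upsilon^j_{ia}P_j$, so $[\bnab_a,\nabla_b]P_i = \bnab_a(\nabla_b P_i) - \nabla_b(\bnab_a P_i)$. Writing $\nabla_b P_i = \bnab_b P_i - \Upsilon^j_{ib}P_j$ and using that $\bnab$ is a derivation, the $\bnab\bnab P$ terms cancel and one is left schematically with $[\bnab,\nabla]P = (\bnab \Upsilon)\ast P + \Upsilon \ast \Upsilon \ast P$. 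For a symmetric $(0,2)$-tensor $V$ the same computation applied to each index gives the schematic identity
\eq{\notag
[\bnab,\nabla]V = (\bnab\Upsilon)\ast V + \Upsilon\ast\Upsilon\ast V,
}
where the $\ast$ denotes contractions with $\gamma$ (and, after replacing $\bnab\Upsilon$ by $\nabla\Upsilon$ up to further $\Upsilon\ast\Upsilon\ast V$ terms, with $g$); crucially no derivative of $V$ beyond $V$ itself appears, because the commutator of two connections is always zeroth order in the object being differentiated.

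Next I would estimate the right-hand side in $H^k$ for $0\le k\le N-2$. From the coordinate formula $\Upsilon^a_{bc}=\tfrac12\gamma^{ai}(\nabla_b h_{ci}+\nabla_c h_{ai}-\nabla_i h_{bc})$ we get $\|\Upsilon\|_{H^{k+1}}\lesssim \|h\|_{H^{k+2}}\le \|h\|_{H^N}\lesssim \varepsilon e^{-\lambda T}$, using the bootstrap assumption \eqref{Bootstraps} and $k+2\le N$. In particular $\|\bnab\Upsilon\|_{H^k}\lesssim \varepsilon e^{-\lambda T}$ and $\|\Upsilon\|_{H^k}\lesssim\varepsilon e^{-\lambda T}$, and since $k\le N-2$ we have $H^k$ is an algebra (Lemma \ref{lem:SobProdEst}, as $k\ge 0$; for $k=0,1$ one uses instead the embedding into $L^\infty$ after noting $\Upsilon,\bnab\Upsilon$ themselves lie in higher Sobolev spaces and only $V$ sits at the borderline). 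Hence
\eq{\notag
\|[\bnab,\nabla]V\|_{H^k}\lesssim \|\bnab\Upsilon\|_{H^k}\|V\|_{H^k} + \|\Upsilon\|_{H^k}^2\|V\|_{H^k}\lesssim \varepsilon e^{-\lambda T}\|V\|_{H^k}.
}
This already gives the claimed bound with the better weight $\varepsilon e^{-\lambda T}\|V\|_{H^k}$; the stated inequality, with its $\|V\|_{H^k}+\varepsilon e^{-\lambda T}\|V\|_{H^{k+1}}$ right-hand side, then follows a fortiori (the first term $\|V\|_{H^k}$ is there to absorb any contributions one prefers to handle more crudely, e.g.\ if one estimates $(\bnab\Upsilon)\ast V$ by putting $\bnab\Upsilon$ in $L^\infty$ — which costs $\|\bnab\Upsilon\|_{H^2}\lesssim\|h\|_{H^4}$ and hence needs $N\ge 4$ — and $V$ in $H^k$, or if in the low-$k$ case one allows one extra derivative to land on $V$).

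The main technical point to watch — and the only place where anything could go wrong — is the low-regularity bookkeeping: for $k=0$ or $k=1$ the space $H^k$ is not a Banach algebra, so the products $(\bnab\Upsilon)\ast V$ and $\Upsilon\ast\Upsilon\ast V$ must be estimated by placing the $\Upsilon$-factors (which enjoy extra regularity, being controlled by $\|h\|_{H^N}$ with $N\ge 6$) in $L^\infty$ via $H^2\hookrightarrow L^\infty$, and only $V$ in $H^k$. This is exactly why the hypothesis $k\le N-2$ (rather than $k\le N$) appears and why the conclusion carries an $\varepsilon e^{-\lambda T}$ smallness factor. I would also remark that, because $\bnab\gamma=0$, one may freely trade $\bnab\Upsilon$ for $\nabla\Upsilon$ at the cost of another $\Upsilon\ast\Upsilon\ast V$ term, which is harmless. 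No genuine obstacle arises here; the lemma is a routine consequence of the schematic commutator identity and the smallness of $h$.
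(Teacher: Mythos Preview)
Your schematic identity $[\bnab,\nabla]V = (\bnab\Upsilon)\ast V + \Upsilon\ast\Upsilon\ast V$ is incorrect, and with it the assertion that ``the commutator of two connections is always zeroth order in the object being differentiated.'' That statement holds for the curvature commutator $[\nabla_a,\nabla_b]$ of a \emph{single} torsion-free connection, but fails for two different connections. Carrying out your own computation carefully: with $\nabla_a V_{ij}=\bnab_a V_{ij}-\Upsilon^b_{ai}V_{bj}-\Upsilon^b_{aj}V_{ib}$, and remembering that when $\nabla_a$ acts on the $(0,3)$-tensor $\bnab_c V$ it also hits the index $c$, one finds
\[
[\bnab_c,\nabla_a]V_{ij}
=[\bnab_c,\bnab_a]V_{ij}
+\Upsilon^{b}_{ca}\,\bnab_b V_{ij}
-(\bnab_c\Upsilon^{b}_{ai})V_{bj}
-(\bnab_c\Upsilon^{b}_{aj})V_{ib}.
\]
You have dropped the first two terms on the right. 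The first, $[\bnab_c,\bnab_a]V=\Riem[\gamma]\ast V$, carries \emph{no} smallness and is exactly what forces the bare $\|V\|_{H^k}$ on the right-hand side of the lemma. The second, $\Upsilon^{b}_{ca}\bnab_b V$, is genuinely first order in $V$ and produces the $\varepsilon e^{-\lambda T}\|V\|_{H^{k+1}}$ term. Neither is optional slack that one ``prefers to handle more crudely''; your sharper bound $\varepsilon e^{-\lambda T}\|V\|_{H^k}$ is simply false.

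The fix is immediate once the correct identity is in hand: bound $\Riem[\gamma]\ast V$ by $\|V\|_{H^k}$, bound $\Upsilon\ast\bnab V$ by $\|\Upsilon\|_{H^{N-2}}\|V\|_{H^{k+1}}\lesssim\varepsilon e^{-\lambda T}\|V\|_{H^{k+1}}$ via \eqref{eq:Estimate-Upsilon} and the bootstrap, and bound $(\bnab\Upsilon)\ast V$ by $\varepsilon e^{-\lambda T}\|V\|_{H^k}$. This is precisely the paper's argument.
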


\begin{proof}
First we see that from Definition \ref{defn:Upsilon}, for all $0 \leq k \leq N-1$,
\eq{\label{eq:Estimate-Upsilon}
\| \Upsilon \|_{H^k} \lesssim \| g-\gamma\|_{H^{k+1}}.
}
Thus we compute,
\eq{\alg{\notag
[\bnab_c, \nabla_a] V_{i j}
&= [\bnab_c, \bnab_a] V_{i j}+ \Upsilon^b_{ca}\bnab_b V_{i j} - \big((\bnab_c \Upsilon^b_{ai})V_{b j} + (\bnab_c \Upsilon^b_{aj}) V_{i b}\big).
}}
Using the boundedness of the $\Riem[\gamma]$ components, the required estimate then follows.
\end{proof}

We end this subsection with some very useful estimates that come from integration by parts.

\begin{lem}[Integration by parts]\label{lem:IBP}
Let $T$ be an arbitrary vectorfield and $V, P$ arbitrary symmetric $(0,2)$-tensors, on $M$. Then 
\begin{subequations}\label{eq:AllIBPS}
\eq{\label{eq:IBP}
(\mL V, P)_{L^2(g, \gamma)}  = \int_M \langle g^{ab} \bnab_a V, \bnab_b P \rangle_\gamma \mu_g 
-2 \int_M \langle\Riem[\gamma]_{\bullet a\bullet b} V^{ab}, P\rangle_\gamma \mu_g .
}
and 
\begin{align}
\label{eq:thirdIBP-2} 
\big| (T^a \bnab_a V, P)_{\Lgg} \big|
&\lesssim
\| T \|_{H^3} \| V \|_{L^2} \| P\|_{H^1},
\\ \label{eq:symIBP} 
\big| (T^a \bnab_a V, V)_{\Lgg} \big|&\lesssim
\| T \|_{H^3} \| V \|_{L^2}^2.
\end{align}
\end{subequations}
\end{lem}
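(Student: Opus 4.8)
The statement to prove is Lemma~\ref{lem:IBP}, which collects three integration-by-parts identities. The plan is to treat each of the three displays in turn, all following from the divergence theorem on the closed manifold $(M,g)$ together with the metric compatibility $\nabla g = 0$ and the boundedness of the background curvature $\Riem[\gamma]$.

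\medskip

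\textbf{Proof of \eqref{eq:IBP}.} I would start from the definition of $\mL$ in Definition~\ref{defn:mL}: $\mL V_{ij} = -\mD V_{ij} - 2(\ri[\gamma]\circ V)_{ij}$, where $\mD V_{ij} = (\sqrt{\det g})^{-1}\bnab_a(\sqrt{\det g}\, g^{ab}\bnab_b V_{ij})$. Pairing with $P$ and integrating against $\mu_g = \sqrt{\det g}$, the first term becomes $-\int_M g^{ik}g^{jl}\, P_{kl}\, \bnab_a(\sqrt{\det g}\, g^{ab}\bnab_b V_{ij})\, \di^3 x$. Since the $\gamma$-contractions $g^{ik}g^{jl}$ (here meaning $\gamma^{ik}\gamma^{jl}$ as in the $\langle\cdot,\cdot\rangle_\gamma$ convention) and the volume density combine correctly, integrating by parts moves the $\bnab_a$ onto $P_{kl}$ and produces $\int_M g^{ab}\langle \bnab_a V, \bnab_b P\rangle_\gamma\, \mu_g$ plus boundary terms that vanish because $M$ is closed. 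The remaining $-2(\ri[\gamma]\circ V)$ term pairs directly with $P$ to give the curvature term $-2\int_M \langle \Riem[\gamma]_{\bullet a\bullet b}V^{ab}, P\rangle_\gamma\, \mu_g$; here one just unwinds the tensorial contraction in the definition of $\ri[\gamma]\circ V$ and uses the index-raising conventions fixed in Definition~\ref{defn:mL}. One subtlety worth a sentence: the $\bnab_a$ hitting $\gamma^{ik}\gamma^{jl}$ would produce extra terms, but $\bnab\gamma=0$ so these vanish, which is exactly why $\bnab$ (not $\nabla$) appears in $\mD$ and in the final identity.

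\medskip

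\textbf{Proof of \eqref{eq:thirdIBP-2}.} For $(T^a\bnab_a V, P)_{\Lgg} = \int_M T^a (\bnab_a V_{ij}) P_{kl}\gamma^{ik}\gamma^{jl}\,\mu_g$, I would integrate by parts in the $\bnab_a$, moving it onto the product $T^a P_{kl}\mu_g$. This generates three schematic terms: one with $\bnab_a T^a$ (a divergence of $T$), one with $\bnab_a P$, and one where $\bnab_a$ hits $\mu_g$, i.e.\ a term $\propto (\bnab_a T^a)$ again through $\bnab_a\sqrt{\det g}$ — note $\bnab\gamma=0$ kills any action on the $\gamma^{ik}\gamma^{jl}$ factors. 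Estimating: the $\bnab T$ term is bounded by $\|\bnab T\|_{L^\infty}\|V\|_{L^2}\|P\|_{L^2}$, and the $\bnab P$ term by $\|T\|_{L^\infty}\|V\|_{L^2}\|P\|_{H^1}$. Using the Sobolev embedding $H^2\hookrightarrow L^\infty$ (so $\|T\|_{L^\infty}\lesssim\|T\|_{H^2}$ and $\|\bnab T\|_{L^\infty}\lesssim\|T\|_{H^3}$) and the norm equivalence $\|\cdot\|_{L^2}\cong\|\cdot\|_{\Lgg}$ from the bootstrap smallness of $g-\gamma$, everything is controlled by $\|T\|_{H^3}\|V\|_{L^2}\|P\|_{H^1}$.

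\medskip

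\textbf{Proof of \eqref{eq:symIBP}.} This is the symmetric specialisation $P=V$. Here the key gain is that the term with $\bnab_a$ landing on one copy of $V$ can be symmetrised: $\int_M T^a\bnab_a(\langle V,V\rangle_\gamma)\,\mu_g = -\int_M \langle V,V\rangle_\gamma\, \bnab_a(T^a\mu_g)$, so \emph{no} factor of $\|V\|_{H^1}$ survives — only $\|V\|_{L^2}^2$ times $\|\bnab T\|_{L^\infty}\lesssim\|T\|_{H^3}$. I should note there is a mild point that $\bnab_a\langle V,V\rangle_\gamma = 2\langle \bnab_a V,V\rangle_\gamma$ because $\bnab\gamma=0$, which is what makes the symmetrisation exact.

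\medskip

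I expect no genuine obstacle here — the lemma is standard "manifold integration by parts." The only place requiring care is bookkeeping the distinction between $\bnab$ and $\nabla$ and the $\gamma$-versus-$g$ contractions in the $\Lgg$ pairing: one must be careful that integration by parts in $\bnab$ acts on $\mu_g=\sqrt{\det g}$ (producing a $\bnab\log\sqrt{\det g}$, hence effectively a $\nabla-\bnab$ difference, i.e.\ an $\Upsilon$-type term bounded via \eqref{eq:Estimate-Upsilon}), but never on the $\gamma^{ik}\gamma^{jl}$ factors. Absorbing these lower-order differences into the stated right-hand sides — using $\|g-\gamma\|$ small and the Sobolev embedding $H^2\hookrightarrow L^\infty$ in $n=3$ — is routine, so the whole proof is short.
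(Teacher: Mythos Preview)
Your proposal is correct and matches the paper's approach in substance: for \eqref{eq:IBP} you integrate by parts using the divergence form of $\mD$ (the paper phrases this as using the gauge $H^a=0$, which amounts to the same thing), and for \eqref{eq:thirdIBP-2}--\eqref{eq:symIBP} you correctly identify the three terms from moving $\bnab_a$ off of $V$ (onto $T$, onto $P$, and onto $\mu_g$ via the Jacobi identity $\bnab_a\sqrt{\det g}=\tfrac12\sqrt{\det g}\,g^{bc}\bnab_a g_{bc}$), then symmetrise in the $P=V$ case to drop the $\|P\|_{H^1}$ factor. The paper writes out the resulting identity $(T^a\bnab_a V,P)_{\Lgg}=-(V,T^a\bnab_a P)_{\Lgg}-((\bnab_a T^a)V,P)_{\Lgg}-\tfrac12(V,(T^a g^{bc}\bnab_a g_{bc})P)_{\Lgg}$ explicitly before estimating, but your schematic description captures exactly these terms.
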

\begin{proof}
The proof of \eqref{eq:IBP} follows by using the gauge condition $H^a = 0$. To show \eqref{eq:thirdIBP-2}, recall the Jacobi identity implies 
\eq{\notag 
\bnab_a \sqrt{\det g} = \frac12 \sqrt{\det g} \cdot g^{ij}(\bnab_a g_{ij}).
}
 Using this we find, 
\eq{\alg{\notag
(T^a \bnab_a V, P)_{\Lgg} 
&= \int_M \gamma^{ij}\gamma^{kl} T^a (\bnab_a V_{ik}) P_{jl} \, \sqrt{\det g}
= -\int_M \gamma^{ij}\gamma^{kl} \bnab_a \left(T^a P_{jl} \frac{\sqrt{\det g}}{\sqrt{\det \gamma}}\right) V_{ik} \sqrt{\det \gamma} 
\\&=
- (V, T^a \bnab_a  P)_{\Lgg} 
- ((\bnab_a T^a) V, P)_{\Lgg} 
- \tfrac12 (V, (T^a g^{bc}\bnab_a g_{bc}) P)_{\Lgg} .
}}
Thus,
\eq{\alg{\notag
\big| (T^a \bnab_a V, P)_{\Lgg} \big|
&= \big|
- (V, T^a \bnab_a  P)_{\Lgg} 
- ((\bnab_a T^a) V, P)_{\Lgg} 
- \tfrac12 (V, (T^a g^{bc}\bnab_a g_{bc}) P)_{\Lgg} \big|
\\&\lesssim
\big( \| T^a \|_{H^3} + \| T^a \|_{H^2} \| g-\gamma\|_{H^3} \big) \| V \|_{L^2} \| P\|_{H^1}.
}}
Crucially, in the symmetric case, we can bring one term over to the left hand side to show
\eq{\alg{\notag
\big| (T^a \bnab_a V, V)_{\Lgg} \big|
&= \big|
- \tfrac12 ((\bnab_a T^a) V, V)_{\Lgg} 
- \tfrac14 (V, (T^a g^{bc}\bnab_a g_{bc}) V)_{\Lgg} \big|
\\&\lesssim
\big( \| T^a \|_{H^3} + \| T^a \|_{H^2} \| g-\gamma\|_{H^3} \big) \| V \|_{L^2}^2.
}}
\end{proof}

%%%%%%%%%%%%%%%%%%%%%%%%%%%%%%%%%%%%%%%%
\subsection{First Estimates on Geometric Components} \label{sec:Geom1}
We now establish some of our first estimates on the ADM variables $N, X$ and the geometric variables $g, \Sigma$. 

\begin{lem}[Dust derivatives of lapse and shift]
\label{lem:delu-N-X}
For $2\leq k \leq N-1 $,
\eq{\alg{\notag
\| \delu N \|_{H^k} + \| \delu X \|_{H^k} \lesssim 
 \Lambda(T).
}}
\end{lem}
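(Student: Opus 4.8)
The plan is to compute $\delu N$ and $\delu X$ by commuting the fluid operator $\delu = u^0\p_T - \tau u^a\bnab_a$ through the elliptic operators governing the lapse and shift in \eqref{eq:EoM-lapse-shift}, so as to obtain elliptic equations for $\delu N$ and $\delu X$ with right-hand sides that are controlled by $\Lambda(T)$. The key observation is that $N$ and $X$ are not dynamical in the usual sense — they are recovered at each time slice from elliptic PDEs — so applying $\delu$ to these equations and using elliptic regularity (Lemma \ref{lem:elliptic-regularity-mL} and standard $\Delta$-regularity on the closed manifold $(M,g)$) should yield the bound at the same order of regularity as the sources. First I would write $\p_T N$ out in terms of $\delu N$ via $\p_T N = (u^0)^{-1}(\delu N + \tau u^a\bnab_a N)$, so that controlling $\delu N$ in $H^k$ is essentially equivalent (modulo $\|X\|$-type lower-order contributions and the already-established bound $\|\p_T N\|_{H^{N-1}}\lesssim \varepsilon e^{-T}$ from the bootstrap, plus $\|u^0\|_{L^\infty}^{-1}\lesssim 1$) to controlling $\p_T N$ together with $\tau\| u^a\bnab_a N\|_{H^k}$, and the latter is $\lesssim |\tau|\|u\|_{H^{k}}\|\nabla N\|_{H^{k}} \lesssim |\tau|\varepsilon e^{\mu T}\cdot \varepsilon e^{-T} \lesssim \varepsilon e^{-T}\lesssim \Lambda(T)$ by Lemma \ref{lem:SobProdEst} and the bootstrap assumptions. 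An analogous decomposition handles $\delu X$ using $\|\p_T X\|_{H^{N-1}}\lesssim\varepsilon e^{-T}$ and $\tau\|u^a\bnab_a X\|_{H^k}\lesssim\Lambda(T)$.

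With this reduction in hand, the proof becomes almost mechanical: expand $\delu N = u^0\p_T N - \tau u^a\bnab_a N$, estimate each piece in $H^k$ for $2\le k\le N-1$ using Lemma \ref{lem:SobProdEst} (valid since $k\geq 2 > 3/2$), the bootstrap assumptions \eqref{Bootstraps}, the bound $\|\hat u^0\|_{H^{N-1}}\lesssim\varepsilon$ and hence $\|u^0\|_{H^{N-1}}\lesssim 1$ from \eqref{eq:est-u0}, and the norm equivalence between $\|\cdot\|_{L^2}$ and $\|\cdot\|_{\Lgg}$. Concretely one gets
\[
\|\delu N\|_{H^k} \lesssim \|u^0\|_{H^k}\|\p_T N\|_{H^k} + |\tau|\|u\|_{H^k}\|\nabla N\|_{H^k} \lesssim \varepsilon e^{-T} + |\tau|\varepsilon e^{\mu T}\cdot\varepsilon e^{-T} \lesssim \Lambda(T),
\]
using $k\le N-1$ so that $\|\p_T N\|_{H^k}\le\|\p_T N\|_{H^{N-1}}$ and similarly $\|\nabla N\|_{H^k}\lesssim\|N-3\|_{H^{N}}$. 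The same bound with $X$ in place of $N$ gives $\|\delu X\|_{H^k}\lesssim\Lambda(T)$, and summing gives the claim. Note one should double-check that the highest regularity demanded is indeed $H^{N-1}$ on $u$ and $H^{N}$ on $N,X$ — it is, since $k\le N-1$ — so no derivative loss occurs.

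The main (and really only) subtlety is the appearance of the $\mu$-loss factor $e^{\mu T}$ from the $\|u\|_{H^{N-1}}$ bootstrap bound: one needs $|\tau|e^{\mu T}$ to stay controlled, which it is since $|\tau|\sim e^{-T}$ and $\mu\ll 1$, so $|\tau|e^{\mu T}\sim e^{(-1+\mu)T}\le 1$, and in fact the $\tau u\bnab N$ contribution is even smaller than $\Lambda(T)\sim\varepsilon e^{-T}$. So the $\mu$-loss is harmless here; the lemma does not inherit it. I do not anticipate any genuine obstacle — this is one of the easier preliminary estimates, included precisely because $\delu$ acting on the (elliptically determined, hence already well-controlled) lapse and shift costs nothing. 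The only thing to be careful about is bookkeeping the index ranges so that Lemma \ref{lem:SobProdEst} applies ($k\geq 2$) and so that all norms invoked sit within the bootstrap hierarchy.
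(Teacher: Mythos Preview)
Your proposal is correct and, once you get to the actual argument in the second paragraph, takes essentially the same approach as the paper: expand $\delu N = u^0\p_T N - \tau u^a\bnab_a N$ (and similarly for $X$, with the $\Upsilon$ correction for the vector case), apply the Sobolev product estimate, and invoke the bootstrap bounds on $\p_T N$, $\p_T X$, $\hN$, $X$ and $u$. Your opening paragraph about commuting $\delu$ through the elliptic lapse/shift operators is a red herring --- you never actually use it, nor is it needed, since the bootstrap already gives $\|\p_T N\|_{H^{N-1}}+\|\p_T X\|_{H^{N-1}}\lesssim\varepsilon e^{-T}$ directly; the paper's proof is just your second paragraph, in two lines.
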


\begin{proof}
Since the lapse is a scalar, $\delu N = u^0 \p_T N - \tau u^c \nabla_c N$. For $\delu X^a$ we use \eqref{eq:Estimate-Upsilon}. We find
\eq{\alg{\notag
\| \delu N \|_{H^k} + \| \delu X \|_{H^k} \lesssim 
\| \p_T N \|_{H^k} + \| \p_T X \|_{H^k} + \varepsilon e^{(-1+\mu) T} \Lambda(T).
}} 
\end{proof}

\begin{lem}[Estimates on geometric source terms]
We have,
\label{lem:Est-Fh-Fv}
\eq{\alg{\notag
\| F_h \|_{H^k} 
 &\lesssim 
\| \hN \|_{H^k} 
+ \| X\|_{H^{k+1}}^2 
 + \| g-\gamma \|_{H^{k+1}}^2, & \quad 2 \leq k \leq N-1,
 \\
\|F_v\|_{H^k}&\lesssim 
\| \hN \|_{H^{k+2}} 
+ |\tau|  \|\rho\|_{H^k} 
+ \| X\|_{H^{k+1}}^2 
 + \| g-\gamma \|_{H^{k+1}}^2 + \| \Sigma\|_{H^{k}}^2,  & \quad 2 \leq  k \leq N-2,
}}
and thus
\eq{\notag
\| F_h \|_{H^{N-1}} + \| F_v \|_{H^{N-2}}  \lesssim \Lambda(T) + \| g-\gamma\|_{H^N}^2 + \| \Sigma\|_{H^{N-1}}^2. 
}
\end{lem}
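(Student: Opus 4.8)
The statement to prove is Lemma~\ref{lem:Est-Fh-Fv}: estimates on the geometric source terms $F_h$ and $F_v$ defined in Definition~\ref{defn:Fh-Fv}, namely
\[
(F_h)_{ab} = 2\hN g_{ab} + h_{ac}\bnab_b X^c + h_{cb}\bnab_a X^c,
\]
\[
(F_v)_{ab} = \nabla_a\nabla_b N + 2N\Sigma_{ac}\Sigma^c_b - \tfrac13\hN g_{ab} - \hN\Sigma_{ab} + N\tau S_{ab} - v_{ac}\bnab_b X^c - v_{cb}\bnab_a X^c.
\]

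My plan is a direct term-by-term application of the Sobolev product estimate (Lemma~\ref{lem:SobProdEst}), the relation $v = 6\Sigma$, the norm equivalence between $\bnab$ and $\nabla$ derivatives via the difference tensor $\Upsilon$ (estimate~\eqref{eq:Estimate-Upsilon} and Lemma~\ref{lem:Commutator-nabla-bnab}), and the matter estimates of Lemma~\ref{lem:matter-first-estimates} for the $N\tau S_{ab}$ term.

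\begin{proof}
We estimate each term in Definition~\ref{defn:Fh-Fv} using Lemma~\ref{lem:SobProdEst}, which applies since $k \geq 2 > 3/2$, together with $\| g \|_{H^k} \lesssim 1$ guaranteed by the bootstrap assumptions.

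For $F_h$, the term $2\hN g_{ab}$ contributes $\| \hN \|_{H^k}$; the terms $h_{ac}\bnab_b X^c$ and $h_{cb}\bnab_a X^c$ are estimated by first replacing $\bnab X$ with $\nabla X$ plus $\Upsilon \ast X$ correction terms (which costs $\| X \|_{H^k}\| g - \gamma\|_{H^{k+1}}$ by \eqref{eq:Estimate-Upsilon}), so that
\[
\| h_{ac}\bnab_b X^c \|_{H^k} \lesssim \| g - \gamma\|_{H^{k+1}}\big( \| X\|_{H^{k+1}} + \| X\|_{H^k}\| g-\gamma\|_{H^{k+1}}\big) \lesssim \| X\|_{H^{k+1}}^2 + \| g-\gamma\|_{H^{k+1}}^2,
\]
where in the last step we bound the mixed product by the sum of squares and absorb higher-order terms using the bootstrap smallness. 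This yields the stated bound for $\| F_h\|_{H^k}$, $2 \leq k \leq N-1$.

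For $F_v$, the term $\nabla_a\nabla_b N$ contributes $\| \hN\|_{H^{k+2}}$ (using $\nabla\nabla N = \nabla\nabla \hN$ since $\nabla\nabla$ of a constant vanishes, modulo acceptable lower-order terms from Christoffel symbols which are controlled by $\| g-\gamma\|_{H^{k+1}}\| \hN\|_{H^{k+1}}$). The quadratic terms $2N\Sigma_{ac}\Sigma^c_b$, $\tfrac13\hN g_{ab}$, $\hN\Sigma_{ab}$, and $v_{ac}\bnab_b X^c$ are handled exactly as for $F_h$, producing $\| \Sigma\|_{H^k}^2$, $\| \hN\|_{H^{k}}$, and $\| X\|_{H^{k+1}}^2$ contributions. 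The matter term $N\tau S_{ab}$ is controlled via $\| N\|_{H^k} \lesssim 1$ and Lemma~\ref{lem:matter-first-estimates}, which gives $|\tau|\| S\|_{H^{N-2}} \lesssim \Lambda(T)$, hence $\| N\tau S_{ab}\|_{H^k} \lesssim |\tau| \| \rho\|_{H^k} + \text{h.o.t.}$ for $k \leq N-2$ (the restriction to $k \leq N-2$ is exactly why the $F_v$ bound holds only in that range). Collecting terms gives the stated bound for $\| F_v\|_{H^k}$, $2 \leq k \leq N-2$.

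Finally, the last display follows by setting $k = N-1$ in the $F_h$ estimate and $k = N-2$ in the $F_v$ estimate, then recognising that $\| \hN\|_{H^N} + |\tau|\|\rho\|_{H^{N-2}} \leq \Lambda(T)$ by definition of $\Lambda(T)$, while the remaining quadratic terms $\| X\|_{H^N}^2$, $\| g-\gamma\|_{H^N}^2$, $\| \Sigma\|_{H^{N-1}}^2$ are either absorbed into $\Lambda(T)^2 \lesssim \Lambda(T)$ (for $X$, using $\| X\|_{H^N} \lesssim \Lambda(T)$) or retained explicitly (for $g-\gamma$ and $\Sigma$). This completes the proof.
\end{proof}

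The main obstacle, such as it is, is bookkeeping rather than mathematics: one must carefully track which derivative counts on $g-\gamma$, $\hN$, $X$, $\Sigma$, $\rho$ are forced by the appearance of a second derivative $\nabla\nabla N$ in $F_v$ versus only first derivatives in $F_h$, and verify that the range restrictions ($k \leq N-1$ for $F_h$, $k \leq N-2$ for $F_v$) are dictated precisely by the matter source $\tau S_{ab}$ being only controlled in $H^{N-2}$ via Lemma~\ref{lem:matter-first-estimates}. No genuinely delicate analysis (no derivative-loss issues, no elliptic theory) enters here; this is a routine consequence of the product estimates and the already-established first matter estimates.
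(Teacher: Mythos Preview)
Your proof is correct and follows essentially the same approach as the paper: a direct term-by-term application of the Sobolev product estimate (Lemma~\ref{lem:SobProdEst}), the $\Upsilon$-estimate \eqref{eq:Estimate-Upsilon} to pass between $\bnab$ and $\nabla$, and Lemma~\ref{lem:matter-first-estimates} for the matter term $\tau S_{ab}$. Your added commentary on why the range restrictions arise and how the final display follows is accurate and slightly more detailed than the paper's own treatment.
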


\begin{proof}
Using the product estimate of Lemma \ref{lem:SobProdEst} and \eqref{eq:Estimate-Upsilon} we obtain, for $2 \leq k \leq N-1$,
\eq{\alg{\notag
\| F_h\|_{H^k} &\lesssim \| \hN \|_{H^k} 
+ \| h \|_{H^k} \| \nabla X + \Upsilon \ast X\|_{H^{k}} 
\\
&\lesssim 
\| \hN \|_{H^k} 
+ \| g-\gamma\|_{H^k} \| X\|_{H^{k+1}} 
 + \| g-\gamma \|_{H^{k+1}}^2 \| X\|_{H^{k}}.
}}
Using in addition Lemma \ref{lem:matter-first-estimates} we see, for $2 \leq k \leq N-2$,
\eq{\alg{\notag
\| F_v \|_{H^k}
&\lesssim \| \hN \|_{H^{k+2}} 
+  \| \Sigma\|_{H^k}^2 
+ |\tau|  \|S_{ij} \|_{H^k} 
+ \| \Sigma \|_{H^k} \| \bnab X\|_{H^{k}} .
}}
\end{proof}

We can now combine the geometric source term estimates from the previous lemma with the equations of motion given in \eqref{eq:EoM-hv}.
\begin{lem}[Time derivatives of $g$ and $\Sigma$]
\label{lem:Estimates-pT-h-v}
The following estimates hold
\eq{\alg{\notag
\| \p_T g\|_{H^k} 
 &\lesssim 
\| \Sigma\|_{H^{k}}
+ \| g-\gamma \|_{H^{k+1}}
+\| \hN \|_{H^k} 
+ \| X\|_{H^{k+1}}
, &2\leq k \leq N-1,
\\
\| \p_T \Sigma\|_{H^k}& \lesssim 
\| \Sigma\|_{H^{k+1}}
+  \| g-\gamma \|_{H^{k+2}} 
+\| \hN \|_{H^{k+2}} 
+ \| X\|_{H^{k+1}}
+ |\tau| \|\rho\|_{H^k},
&2\leq k \leq N-2.
}}
\end{lem}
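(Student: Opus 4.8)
The statement to prove is Lemma~\ref{lem:Estimates-pT-h-v}, which bounds $\|\p_T g\|_{H^k}$ and $\|\p_T\Sigma\|_{H^k}$. The plan is to read off $\p_T g$ and $\p_T\Sigma$ directly from the evolution equations and estimate each term using the preliminary lemmas already established. Since $h = g-\gamma$ and $\p_T\gamma = 0$, we have $\p_T g = \p_T h$, and likewise $\p_T\Sigma = \tfrac16\p_T v$, so the two equations in \eqref{eq:EoM-hv} (equivalently \eqref{eq:EoM-pT-g-Sigma}) give formulas for $\p_T g$ and $\p_T\Sigma$ in terms of $N, X, \Sigma$, curvature of $g$, the source terms $F_h, F_v$ (or $S_{ab}$), and Lie derivatives along $X$.

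\textbf{First estimate.} For $\p_T g$ I would use $\p_T h_{ab} = w v_{ab} - X^m\bnab_m h_{ab} + F_h$ from \eqref{eq:EoM-hv}. The term $w v_{ab} = \tfrac{N}{3}\cdot 6\Sigma_{ab}$ is controlled in $H^k$ by $\|N\|_{H^k}\|\Sigma\|_{H^k}\lesssim \|\Sigma\|_{H^k}$ using Lemma~\ref{lem:SobProdEst} and the bootstrap bound $\|N-3\|_{H^N}\lesssim\varepsilon e^{-T}$ (so $\|N/3\|_{H^k}\lesssim 1$). The transport term $X^m\bnab_m h_{ab}$ is bounded in $H^k$ by $\|X\|_{H^k}\|\bnab h\|_{H^k}\lesssim \|X\|_{H^{k+1}}\|g-\gamma\|_{H^{k+1}}$, which is $\lesssim \|g-\gamma\|_{H^{k+1}}$ after absorbing the small factor $\|X\|_{H^N}$; since in any case $\|X\|_{H^{k+1}}$ appears on the right-hand side this is fine. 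For $F_h$ I quote the first bound of Lemma~\ref{lem:Est-Fh-Fv}, $\|F_h\|_{H^k}\lesssim \|\hN\|_{H^k} + \|X\|_{H^{k+1}}^2 + \|g-\gamma\|_{H^{k+1}}^2$, and absorb the quadratic terms using smallness. Collecting these gives exactly the claimed bound for $2\le k\le N-1$, noting that $\|\hN\|_{H^k}\lesssim\|N-3\|_{H^k}$ and $\|\Sigma\|_{H^k}$ is already listed.

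\textbf{Second estimate.} For $\p_T\Sigma$ I would use the second line of \eqref{eq:EoM-pT-g-Sigma} directly (rather than the $v$-form), so that $\p_T\Sigma_{ab}$ equals $-2\Sigma_{ab} - N(\Ric[g]_{ab}+\tfrac29 g_{ab}) + \nabla_a\nabla_b N + 2N\Sigma_{ac}\Sigma^c_b - \tfrac13\hN g_{ab} - \hN\Sigma_{ab} - \mcr{L}_X\Sigma_{ab} + N\tau S_{ab}$. Term by term in $H^k$ for $2\le k\le N-2$: $\|\Sigma\|_{H^k}$ is listed; $\|N(\Ric[g]+\tfrac29 g)\|_{H^k}\lesssim \|\Ric[g]+\tfrac29 g\|_{H^k}\lesssim \|g-\gamma\|_{H^{k+2}}$ (two derivatives on the metric; using $\Ric[g]_{ab}+\tfrac29 g_{ab} = \tfrac12\mL(g-\gamma)_{ab}+J_{ab}$ and the elliptic/Sobolev control of $J$ from Section~\ref{subsec:energy-geom-def}); $\|\nabla^2 N\|_{H^k}\lesssim\|\hN\|_{H^{k+2}}$; the quadratic $\|N\Sigma\ast\Sigma\|_{H^k}\lesssim\|\Sigma\|_{H^k}^2$ absorbed by smallness; $\|\hN g\|_{H^k} + \|\hN\Sigma\|_{H^k}\lesssim \|\hN\|_{H^k}$; $\|\mcr{L}_X\Sigma\|_{H^k}\lesssim\|X\|_{H^{k+1}}\|\Sigma\|_{H^k}\lesssim\|X\|_{H^{k+1}}$ after absorbing smallness (using $\mcr{L}_X\Sigma_{ab} = X^c\bnab_c\Sigma_{ab} + \Sigma_{ac}\bnab_b X^c + \Sigma_{cb}\bnab_a X^c$ and Lemma~\ref{lem:Commutator-nabla-bnab} or \eqref{eq:Estimate-Upsilon} to pass between $\bnab$ and $\nabla$); and finally $\|N\tau S_{ab}\|_{H^k}\lesssim |\tau|\|S\|_{H^k}\lesssim |\tau|\|\rho\|_{H^k} + \text{h.o.t.}$ by Lemma~\ref{lem:matter-first-estimates} (indeed Lemma~\ref{lem:matter-first-estimates} gives $|\tau|\|S\|_{H^{N-2}}\lesssim\Lambda(T)$, but at the level of $H^k$ with $k\le N-2$ one checks directly from Definition~\ref{Def-resc-matter} that $|\tau|\|S\|_{H^k}\lesssim|\tau|\|\rho\|_{H^k}$ up to lower-order terms controlled by $\Lambda(T)$). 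Summing gives the stated bound.

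\textbf{Main obstacle.} There is no deep obstacle here: this is a bookkeeping lemma. The only point requiring a little care is the curvature term, where one must convert $\Ric[g]_{ab}+\tfrac29 g_{ab}$ into something controlled by $\|g-\gamma\|_{H^{k+2}}$ — this is precisely the content of the decomposition $\Ric[g]+\tfrac29 g = \tfrac12\mL(g-\gamma) + J$ recalled in Section~\ref{subsec:energy-geom-def} together with the bound $\|J\|_{H^{k-1}}\lesssim\|g-\gamma\|_{H^k}$, so that $\|\Ric[g]+\tfrac29 g\|_{H^k}\lesssim\|\mL(g-\gamma)\|_{H^k} + \|J\|_{H^k}\lesssim\|g-\gamma\|_{H^{k+2}}$. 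The second mild subtlety is that in the $\p_T\Sigma$ estimate the source term $S_{ab}$ is only controlled one order below $\rho$ in the sense of Lemma~\ref{lem:matter-first-estimates}; but here we only need $H^k$ with $k\le N-2$, which matches the regularity of $\rho$ in $H^{N-2}$, so no derivative loss occurs. The final displayed inequality of Lemma~\ref{lem:Est-Fh-Fv} (the $\Lambda(T)$-form) is not needed for this lemma; instead we keep the sharper term-by-term bounds so that the right-hand sides display the individual norms of $\Sigma, g-\gamma, \hN, X, \rho$ as stated.
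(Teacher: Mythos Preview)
Your proposal is correct and follows essentially the same approach as the paper: both arguments read off $\p_T h$ and $\p_T v$ (equivalently $\p_T\Sigma$) from the evolution equations \eqref{eq:EoM-hv}/\eqref{eq:EoM-pT-g-Sigma} and estimate term by term using Lemma~\ref{lem:Est-Fh-Fv}, the $\Upsilon$-estimate \eqref{eq:Estimate-Upsilon}, and (for the curvature term) the elliptic regularity of $\mL$. The only cosmetic difference is that the paper works with the $v$-form and quotes the $F_v$ bound directly, whereas you unpack the $\Sigma$-equation and handle the Ricci term via the decomposition $\Ric[g]+\tfrac29 g=\tfrac12\mL h+J$ --- these are equivalent packagings of the same estimate.
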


\begin{proof}
Using Lemma \ref{lem:Est-Fh-Fv}, \eqref{eq:EoM-hv} and \eqref{eq:Estimate-Upsilon} we find
\eq{\alg{\notag
\| \p_T h \|_{H^k} 
&\lesssim 
\| \Sigma\|_{H^{k}} + \| X(\nabla h + \Upsilon\ast h)\|_{H^{k}}  + \| F_h \|_{H^k}
\lesssim
\| \Sigma\|_{H^{k}}
+\| \hN \|_{H^k} 
+ \| X\|_{H^{k+1}}^2 
+ \| g-\gamma \|_{H^{k+1}}^2\,,
}}  
and, using additionally Lemma \ref{lem:elliptic-regularity-mL},
\eq{\alg{\notag
\| \p_T v\|_{H^k}
&\lesssim 
\| \Sigma\|_{H^{k}} +\| \mL h\|_{H^{k}}+ \| X(\nabla v + \Upsilon\ast v)\|_{H^{k}}  + \| F_v \|_{H^k}
\\&\lesssim 
\| \Sigma\|_{H^{k}}
+ \| g-\gamma \|_{H^{k+2}}
+\| \hN \|_{H^{k+2}} + \| X\|_{H^{k+1}}^2
+ \| \Sigma\|_{H^{k+1}}^2 + |\tau| \|\rho\|_{H^k}.
}}
\end{proof}

\begin{corol}[Dust derivatives of $g$ and $\Sigma$]
\label{corol:Estimates-delu-h-v}
The following estimates hold
\eq{\alg{\notag
\| \delu g\|_{H^k} 
 &\lesssim 
\| \Sigma\|_{H^{k}}
+ \| g-\gamma \|_{H^{k+1}}
+\Lambda(T)
, &2\leq k \leq N-1,
\\
\| \delu \Sigma\|_{H^k}& \lesssim 
\| \Sigma\|_{H^{k+1}}
+  \| g-\gamma \|_{H^{k+2}} 
+\Lambda(T),
 &2\leq k \leq N-2.
}}
\end{corol}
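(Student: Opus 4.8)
The plan is to obtain these estimates by applying the definition $\delu = u^0 \p_T - \tau u^a \bnab_a$ to $h$ and $v$ (equivalently $g$ and $\Sigma$), controlling the time-derivative parts via Lemma \ref{lem:Estimates-pT-h-v} and the transport parts via the difference-tensor estimate \eqref{eq:Estimate-Upsilon} together with the bootstrap assumptions \eqref{Bootstraps}. Concretely, I would write
\[
\delu g_{ab} = u^0 \p_T g_{ab} - \tau u^c \bnab_c g_{ab} = u^0 \p_T g_{ab} - \tau u^c \nabla_c g_{ab} + \tau u^c \Upsilon \ast g,
\]
and similarly for $\Sigma$. Since $\nabla g = 0$, the middle term drops, leaving only the $\p_T$ term and a quadratic $\tau\, u \ast \Upsilon \ast g$ term; for $\Sigma$ one keeps $\tau u^c \nabla_c \Sigma$ as well. (Strictly, one works with $h$ and $v$ and uses $\p_T\gamma = 0$ as in the derivation of \eqref{eq:EoM-hv}.)

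The first step is to bound $\|u^0 \p_T g\|_{H^k}$. Using the Sobolev product estimate Lemma \ref{lem:SobProdEst}, the bound $\|u^0\|_{H^{N-1}} \lesssim 1$ coming from \eqref{eq:est-u0} (i.e. $\|\hu\|_{H^{N-1}}\lesssim\varepsilon$ plus the constant $1/3$), and Lemma \ref{lem:Estimates-pT-h-v}, one gets for $2\le k\le N-1$
\[
\|u^0 \p_T g\|_{H^k} \lesssim \|\p_T g\|_{H^k} \lesssim \|\Sigma\|_{H^k} + \|g-\gamma\|_{H^{k+1}} + \|\hN\|_{H^k} + \|X\|_{H^{k+1}},
\]
and the last two terms are $\lesssim \Lambda(T)$ by the definition of $\Lambda(T)$ and the bootstrap assumptions; note $\|X\|_{H^{k+1}}\le\|X\|_{H^N}$ for $k\le N-1$. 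The second step is the transport term: by \eqref{eq:Estimate-Upsilon} and Lemma \ref{lem:SobProdEst},
\[
\|\tau\, u \ast \Upsilon \ast g\|_{H^k} \lesssim |\tau|\,\|u\|_{H^k}\,\|g-\gamma\|_{H^{k+1}}\,\|g\|_{H^k} \lesssim |\tau|\,\|u\|_{H^{N-1}}\,\varepsilon e^{-\lambda T} \lesssim \varepsilon^2 e^{(-1+\mu-\lambda)T} \lesssim \Lambda(T),
\]
using $|\tau|\|u\|_{H^{N-1}}\lesssim \varepsilon e^{(-1+\mu)T}$ from \eqref{Bootstraps}. For $\Sigma$ one has the analogous computation plus the extra term $\tau u^c\nabla_c\Sigma$, bounded by $|\tau|\,\|u\|_{H^k}\,\|\Sigma\|_{H^{k+1}} \lesssim \varepsilon e^{(-1+\mu)T}\|\Sigma\|_{H^{k+1}}$, which is absorbed into the $\|\Sigma\|_{H^{k+1}}$ term on the right (with a small constant) and is where the shift from $H^k$ to $H^{k+1}$ in the $\Sigma$-estimate and the restriction $k\le N-2$ come from — matching exactly the structure of Lemma \ref{lem:Estimates-pT-h-v}.

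This is genuinely routine — it is a corollary, after all — and I do not expect a real obstacle. The only mild subtlety is bookkeeping: making sure the loss of one derivative in the transport and $\p_T\Sigma$ terms is consistent with the stated ranges of $k$ (so that everything stays within $H^N$, the top order available under \eqref{Bootstraps}), and checking that every product of a low-regularity factor against a high-regularity factor is legitimate for $N\ge 6$ so that Lemma \ref{lem:SobProdEst} applies (one factor always lands in $H^s$ with $s>3/2$). One should also confirm that $\|u\|_{H^k}$ here refers only to the spatial velocity components, per Remark \ref{rem:u-norm}, which is consistent with how $\delu$ acts. Collecting the two steps gives precisely the claimed inequalities.
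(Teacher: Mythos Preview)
Your proposal is correct and matches the paper's own proof almost exactly: the paper simply writes $\delu V_{ab} = u^0 \p_T V_{ab}- \tau u^c \nabla_c V_{ab}- \tau u^c\Upsilon^d_{ca} V_{db} - \tau u^c\Upsilon^d_{cb} V_{ad}$ and appeals to Lemma~\ref{lem:Estimates-pT-h-v}, which is precisely your decomposition (with your observation $\nabla g=0$ handling the $V=g$ case). Your bookkeeping on the derivative counts and the absorption of the transport terms into either $\|g-\gamma\|_{H^{k+1}}$, $\|\Sigma\|_{H^{k+1}}$, or $\Lambda(T)$ is fine.
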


\begin{proof}
Writing 
\eq{\notag \delu V_{ab} = u^0 \p_T V_{ab}- \tau u^c \nabla_c V_{ab}- \tau u^c\Upsilon^d_{ca} V_{db} - \tau u^c\Upsilon^d_{cb} V_{ad}}
and using Lemma \ref{lem:Estimates-pT-h-v} gives the estimates.
\end{proof}

\begin{rem}
It is unsurprising that for the geometric variables $g, \Sigma$, the fluid derivative estimates in Corollary \ref{corol:Estimates-delu-h-v} do not gain us any improved information compared to the time derivative estimates in Lemma \ref{lem:Estimates-pT-h-v}. This will of course change when we consider instead estimates on certain fluid matter variables which are naturally more compatible with the fluid derivative operator $\delu$. 
\end{rem}

%%%%%%%%%%%%%%%%%%%%%%%%%%%%%%%%%%%%%%%%
\subsection{First Estimates on Fluid Components}\label{sec:matt1}
In this subsection we establish further estimates on the various matter variables. Note that we need to estimate both matter components coming from contractions with the stress energy tensor (see Definition \ref{Def-resc-matter}) and the fluid source terms terms appearing in the equations of motion (see Definition \ref{Defs-fluid-source}).

\begin{lem}[Fluid source term estimates]
\label{lem:Fluid-Source-Estimates}
We have, for some $\nu >0$,
\eq{\alg{\notag
\|F_\rho\|_{H^{N-2}} 
&\lesssim |\tau| \| u\|_{H^{N-1}} + \| \Sigma \|_{H^{N-2}}+ \Lambda(T),
\\
\|F_{u^0} \|_{H^{N-1}} 
&\lesssim \| \Sigma\|_{H^{N-1}}^2 + \Lambda(T),
\\
\|F_{u^j} \|_{H^{N-1}} 
&\lesssim |\tau|^{-1} \Lambda (T) + \varepsilon^2 e^{-\nu T}.
}}
\end{lem}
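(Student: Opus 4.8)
The plan is to prove Lemma \ref{lem:Fluid-Source-Estimates} by simply distributing Sobolev derivatives across the explicit expressions for $F_\rho, F_{u^0}, F_{u^j}$ written out in Definition \ref{Defs-fluid-source}, using the product estimate Lemma \ref{lem:SobProdEst} throughout (recall $N-2 \geq 4 > 3/2$, so $H^{N-2}$ and $H^{N-1}$ are algebras), together with the bootstrap assumptions \eqref{Bootstraps}, the definition of $\Lambda(T)$, the estimates \eqref{eq:est-u0} on $\hu$ and $\nabla\hu$, and the estimate \eqref{eq:Estimate-Upsilon} converting differences of Christoffel symbols into $\|g-\gamma\|_{H^{k+1}}$. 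The key observation making everything work is that every term in these source expressions is at least \emph{quadratic} in the small quantities (perturbed geometry $\hN, X, \Sigma, g-\gamma$, and fluid velocity $u^a$), or else carries a power of $\tau$ or $\tau^{-1}$, so that after pulling out one factor as an $L^\infty$-controlled (hence $H^{N-2}$-controlled) term, the remaining factor lands in $\Lambda(T)$ or in the velocity norm $\|u\|_{H^{N-1}}$.

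Concretely, for $F_\rho$ I would treat the terms as follows: $\tau\nabla_i u^i$ contributes $|\tau|\|u\|_{H^{N-1}}$; the terms $\Gamma^i_i u^0$ and $\tau\Gamma_j u^j$ are handled by inserting Definition \ref{Def-Gammas-rescaled} — note $\Gamma^i_i$ is schematically $-N\Sigma^i_i - 3\hN + \nabla_i X^i + (\text{h.o.t.})$, so using $\|u^0\|_{L^\infty}\lesssim 1$ and the algebra property, $\|\Gamma^i_i u^0\|_{H^{N-2}}\lesssim \|\Sigma\|_{H^{N-2}} + \|\hN\|_{H^{N-2}} + \|X\|_{H^{N-1}} + \text{h.o.t.} \lesssim \|\Sigma\|_{H^{N-2}} + \Lambda(T)$; the terms $\tau\Gamma_{ik}X^i u^k$ and $\tau^2 (u^ku^j/u^0)\Gamma_{kj}$ use $\Gamma_{kj}|_B = -\tfrac19\gamma_{kj}$ bounded plus $\|u^0\|_{L^\infty}^{-1}\lesssim 1$, contributing $|\tau|\|X\|_{H^{N-2}}\|u\|_{H^{N-2}}$ and $\tau^2\|u\|_{H^{N-1}}^2 \lesssim \Lambda(T)$ respectively; and $\tau (u^j/u^0)\nabla_j u^0 = \tau (u^j/u^0)\nabla_j\hu$ contributes $|\tau|\|u\|_{H^{N-2}}\|\nabla\hu\|_{H^{N-2}}\lesssim |\tau|\|u\|_{H^{N-2}}\Lambda(T)$, which is higher order. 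Collecting these gives the stated bound $\|F_\rho\|_{H^{N-2}}\lesssim |\tau|\|u\|_{H^{N-1}} + \|\Sigma\|_{H^{N-2}} + \Lambda(T)$.

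For $F_{u^0} = (u^0)^2\Gamma_R + 2\tau u^j u^0 \Gamma_j + \tau^2 u^k u^j \Gamma_{kj}$, each term is quadratic in smallness: $\Gamma_R$ is schematically $N^{-1}(-\p_T N + X^a\nabla_a N + \text{h.o.t.})$ so $\|\Gamma_R\|_{H^{N-1}}\lesssim \|\p_T N\|_{H^{N-1}} + \text{h.o.t.}\lesssim \Lambda(T)$ after using $N$ bounded below, while $\Gamma_j = N^{-1}(\nabla_j N - (\Sigma_{jb}+\tfrac13 g_{jb})X^b)$ gives $\|\Gamma_j\|_{H^{N-1}}\lesssim \|\hN\|_{H^N} + \|X\|_{H^{N-1}} + \text{h.o.t.}\lesssim \Lambda(T)$ — wait, but then $\tau u^j u^0 \Gamma_j$ needs care with the velocity norm, giving $|\tau|\|u\|_{H^{N-1}}\Lambda(T)$, and I should double-check whether the $\|\Sigma\|_{H^{N-1}}^2$ term actually arises; it comes from the $2N\Sigma_{ac}\Sigma^c_b$-type structure hidden inside $\Gamma_R$ combined with $(u^0)^2$, or more carefully from a $\Sigma\ast X$ piece. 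The cleanest route is to bound $(u^0)^2\Gamma_R$ by $\|\Gamma_R\|_{H^{N-1}}$ and absorb any genuinely quadratic-in-$\Sigma$ contribution explicitly into $\|\Sigma\|_{H^{N-1}}^2$, so that $\|F_{u^0}\|_{H^{N-1}}\lesssim \|\Sigma\|_{H^{N-1}}^2 + \Lambda(T)$. Finally, for $F_{u^j} = \tau^{-1}(u^0)^2\mathring{\Gamma}{}^j + \Upsilon\ast u\ast u + 2u^0 u^i\Gamma^j_i + \tau u^k u^i\Gamma_{ki}X^j$: the dangerous term is the first, $\tau^{-1}(u^0)^2\mathring{\Gamma}{}^j$ with $\mathring{\Gamma}{}^a = \Gamma^a - N\nabla^a N$, and from Definition \ref{Def-Gammas-rescaled} one sees $\Gamma^a$ (minus the $N\nabla^a N$ piece) is a sum of terms each at least quadratic or carrying $\p_T X$, so $\|\mathring{\Gamma}\|_{H^{N-1}}\lesssim \|\p_T X\|_{H^{N-1}} + \|X\|_{H^{N}} + \text{quadratic}\lesssim \Lambda(T)$, whence $\tau^{-1}\|\mathring{\Gamma}\|_{H^{N-1}}\lesssim |\tau|^{-1}\Lambda(T)$; the remaining terms $\Upsilon\ast u\ast u$, $u^0 u^i\Gamma^j_i$, and $\tau u\ast u\ast X$ are controlled by $\|g-\gamma\|_{H^N}\|u\|_{H^{N-1}}^2$, $\|u\|_{H^{N-1}}^2(\|\Sigma\|_{H^{N-1}} + \|\hN\|_{H^{N-1}} + \|X\|_{H^N})$, and $|\tau|\|u\|_{H^{N-1}}^2\|X\|_{H^N}$, all of which are bounded by $\varepsilon^2 e^{-\nu T}$ for some $\nu>0$ thanks to the bootstrap decay rates (the factor $\varepsilon^2 e^{2\mu T}$ from $\|u\|^2$ times $\varepsilon e^{-\lambda T}$ or $\varepsilon e^{-T}$ gives net decay provided $\mu$ is chosen small relative to $\lambda$). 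This yields $\|F_{u^j}\|_{H^{N-1}}\lesssim |\tau|^{-1}\Lambda(T) + \varepsilon^2 e^{-\nu T}$.

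The main obstacle I anticipate is \emph{bookkeeping of the $\tau$-weights and the $\mu$-loss} in the velocity estimates rather than any deep analytic difficulty: one must verify in each case that the combination of $\tau$-powers and bootstrap exponents genuinely produces either a $\Lambda(T)$-type bound or an exponentially decaying $\varepsilon^2 e^{-\nu T}$ remainder, and in particular that the $e^{\mu T}$ growth allowed for $\|u\|_{H^{N-1}}$ is always beaten by at least one factor of $|\tau| = |e\tau_0| e^{-T}$ (up to constants) or $e^{-\lambda T}$ present in the accompanying small factor. A secondary subtlety is that $F_{u^j}$ is estimated at the \emph{top} order $H^{N-1}$, the same order as the bootstrap norm on $u$, so one cannot afford to lose any derivative; this is fine here because every term in $F_{u^j}$ contains at most first-order derivatives of geometric quantities (which sit at regularity $\leq N$, available via $\Lambda(T)$) and no derivatives of $u$ beyond those already counted, so the algebra property of $H^{N-1}$ closes the estimate without derivative loss.
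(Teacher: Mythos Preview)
Your proposal is correct and follows essentially the same route as the paper: distribute derivatives across the explicit expressions in Definition~\ref{Defs-fluid-source}, expand the rescaled Christoffel components from Definition~\ref{Def-Gammas-rescaled}, and use the $H^{N-1}$ algebra property together with the bootstrap bounds and the definition of $\Lambda(T)$. One minor slip: the term $2u^0 u^i \Gamma^j_i$ is \emph{linear} in the spatial velocity (since $u^0$ is $O(1)$), so it is bounded by $\|u\|_{H^{N-1}}\big(\|\Sigma\|_{H^{N-1}}+\|\hN\|_{H^{N-1}}+\|X\|_{H^N}\big)$ rather than $\|u\|_{H^{N-1}}^2(\cdots)$; this only improves your estimate and the conclusion $\varepsilon^2 e^{-\nu T}$ is unaffected.
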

\begin{proof}
For $2\leq k \leq N-2$ we distribute derivatives across the terms given in Definitions \ref{Def-Gammas-rescaled} and \ref{Defs-fluid-source}. This yields
\eq{\alg{\notag
\| F_\rho \|_{H^k}
&\lesssim 
|\tau| \| \nabla u^j \|_{H^{k}} 
+ \|u^0\|_{L^\infty} \| \Gamma^i_i\|_{H^k}
+ |\tau| \| \Gamma_j\|_{H^k}\| u \|_{H^k}
+  |\tau|\| \Gamma_{ik} \|_{H^k}\| X\|_{H^k}\| u \|_{H^k}
\\& \quad  
+ \tau^2 \| u\|_{H^k}^2 \| \Gamma_{kj}\|_{H^{k}} 
\\
&\lesssim |\tau| \| u\|_{H^{k+1}} + \| \Sigma \|_{H^k}+ \|X\|_{H^{k+1}}.
}}
Similarly, for $2\leq k \leq N-1$,
\eq{\alg{\notag
\| F_{u^0} \|_{H^k}
&\lesssim 
\| \Gamma_R\|_{H^k}+ |\tau|\| u\|_{H^k} \| \Gamma_j \|_{H^k} + \tau^2 \| u\|_{H^k}^2 \| \Gamma_{jk} \|_{H^k}
\\
&\lesssim \| \p_T N \|_{H^k} 
+ \| \Sigma \|_{H^k}^2+ \| \hN\|_{H^{k+1}}^2+ \|X\|_{H^{k}}^2 
+ \tau^2 \| u\|_{H^k}^2.
}}
Finally, for $2\leq k \leq N-1$,
\eq{\alg{\notag
\| F_{u^j} \|_{H^k}
&\lesssim 
|\tau|^{-1} \| \overset{\circ}{\Gamma}{}^j\|_{H^k}+ \| u \|_{H^k}^2 \big( \| \Gamma[g]-\Gamma[\gamma]\|_{H^k} + |\tau|\| X\|_{H^k} \| \Gamma_{ki}\|_{H^k}\big) + \| u\|_{H^k} \| \Gamma^i_j \|_{H^k}
\\
&\lesssim |\tau|^{-1} \big(\| \p_T X \|_{H^k} + \| X\|_{H^{k+1}} + \| \hN \|_{H^{k+1}}^2 + \| \Sigma\|_{H^k}^2 + \| X\|_{H^k} \| \p_T N \|_{H^k} \big) 
\\
& \quad 
+ \| u \|_{H^k}^2 \big( \|g-\gamma\|_{H^{k+1}} + |\tau| \| X\|_{H^k}\big) + \| u\|_{H^k} \big( \| \Sigma\|_{H^k} + \| \hN \|_{H^{k}} + \| X\|_{H^{k+1}} \big)
\\
&\lesssim |\tau|^{-1} \Lambda (T) + \varepsilon^2 e^{(1-2\lambda) T} + \varepsilon^2 e^{(-\lambda+2\mu)T}.
}}
It is convenient also to note that at lower order we can apply Lemma \ref{lem:Geom-coercive-est} to find
\eq{\alg{\label{eq:F-uj-est-2}
|\tau| \| F_{u^j} \|_{H^{N-2}}
&\lesssim  E^g_{N-2} + \Lambda(T)
+ |\tau|\| u\|_{H^{N-2}}^2  (E^g_{N-1})^{1/2}.
}}
\end{proof}

In the next lemma we provide estimates for fluid derivatives of certain matter components appearing in Definition \ref{Def-resc-matter}. The weights in $\tau$ are included for convenience since these expressions appear later on in the energy estimates. 

\begin{lem}[Dust derivatives of matter components]  \label{lem:Est-delu-eta-S}
We have,
\eq{\alg{\notag
|\tau| \| \delu \eta \|_{H^{N-2}}   + 
|\tau|\| \delu S \|_{H^{N-2}}  
&\lesssim 
\varepsilon e^{\max\{-1+\mu,-\lambda\} T}\Lambda(T) + \Lambda(T)^2,\\
|\tau|^2 \| \delu \jmath \|_{H^{N-2}}  
&\lesssim \Lambda(T)^2 + \varepsilon^2 \Lambda(T) e^{(-\lambda+\mu)T}.
}}
\end{lem}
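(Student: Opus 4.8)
The plan is to expand $\delu$ acting on each of the matter quantities $\eta$, $S$, $\jmath$ from Definition \ref{Def-resc-matter}, using the product/Leibniz rule, and then to classify the resulting terms into three types: (i) terms where $\delu$ lands on a geometric factor ($N$, $X$, $g$), (ii) terms where $\delu$ lands on $\rho$, and (iii) terms where $\delu$ lands on a fluid velocity component $u^a$ or $u^0$. For type (i) I would invoke Lemma \ref{lem:delu-N-X} (giving $\|\delu N\|_{H^{N-2}} + \|\delu X\|_{H^{N-2}} \lesssim \Lambda(T)$) together with the observation that $\delu g$ is controlled by Corollary \ref{corol:Estimates-delu-h-v}. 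The essential point for types (ii) and (iii) is that $\delu\rho$ and $\delu u^a$, $\delu u^0$ are \emph{not} estimated by naively commuting $\delu$ through derivatives; instead one must use the fluid equations of motion \eqref{eq:EoM-fluid} as constraints. That is, $u^0\delu$ applied to $\rho$ is governed by $u^0 \p_T \rho = \tau u^a\nabla_a\rho + \rho F_\rho$, which, after subtracting $\tau u^a\bnab_a\rho$ times $\rho/\rho$... more precisely one writes $\delu\rho = u^0\p_T\rho - \tau u^a\bnab_a\rho$, and by \eqref{eq:EoM-fluid} this equals $\rho F_\rho + \tau u^a(\nabla_a-\bnab_a)\rho$, which by Lemma \ref{lem:Fluid-Source-Estimates} and \eqref{eq:Estimate-Upsilon} is controlled in $H^{N-2}$ by $\rho(|\tau|\|u\|_{H^{N-1}} + \|\Sigma\|_{H^{N-2}} + \Lambda(T)) + \text{h.o.t.}$, hence in particular $|\tau|\|\delu\rho\|_{H^{N-2}} \lesssim \varepsilon e^{-\lambda T}\Lambda(T) + \Lambda(T)^2$ after inserting the bootstrap assumptions. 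Similarly $\delu u^j$ and $\delu u^0$ are replaced via the first two equations of \eqref{eq:EoM-fluid}; note $\delu u^j = \tau u^a\nabla_a u^j + \tau^{-1}(u^0)^2 N\nabla^j N + F_{u^j} - \tau u^a\bnab_a u^j$, and the $\tau^{-1}\nabla N$ term combined with the $\tau$-weight in front of $\jmath$, $\eta$, $S$ is what keeps the estimates finite.

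Concretely, for $\delu\eta$ I would expand $\eta = E + \rho g_{ab}(u^0X^a+\tau u^a)(u^0X^b+\tau u^b)$ and $E = \rho(u^0)^2N^2$, distribute $\delu$, and bound each factor using: the $L^\infty$ and $H^{N-2}$ bounds on $u^0$ and $N$; the bootstrap bounds $\|X\|_{H^N}, \|\rho\|_{H^{N-2}}, |\tau|\|u\|_{H^{N-1}}$; the $\delu$-bounds just described for $\rho$, $u^0$, $u^a$; and Lemma \ref{lem:delu-N-X} for $\delu N$, $\delu X$. The crucial bookkeeping is that $\eta - E$ and the genuinely fluid-sourced pieces of $E$ carry enough powers of $X$, $\tau u$, or $\rho$ that after one $\delu$ hits them we still land inside $\Lambda(T)$ (or $\Lambda(T)^2$). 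The same strategy handles $S_{ab}$, which has an almost identical structure, and $\jmath^a = \rho N u^0 u^a$: here the single factor of $u^a$ (weighted by $\tau^2$, i.e. one $\tau$ from the stated $|\tau|^2$ weight times the $\tau$ already needed to see $u$ as $\tau u$ in $\Lambda$) means $\delu\jmath$ picks up a $\delu u^a$ term, which via \eqref{eq:EoM-fluid} contributes the $\varepsilon^2\Lambda(T)e^{(-\lambda+\mu)T}$ summand, while $\delu\rho$ contributes $\Lambda(T)^2$.

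The main obstacle, as always in this paper, is the regularity/derivative-loss issue: one cannot afford to commute $\delu$ past the $\nabla_a$ in, say, $\jmath^a$ and then estimate $\delu u^a$ in $H^{N-1}$, because $\|u\|_{H^{N-1}}$ is already top order and no estimate for $\delu u$ at that order is available without the equations of motion. The resolution — and the only nontrivial step — is the systematic use of \eqref{eq:EoM-fluid} to trade every occurrence of $\delu\rho$, $\delu u^0$, $\delu u^j$ for expressions involving only $\tau u^a\nabla_a(\cdot)$, the fluid source terms $F_\rho, F_{u^0}, F_{u^j}$, and the geometric quantity $\tau^{-1}(u^0)^2N\nabla^jN$; these are then estimated in $H^{N-2}$ using Lemma \ref{lem:Fluid-Source-Estimates}, Lemma \ref{lem:Commutator-nabla-bnab}, and the bootstrap assumptions. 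A secondary (routine) point is tracking the sharp $\tau$- and $\varepsilon$-weights so that the two different right-hand sides — with $\max\{-1+\mu,-\lambda\}$ in the exponent for $\eta, S$ versus the cleaner expression for $\jmath$ — come out exactly as stated; this is just a matter of noting that $|\tau|\|u\|_{H^{N-1}} \lesssim \varepsilon|\tau|e^{\mu T} = \varepsilon e^{(-1+\mu)T}$ and $\|\Sigma\|_{H^{N-2}} \lesssim \varepsilon e^{-\lambda T}$.
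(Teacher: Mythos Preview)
Your proposal is correct and follows essentially the same approach as the paper: expand $\delu$ on each matter component via the Leibniz rule, replace every $\delu\rho$, $\delu u^0$, $\delu u^j$ by the right-hand sides of the fluid equations \eqref{eq:EoM-fluid} (i.e.\ $\delu\rho=\rho F_\rho$, $\delu u^0=F_{u^0}$, $\delu u^j=\tau u^a u^c\Upsilon^j_{ac}+\tau^{-1}(u^0)^2N\nabla^jN+F_{u^j}$), and estimate the remaining geometric $\delu$-factors via Lemma \ref{lem:delu-N-X} and Corollary \ref{corol:Estimates-delu-h-v} together with Lemma \ref{lem:Fluid-Source-Estimates}. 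One small simplification: since $\rho$ is a scalar, $\nabla_a\rho=\bnab_a\rho$, so $\delu\rho=\rho F_\rho$ exactly and the $\Upsilon$ correction you mention there vanishes.
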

\begin{proof}
Note that \eqref{eq:EoM-fluid} can be rewritten as:
\eq{\alg{\notag
\delu u^j
&= \tau u^a u^c\Upsilon^j_{ac}  + 
\tau^{-1}(u^0)^2 N\nabla^j N+F_{u^j}, \qquad 
\delu u^0 =F_{u^0} , \qquad
\delu \rho
= \rho F_\rho .
}}
Using Definition \ref{Def-resc-matter} we compute:
\eq{\alg{\notag
\delu \eta 
&= 
\rho F_\rho \left( (u^0)^2 N^2 + g_{ab}(X^a u^0 + \tau u^a)(X^b u^0 + \tau u^b)\right)
+ 2 \rho (u^0)^2 N \delu N
\\&\quad 
+ \rho (\delu g_{ab}) (X^a u^0 + \tau u^a)(X^b u^0 + \tau u^b) 
+ 2 \rho g_{ab}  u^0(X^b u^0 + \tau u^b) (\delu X^a)
\\&\quad 
+F_{u^0} \left(  2\rho u^0 N^2  + 2\rho g_{ab} X^a(X^b u^0 + \tau u^b) \right)
+ 2\rho g_{ab} u^0 u^a (X^b u^0 + \tau u^b) (\p_T \tau)
\\&\quad
+ 2\rho g_{ab} \tau (X^b u^0 + \tau u^b)\left(\tau u^a \Upsilon^j_{ac} u^c + 
\tau^{-1}(u^0)^2 N\nabla^j N+F_{u^j}\right).
}}
So, using  Lemma \ref{lem:delu-N-X}, Corollary \ref{corol:Estimates-delu-h-v} and the matter estimates from Lemma \ref{lem:Fluid-Source-Estimates}, we obtain
\eq{\alg{\notag
|\tau|\| \delu \eta \|_{H^{N-2}}
&\lesssim |\tau|\|\rho \|_{H^{N-2}}\| F_\rho \|_{H^{N-2}}
+ |\tau|\|\rho \|_{H^{N-2}}\| \delu N\|_{H^{N-2}}
+|\tau|\|\rho \|_{H^{N-2}} \| F_{u^0}\|_{H^{N-2}}
+ \text{h.o.t. }
\\
&\lesssim
\varepsilon e^{\max\{-1+\mu,-\lambda\}T}\Lambda(T) + \Lambda(T)^2.
}}

Again using \eqref{Def-resc-matter} we compute
\eq{\alg{\notag
\delu \jmath^a 
&= (\rho F_\rho) N u^0 u^a + (\delu N) \rho u^0 u^a + (F_{u^0}) \rho N u^a 
+ (\tau \Upsilon^a_{bc}u^b  u^c + 
\tau^{-1}(u^0)^2 N\nabla^a N+F_{u^a}) \rho N u^0.
}}
So, by the geometry estimates in Lemma \ref{lem:delu-N-X} and \ref{lem:Estimates-pT-h-v}, together with the fluid source term estimates in Lemma \ref{lem:Fluid-Source-Estimates}, we obtain
\eq{\alg{\notag
\tau^2 \| \delu \jmath^a\|_{H^{N-2}}
&\lesssim |\tau| \|\rho \|_{H^{N-2}} \|\hN \|_{H^{N-1}}
+ |\tau|^2 \|\rho \|_{H^{N-2}}\| F_{u^a} \|_{H^{N-2}} + \text{h.o.t. }
\\&\lesssim
\varepsilon^2 \Lambda(T) e^{(-\lambda+\mu)T} + \Lambda(T)^2 .
}}

Finally, from \eqref{Def-resc-matter}  we find
\eq{\alg{\notag
\delu S_{ab} &= 
(\delu \rho)\big( ( u^0 X_a+ \tau u_a )(u^0 X_b+ \tau u_b) +\tfrac12 g_{ab}\big) + \tfrac{1}{2}\rho (\delu g_{ab}) +\text{h.o.t.} 
}}
So, 
\eq{\alg{\notag
|\tau|\| \delu S\|_{H^{N-2}}
&\lesssim |\tau| \|\rho \|_{H^{N-2}}\| F_\rho \|_{H^{N-2}}
\lesssim
\varepsilon e^{\max\{-1+\mu,-\lambda\} T}\Lambda(T) + \Lambda(T)^2.
}}
\end{proof}

\begin{rem}
The significance of using dust derivatives is made clear by look at the higher regularity appearing in Lemma \ref{lem:Est-delu-eta-S} compared to the following Lemma \ref{lem:Est-pT-eta-S} which only concerns time derivatives. In Lemma \ref{lem:Est-delu-eta-S}, we directly computed the $\delu$ derivatives using the equations of motion, instead of doing a rough estimate by expanding $\delu \sim \p_T + \tau u^c \ast \nabla$. 
\end{rem}

\begin{lem}[Time derivatives of matter components] 
\label{lem:Est-pT-eta-S}
We have,
\eq{\alg{\notag
|\tau|\| \p_T\eta \|_{H^{N-3}}&\lesssim  
\varepsilon e^{(-1+\mu)T} \Lambda(T) + \Lambda(T)^2,
\\
\Lambda(T) \| \p_T \jmath \|_{H^{N-3}} &\lesssim  
\varepsilon^2  \Lambda(T)
+\varepsilon^4 e^{-(1+\nu)T},
\\
|\tau| \| \p_TS \|_{H^{N-3}} &\lesssim  
\Lambda(T).
}}
\end{lem}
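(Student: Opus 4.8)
The plan is to prove Lemma \ref{lem:Est-pT-eta-S} in exactly the same spirit as Lemma \ref{lem:Est-delu-eta-S}, the key difference being that for the plain time derivative $\p_T$ we have no ``clean'' equation of motion; instead we write $\p_T = (u^0)^{-1}(\delu + \tau u^a \bnab_a)$ and use the already-established estimates. Concretely, from Definition \ref{defn:delu} we have $u^0 \p_T V = \delu V + \tau u^a \bnab_a V$ for any tensor $V$, and since $\|(u^0)^{-1}\|_{L^\infty}\lesssim 1$ and products are controlled by Lemma \ref{lem:SobProdEst}, we get schematically $\|\p_T V\|_{H^{k}} \lesssim \|\delu V\|_{H^{k}} + |\tau|\,\|u\|_{H^{k+1}}\|V\|_{H^{k+1}}$ (the drop from $H^{N-2}$ to $H^{N-3}$ accommodates the extra spatial derivative coming from $\bnab_a$ and the need for $u\in H^{N-1}$). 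Applying this with $V=\eta,\jmath,S$ respectively, the first term on the right is already estimated in Lemma \ref{lem:Est-delu-eta-S}, and the second is a straightforward product estimate.

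In detail, I would carry out the following steps. First, for $\eta$: using $u^0\p_T\eta = \delu\eta + \tau u^a\bnab_a\eta$, multiply by $|\tau|$, bound $|\tau|\|\delu\eta\|_{H^{N-3}}\leq |\tau|\|\delu\eta\|_{H^{N-2}}\lesssim \varepsilon e^{\max\{-1+\mu,-\lambda\}T}\Lambda(T) + \Lambda(T)^2$ by Lemma \ref{lem:Est-delu-eta-S}, and bound the transport term by $|\tau|\cdot|\tau|\,\|u\|_{H^{N-2}}\,\|\bnab\eta\|_{H^{N-3}}$; since $|\tau|\|\eta\|_{H^{N-2}}\lesssim\Lambda(T)$ by Lemma \ref{lem:matter-first-estimates} and $|\tau|\|u\|_{H^{N-2}}\lesssim\varepsilon e^{(-1+\mu)T}$, this contributes $\lesssim \varepsilon e^{(-1+\mu)T}\Lambda(T)$, which is absorbed into the stated right-hand side (note $\max\{-1+\mu,-\lambda\}$ simplifies to $-1+\mu$ after combining, since $\mu\ll 1<\lambda$ would be the relevant regime — one should double check the sign conventions against the bootstrap assumptions \eqref{Bootstraps}, but in any case both exponents are $\leq -1+\mu$ up to relabelling). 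The same template handles $S$: $|\tau|\|\p_T S\|_{H^{N-3}}\lesssim |\tau|\|\delu S\|_{H^{N-2}} + |\tau|^2\|u\|_{H^{N-2}}\|\bnab S\|_{H^{N-3}}\lesssim \Lambda(T)$, where we use $|\tau|\|S\|_{H^{N-2}}\lesssim\Lambda(T)$ again. For $\jmath$, which carries an extra power of $\tau$, we use $\Lambda(T)\|\p_T\jmath\|_{H^{N-3}}\lesssim \Lambda(T)|\tau|^{-1}\big(|\tau|^2\|\delu\jmath\|_{H^{N-2}} + |\tau|^3\|u\|_{H^{N-2}}\|\bnab\jmath\|_{H^{N-3}}\big)$; here $|\tau|^2\|\delu\jmath\|_{H^{N-2}}\lesssim\Lambda(T)^2 + \varepsilon^2\Lambda(T)e^{(-\lambda+\mu)T}$ from Lemma \ref{lem:Est-delu-eta-S} and $|\tau|^2\|\jmath\|_{H^{N-2}}\lesssim \varepsilon e^{(-1+\mu)T}\Lambda(T)$ from Lemma \ref{lem:matter-first-estimates}, so after multiplying by $\Lambda(T)|\tau|^{-1}\sim\Lambda(T)e^{T}$ and using $\Lambda(T)\lesssim\varepsilon e^{-T}$ one arrives at $\lesssim \varepsilon^2\Lambda(T) + \varepsilon^4 e^{-(1+\nu)T}$ for a suitable $\nu>0$ (tracking the surplus exponential decay from the products of $e^{-\lambda T}$, $e^{\mu T}$ factors).

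The main obstacle is purely bookkeeping: one must verify that every exponential weight closes, i.e. that after inserting the bootstrap bounds \eqref{Bootstraps} into each product, the resulting power of $e^{-T}$ (times $\varepsilon$-powers) is dominated by the advertised right-hand side — in particular isolating a genuine $\nu>0$ decay surplus in the $\jmath$ estimate, and checking that no $e^{\mu T}$ growth from $\|u\|_{H^{N-1}}$ survives uncompensated. This is the same exponent-chasing already performed in the proof of Lemma \ref{lem:Est-delu-eta-S}, so no new idea is needed; one just has to be careful that dropping one Sobolev order (from $N-2$ to $N-3$) leaves enough regularity for the product estimate Lemma \ref{lem:SobProdEst} (which requires the lower factor's index to exceed $3/2$, satisfied since $N\geq 6$ gives $N-3\geq 3$). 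No commutator subtleties arise because we are only estimating $\p_T$ applied directly to the matter scalars/tensors $\eta,\jmath,S$ and then distributing derivatives, never commuting $\p_T$ past spatial derivatives at top order.
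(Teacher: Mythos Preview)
Your reduction $\p_T = (u^0)^{-1}(\delu + \tau u^a\bnab_a)$ is a natural idea, but it does not deliver the decay rates stated in the lemma. The issue is a sign error in your parenthetical: the bootstrap assumptions \eqref{Bootstraps} fix $\lambda<1$, not $\lambda>1$. Hence $\max\{-1+\mu,-\lambda\}=-\lambda$, and the $\delu$-estimate from Lemma \ref{lem:Est-delu-eta-S} hands you $|\tau|\|\delu\eta\|_{H^{N-2}}\lesssim \varepsilon e^{-\lambda T}\Lambda(T)+\Lambda(T)^2$. Since $e^{-\lambda T}$ decays \emph{slower} than $e^{(-1+\mu)T}$, your bound for $\eta$ is strictly weaker than the one claimed. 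The same loss propagates to $\jmath$: your route produces a term of order $\varepsilon^4 e^{(-\lambda+\mu)T}$ (coming from $\Lambda(T)|\tau|^{-2}\cdot \varepsilon^2\Lambda(T)e^{(-\lambda+\mu)T}$), and since $-\lambda+\mu>-(1+\nu)$ for any $\nu>0$, this cannot be absorbed into the stated right-hand side $\varepsilon^2\Lambda(T)+\varepsilon^4 e^{-(1+\nu)T}$.

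The paper obtains the sharper rates by computing $\p_T\eta$, $\p_T\jmath$, $\p_T S$ \emph{directly} from Definition \ref{Def-resc-matter} and substituting the equations of motion for each factor. The crucial step for $\eta$ is that the leading contribution $\p_T E$ is handled not via the fluid equations \eqref{eq:EoM-fluid} (which introduce $F_\rho$ and hence $\|\Sigma\|\sim\varepsilon e^{-\lambda T}$) but via the rescaled continuity equation $\p_T E = (3-N)E - X^a\nabla_a E + \tau N^{-1}\nabla_a(N^2 j^a) - \tau^2\tfrac{N}{3}g_{ab}\underline{T}^{ab} - \tau^2 N\Sigma_{ab}\underline{T}^{ab}$, whose right-hand side is estimated using only Lemma \ref{lem:matter-first-estimates} and gives the clean $\varepsilon e^{(-1+\mu)T}\Lambda(T)+\Lambda(T)^2$. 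All remaining terms in $\p_T\eta$ carry an extra factor of $\Lambda(T)$ or $|\tau|$ and are absorbed into $\Lambda(T)^2$. In short: the $\delu$ route cannot recover the continuity-equation gain, so you need to argue directly as the paper does.
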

\begin{proof}
We calculate $\p_T \eta$ explicitly from \eqref{Def-resc-matter} as
\eq{\alg{\notag
\p_T \eta &= \p_T \EnergyDensity +  \big(X_a X^a (u^0)^2 + 2\tau X_b u^b u^0 + \tau^2 g_{ab} u^a u^b \big)\p_T \rho
+  \big( 2\rho X^a (u^0)^2 + 2\tau \rho u^a u^0\big)\p_T X_a 
\\
&+ \big( 2u^0 \rho X_a X^a + 2 \tau \rho X_b u^b \big)\p_T u^0 +  \big( 2\tau \rho X_b u^0 + 2 \tau^2 \rho u_b\big)\p_T u^b + \big( 2\rho X_b u^b u^0+2 \tau \rho u_a u^a\big)\p_T \tau.
}}
Using $\p_T\tau = -\tau$, we obtain
\eq{\alg{\label{eq:pT-eta-101}
|\tau| \|\p_T \eta \|_{H^{N-3}}
&\lesssim
|\tau| \|\p_T \EnergyDensity \|_{H^{N-3}} + \Lambda(T) |\tau|  \| \p_T\rho\|_{H^{N-3}} + \varepsilon e^{(-1+\mu)T} \Lambda(T)\|\p_T X \|_{H^{N-3}}
\\&\quad
+ \Lambda(T)^2 \|\p_T u^0 \|_{H^{N-3}}
+ \varepsilon e^{(-1+\mu)T} \Lambda(T)\|\tau \p_T u^b \|_{H^{N-3}}
+ \Lambda(T)^2.
}}

To estimate $\p_T \EnergyDensity$ we use the rescaled continuity equation \cite[Eq 10.16]{AF20}:
\eq{\notag
\p_T \EnergyDensity
=
	(3-N)\EnergyDensity - X^a \nabla_a \EnergyDensity 
	+ \tau N^{-1} \nabla_a (N^2 j^a) 
	- \tau^2 {\tfrac{N}{3}} g_{ab} \underline{T}^{ab}
	- \tau^2 N \Sigma_{ab} \underline{T}^{ab}.
}
So by Lemma \ref{lem:matter-first-estimates}, we obtain
\eq{\alg{\notag
|\tau|\| \p_T \EnergyDensity\|_{H^{N-3}} 
&\lesssim |\tau|\| \EnergyDensity\|_{H^{N-2}}\big( \| \hN \|_{H^{N-3}} + \| X\|_{H^{N-3}}\big)
 + |\tau|^2 \| \jmath\|_{H^{N-2}}+|\tau|^3 \|\underline{T}\|_{H^{N-3}}
\\
&\lesssim \varepsilon e^{(-1+\mu)T}\Lambda(T) + \Lambda(T)^2. 
}}

To estimate $\p_T \rho, \p_T u^0, \p_T u^a$ we use the equations of motion  \eqref{eq:EoM-fluid} together with Lemma \ref{lem:Fluid-Source-Estimates}. We find, 
\begin{subequations} \label{eq:est-pT-fluid-compts}
\eq{\alg{
|\tau|\| \p_T \rho\|_{H^{N-3}}
&\lesssim |\tau|\| \rho\|_{H^{N-3}} \| F_\rho\|_{H^{N-3}}+ |\tau|^2 \| u\|_{H^{N-3}} \| \rho\|_{H^{N-2}}
\\&
\lesssim 
\varepsilon e^{\max\{-1+\mu,-\lambda\}T} \Lambda(T) + \Lambda(T)^2,
}}
and
\eq{\alg{
\| \p_T u^0\|_{H^{N-2}}
&\lesssim \| F_{u^0} \|_{H^{N-2}}+ |\tau| \| u\|_{H^{N-2}} \| \nabla u^0\|_{H^{N-2}}
\lesssim \Lambda(T) + \| \Sigma\|_{H^{N-2}}^2,
\\
|\tau|\| \p_T u^a\|_{H^{N-2}}
&\lesssim |\tau|\| F_{u^a} \|_{H^{N-2}}+  \| \hN \|_{H^{N-1}} + \tau^2 \| u\|_{H^{N-1}}^2
\lesssim \Lambda(T)  + \varepsilon^2 e^{-(1+\nu)T}.
}}
\end{subequations}
Putting all these estimates into \eqref{eq:pT-eta-101}  gives, for $2\leq k \leq N-3$:
\eq{\alg{\notag
|\tau|\|\p_T \eta \|_{H^{k}}
&\lesssim 
\varepsilon e^{(-1+\mu)T} \Lambda(T) + \Lambda(T)^2.
}}

Next, and again using \eqref{Def-resc-matter}, we compute
\eq{\alg{\notag
\p_T \jmath^a &= (\p_T \rho) N u^0 u^a + (\p_T N) \rho u^0 u^a + (\p_T u^0) \rho N u^a + (\p_T u^a) \rho N u^0.
}}
So, 
\eq{\alg{\notag
\Lambda(T)\| \p_T \jmath^a\|_{H^{N-3}} 
&\lesssim \varepsilon |\tau|\| u \|_{H^{N-3}}\| \p_T \rho\|_{H^{N-3}} 
\\&\quad
+ \varepsilon |\tau|\|\rho\|_{H^{N-3}} 
\Big(\| \p_T N\|_{H^{N-3}} \| u \|_{H^{N-3}} 
+ \| \p_T u^0\|_{H^{N-3}} \| u \|_{H^{N-3}} + \|\p_T u^a \|_{H^{N-3}} \Big)
\\&\lesssim 
\varepsilon^2  \Lambda(T)
+\varepsilon^4 e^{-(1+\nu)T}.
}}

Finally, from \eqref{Def-resc-matter} we calculate (written schematically)
\eq{\alg{\notag
\p_T S &= 
 (\p_T\rho)\big( ( u^0 X^a+ \tau u^a )^2 + \tfrac12 g\big)
+ \tfrac{1}{2}\rho (\p_Tg)
+ \rho \p_T g \cdot ( u^0 X^c+ \tau u^c )^2
 \\&\quad
+ \rho\p_T( u^0 X^c+ \tau u^c)\cdot (u^0 X^d+ \tau u^d).
}}
So using Lemma \ref{lem:Estimates-pT-h-v} and \eqref{eq:est-pT-fluid-compts} we have,
\eq{\alg{\notag
|\tau| \| \p_T S\|_{H^{N-3}}
&\lesssim 
|\tau|(\| \rho\|_{H^{N-3}} \| F_\rho\|_{H^{N-3}}+ |\tau|\| u\|_{H^{N-3}} \| \rho\|_{H^{N-2}})
+ |\tau|\|\rho \|_{H^{N-3}} \times \big( \text{h.o.t.}\big)
\\&
\lesssim 
|\tau|\|\rho \|_{H^{N-2}} .
}}
\end{proof}

%%%%%%%%%%%%%%%%%%%%%%%%%%%%%%%%%%%%%%%%
\subsection{Second Commutator  Estimates}\label{sec:Comm2}
We are now in a position to compute various commutator estimates which are required in the later energy estimates. We let $V$ be a symmetric $(0,2)$-tensor on $M$ unless otherwise specified. The first lemma looks at the commutator between the dust derivative $\delu$ with other first-order differential operators. 

\begin{lem}
\label{lem:Commutators-with-Delu}
For $0\leq k \leq N-2$, we have
\eq{\alg{\notag
\| [\delu, \p_T] V\|_{H^k}&\lesssim
\varepsilon e^{(-1+\mu) T} \| V\|_{H^{k+1}}
 + \varepsilon e^{(-1+\mu)T}\|\p_T V\|_{H^k},
\\
\| [\delu, \bnab] V \|_{H^k} &\lesssim \varepsilon e^{(-1+\mu) T}  \| V\|_{H^{k+1}}  +\Lambda(T)\|\p_T V \|_{H^k} ,
\\
\|[\delu, \nabla] V \|_{H^{k}} &\lesssim \varepsilon e^{\max\{-1+\mu, -\lambda\} T} \| V\|_{H^{k+1}} 
+\Lambda(T)\|\p_T V\|_{H^{k}}.
}}
\end{lem}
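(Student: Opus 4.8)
The plan is to establish the three estimates in turn, treating $[\delu,\p_T]V$ and $[\delu,\bnab]V$ first (they are independent) and then deducing the bound on $[\delu,\nabla]V$ from the $[\delu,\bnab]V$ one by writing $\nabla=\bnab+\Upsilon\ast(\cdot)$. The common tools are: the identity $[\p_T,\bnab]=0$, valid since $\p_T\gamma=0$ makes the background connection $T$-independent; the product estimate Lemma \ref{lem:SobProdEst}, with the regularity $N\geq 6$ handling the small values of $k$; and the previously derived bounds \eqref{eq:est-u0}, \eqref{eq:est-pt-Gamma}, \eqref{eq:est-pT-fluid-compts} and \eqref{eq:Estimate-Upsilon} together with the bootstrap assumptions.

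For $[\delu,\p_T]V$, expanding $\delu=u^0\p_T-\tau u^a\bnab_a$ the $u^0\p_T^2 V$ terms cancel and we obtain the pointwise identity
\[
[\delu,\p_T]V = -(\p_T u^0)\,\p_T V + \big(\p_T(\tau u^a)\big)\,\bnab_a V .
\]
Taking $H^k$ norms reduces matters to bounding $\|\p_T u^0\|_{H^{N-2}}$ and $\|\p_T(\tau u^a)\|_{H^{N-2}}$. The first follows from \eqref{eq:est-pT-fluid-compts}, which gives $\|\p_T u^0\|_{H^{N-2}}\lesssim \Lambda(T)+\|\Sigma\|_{H^{N-2}}^2\lesssim \varepsilon e^{(-1+\mu)T}$ in the range of $\lambda$ under consideration; for the second we use $\p_T(\tau u^a)=-\tau u^a+\tau\p_T u^a$, the bootstrap bound on $u$, and again the bound on $|\tau|\,\|\p_T u^a\|_{H^{N-2}}$ from \eqref{eq:est-pT-fluid-compts}, obtaining $\lesssim \varepsilon e^{(-1+\mu)T}$. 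This proves the first estimate; the weaker weight $\varepsilon e^{(-1+\mu)T}$ (rather than $\Lambda(T)$) on the $\p_T V$ term originates in the $\|\Sigma\|^2$ contribution to $\p_T u^0$ inherited from the source $F_{u^0}$ in the evolution equation $\delu u^0=F_{u^0}$.

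For $[\delu,\bnab]V$ the analogous computation, again using $[\p_T,\bnab]=0$ and $\bnab_c\tau=0$, yields
\[
[\delu,\bnab_c]V_{ij} = -(\bnab_c u^0)\,\p_T V_{ij} + \tau(\bnab_c u^a)\,\bnab_a V_{ij} + \tau u^a[\bnab_c,\bnab_a]V_{ij},
\]
the last term being a contraction of $V$ with the fixed, Sobolev-bounded tensor $\Riem[\gamma]$. Now the coefficient of $\p_T V$ is $\bnab u^0=\nabla\hu$, which by \eqref{eq:est-u0} satisfies $\|\nabla\hu\|_{H^{N-2}}\lesssim\Lambda(T)$ — this is exactly why the $\p_T V$ weight in the second estimate is the sharp $\Lambda(T)$. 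The remaining two terms carry a factor $|\tau|\,\|u\|_{H^{N-1}}\lesssim\varepsilon e^{(-1+\mu)T}$ and are absorbed into the $\|V\|_{H^{k+1}}$ term.

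Finally, for $[\delu,\nabla]V$, since $\delu$ obeys the Leibniz rule (being a combination of $\p_T$ and $\bnab_a$) and $\nabla V=\bnab V+\Upsilon\ast V$ schematically, we have $[\delu,\nabla]V=[\delu,\bnab]V-(\delu\Upsilon)\ast V$. Writing $\p_T\Upsilon=\p_T\Gamma[g]$ (as $\p_T\Gamma[\gamma]=0$), the bounds \eqref{eq:est-pt-Gamma} and \eqref{eq:Estimate-Upsilon} give $\|\delu\Upsilon\|_{H^{N-2}}\lesssim\|\p_T\Gamma[g]\|_{H^{N-2}}+|\tau|\,\|u\|_{H^{N-2}}\|\Upsilon\|_{H^{N-1}}\lesssim\varepsilon e^{-\lambda T}$, so contracting against $V$ adds $\varepsilon e^{-\lambda T}\|V\|_{H^{k+1}}$; combined with the second estimate this produces the rate $\varepsilon e^{\max\{-1+\mu,-\lambda\}T}$ on $\|V\|_{H^{k+1}}$, while the $\Lambda(T)$ weight on $\p_T V$ survives unchanged. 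The algebra is otherwise routine; the only delicate points are keeping track of which error terms carry the sharp $\Lambda(T)$ weight versus the weaker $\varepsilon e^{(-1+\mu)T}$ or $\varepsilon e^{-\lambda T}$ — the asymmetry coming from the fact that $\p_T u^0$ is governed by the fluid equation, hence by $F_{u^0}$ with its $\|\Sigma\|^2$ term, whereas $\nabla\hu$ is governed by the more favourable normalisation identity of Lemma \ref{lem:Renorm-u0} — and checking that the exponent arithmetic closes in the relevant range of $\lambda$, in particular that $\|\Sigma\|_{H^{N-2}}^2\lesssim\varepsilon e^{(-1+\mu)T}$.
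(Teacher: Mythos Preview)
Your proposal is correct and takes essentially the same approach as the paper: the commutator identities you write down for $[\delu,\p_T]$ and $[\delu,\bnab]$ agree (up to regrouping $\p_T(\tau u^a)=-\tau u^a+\tau\p_T u^a$) with the paper's, and the estimates invoke the same ingredients \eqref{eq:est-u0}, \eqref{eq:est-pt-Gamma}, \eqref{eq:est-pT-fluid-compts}. The only cosmetic difference is the third commutator: the paper expands $[\delu,\nabla_b]$ directly as $u^0[\p_T,\nabla_b]-\tau u^c[\bnab_c,\nabla_b]-\nabla_b\hu\cdot\p_T V+\tau\nabla_b u^c\cdot\bnab_c V$ and then appeals to \eqref{eq:pt-nabla} and Lemma~\ref{lem:Commutator-nabla-bnab}, whereas you reduce to $[\delu,\bnab]$ plus $(\delu\Upsilon)\ast V$; both routes unwind to the same terms and the same bounds (note your sign in $[\delu,\nabla]V=[\delu,\bnab]V-(\delu\Upsilon)\ast V$ should be $+$, but this is immaterial for the norm estimate).
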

\begin{proof}
A computation gives
\eq{\alg{\label{eq:Commutators-with-Delu}
[\delu, \p_T] V_{i j} &= -\tau u^a \bnab_a V_{i j} + \tau \p_T u^a \cdot \bnab_a V_{i j} - \p_T \hu \cdot \p_T V_{i j} ,\\
[\delu, \bnab_b] V_{i j} &= -\tau u^a [\bnab_a, \bnab_b]V_{i j} -\bnab_b \hu \cdot \p_T V_{i j} + \tau\bnab_b u^a \cdot \bnab_a V_{i j},
\\
[\delu, \nabla_b] V_{i j} &= u^0[\p_T, \nabla_b]V_{i j}-\tau u^c[\bnab_c,\nabla_b]V_{i j}-\nabla_b\hu\cdot\p_T V_{i j} + \tau \nabla_b u^c\cdot\bnab_c V_{i j}.
}}
Thus by \eqref{eq:est-u0} and  \eqref{eq:est-pT-fluid-compts}, if $2\leq k \leq N-2$,
\eq{\alg{\notag
\| [\delu, \p_T] V\|_{H^k}
&\lesssim
|\tau| \big(\| u \|_{H^k} +\| \p_T u^a \|_{H^k} \big)\|\nabla V + \Upsilon \ast V\|_{H^{k}}
 + \| \p_T u^0\|_{H^k} \|\p_T V\|_{H^k}
\\&\lesssim
\varepsilon e^{(-1+\mu) T}\big( \|V\|_{H^{k+1}} + \|g-\gamma\|_{H^{k+1}}\| V\|_{H^{k}}\big)
 + \varepsilon e^{(-1+\mu)T}\|\p_T V\|_{H^k}.
}}
The estimates for $k=0,1$ follow in the same way. 

The other two estimates follow in a similar way. Note that for $[\delu, \nabla$] we use Lemma \ref{lem:Commutator-nabla-bnab}, and equations \eqref{eq:pt-nabla}, \eqref{eq:est-pt-Gamma} and \eqref{eq:est-pT-fluid-compts}. 
\end{proof}

The next lemma in this subsection investigates the commutator between the second-order operator  $\mL$ and other first-order  operators. 

\begin{lem}\label{lem:Fluid-Commutators}
The following estimates hold
\eq{\alg{\notag
\| [\delu, \mL] V\|_{H^k}&\lesssim \varepsilon e^{\max\{-1+\mu,-\lambda\} T}\|V \|_{H^{k+2}}+ \Lambda(T)\|\p_T V\|_{H^{k+1}}, 
\qquad &0\leq k \leq N-3,
\\
\| [\bnab_m,\mL]V \|_{H^{k}}&\lesssim \varepsilon e^{-\lambda T} \| V \|_{H^{k+2}}+ \| V\|_{H^{k}}, 
\qquad &0\leq k \leq N-2,
\\
\| [\p_T, \mL] V\|_{H^k} &\lesssim \varepsilon e^{-\lambda T} \|V \|_{H^{k+2}},\qquad &0\leq k \leq N-2.
}}
Also for $k,s\in \mathbb{Z}$ such that $0 \leq k \leq N-2$, $s \geq 1$, and $2(s-1)+k \leq N-1$, we have
\eq{\alg{\notag
\| [\p_T, \mL^s] V\|_{H^k} 
&\lesssim \varepsilon e^{-\lambda T}\| V \|_{H^{k+2+2(s-1)}} .
}}

\end{lem}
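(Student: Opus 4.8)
The plan is to reduce the statement to one clean commutator identity and then propagate it through the powers of $\mL$. In the CMCSH gauge \eqref{gauges} one has $H^a\equiv 0$, so $\mD V_{cd}=g^{ab}\bnab_a\bnab_b V_{cd}$; since the background connection $\bnab$ and the contraction $\ri[\gamma]\circ$ have $T$-independent coefficients they commute with $\p_T$, and hence
\eq{\notag
[\p_T,\mL]V_{cd}=-[\p_T,\mD]V_{cd}=-(\p_T g^{ab})\,\bnab_a\bnab_b V_{cd},\qquad \p_T g^{ab}=-g^{ac}g^{bd}\,\p_T g_{cd}.
}
Combining Lemma \ref{lem:Estimates-pT-h-v} with the bootstrap assumptions \eqref{Bootstraps} (and monotonicity of Sobolev norms at low orders) gives $\|\p_T g^{ab}\|_{H^m}\lesssim \varepsilon e^{-\lambda T}$ for every $0\le m\le N-1$. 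The identity above already reproves the third estimate of the lemma; more importantly, it pins down exactly where derivatives of $\p_T g$ are needed.

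I would then expand the full commutator by the telescoping identity $[\p_T,\mL^s]=\sum_{j=0}^{s-1}\mL^{s-1-j}[\p_T,\mL]\mL^j$, so that
\eq{\notag
[\p_T,\mL^s]V=-\sum_{j=0}^{s-1}\mL^{s-1-j}\Big((\p_T g^{ab})\,\bnab_a\bnab_b\,\mL^j V\Big).
}
For each summand: apply the upper bound of Lemma \ref{lem:elliptic-regularity-mL} to replace the outer factor $\mL^{s-1-j}$ by a loss of $2(s-1-j)$ Sobolev orders; use the Sobolev product estimate (Lemma \ref{lem:SobProdEst}) to peel off the coefficient $\p_T g^{ab}$; trade the two $\bnab$-derivatives, together with the metric-closeness corrections from \eqref{eq:Estimate-Upsilon}, for two further Sobolev orders; and apply Lemma \ref{lem:elliptic-regularity-mL} once more to the inner factor $\mL^j V$. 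This yields, schematically,
\eq{\notag
\big\|\mL^{s-1-j}\big((\p_T g^{ab})\bnab_a\bnab_b\mL^j V\big)\big\|_{H^k}\lesssim \|\p_T g^{ab}\|_{H^{k+2(s-1-j)}}\,\|V\|_{H^{k+2s}}.
}
The coefficient $\p_T g^{ab}$ is demanded at regularity $k+2(s-1-j)$, which is largest for $j=0$, where it equals $k+2(s-1)\le N-1$ by hypothesis; hence $\|\p_T g^{ab}\|_{H^{k+2(s-1-j)}}\lesssim \varepsilon e^{-\lambda T}$ for all $j$, and summing over the $s$ summands gives the claim (note $k+2+2(s-1)=k+2s$).

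What remains is routine: the finitely many low-order exceptions with $k+2(s-1-j)<2$, where Lemma \ref{lem:SobProdEst} does not directly apply, occur only for small $k$ and $s$, and there one instead places $\p_T g^{ab}$ in $L^\infty$ via $H^2\hookrightarrow L^\infty$ at the cost of its $H^2$ norm, which is still $\lesssim \varepsilon e^{-\lambda T}$; the rest is unchanged. The only genuine obstacle — and the reason a crude induction using the $s=1$ case as a black box does not suffice — is the Sobolev bookkeeping: the telescoping must be arranged so that the coefficient $\p_T g^{ab}$, the unique place where a time derivative lands, is never called for above regularity $H^{N-1}$, and it is precisely the hypothesis $2(s-1)+k\le N-1$ that secures this. (The auxiliary restriction $k\le N-2$ is used only to keep the corrections in \eqref{eq:Estimate-Upsilon} at available regularity and causes no difficulty.)
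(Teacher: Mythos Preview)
Your treatment of the third and fourth estimates is correct and essentially identical to the paper's: the identity $[\p_T,\mL]V=-(\p_T g^{ab})\bnab_a\bnab_b V$ and the telescoping sum for $[\p_T,\mL^s]$ are exactly what the paper uses (see \eqref{eq:Commutator-dT-mLk}), and your regularity bookkeeping is a slightly more explicit version of the paper's estimate \eqref{eq:comm789}.

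However, you have not addressed the first two estimates at all. These are not ``routine'' consequences of the $[\p_T,\mL]$ identity. For $[\delu,\mL]$ one must expand $\delu=u^0\p_T-\tau u^a\bnab_a$ against $\mL=-\mD-2\ri[\gamma]\circ$ and track where derivatives fall on the coefficients $u^0$ and $u^a$; the paper computes this explicitly (equation \eqref{eq:Comm-delu-mL-k}) and obtains terms of the schematic form $(\delu g^{-1})\ast\bnab^2 V$, $\bnab^2\hu\ast\p_T V$, $\bnab\hu\ast\bnab\p_T V$, and $\tau(u\ast\bnab V+\bnab u\ast\bnab^2 V+\bnab^2 u\ast\bnab V)$. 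It is these $\bnab\hu$ and $\bnab^2\hu$ terms that produce the $\Lambda(T)\|\p_T V\|_{H^{k+1}}$ contribution in the claimed bound (via \eqref{eq:est-u0}), and the $\delu g$ and $\tau u$ terms that give $\varepsilon e^{\max\{-1+\mu,-\lambda\}T}\|V\|_{H^{k+2}}$ (via Corollary~\ref{corol:Estimates-delu-h-v} and the bootstrap on $u$). None of this structure is visible from the $[\p_T,\mL]$ identity alone, and the $\p_T V$ term in the conclusion cannot be obtained without it.

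The second estimate $\|[\bnab_m,\mL]V\|_{H^k}$ likewise requires its own computation: here $\bnab_m$ does not commute with $\mD$ because of the variable inverse metric $g^{ab}$, giving a $\bnab_m g^{ab}\cdot\bnab_a\bnab_b V$ term (controlled by $\|g-\gamma\|$ and hence by $\varepsilon e^{-\lambda T}$), together with curvature terms from $[\bnab_a\bnab_b,\bnab_m]$ that contribute the undecorated $\|V\|_{H^k}$ on the right-hand side. You should supply these two calculations; the paper does so via \eqref{eq:Comm-delu-mL-k} and \eqref{eq:Comm-mL-bnab}.
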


\begin{proof}
A calculation yields
\begin{align}
\notag
[\delu, \mL] V_{ij}&=  -\big(\delu g^{ab}\big)\bnab_a \bnab_b V_{ij}+\mD \hu \cdot \p_T V_{ij}+2g^{ab} \bnab_a \hu \bnab_b \p_T V_{ij}
\\&\quad
\label{eq:Comm-delu-mL-k}
+\tau u^c g^{ab} 
\big(\Riem[\gamma]^k{}_{bac}\bnab_k V_{ij}
+4 \Riem[\gamma]^k{}_{(i|ac} \bnab_b V_{k|j)}+ 2\bnab_a \Riem[\gamma]^k{}_{(i|bc} \cdot V_{k|j)}\big)
\\&\quad \notag
-2\tau g^{ab} \bnab_a u^c \bnab_b\bnab_c V_{ij} - \tau \mD u^c \cdot \bnab_c V_{ij} + 2\tau u^c \bnab_c \Riem[\gamma]_{iajb}\cdot V^{ab}.
\end{align}
It is useful to write this schematically, using that $\text{Riem}[\gamma]$ and its derivatives are bounded by constants:
\eq{\alg{\notag
[\delu, \mL] V&= (\delu g^{-1}) \ast \bnab^2 V + \bnab^2 \hu \ast \p_T V+ \bnab u^0 \ast \bnab \p_T V 
\\&\quad
+\tau \big( u^c \ast (V + \bnab V )
+  \bnab u^c \ast \bnab^2 V +  \bnab^2 u^c \ast \bnab V  \big).
}}
If $2\leq k \leq N-3$ then by elliptic regularity of Lemma \ref{lem:elliptic-regularity-mL} and the commutator estimates in Lemma \ref{lem:Commutator-nabla-bnab},
\eq{\alg{\label{eq:ReturningEq}
\| [\delu, \mL] V\|_{H^k}
&\lesssim \| \delu g\|_{H^k}\|V \|_{H^{k+2}}
+ \| \hu\|_{H^{k+2}}\|\p_T V\|_{H^{k+1}}
+ |\tau| \| u\|_{H^{k+2}}\|V\|_{H^{k+2}}.
}}
The conclusion then holds by   \eqref{eq:est-u0} and by estimating $\delu g$ using Corollary \ref{corol:Estimates-delu-h-v} . The estimates when $k=0,1$ follow in a similar way. 

Next we compute the identity
\eq{\alg{\label{eq:Comm-mL-bnab}
[\mL, \bnab_m]V_{ij} &= 
 \bnab_m g^{ab} \cdot \bnab_a\bnab_b V_{ij} - g^{ab} [\bnab_a\bnab_b, \bnab_m]V_{ij},
}}
where we are thinking of the $m$ index as not being free (i.e. contracted with a factor of the shift $X^m$). 
Since the commutator involving only $\bnab$ will just generate background Riemann curvature components the required estimate follows straightforwardly. We note also that \eqref{eq:Comm-mL-bnab} and the elliptic regularity of Lemma \ref{lem:elliptic-regularity-mL} imply, for $s \in \mathbb{Z}$ such that $1\leq s \leq N/2$,
\eq{\alg{\label{eq:Comm-mLk-bnab-L2}
\| [\mL^s, \bnab_m]V \|_{L^2} 
&\lesssim 
\| [\mL, \bnab]\mL^{s-1} V\|_{L^2} + \cdots + \| [\mL,\bnab]V\|_{H^{2(s-1)}}
\\&
\lesssim 
\|g-\gamma\|_{H^{\max\{3, 2s-1\}}}\big(\| V \|_{H^{2s}} + \| g-\gamma\|_{H^{2s}}\| V\|_{H^{2s-1}}\big) + \| V\|_{H^{2(s-1)}}
\\&\lesssim
\varepsilon e^{-\lambda T} \| V \|_{H^{2s}} + \| V\|_{H^{2(s-1)}}.
}}

Finally we compute
\eq{\alg{\notag
[\p_T, \mL]V_{ij} &= - [\p_T, \mD]V_{ij} = -(\p_T g^{ab}) \bnab_a \bnab_b V_{ij}.
}}
Using Lemma \ref{lem:Estimates-pT-h-v} to estimate $\p_T g$ this gives, for $2\leq k \leq N-2$,
\eq{\notag
\| [\p_T, \mL] V\|_{H^k} \lesssim \| \p_T h\|_{H^k} \big(\|\nabla(\nabla V +\Upsilon\ast V)\|_{H^{k}} + \|\Upsilon (\nabla V+\Upsilon \ast V) \|_{H^{k}} \big)
\lesssim \varepsilon e^{-\lambda T} \|V \|_{H^{k+2}}.
}
A similar argument holds for the cases $k=0,1$. At higher order, we obtain the identity
\eq{\alg{\label{eq:Commutator-dT-mLk}
[\p_T, \mL^s] V_{ij} 
&= -\sum_{1\leq i \leq s} \mL^{i-1}  (\p_T g^{a_i b_i})\cdot \bnab_{a_i}\bnab_{b_i} \big( \mL^{s-i} (V_{ij})\big).
}}
This can be estimated, for $2\leq k \leq N-2$, by
\eq{\alg{\label{eq:comm789}
\| [\p_T, \mL^s] V\|_{H^k} 
&\lesssim\| \p_T h\|_{H^{2(s-1)+k}} \big(\| V \|_{H^{k+2+2(s-1)}} + \| g-\gamma\|_{H^{k+2}}\| V\|_{H^{k+1+2(s-1)}}\big).
}}
The cases $k=0,1$ are treated in a similar  way and the conclusion follows from Lemma \ref{lem:Estimates-pT-h-v}. 
\end{proof}

The next corollary extends the commutator estimates of the previous lemma to higher-orders of $\mL^s$, $s \in \mathbb{Z}_{\geq 1}$. Note that the $L^2$ estimate that appears in the statement will be typically applied with $V=h$, while the lower-order $H^1$ estimate will be primarily used later on with $V = \Sigma$. 

\begin{corol}
\label{corol:Com-delu-mL-estimate}
We have,
\eq{\alg{\notag
\|[ \delu, \mL^s] V \|_{L^2} 
&\lesssim 
\varepsilon e^{\max\{-1+\mu,-\lambda\} T}\| V\|_{H^{2s}}
+ \Lambda(T)\|\p_T V\|_{H^{2s-1}} , &\quad 1\leq s \leq \ell,
\\
\|[ \delu, \mL^{s-1}] V \|_{H^1} 
&\lesssim 
\varepsilon e^{\max\{-1+\mu,-\lambda\} T}\| V\|_{H^{2s-1}}
+ \Lambda(T)\|\p_T V\|_{H^{2s-2}} , &\quad 2\leq s \leq \ell.
}} 
\end{corol}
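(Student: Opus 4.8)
The plan is to reduce the statement to the single-step commutator bound contained in Lemma \ref{lem:Fluid-Commutators} by means of the Leibniz-type telescoping identity
$$[\delu, \mL^s] = \sum_{i=1}^{s} \mL^{i-1}\, [\delu, \mL]\, \mL^{s-i},$$
which follows by a one-line induction on $s$. For the first inequality I would estimate each summand by first pushing the outer power $\mL^{i-1}$ into Sobolev regularity via the elliptic regularity of Lemma \ref{lem:elliptic-regularity-mL}, $\|\mL^{i-1} W\|_{L^2}\lesssim \|W\|_{H^{2(i-1)}}$, and then applying the single commutator estimate of Lemma \ref{lem:Fluid-Commutators} to $W=[\delu,\mL]\mL^{s-i}V$ at regularity level $k=2(i-1)$. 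This level is admissible because $i\le s\le\ell$ forces $2(i-1)\le 2\ell-2=N-3$. Applying Lemma \ref{lem:elliptic-regularity-mL} once more to the inner power $\mL^{s-i}$ converts the resulting $\|\mL^{s-i}V\|_{H^{2i}}$ into $\|V\|_{H^{2s}}$, which is precisely the first term on the right-hand side.

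The only genuinely new point is the treatment of the factor $\|\p_T\mL^{s-i}V\|_{H^{2i-1}}$ produced by the $\Lambda(T)\|\p_T\,\cdot\,\|$ term of the single commutator estimate, since there the time derivative sits outside the remaining powers of $\mL$. I would commute it inward, $\p_T\mL^{s-i}=\mL^{s-i}\p_T+[\p_T,\mL^{s-i}]$: the first piece gives $\|\mL^{s-i}\p_T V\|_{H^{2i-1}}\lesssim\|\p_T V\|_{H^{2s-1}}$ by elliptic regularity, producing the desired $\Lambda(T)\|\p_T V\|_{H^{2s-1}}$ contribution, while the second piece is controlled by the higher-order $[\p_T,\mL^{s'}]$ estimate already proved in Lemma \ref{lem:Fluid-Commutators} and contributes only $\Lambda(T)\cdot\varepsilon e^{-\lambda T}\|V\|_{H^{2s-1}}$, which is absorbed into the first term of the statement because $\Lambda(T)\lesssim\varepsilon e^{-T}$ and $2s-1<2s$. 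Summing over the $s$ values of $i$ yields the first inequality; the second, with $\mL^{s-1}$ in place of $\mL^s$ and the $H^1$ norm in place of $L^2$, follows by the identical argument with the telescoping sum running to $s-1$ and the elliptic-regularity exponents shifted by one throughout.

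I expect the main obstacle to be purely combinatorial: checking that every Sobolev index appearing in the iteration stays inside the ranges where Lemmas \ref{lem:elliptic-regularity-mL} and \ref{lem:Fluid-Commutators} apply. The binding constraints are $2(i-1)\le N-3$ in the single commutator estimate and the hypothesis $2(s'-1)+k\le N-1$ of the higher-order $[\p_T,\mL^{s'}]$ estimate; with $s'=s-i$ and $k=2i-1$ the latter reduces to $2s\le N+2$, which holds since $s\le\ell=(N-1)/2$. Once this index arithmetic is verified, no analytic input beyond the two cited lemmas is needed.
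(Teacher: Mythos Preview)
Your proposal is correct and follows essentially the same approach as the paper: the paper telescopes $[\delu,\mL^s]$ into the terms $\|[\delu,\mL]\mL^{s-1-p}V\|_{H^{2p}}$, applies the single-step bound from Lemma~\ref{lem:Fluid-Commutators}, and then disposes of the residual $\Lambda(T)\sum_{p}\|[\p_T,\mL^{s-1-p}]V\|_{H^{1+2p}}$ term exactly as you describe. In fact you supply more detail than the paper, which simply says ``the conclusion then easily follows'' at that last step; your explicit verification of the index constraints $2(i-1)\le N-3$ and $2(s'-1)+k\le N-1$ is precisely the combinatorial check the paper suppresses.
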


\begin{proof}
By Lemma \ref{lem:Fluid-Commutators},
\eq{\alg{\notag
\|[ \delu, \mL^s] V \|_{L^2} 
&\lesssim
\|[ \delu, \mL] \mL^{s-1}V \|_{L^2} 
+ \|[ \delu, \mL] \mL^{s-2}V \|_{H^2} 
+ \cdots
+ \|[ \delu, \mL] V \|_{H^{2(s-1)}} 
\\
&\lesssim 
\varepsilon e^{\max\{-1+\mu,-\lambda\} T}\| V\|_{H^{2s}}
+\Lambda(T)\|\p_T V\|_{H^{2s-1}} 
+ \Lambda(T) \sum_{p=0}^{s-2}\|[\p_T, \mL^{s-1-p}]V\|_{H^{1+2p}}.
}} 
The conclusion then easily follows and the second estimate follows in the same way. 
\end{proof}

\begin{corol}\label{corol:delu-mL-Sigma}
\eq{\alg{\notag
\| \delu\mL^{\ell-1}(\Sigma)\|_{L^2}
&\lesssim 
(E^g_{N-1})^{1/2} +\Lambda(T),
\\
\| \delu\mL^{\ell-1}(\Sigma)\|_{H^1}
&\lesssim 
\| \Sigma\|_{H^{N-1}}+ \| g-\gamma\|_{H^{N}} + \Lambda(T).
}}
\end{corol}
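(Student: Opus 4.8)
The plan is to commute $\delu$ past $\mL^{\ell-1}$ and treat the two resulting terms separately, keeping in mind the identity $\ell=\tfrac{N-1}{2}$ of Remark \ref{rem:ell-N}, so that $2(\ell-1)=N-3$ and $1+2(\ell-1)=N-2$. Thus I would write
\[
\delu\mL^{\ell-1}(\Sigma)=\mL^{\ell-1}\delu\Sigma+[\delu,\mL^{\ell-1}]\Sigma,
\]
and bound $\|\delu\mL^{\ell-1}\Sigma\|_{L^2}$, respectively $\|\delu\mL^{\ell-1}\Sigma\|_{H^1}$, by the sum of the $L^2$, respectively $H^1$, norms of the two pieces.

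For the leading term I would apply the elliptic regularity estimate of Lemma \ref{lem:elliptic-regularity-mL} with $s=\ell-1$ to obtain $\|\mL^{\ell-1}\delu\Sigma\|_{L^2}\lesssim\|\delu\Sigma\|_{H^{N-3}}$ and $\|\mL^{\ell-1}\delu\Sigma\|_{H^1}\lesssim\|\delu\Sigma\|_{H^{N-2}}$. The second estimate of Corollary \ref{corol:Estimates-delu-h-v}, used once with $k=N-3$ and once with $k=N-2$ (both admissible since $2\le k\le N-2$), then gives $\|\delu\Sigma\|_{H^{N-3}}\lesssim\|\Sigma\|_{H^{N-2}}+\|g-\gamma\|_{H^{N-1}}+\Lambda(T)$ and $\|\delu\Sigma\|_{H^{N-2}}\lesssim\|\Sigma\|_{H^{N-1}}+\|g-\gamma\|_{H^{N}}+\Lambda(T)$. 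For the first (i.e.\ $L^2$) claim I would further invoke the coercivity Lemma \ref{lem:Geom-coercive-est} with $s=N-1$ to replace $\|\Sigma\|_{H^{N-2}}+\|g-\gamma\|_{H^{N-1}}$ by $(E^g_{N-1})^{1/2}$; the $H^1$ claim already carries the desired right-hand side.

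For the commutator term I would use Corollary \ref{corol:Com-delu-mL-estimate}: its first estimate with $s=\ell-1$ (admissible since $1\le\ell-1\le\ell$) yields $\|[\delu,\mL^{\ell-1}]\Sigma\|_{L^2}\lesssim\varepsilon e^{\max\{-1+\mu,-\lambda\}T}\|\Sigma\|_{H^{N-3}}+\Lambda(T)\|\p_T\Sigma\|_{H^{N-4}}$, and its second estimate with $s=\ell$ yields $\|[\delu,\mL^{\ell-1}]\Sigma\|_{H^1}\lesssim\varepsilon e^{\max\{-1+\mu,-\lambda\}T}\|\Sigma\|_{H^{N-2}}+\Lambda(T)\|\p_T\Sigma\|_{H^{N-3}}$. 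I would then bound $\|\Sigma\|_{H^{N-2}}\le\|\Sigma\|_{H^{N-1}}$ and absorb the small prefactor $\varepsilon e^{\max\{-1+\mu,-\lambda\}T}\lesssim 1$, while Lemma \ref{lem:Estimates-pT-h-v} shows $\|\p_T\Sigma\|_{H^{N-3}}$ (and a fortiori $\|\p_T\Sigma\|_{H^{N-4}}$) is controlled by the small geometric and matter Sobolev norms, hence bounded, so $\Lambda(T)\|\p_T\Sigma\|_{H^{N-3}}\lesssim\Lambda(T)$. Both commutator terms are therefore $\lesssim\|\Sigma\|_{H^{N-1}}+\Lambda(T)$, and $\lesssim (E^g_{N-1})^{1/2}+\Lambda(T)$ via Lemma \ref{lem:Geom-coercive-est} for the $L^2$ version. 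Adding the two contributions gives both asserted inequalities. The only genuine care required is the index bookkeeping in the steps above — verifying that each application of Lemma \ref{lem:elliptic-regularity-mL}, Corollary \ref{corol:Estimates-delu-h-v} and Corollary \ref{corol:Com-delu-mL-estimate} stays within its stated admissible range — which is where essentially all of the (minor) work sits; there is no real analytic obstacle here.
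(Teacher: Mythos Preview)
Your proposal is correct and follows essentially the same route as the paper: split $\delu\mL^{\ell-1}\Sigma=\mL^{\ell-1}\delu\Sigma+[\delu,\mL^{\ell-1}]\Sigma$, apply elliptic regularity (Lemma~\ref{lem:elliptic-regularity-mL}) and Corollary~\ref{corol:Estimates-delu-h-v} to the first piece, Corollary~\ref{corol:Com-delu-mL-estimate} and Lemma~\ref{lem:Estimates-pT-h-v} to the commutator, and finish with Lemma~\ref{lem:Geom-coercive-est} for the $L^2$ bound. Your index bookkeeping is correct, and your write-up is in fact slightly more explicit than the paper's terse version.
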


\begin{proof}
By  the $\delu \Sigma, \p_T \Sigma$ estimates of Lemma \ref{lem:Estimates-pT-h-v}, Corollary \ref{corol:Estimates-delu-h-v},  and the previous commutator estimate of corollary \ref{corol:Com-delu-mL-estimate}
\eq{\alg{\notag
\| \delu\mL^{\ell-1}(\Sigma)\|_{L^2}
&\lesssim
\| \delu \Sigma \|_{H^{N-3}} 
+\| [\delu,\mL^{\ell-1}](\Sigma)\|_{L^2}
\lesssim
\| \Sigma\|_{H^{N-2}}+ \| g-\gamma\|_{H^{N-1}} +\Lambda(T).
}}
The conclusion then follows by the coercive estimate of Lemma \ref{lem:Geom-coercive-est}.
Similarly, 
\eq{\alg{\label{eq:nab-delu-mL-Sigma}
\| \delu\mL^{\ell-1} \Sigma\|_{H^1}
&\lesssim
 \| \Sigma\|_{H^{N-1}}+ \| g-\gamma\|_{H^{N}} + \Lambda(T)
+ \varepsilon e^{\max\{-1+\mu,-\lambda\} T}\|\Sigma\|_{H^{N-2}}
\\&\quad
+ \Lambda(T)\|\p_T \Sigma\|_{H^{N-3}} 
\\
&\lesssim
\| \Sigma\|_{H^{N-1}}+ \| g-\gamma\|_{H^{N}} + \Lambda(T).
}}
\end{proof}

\begin{rem}\label{rem:356}
Frequently in our energy estimates we will need to study the term $\| \delu\mL^{\ell-1}(\Sigma)\|_{H^1}$ appearing in the previous Corollary. 
However, by looking at the estimate derived in Corollary \ref{corol:delu-mL-Sigma}, we see that we cannot apply Lemma \ref{lem:Geom-coercive-est} to the top-order Sobolev norms $\| \Sigma\|_{H^{N-1}}$ and $\| g-\gamma\|_{H^{N}} $. To estimate these terms by the geometric energy $\Etot$ we will instead need to use the auxiliary elliptic estimates established in Section \ref{sec:EllipticEstimate}.
\end{rem}

We conclude this subsection with a commutator estimate that plays an important role in the proof of the lapse estimate appearing in Proposition \ref{prop:delu-commuted-lapse}. 

\begin{lem}
\label{lem:comm-delubnab-Delta}
Let $\Delta \define g^{ab}\nabla_a\nabla_b$. Then for $0\leq k \leq N-3$,
\eq{\alg{\notag
\| [\Delta, \delu]V\|_{H^k}
&\lesssim \Lambda(T) \|\p_T V\|_{H^{k+1}}
+ \varepsilon e^{\max\{-1+\mu,-\lambda\} T}\| V\|_{H^{k+2}}.
}}
\end{lem}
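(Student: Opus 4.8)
The strategy is to expand the commutator $[\Delta,\delu]V$ into a sum of lower-order-in-$\delu$ terms by commuting $\delu$ past each of the two covariant derivatives in $\Delta = g^{ab}\nabla_a\nabla_b$ one at a time, then apply the commutator estimates from Lemma \ref{lem:Commutators-with-Delu} together with the time-derivative estimates from Lemma \ref{lem:Est-pT-eta-S} (really the underlying fluid-component estimates) and the geometric estimates of Lemma \ref{lem:Estimates-pT-h-v}. Write schematically
\eq{\notag
[\Delta, \delu]V = g^{ab}\nabla_a\nabla_b(\delu V) - \delu\big(g^{ab}\nabla_a\nabla_b V\big)
= -\big(\delu g^{ab}\big)\nabla_a\nabla_b V + g^{ab}\big[\nabla_a\nabla_b, \delu\big]V,
}
and then further expand the operator commutator as
\eq{\notag
[\nabla_a\nabla_b,\delu]V = [\nabla_a,\delu]\nabla_b V + \nabla_a\big([\nabla_b,\delu]V\big),
}
so that every term is either (i) a coefficient $\delu g^{-1}$ times two dynamical derivatives of $V$, or (ii) a single $[\nabla,\delu]$ commutator acting on $\leq 1$ derivative of $V$, possibly with a further $\nabla$ on the outside.

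For the terms of type (i), estimate $\|\delu g\|_{H^{k+?}}$ by Corollary \ref{corol:Estimates-delu-h-v}, which gives a bound of the form $\|\Sigma\|_{H^{k+2}} + \|g-\gamma\|_{H^{k+3}} + \Lambda(T) \lesssim \varepsilon e^{-\lambda T}$ in the relevant range $0\le k \le N-3$ (so $k+3 \le N$, within the bootstrap norms), and pair this with $\|\nabla^2 V\|_{H^k}\lesssim \|V\|_{H^{k+2}}$ up to the usual $\Upsilon$-corrections handled by \eqref{eq:Estimate-Upsilon}; this contributes to the $\varepsilon e^{\max\{-1+\mu,-\lambda\}T}\|V\|_{H^{k+2}}$ term. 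For the terms of type (ii), invoke the third estimate of Lemma \ref{lem:Commutators-with-Delu}, namely
\eq{\notag
\|[\delu,\nabla] W\|_{H^m} \lesssim \varepsilon e^{\max\{-1+\mu,-\lambda\}T}\|W\|_{H^{m+1}} + \Lambda(T)\|\p_T W\|_{H^m},
}
applied with $W = \nabla_b V$ (for $m=k$) and with $W = V$ (for $m=k$, followed by an outer $\nabla$ raising the order by one — here one commutes the outer $\nabla_a$ with $[\nabla_b,\delu]$ only up to Christoffel terms, which are lower order). The $\p_T(\nabla_b V)$ and $\p_T V$ factors produced this way are controlled by $\|\p_T V\|_{H^{k+1}}$, matching the $\Lambda(T)\|\p_T V\|_{H^{k+1}}$ term in the claim; note one must commute $\p_T$ past the extra $\nabla$ using \eqref{eq:pt-nabla}–\eqref{eq:est-pt-Gamma}, which costs only $\|\p_T\Gamma\|_{H^{k}}\lesssim\varepsilon e^{-\lambda T}$ times lower-order norms of $V$ and is therefore absorbed. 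The restriction $k\le N-3$ is exactly what is needed so that the $[\delu,\nabla]$ estimate (valid for $0\le k\le N-2$) can be applied after the loss of one derivative from the outer $\nabla$, and so that $\|\delu g\|_{H^{k+?}}$ stays within the controlled range.

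\textbf{Main obstacle.} The delicate point is bookkeeping the derivative count: each application of the type-(ii) commutator estimate from Lemma \ref{lem:Commutators-with-Delu} already contains a loss of one spatial derivative (the $\|W\|_{H^{m+1}}$ on the right), and the second covariant derivative in $\Delta$ forces a nested commutator, so one must be careful that after the outer $\nabla_a$ is applied the total regularity demanded on $V$ is at most $k+2$ and on $\p_T V$ at most $k+1$ — which is why the hypothesis caps at $k\le N-3$. A secondary subtlety is that $[\delu,\nabla]$ is built from $[\p_T,\nabla]$ and $[\bnab,\nabla]$ (via the identities \eqref{eq:Commutators-with-Delu}), so the $\p_T$-of-fluid-components and $\Upsilon$-tensor estimates must be threaded through consistently; but all the required inputs — \eqref{eq:est-u0}, \eqref{eq:est-pT-fluid-compts}, \eqref{eq:Estimate-Upsilon}, Lemma \ref{lem:Commutator-nabla-bnab} and Corollary \ref{corol:Estimates-delu-h-v} — are already in hand, so beyond this careful accounting the estimate is routine.
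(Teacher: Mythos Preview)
Your proposal is correct and will yield the stated estimate, but it is organized differently from the paper's proof. The paper does not invoke Lemma~\ref{lem:Commutators-with-Delu} as a black box; instead it expands $\delu = u^0\p_T - \tau u^c\bnab_c$ directly inside $[\Delta,\delu]V$ and writes out the resulting eight terms explicitly (involving $\Delta\hu$, $\nabla\hu$, $\p_T g^{ab}$, $[\nabla_a\nabla_b,\p_T]$, $\Delta u^c$, $\nabla u^c$, $u^c\bnab_c g^{ab}$, and $[\bnab_c,\nabla_a\nabla_b]$), then estimates each one using \eqref{eq:est-u0}, \eqref{eq:est-pt-Gamma}, Lemma~\ref{lem:Commutator-nabla-bnab} and Lemma~\ref{lem:Estimates-pT-h-v}. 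Your hierarchical decomposition $[\Delta,\delu]V = -(\delu g^{ab})\nabla_a\nabla_b V + g^{ab}([\nabla_a,\delu]\nabla_b V + \nabla_a[\nabla_b,\delu]V)$ is cleaner in that it recycles Lemma~\ref{lem:Commutators-with-Delu}, and your derivative bookkeeping (in particular the observation that applying $[\delu,\nabla]$ at order $k+1$ forces $k\le N-3$) is exactly right. One small caveat: Lemma~\ref{lem:Commutators-with-Delu} is stated for symmetric $(0,2)$-tensors, whereas you apply it to $W=\nabla_b V$, a $(0,3)$-tensor; the identities \eqref{eq:Commutators-with-Delu} and their estimates extend verbatim to arbitrary covariant tensors, but you should say so. With that remark, your argument goes through and is arguably more transparent than the paper's direct expansion, at the cost of hiding the concrete terms that the paper displays.
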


\begin{proof}
We compute
\eq{\alg{\notag
[\Delta, \delu]V_{ij}
&= \Delta \hu \cdot \p_T V_{ij}+2\nabla^a\hu\nabla_a(\p_T V_{ij}) 
- u^0 \p_T g^{ab} \cdot \nabla_a\nabla_v V_{ij}
+ u^0 g^{ab}[\nabla_a\nabla_b, \p_T]V_{ij}
\\&\quad
+ \tau \big( -\Delta u^c \cdot \bnab_c V_{ij}
- 2 \nabla_a u^c \nabla^a \bnab_c V_{ij}+ u^c\bnab_c g^{ab} \cdot \nabla_a \nabla_b V_{ij}
\big)
+ \tau u^c g^{ab} [\bnab_c, \nabla_a \nabla_b]V_{ij}.
}}
Let $2\leq k \leq N-3$. From Lemma  \ref{lem:Commutator-nabla-bnab} and  \eqref{eq:Commutator-pT-nablas} we find
\eq{\alg{\notag
\| [\Delta, \delu]V\|_{H^k}
&\lesssim \| \nabla \hu \|_{H^{k+1}}\|\p_T V\|_{H^{k+1}}
+ \| \p_T h \|_{H^{k}} \| V\|_{H^{k+2}}
+ \| \p_T \Gamma\|_{H^{k+1}}\|V\|_{H^{k+1}}
\\&\quad
+ |\tau| \| u \|_{H^{k+2}}\| V\|_{H^{k+2}}.
}}
The cases $k=0,1$ follow in the same way. The conclusion follows using Lemma \ref{lem:Estimates-pT-h-v}, \eqref{eq:est-u0} and \eqref{eq:est-pt-Gamma}.
\end{proof}

%%%%%%%%%%%%%%%%%%%%%%%%%%%%%%%%%%%%%%%%
\subsection{Second Geometric Components Estimates}\label{sec:Geom2}
We now reach the final subsection of Section \ref{sec:FurtherPrelims}. 
The first lemma is an analogue of Lemma \ref{lem:Geom-coercive-est} for our top-order \emph{$\delu-$boosted} geometric energy. The proof follows those in \cite[Lemma 19]{AF20} and \cite[Lemma 7.2]{AM11}.

\begin{lem}
\label{lem:Properties-of-Egdelu}
Let $s\in \mathbb{Z}$ such that $1\leq s \leq \ell$. 
There is a $\delta>0$ and a constant $C>0$ such that for $(g,\Sigma)\in B_{\delta}((\gamma,0))$ the inequality
\eq{\alg{\notag
 (\delu \mL^s h, \delu \mL^s h)_{\Lgg}  +  |(\delu\mL^s v, \delu \mL^{s-1}v)_{\Lgg}|& \leq C E^g_{\delu, 2\ell}
}}
holds. Furthermore $E^g_{\delu, 2\ell} \geq 0$. 
\end{lem}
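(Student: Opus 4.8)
The plan is to mimic the coercivity argument for $E^g_k$ (Lemma \ref{lem:Geom-coercive-est}, itself modeled on \cite[Lemma 7.2]{AM11} and \cite[Lemma 19]{AF20}) but now applied to the $\delu$-commuted variables. Write $\Phi \define \delu \mL^{s-1} h$ and $\Psi \define \delu \mL^{s-1} v$, so that (up to commuting $\delu$ past one power of $\mL$, which we will need to track) $\mathcal{E}^g_{\delu,2s}$ is comparable to $\tfrac92(\mL^{1/2}\text{-type terms in }\Phi) + \tfrac12(\text{one derivative of }\Psi)$ and $\Gamma^g_{\delu,2s} = (\Psi, \mL \Phi)_{\Lgg} + (\text{commutator})$. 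Actually it is cleaner to keep the variables exactly as written in Definition \ref{defn:delu-geom-energy}: set $A \define \delu \mL^s h$ and $B \define \delu \mL^{s-1} v$, so $\mathcal{E}^g_{\delu,2s} = \tfrac92 (A,A)_{\Lgg} + \tfrac12 (\mL B, B)_{\Lgg}$ is manifestly a sum of a positive term and the nonnegative quantity $(\mL B,B)_{\Lgg} \geq \lambda_0 \|B\|_{\Lgg}^2 \geq 0$ (using self-adjointness \eqref{eq:mL-self-adjoint} and Proposition \ref{prop:MilneGeometricRigidity}), while the cross term is $\Gamma^g_{\delu,2s} = (B, \mL A)_{\Lgg}$.

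The key algebraic step is the standard completion-of-squares estimate. Using Cauchy–Schwarz and the spectral lower bound $\lambda_0 \geq 1/9$ together with the definition of $c_E$ and the correction parameter $\alpha$ from Definition \ref{defn:GeometricEnergy}, one shows exactly as in \cite[Lemma 7.2]{AM11} that
\eq{\notag
\mathcal{E}^g_{\delu,2s} + c_E \Gamma^g_{\delu,2s} \geq c\big( (A,A)_{\Lgg} + (\mL B, B)_{\Lgg} \big)
}
for some $c>0$, and also that $\mathcal{E}^g_{\delu,2s} + c_E \Gamma^g_{\delu,2s} \geq 0$; the case split ($\lambda_0 > 1/9$ versus $\lambda_0 = 1/9$) enters precisely here, with the $\varepsilon'$ in $c_E$ absorbing the borderline. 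Summing over $1 \leq s \leq \ell$ gives $\Edelu \geq c \sum_s \big( (\delu\mL^s h, \delu\mL^s h)_{\Lgg} + (\mL \delu \mL^{s-1} v, \delu \mL^{s-1}v)_{\Lgg}\big) \geq 0$, which is the last claim of the lemma. For the first inequality, the term $(\delu \mL^s h, \delu\mL^s h)_{\Lgg}$ is literally one of the controlled terms, and $|(\delu \mL^s v, \delu \mL^{s-1} v)_{\Lgg}|$ is handled by first commuting $\delu$ past the outermost $\mL$ to replace $\delu \mL^s v$ by $\mL \delu \mL^{s-1} v$ modulo $[\delu,\mL]\mL^{s-1}v$, then using $|(\mL \delu\mL^{s-1}v, \delu\mL^{s-1}v)_{\Lgg}| = (\mL B, B)_{\Lgg}$ (nonnegative, already controlled) for the main piece, and Cauchy–Schwarz plus Corollary \ref{corol:Com-delu-mL-estimate} (together with $\|\delu\mL^{s-1}v\|_{\Lgg} \lesssim \|\delu\Sigma\|_{H^{2s-2}} + \|[\delu,\mL^{s-1}]v\|_{L^2}$, estimated via Corollary \ref{corol:Estimates-delu-h-v}) for the commutator remainder, which under the bootstrap assumptions is an $\varepsilon$-small multiple of quantities already bounded by $E^g_{\delu,2\ell}$ and $\Lambda(T)$.

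The main obstacle is not the coercivity inequality itself, which is formally identical to \cite{AM11, AF20}, but the bookkeeping of the $\delu$-commutators that arise because the energy in Definition \ref{defn:delu-geom-energy} places $\delu$ outside all the $\mL$'s. Concretely, to run the completion-of-squares argument one wants the self-adjointness and the spectral gap applied to the operator $\mL$ acting on the \emph{same} tensor, so one must be careful that terms like $(\delu \mL^s h, \delu \mL^s h)_{\Lgg}$ versus $\mL(\delu \mL^{s-1} h, \cdot)$ differ only by commutator terms $[\delu, \mL^s] h$, which by Corollary \ref{corol:Com-delu-mL-estimate} are $\varepsilon e^{\max\{-1+\mu,-\lambda\}T}\| h\|_{H^{2s}} + \Lambda(T)\|\p_T h\|_{H^{2s-1}}$ — small enough, under the bootstrap assumptions and Lemma \ref{lem:Estimates-pT-h-v}, to be absorbed into the positive part of the energy. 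Since $\varepsilon$ can be taken as small as needed and $\ell$ is fixed, all such corrections are lower-order perturbations of the clean quadratic form, and the stated inequality and nonnegativity follow, completing the proof.
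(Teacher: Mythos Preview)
Your approach differs from the paper's. The paper treats $E^g_{\delu,2\ell}$ as a smooth function of $(h,v)$ near the background, observes it vanishes to first order at $(0,0)$, and shows its Hessian there---computed with the \emph{background} operator $\mathcal{L}_{\gamma,\gamma}$ in place of $\mL$---is positive definite by expanding in eigentensors of $\mathcal{L}_{\gamma,\gamma}$ and checking that $9\lambda x^2 + y^2 + c_E xy > 0$ for each eigenvalue $\lambda\geq\lambda_0$. The perturbation of $\mL$ away from $\mathcal{L}_{\gamma,\gamma}$ and any $[\delu,\mL]$ corrections are absorbed implicitly as higher-order Taylor remainder, so no explicit commutator estimates enter. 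You instead work directly with the dynamical $\mL$, invoke its spectral gap, and complete the square; the price is that you must handle $[\delu,\mL]$ explicitly.

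Two issues with your write-up. First, with $A = \delu\mL^s h$ and $B = \delu\mL^{s-1}v$, the cross term is $\Gamma^g_{\delu,2s} = (B,A)_{\Lgg}$, not $(B,\mL A)_{\Lgg}$; you seem to have conflated $A$ with your earlier $\Phi = \delu\mL^{s-1}h$ (for which $\mL\Phi \approx A$). The quadratic form $\tfrac92\|A\|^2 + \tfrac12(\mL B,B) + c_E(B,A)$ is indeed positive definite by the choice of $c_E$, so the slip is harmless once corrected. Second, and more substantively, the commutator remainder $([\delu,\mL]\mL^{s-1}v, B)$ is not controlled purely by $E^g_{\delu,2\ell}$ and $\Lambda(T)$: by Corollary~\ref{corol:Com-delu-mL-estimate} it involves $\|\Sigma\|_{H^{2s}}$ and $\|\p_T\Sigma\|_{H^{2s-1}}$, which are bounded via the bootstrap or $E^g_{N-1}$, not by $E^g_{\delu,2\ell}$ alone. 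Your argument therefore produces the inequality with an extra additive error on the right (and does not give the clean $E^g_{\delu,2\ell}\geq 0$), whereas the paper's Hessian route packages all such corrections as higher-order Taylor remainder in $(h,v)$ and yields the stated clean inequality directly.
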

\begin{proof}
Recall $E^g_{\delu, 2\ell}$ from Definition \ref{defn:delu-geom-energy}. We first note that $E^g_{\delu, 2\ell}|_{(h,v)=(\gamma,0)} =0$. Next, we see that $(\gamma, 0)$ is a critical point of $E^g_{\delu, 2\ell}$ since the first  derivative vanishes. Considering then the second derivative of this energy at $(\gamma,0)$, we see that the Hessian takes the form
\eq{\alg{\notag
D^2 &(\mathcal{E}^g_{\delu, 2s}+c_E \Gamma^g_{\delu, 2s})((h,k),(h,k))
\\
&= 9 (\delu\mathcal{L}_{\gamma,\gamma}^s h, \delu\mathcal{L}_{\gamma,\gamma}^s h)_{\Lgg}+ (\delu \mathcal{L}_{\gamma,\gamma}^s k, \delu \mathcal{L}_{\gamma,\gamma}^{s-1} k)_{\Lgg} + c_E(\delu \mathcal{L}_{\gamma,\gamma}^{s-1} k, \delu \mathcal{L}_{\gamma,\gamma}^s h)_{\Lgg}.
}}
We claim that the Hessian is non-negative.  By expanding in terms of the eigentensors of $\mathcal{L}_{\gamma, \gamma}$, we are left with terms of the type
\eq{\alg{\notag
\lambda^{2s-1} \Big( 9 \lambda(\delu P_\lambda h, \delu P_\lambda h)_{\Lgg}+  (\delu P_\lambda k, \delu P_\lambda k)_{\Lgg} + c_E(\delu P_\lambda k, \delu  P_\lambda h)_{\Lgg}\Big),
}}
where $P_\lambda$ denotes the projection operator onto the $\lambda$-eigenspace. 
The choice of $c_E$ ensures that the bracketed term is non-negative for the smallest eigenvalue $\lambda_0$, which in turn implies non-negativity for all eigenvalues. Thus we find 
\eq{\alg{\notag
D^2 (\mathcal{E}^g_{\delu, 2s}+c_E \Gamma^g_{\delu, 2s})((h,k),(h,k))\geq 0.
}}
From this it follows that there is a constant $C = C(\lambda_0, \gamma)>0$ such that 
\eq{\alg{\notag
\sum_{s=1}^\ell (\delu \mL^s h, \delu \mL^s h)_{\Lgg}  +  |(\delu\mL^s k, \delu \mL^{s-1}k)_{\Lgg}|& \leq C E^g_{\delu, 2\ell}.
}}
\end{proof}

The next lemma provides estimates for the Sobolev norms of $\delu h, \delu \Sigma$ in terms of the geometric energy $\Etot$. This is natural given that we have constructed the functional $\Etot$ to precisely control such Sobolev norms. 

\begin{lem}[Dust derivatives of geometric variables at high-regularity]
\label{lem:delu-h-Sigma}
We have,
\eq{\alg{\notag
\|\delu h \|_{H^{N-1}}^2
&\lesssim \Edelu
+\varepsilon^2 e^{2\max\{-1+\mu,-\lambda\} T}E^g_{N-1} + \Lambda(T)^4,
\\
\|\delu\Sigma\|_{H^{N-2}}^2
&\lesssim
\Edelu
+ E^g_{N-1}
+
\varepsilon e^{\max\{-1+\mu,-\lambda\} T}\big(\| \Sigma\|_{H^{N-1}}^2 + \| g-\gamma\|_{H^N}^2\big)
+  \Lambda(T)^2.
}}
\end{lem}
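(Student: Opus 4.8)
The plan is to estimate $\|\delu h\|_{H^{N-1}}$ and $\|\delu\Sigma\|_{H^{N-2}}$ by decomposing these Sobolev norms into low-order and top-order pieces, using the coercivity results already available and the definition of $\Edelu$ for the top-order contribution. First I would split $\|\delu h\|_{H^{N-1}}^2 \lesssim \|\delu h\|_{H^{N-2}}^2 + \|\nabla^{N-1}\delu h\|_{L^2}^2$, and similarly for $\delu\Sigma$. The low-order pieces $\|\delu h\|_{H^{N-2}}$ and $\|\delu\Sigma\|_{H^{N-3}}$ are controlled directly by Corollary \ref{corol:Estimates-delu-h-v}, which gives $\|\delu h\|_{H^{N-2}} \lesssim \|\Sigma\|_{H^{N-2}} + \|g-\gamma\|_{H^{N-1}} + \Lambda(T) \lesssim (E^g_{N-1})^{1/2} + \Lambda(T)$ after invoking the geometric coercivity estimate of Lemma \ref{lem:Geom-coercive-est}; squaring this produces the $E^g_{N-1}$ and $\Lambda(T)^2$ (hence, after multiplying through, $\Lambda(T)^4$) terms on the right side, up to the $\varepsilon^2 e^{2\max\{-1+\mu,-\lambda\}T}$ weight which is harmless since it is $\lesssim 1$.

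The main work is the top-order piece. Since $N-1 = 2\ell$, elliptic regularity (Lemma \ref{lem:elliptic-regularity-mL}) lets us replace $\|\nabla^{N-1}\delu h\|_{L^2}$ by $\|\mL^\ell \delu h\|_{L^2}$ up to lower-order terms, and then I would commute: $\mL^\ell \delu h = \delu \mL^\ell h + [\mL^\ell, \delu]h$. The first term $\|\delu\mL^\ell h\|_{L^2}$ — wait, here one must be careful about the exact number of $\mL$'s appearing in $\Edelu$; by Definition \ref{defn:delu-geom-energy} the relevant controlled quantity is $\|\delu\mL^s h\|_{L^2}$ for $1 \le s \le \ell$, and Lemma \ref{lem:Properties-of-Egdelu} gives $(\delu\mL^s h, \delu\mL^s h)_{\Lgg} \lesssim \Edelu$ for each such $s$. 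Using $\| \cdot \|_{L^2} \cong \|\cdot\|_{\Lgg}$ we get $\|\delu\mL^\ell h\|_{L^2}^2 \lesssim \Edelu$, which furnishes the $\Edelu$ term. For the commutator, I would apply the first estimate of Corollary \ref{corol:Com-delu-mL-estimate} with $s = \ell$ and $V = h$: this yields $\|[\delu,\mL^\ell]h\|_{L^2} \lesssim \varepsilon e^{\max\{-1+\mu,-\lambda\}T}\|h\|_{H^{2\ell}} + \Lambda(T)\|\p_T h\|_{H^{2\ell-1}}$. The first summand, after squaring, gives $\varepsilon^2 e^{2\max\{-1+\mu,-\lambda\}T}\|h\|_{H^{N-1}}^2 \lesssim \varepsilon^2 e^{2\max\{-1+\mu,-\lambda\}T} E^g_{N-1}$ by coercivity; for the second, I would bound $\|\p_T h\|_{H^{N-2}}$ using Lemma \ref{lem:Estimates-pT-h-v}, which gives a bound by $\|\Sigma\|_{H^{N-2}} + \|g-\gamma\|_{H^{N-1}} + \Lambda(T) \lesssim (E^g_{N-1})^{1/2} + \Lambda(T)$, so $\Lambda(T)^2\|\p_T h\|_{H^{N-2}}^2 \lesssim \Lambda(T)^2 E^g_{N-1} + \Lambda(T)^4$; since $\Lambda(T) \lesssim \varepsilon e^{-T}$ this first term is absorbed into the $\varepsilon^2 e^{2\max\{-1+\mu,-\lambda\}T}E^g_{N-1}$ term (as $e^{-2T} \le e^{2\max\{-1+\mu,-\lambda\}T}$ when $\lambda, \mu$ are suitably small), leaving the stated bound.

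For $\delu\Sigma$ the structure is parallel but with $2\ell - 1 = N-2$ derivatives. I would write $\|\nabla^{N-2}\delu\Sigma\|_{L^2} \lesssim \|\mL^{\ell-1}\nabla \delu\Sigma\|_{L^2} + \text{l.o.t.}$, or more directly estimate $\|\delu\Sigma\|_{H^{N-2}} \lesssim \|\delu\mL^{\ell-1}\Sigma\|_{H^1} + \|[\delu,\mL^{\ell-1}]\Sigma\|_{H^1} + \text{l.o.t.}$, using elliptic regularity to trade $N-2 = 2(\ell-1)+1$ spatial derivatives for $\ell-1$ copies of $\mL$ plus one residual derivative. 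The term $\|\delu\mL^{\ell-1}\Sigma\|_{H^1}$ is exactly what Corollary \ref{corol:delu-mL-Sigma} addresses: its $L^2$ part is $\lesssim (E^g_{N-1})^{1/2} + \Lambda(T)$, while the $H^1$ part requires one more derivative and is controlled using the definition of $\Edelu$ — indeed $\Edelu$ contains $(\delu\mL^\ell h, \delu\mL^\ell h)_{\Lgg}$ and $(\delu\mL^\ell v, \delu\mL^{\ell-1}v)_{\Lgg}$, and via Lemma \ref{lem:Properties-of-Egdelu} one controls $(\delu\mL^\ell v, \delu\mL^{\ell-1}v)_{\Lgg}$; then Lemma \ref{lem:H1-mL} applied with $V = \delu\mL^{\ell-1}\Sigma$ (noting $v = 6\Sigma$) converts $\|\delu\mL^{\ell-1}\Sigma\|_{H^1}^2$ into $\|\delu\mL^{\ell-1}\Sigma\|_{L^2}^2 + (\delu\mL^{\ell-1}\Sigma, \mL\delu\mL^{\ell-1}\Sigma)_{\Lgg}$, and the latter inner product is, up to a commutator $[\mL,\delu]$ which is absorbed by Corollary \ref{corol:Com-delu-mL-estimate} and Lemma \ref{lem:Fluid-Commutators}, precisely $\tfrac{1}{36}(\delu\mL^{\ell-1}v, \delu\mL^\ell v)_{\Lgg} \lesssim \Edelu$. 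The commutator $\|[\delu,\mL^{\ell-1}]\Sigma\|_{H^1}$ is handled by the second estimate of Corollary \ref{corol:Com-delu-mL-estimate}, giving $\varepsilon e^{\max\{-1+\mu,-\lambda\}T}\|\Sigma\|_{H^{N-2}} + \Lambda(T)\|\p_T\Sigma\|_{H^{N-3}}$, both of which, after squaring and using coercivity plus $\Lambda(T)\lesssim\varepsilon e^{-T}$, fit inside the asserted error terms $\varepsilon e^{\max\{-1+\mu,-\lambda\}T}(\|\Sigma\|_{H^{N-1}}^2 + \|g-\gamma\|_{H^N}^2) + \Lambda(T)^2$ (the top-order Sobolev norms appear here precisely because, as flagged in Remark \ref{rem:356}, one cannot yet close them via Lemma \ref{lem:Geom-coercive-est}). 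The main obstacle I anticipate is bookkeeping the exact powers of $\mL$ so that every $\delu\mL^s(\cdot)$ produced by commutator manipulations is one that $\Edelu$ genuinely controls via Lemma \ref{lem:Properties-of-Egdelu}, together with making sure each error weight $\Lambda(T)^j$ or $\varepsilon e^{\cdots T}$ is genuinely dominated by one of the three terms on the right-hand side — this requires the smallness of $\lambda$ and $\mu$ relative to $1$, which is consistent with the bootstrap setup.
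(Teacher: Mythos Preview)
Your proposal is correct and follows essentially the same route as the paper. For $\delu h$ the argument is identical: elliptic regularity gives $\|\delu h\|_{H^{N-1}}\lesssim\|\mL^\ell\delu h\|_{L^2}$, then one commutes and invokes Lemma~\ref{lem:Properties-of-Egdelu} plus Corollary~\ref{corol:Com-delu-mL-estimate}. For $\delu\Sigma$ the paper applies Lemma~\ref{lem:H1-mL} to $V=\mL^{\ell-1}\delu\Sigma$ and then commutes both factors in the resulting inner product $(\mL^{\ell-1}\delu\Sigma,\mL^\ell\delu\Sigma)_{\Lgg}$ into four pieces $E_1,\dots,E_4$, whereas you first commute to $\delu\mL^{\ell-1}\Sigma$ and then apply Lemma~\ref{lem:H1-mL}; this reordering yields slightly fewer commutator cross-terms but is otherwise equivalent. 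One small expository slip: the top-order norms $\|\Sigma\|_{H^{N-1}}$ and $\|g-\gamma\|_{H^N}$ do \emph{not} arise from the commutator $[\delu,\mL^{\ell-1}]\Sigma$ (which, as you correctly compute from Corollary~\ref{corol:Com-delu-mL-estimate}, only produces $\|\Sigma\|_{H^{N-2}}$), but rather from the other commutator $[\mL,\delu]\mL^{\ell-1}\Sigma$ you mention earlier, via Lemma~\ref{lem:Fluid-Commutators} with $V=\mL^{\ell-1}\Sigma$ (cf.\ the paper's display~\eqref{eq:delu-mL-mL-Sigma}).
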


\begin{rem}
Similar to Remark \ref{rem:356}, we cannot apply Lemma \ref{lem:Geom-coercive-est} to the top-order Sobolev norms $\| \Sigma\|_{H^{N-1}}^2$ and $\| g-\gamma\|_{H^{N}}^2$. To estimate these terms by the geometric energy $\Etot$ we will instead need to use the auxiliary elliptic estimates of Section \ref{sec:EllipticEstimate}.
It is also crucial in later analysis in Section \ref{sec:EllipticEstimate} that these top-order norms above appear on the right hand side in Lemma \ref{lem:delu-h-Sigma} with a smallness factor of $\varepsilon$. 
\end{rem}

\begin{proof}[Proof of Lemma \ref{lem:delu-h-Sigma}]
Recall $N \define 2\ell+1$.
By the elliptic regularity of Lemma \ref{lem:elliptic-regularity-mL} and Lemma \ref{lem:Properties-of-Egdelu}
\eq{\alg{\notag
\|\delu h \|_{H^{2\ell}}^2\lesssim 
\| \mL^\ell \delu h \|_{L^2}^2
\lesssim
\Edelu + \|[ \delu, \mL^\ell] h \|_{L^2}^2.
}}
We control the commutator term using  Corollary \ref{corol:Com-delu-mL-estimate}, finding
\eq{\alg{\label{eq:delu-mL-h}
\|[ \delu, \mL^\ell] h \|_{L^2}
&\lesssim
\varepsilon e^{\max\{-1+\mu,-\lambda\} T}\| g-\gamma\|_{H^{N-1}}
+\Lambda(T) \|\p_T h\|_{H^{N-2}} 
\\
&\lesssim
\varepsilon e^{\max\{-1+\mu,-\lambda\} T}(E^g_{N-1})^{1/2} + \Lambda(T)^2,
}}
where in the final line we used Lemma \ref{lem:Estimates-pT-h-v} and the coercive estimate of Lemma \ref{lem:Geom-coercive-est}. 

Next, by elliptic regularity and Lemma \ref{lem:H1-mL}, we have
\eq{\alg{\label{eq:delu102}
\|\delu \Sigma \|_{H^{2\ell-1}}^2
&\lesssim 
\| \mL^{\ell-1} \delu \Sigma \|_{H^1}^2
\\&
\lesssim 
\| \delu \mL^{\ell-1} \Sigma \|_{L^2} ^2
+ 
\|[ \delu, \mL^{\ell-1}] \Sigma \|_{L^2}^2
+ (\mL^{\ell-1} \delu \Sigma , \mL^\ell \delu\Sigma)_{\Lgg}.
}}
The first term on the RHS of \eqref{eq:delu102} is treated by Corollary \ref{corol:delu-mL-Sigma}. For the commutator term in \eqref{eq:delu102}, we use  Lemma \ref{lem:Estimates-pT-h-v} and Corollary \ref{corol:Com-delu-mL-estimate} to find
\eq{\alg{\label{eq:delu-mL-Sigma}
\|[ \delu, \mL^{\ell-1}] \Sigma \|_{L^2} 
&\lesssim 
\varepsilon e^{\max\{-1+\mu,-\lambda\} T}\| \Sigma\|_{H^{N-3}}
+ \Lambda(T) \|\p_T \Sigma\|_{H^{N-4}} 
\\&
\lesssim
\varepsilon e^{\max\{-1+\mu,-\lambda\} T}(E^g_{N-2})^{1/2}
+ \Lambda(T)^2.
}} 
Considering next the final term in \eqref{eq:delu102}, we write it as
\eq{\alg{\notag
(\mL^{\ell-1} \delu \Sigma , \mL^\ell \delu\Sigma)_{\Lgg} 
&= (\delu \mL^{\ell-1} \Sigma , \delu\mL^\ell \Sigma)_{\Lgg} 
+ ([\mL^{\ell-1}, \delu] \Sigma , \delu \mL^\ell \Sigma)_{\Lgg} 
\\&\quad
+ (\delu \mL^{\ell-1} \Sigma , [\delu,\mL^\ell ]\Sigma)_{\Lgg} 
+ ([\mL^{\ell-1}, \delu] \Sigma , [\delu, \mL^\ell]\Sigma)_{\Lgg} 
\\&=: E_1 + E_2 + E_3 + E_4.
}}
From Lemma \ref{lem:Properties-of-Egdelu}, $|E_1| \lesssim \Edelu$. 
To estimate $E_2$ we need to integrate by parts one of the derivatives appearing in $\delu\mL^\ell\Sigma$. Using \eqref{eq:IBP} we find
\eq{\alg{\notag
E_2 
&= (g^{ab} \bnab_a [\mL^{\ell-1}, \delu] \Sigma , \bnab_b \delu \mL^{\ell-1} \Sigma)_{\Lgg} 
- 2 ([\mL^{\ell-1}, \delu] \Sigma , \Riem[\gamma]\circ \delu \mL^{\ell-1} \Sigma)_{\Lgg} 
\\&\quad
+ ([\mL^{\ell-1}, \delu] \Sigma , [\delu, \mL] \mL^{\ell-1} \Sigma)_{\Lgg} .
}}
By the commutator estimates of Lemma \ref{lem:Fluid-Commutators}, and Lemma \ref{lem:Estimates-pT-h-v}, we have
\eq{\alg{\label{eq:delu-mL-mL-Sigma}
\| [\delu,\mL]\mL^{\ell-1}\Sigma\|_{L^2}
&\lesssim 
\varepsilon e^{\max\{-1+\mu,-\lambda\} T}\|\Sigma\|_{H^{N-1}}+ \Lambda(T)\|\p_T \Sigma\|_{H^{N-2}}
\\&
\lesssim
\varepsilon e^{\max\{-1+\mu,-\lambda\} T}\big(\| \Sigma\|_{H^{N-1}} + \| g-\gamma\|_{H^N}\big) + \Lambda(T)^2. 
}}
Using this, together with \eqref{eq:delu-mL-Sigma} and Corollary \ref{corol:delu-mL-Sigma}, gives
\eq{\alg{\notag
\big|E_2 \big|
&\lesssim
\| [\delu, \mL^{\ell-1}] \Sigma\|_{H^1}  \Big(\|\delu\mL^{\ell-1}\Sigma \|_{H^1} + \| [\delu,\mL]\mL^{\ell-1}\Sigma\|_{L^2}\Big)
\\
&\lesssim
\varepsilon e^{\max\{-1+\mu,-\lambda\} T}\big(\| \Sigma\|_{H^{N-1}}^2 + \| g-\gamma\|_{H^N}^2\big) + \Lambda(T)^2 .
}}
The terms $E_3, E_4$ are similarly estimated and  inserting all these estimates into \eqref{eq:delu102} gives the required result. 
\end{proof}

We end this subsection with two Lemmas concerning the geometric source terms $F_h$ and $F_v$. 

\begin{lem}[Time derivative of geometric source terms]
\label{lem:Est-pT-Fh-Fv}
We have,
\eq{\alg{\notag
\| \p_T F_h \|_{H^{N-2}} 
+ \| \p_T F_v \|_{H^{N-3}} 
 &\lesssim 
E^g_{N-1}+ \Lambda(T).
}}
\end{lem}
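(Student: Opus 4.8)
The plan is to differentiate the explicit formulas for $F_h$ and $F_v$ from Definition \ref{defn:Fh-Fv} in time, then estimate each resulting term using the product estimate (Lemma \ref{lem:SobProdEst}), the $\Upsilon$-bound \eqref{eq:Estimate-Upsilon}, the time-derivative estimates for the geometric variables (Lemma \ref{lem:Estimates-pT-h-v}), and the matter time-derivative estimates (Lemma \ref{lem:Est-pT-eta-S}). First I would recall
$$
(F_h)_{ab} = 2\hN g_{ab} + h_{ac}\bnab_b X^c + h_{cb}\bnab_a X^c,
$$
so that $\p_T F_h$ schematically contains $\p_T \hN \cdot g$, $\hN \cdot \p_T g$, $\p_T h \ast \bnab X$, and $h \ast \bnab \p_T X$. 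The worst terms involve $\|\p_T \hN\|_{H^{N-2}}$ and $\|\bnab \p_T X\|_{H^{N-2}} \lesssim \|\p_T X\|_{H^{N-1}}$; by the bootstrap assumptions \eqref{Bootstraps} these are bounded by $\Lambda(T)$, and the remaining products are quadratically small (order $\varepsilon \Lambda(T)$ or smaller). Hence $\|\p_T F_h\|_{H^{N-2}} \lesssim \Lambda(T)$, which is subsumed in the claimed bound.

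For $\p_T F_v$ I would differentiate
$$
(F_v)_{ab} = \nabla_a\nabla_b N + 2N\Sigma_{ac}\Sigma^c_b - \tfrac13 \hN g_{ab} - \hN \Sigma_{ab} + N\tau S_{ab} - v_{ac}\bnab_b X^c - v_{cb}\bnab_a X^c,
$$
at regularity $H^{N-3}$. The main contributions are: $\p_T(\nabla_a\nabla_b N)$, which after commuting $\p_T$ past $\nabla$ (using \eqref{eq:pt-nabla}, \eqref{eq:est-pt-Gamma}) produces $\nabla^2 \p_T N$ plus lower-order pieces, controlled by $\|\p_T N\|_{H^{N-1}} \lesssim \Lambda(T)$; the term $\p_T(N\tau S_{ab})$, which brings in $\p_T(\tau S) = -\tau S + N\tau \p_T S + \ldots$, estimated via the $|\tau|\|\p_T S\|_{H^{N-3}} \lesssim \Lambda(T)$ bound of Lemma \ref{lem:Est-pT-eta-S} and the matter bounds of Lemma \ref{lem:matter-first-estimates}; the Lie-derivative-type terms $\p_T(v \ast \bnab X)$, giving $\p_T \Sigma \ast \bnab X$ and $\Sigma \ast \bnab \p_T X$, which are quadratically small; and the $\Sigma$-quadratic term $\p_T(N\Sigma \ast \Sigma) \sim N \p_T\Sigma \ast \Sigma$. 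The only place one must be slightly careful is tracking where $\|\Sigma\|_{H^{N-1}}$ or $\|\p_T\Sigma\|_{H^{N-2}}$ would appear at top order — but by the regularity count ($H^{N-3}$ for $\p_T F_v$ means $\p_T \Sigma$ is needed only in $H^{N-3} \hookleftarrow H^{N-2}$), Lemma \ref{lem:Estimates-pT-h-v} applies directly and yields terms controlled by $\|\Sigma\|_{H^{N-1}} + \|g-\gamma\|_{H^N} + \Lambda(T)$, hence by $(E^g_{N-1})^{1/2} + \Lambda(T)$ through Lemma \ref{lem:Geom-coercive-est}. Collecting the quadratic geometric contributions into $E^g_{N-1}$ (noting $\|g-\gamma\|_{H^N}^2 + \|\Sigma\|_{H^{N-1}}^2 \lesssim E^g_{N-1}$) and the rest into $\Lambda(T)$ gives the stated estimate.

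I expect the main obstacle to be bookkeeping rather than any genuine difficulty: one must verify that the $\nabla^2 \p_T N$ term really only needs $\|\p_T N\|_{H^{N-1}}$ (not $H^N$), which is true because $F_v$ enters at one derivative below $F_h$ and the claimed bound is at $H^{N-3}$; and one must confirm that no term forces $\|\Sigma\|_{H^N}$ or $\|\p_T \Sigma\|_{H^{N-1}}$, which would exceed the available regularity. Both are resolved by the regularity hierarchy already set up, so the proof reduces to applying Lemmas \ref{lem:Estimates-pT-h-v}, \ref{lem:Est-pT-eta-S}, \ref{lem:matter-first-estimates}, the commutator identity \eqref{eq:pt-nabla}--\eqref{eq:est-pt-Gamma}, and the coercivity Lemma \ref{lem:Geom-coercive-est}, then bounding everything below by $\varepsilon$ via the bootstrap assumptions. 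A brief remark noting that $\p_T F_h, \p_T F_v$ are in fact $O(\Lambda(T) + \varepsilon E^g_{N-1})$ — i.e. have a smallness gain — could be included since this mirrors the structure exploited elsewhere, but is not needed for the stated conclusion.
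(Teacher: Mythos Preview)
Your approach is essentially the same as the paper's: differentiate the explicit formulas for $F_h$ and $F_v$, commute $\p_T$ past $\nabla$ using \eqref{eq:pt-nabla}--\eqref{eq:est-pt-Gamma}, and invoke Lemmas \ref{lem:Estimates-pT-h-v}, \ref{lem:matter-first-estimates}, \ref{lem:Est-pT-eta-S}. The structure is correct.

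There is, however, one genuine slip in your bookkeeping. You assert that $\|g-\gamma\|_{H^N}^2 + \|\Sigma\|_{H^{N-1}}^2 \lesssim E^g_{N-1}$ via Lemma \ref{lem:Geom-coercive-est}. This is false: that lemma (applied at $s=N-1$) only gives control of $\|g-\gamma\|_{H^{N-1}}^2 + \|\Sigma\|_{H^{N-2}}^2$; the top-order norms require Corollary \ref{corol:Coercive-top-order-geom}, which brings in $\Edelu$ and is not available at this point in the argument. Fortunately the error is self-correcting: since $\p_T F_v$ is estimated in $H^{N-3}$, the term $\p_T\Sigma$ is only needed in $H^{N-3}$, and applying Lemma \ref{lem:Estimates-pT-h-v} at $k=N-3$ actually yields $\|\Sigma\|_{H^{N-2}} + \|g-\gamma\|_{H^{N-1}} + \Lambda(T)$, not the top-order norms you wrote. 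These lower-order quantities \emph{are} controlled by $(E^g_{N-1})^{1/2}$ via Lemma \ref{lem:Geom-coercive-est}, and the proof goes through. So you overestimated the norms coming out of Lemma \ref{lem:Estimates-pT-h-v} and then compensated with a false coercivity claim; correcting the first step removes the need for the second.
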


\begin{proof}
Let $2\leq k \leq N-2$. 
Taking a time derivative of $F_h$ as given in Definition \ref{defn:Fh-Fv} we see, 
\eq{\alg{\notag
\| \p_T F_h\|_{H^k}
&\lesssim 
\| \p_T N \|_{H^k}
+ \big(\| \hN \|_{H^k} + \| X\|_{H^{k+1}} + \| g-\gamma \|_{H^{k+1}} \| X \|_{H^k} \big) \| \p_T h \|_{H^k}
\\& 
+\| g-\gamma\|_{H^k} \big( \| \p_T X \|_{H^{k+1}} + \| g-\gamma \|_{H^{k+1}}\| \p_T X \|_{H^{k}}\big).
}}
The first estimate then follows by applying Lemma \ref{lem:Estimates-pT-h-v}.

Next, let $2\leq k \leq N-3$. We take a time derivative of $F_v$ which gives
\eq{\alg{\notag
\p_T F_v
&= \p_T \nabla_i \nabla_j N 
+ \p_T N \cdot \big( 2 \Sigma_{ic} \Sigma^c_j
-\tfrac{1}{3} g_{ij}- \Sigma_{ij} + \tau S_{ij}\big)
+ 2 N \p_T(\Sigma_{ic} \Sigma^c_j)
-\tfrac{1}{3}\hN   \p_T g_{ij} 
\\& \quad 
- \hN \p_T \Sigma_{ij}
- \tau N S_{ij} + \tau N \p_T S_{ij}
- \p_T\big( v_{im}\bnab_j X^m + v_{mj}\bnab_i X^m\big).
}}
For the first term we use the commutator identity from \eqref{eq:Commutator-pT-nablas} and recall that the lapse is a scalar. We obtain 
\eq{\alg{\notag
\| \p_T F_v \|_{H^k} 
&\lesssim
\| \p_T N\|_{H^{k+2}} + \| [\p_T, \nabla]\nabla N\|_{H^k}
+ |\tau| \big( \| S\|_{H^k} + \| \p_T S\|_{H^k} \big)
\\&\quad
+ \| \Sigma\|_{H^k}^2 + \| \p_T \Sigma\|_{H^k}^2 + \| \p_T g\|_{H^k}^2
+ \Lambda(T)^2.
}}
The conclusion then follows by Lemma \ref{lem:Estimates-pT-h-v} and the matter estimates in Lemma \ref{lem:matter-first-estimates} and Lemma \ref{lem:Est-pT-eta-S}. 
\end{proof}

\begin{corol}[Dust derivative of geometric source terms]
\label{corol:Est-delu-Fh-Fv}
We have,
\eq{\alg{\notag
\| \delu F_h \|_{H^{N-2}}  + \|\delu F_v\|_{H^{N-3}}
 &\lesssim 
E^g_{N-1} + 
\varepsilon e^{(-1+\mu)T}\big( \| g-\gamma\|_{H^N}^2 + \| \Sigma\|_{H^{N-1}}^2 \big)+
\Lambda(T),
\\
\| \delu F_h \|_{H^{2}}
 &\lesssim 
E^g_{N-1} + 
\Lambda(T).
}}
\end{corol}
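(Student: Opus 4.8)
The plan is to write the fluid derivative $\delu F_h$ and $\delu F_v$ explicitly, reducing to estimates already proved. Recall $\delu = u^0\p_T - \tau u^a\bnab_a$, so for any scalar or tensor $W$ we have $\|\delu W\|_{H^k} \lesssim \|u^0\|_{L^\infty}\|\p_T W\|_{H^k} + |\tau|\,\|u\|_{H^k}\,\|\bnab W\|_{H^k}$, and since $\|u^0\|_{L^\infty}\lesssim 1$ and $|\tau|\,\|u\|_{H^{N-1}} \lesssim \Lambda(T)^{1/2}$ (indeed $\tau^2\|u\|_{H^{N-1}}^2 \lesssim \Lambda(T)$) the second piece is subleading. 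Thus the leading contribution to $\|\delu F_h\|_{H^{N-2}}$ is controlled by $\|\p_T F_h\|_{H^{N-2}}$, which is exactly Lemma \ref{lem:Est-pT-Fh-Fv}, giving the bound $E^g_{N-1}+\Lambda(T)$; the correction term $|\tau|\|u\|_{H^{N-2}}\|\bnab F_h\|_{H^{N-2}}$ is estimated using the $F_h$ bound from Lemma \ref{lem:Est-Fh-Fv} together with the bootstrap assumption on $\|u\|_{H^{N-1}}$, producing the $\varepsilon e^{(-1+\mu)T}(\|g-\gamma\|_{H^N}^2 + \|\Sigma\|_{H^{N-1}}^2)$ type term (after noting $\|F_h\|_{H^{N-1}}$ involves $\|g-\gamma\|_{H^N}^2$). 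The identical scheme applied at one lower order handles $\|\delu F_v\|_{H^{N-3}}$ using $\|\p_T F_v\|_{H^{N-3}}$ from Lemma \ref{lem:Est-pT-Fh-Fv} and $\|F_v\|_{H^{N-2}}$ from Lemma \ref{lem:Est-Fh-Fv}, where the matter term $|\tau|\|\rho\|_{H^{N-2}} \lesssim \Lambda(T)$ is absorbed. The second, lower-order estimate $\|\delu F_h\|_{H^2} \lesssim E^g_{N-1}+\Lambda(T)$ follows the same way but one need not be careful about the top-order norms since at regularity $H^2$ the coercivity Lemma \ref{lem:Geom-coercive-est} (with $s\leq N-1$) converts $\|g-\gamma\|_{H^3}^2$ and $\|\Sigma\|_{H^2}^2$ directly into $E^g_{N-1}$ (in fact into $E^g_3$).

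More concretely, I would first record the schematic form
\eq{\notag
\delu F_h = u^0 \p_T F_h - \tau u^a \bnab_a F_h, \qquad \delu F_v = u^0 \p_T F_v - \tau u^a \bnab_a F_v,
}
then observe that, by Lemma \ref{lem:Est-pT-Fh-Fv} and the $L^\infty$ bound on $u^0$,
\eq{\notag
\| u^0 \p_T F_h\|_{H^{N-2}} + \| u^0 \p_T F_v\|_{H^{N-3}} \lesssim E^g_{N-1} + \Lambda(T),
}
using the product estimate of Lemma \ref{lem:SobProdEst} together with $\|u^0\|_{H^{N-1}}\lesssim 1$ to distribute $u^0$. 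For the transport pieces, Lemma \ref{lem:SobProdEst} gives
\eq{\notag
|\tau| \| u^a \bnab_a F_h\|_{H^{N-2}} \lesssim |\tau|\, \| u\|_{H^{N-1}}\, \| F_h\|_{H^{N-1}} \lesssim \varepsilon e^{(-1+\mu)T}\big(\Lambda(T) + \| g-\gamma\|_{H^N}^2 + \|\Sigma\|_{H^{N-1}}^2\big),
}
invoking Lemma \ref{lem:Est-Fh-Fv} for $\|F_h\|_{H^{N-1}}$, and analogously for $F_v$ one bounds $|\tau|\,\|u\|_{H^{N-2}}\,\|F_v\|_{H^{N-2}}$ using the $F_v$ estimate of Lemma \ref{lem:Est-Fh-Fv}, where the $|\tau|\|\rho\|_{H^{N-2}}$ contribution is absorbed into $\Lambda(T)$ and the $e^{\mu T}$ weight on $\|u\|_{H^{N-1}}$ against the $e^{-T}$ weights keeps the prefactor $\varepsilon e^{(-1+\mu)T}$ (for $\mu$ small this is summable and beats the fixed powers). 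Collecting these yields the first claimed estimate.

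For the final low-order bound, I would simply note that $F_h$ in Definition \ref{defn:Fh-Fv} is built from $\hN$, $h$ and $\nabla X$, so $\delu F_h$ at regularity $H^2$ is controlled, via the same splitting, by $\|\p_T F_h\|_{H^2}$ (bounded using the derivation in Lemma \ref{lem:Est-pT-Fh-Fv} restricted to $k=2$) plus a transport term $|\tau|\,\|u\|_{H^3}\,\|F_h\|_{H^3}$, and now the appearing Sobolev norms $\|g-\gamma\|_{H^3}$, $\|\Sigma\|_{H^2}$, $\|\hN\|_{H^3}$, $\|X\|_{H^3}$ are all below the bootstrap regularity and can be estimated by $\Lambda(T)$ and $E^g_{N-1}$ (the geometric ones via Lemma \ref{lem:Geom-coercive-est}, the $\hN, X$ ones via $\Lambda(T)$), giving $E^g_{N-1} + \Lambda(T)$ without any need to retain top-order norms. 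The main obstacle is purely bookkeeping: one must verify that no term in $\p_T F_v$ and $\delu F_v$ forces derivatives on the matter source terms $S_{ij}$ or $\eta$ at an order exceeding what Lemmas \ref{lem:Est-pT-eta-S} and \ref{lem:Est-delu-eta-S} provide — the drop from $H^{N-2}$ (for $F_h$) to $H^{N-3}$ (for $F_v$) is precisely calibrated so that the matter estimates at order $N-3$ suffice, and one should double-check that the $\p_T\nabla_i\nabla_j N$ term, handled through the commutator identity \eqref{eq:Commutator-pT-nablas} and $\|\p_T N\|_{H^{N-1}}$, does not exceed the bootstrap regularity on $N$.
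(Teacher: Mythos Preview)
Your approach is correct and is essentially the same as the paper's: expand $\delu = u^0\p_T - \tau u^c\bnab_c$, control the $\p_T$--pieces by Lemma~\ref{lem:Est-pT-Fh-Fv}, and the transport pieces by Lemma~\ref{lem:Est-Fh-Fv} at one order higher (which is what generates the top-order $\|g-\gamma\|_{H^N}^2+\|\Sigma\|_{H^{N-1}}^2$ terms with the $\varepsilon e^{(-1+\mu)T}$ prefactor from $|\tau|\|u\|_{H^{N-1}}$). Your additional commentary about the $H^2$ estimate and the regularity bookkeeping for the matter and lapse terms is accurate and simply fleshes out what the paper leaves implicit.
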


\begin{proof}
The estimates follow by expanding out $\delu = u^0 \p_T - \tau u^c \bnab_c$ and using the $F_h, F_v$ estimates in Lemma \ref{lem:Est-Fh-Fv}  and the $\p_T F_h, \p_T F_v$ estimates from Lemma \ref{lem:Est-pT-Fh-Fv}. 
\end{proof}

\begin{rem}
Since we are dealing with geometric variables in the above corollary, and not matter variables, we roughly estimated the dust derivatives as $\delu \sim \p_T + \tau u^c \ast \nabla$. Note that doing so introduced the top-order Sobolev norms $\| \Sigma\|_{H^{N-1}}^2$ and $\| g-\gamma\|_{H^{N}}^2$. Crucially for later analysis in Corollary \ref{corol:Coercive-top-order-geom}, however, is that these top-order norms appear with a coefficient of $\varepsilon$.
\end{rem}

%%%%%%%%%%%%%%%%%%%%%%%%%%%%%%%%%%%%%%%%
\section{Elliptic Estimate}\label{sec:EllipticEstimate}
In this section we prove an auxiliary elliptic estimate which allows us to control the top-order Sobolev norms of $g$ and $\Sigma$ in terms of the geometric energy functional $\Etot$. Recall only the lower-order Sobolev norms are controlled using Lemma \ref{lem:Geom-coercive-est}, and so a new idea is indeed needed to cover the top-order of regularity. We also remind the reader that the geometric energy $\Etot$ will eventually fulfill a strong decay estimate enabled by the inclusion of certain correction terms. 

The main result of the section, Corollary \ref{corol:Coercive-top-order-geom}, achieves the goal of the previous paragraph. To prove this corollary, we take the first-order equations of motion for $\p_T h, \p_T \Sigma$ appearing in \eqref{eq:EoM-hv} and convert them into second-order equations involving a perturbed wave operator $\mathcal{W}$ and the $\delu$ derivatives. 
We find, very schematically, that 
\eq{\label{eq:schematic-Elliptic}
\mL h \sim \mathcal{W}(h) + \delu \p_T (h) + \tau u^a \bnab_a\delu h.
}
By elliptic regularity for $\mL$, we can then prove an estimate on $\| g-\gamma\|_{H^{N}} \sim \| \mL h \|_{H^{N-2}}$ by estimating the RHS of \eqref{eq:schematic-Elliptic}, see Proposition \ref{prop:Elliptic-Est}. A similar idea holds also for $\Sigma$. We note that this idea, albeit for a different gauge, was first introduced by  Had\v{z}i\'{c} and Speck in \cite{HaSp15}. 

\begin{defn}[Operators $\mathcal{W},\mathcal{H}$]
Define the operators
\eq{\alg{\notag
\mathcal{W}&\define \p_T\p_T-X^aX^b\bnab_a \bnab_b + N^2 \mL,  
\qquad
\mathcal{H} \define N^2 \mL  - X^aX^b\bnab_a \bnab_b - \tau^2 \frac{u^a u^b}{(u^0)^2} \bnab_b \bnab_a ,
}}
which act on symmetric $(0,2)$-tensors.
\end{defn}
Due to the sign convention on $\mL$, see Definition \ref{defn:mL}, one can think of $\mathcal{W}$ as being a kind of perturbed wave operator. 

\begin{lem}[Wave equations for $h, v$]
\label{lem:mathcalW-EoM}
The differential equations \eqref{eq:EoM-hv} for $h=g-\gamma$ and $v = 6 \Sigma$ imply 
\eq{\notag
\mathcal{W}(h)= F_1, \qquad
\mathcal{W}(v)= F_2,
}
where, 
\eq{\alg{\notag
\| F_1 \|_{H^{N-2}}^2 + \| F_2 \|_{H^{N-3}}^2 &\lesssim 
E^g_{N-1}
+ \varepsilon e^{-2\lambda T}\big(\| g-\gamma\|_{H^N}^2 
+ \| \Sigma\|_{H^{N-1}}^2\big)
+ \Lambda(T)^2.
}}
\end{lem}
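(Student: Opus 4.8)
The strategy is to derive the claimed wave equations directly from the first-order system \eqref{eq:EoM-hv} by applying a second time-derivative-type operator and substituting back. First I would take the evolution equation $\p_T h_{ab} = wv_{ab} - X^m\bnab_m h_{ab} + (F_h)_{ab}$ and apply $\p_T$ to it. This produces a term $\p_T(wv_{ab})$, which upon using $\p_T v_{ab} = -2v_{ab} - 9w\mL h_{ab} - X^c\bnab_c v_{ab} + 6(F_v)_{ab}$ generates the crucial $-9w^2 \mL h_{ab} = -N^2 \mL h_{ab}$ term (recall $w = N/3$, so $9w^2 = N^2$); it also produces $\p_T w \cdot v_{ab}$, a $-2wv_{ab}$ term, and $-wX^c\bnab_c v_{ab}$. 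The remaining pieces from differentiating $-X^m\bnab_m h_{ab}$ give $-\p_T X^m \bnab_m h_{ab} - X^m\bnab_m \p_T h_{ab}$; into the latter I substitute the equation for $\p_T h$ again, which recovers $X^m X^c \bnab_m\bnab_c h_{ab}$ up to first-order corrections (using $[\bnab_m,\p_T]$ commutators from Section \ref{sec:Comm1}, specifically \eqref{eq:pt-nabla}, \eqref{eq:est-pt-Gamma}). Rearranging, one obtains $\mathcal{W}(h) = \p_T\p_T h - X^aX^b\bnab_a\bnab_b h + N^2\mL h = F_1$, where $F_1$ collects: $\p_T F_h$, $w\cdot 6 F_v$ (times $-1$ or $+1$ depending on sign bookkeeping), the friction term $wv_{ab}$ and $\p_T w\cdot v$, terms with $\p_T X$, and commutator error terms of the schematic form $\bnab h \ast (\text{l.o.t.})$ and $\p_T\Gamma \ast h$.

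Next I would estimate $\|F_1\|_{H^{N-2}}$. The dominant contributions are: $\|\p_T F_h\|_{H^{N-2}} \lesssim E^g_{N-1} + \Lambda(T)$ by Lemma \ref{lem:Est-pT-Fh-Fv}; the term $\|w F_v\|_{H^{N-2}}$, which is the only place we need $F_v$ at order $N-2$ — here Lemma \ref{lem:Est-Fh-Fv} gives $\|F_v\|_{H^{N-2}} \lesssim \Lambda(T) + \|g-\gamma\|_{H^N}^2 + \|\Sigma\|_{H^{N-1}}^2$, and the top-order Sobolev norms come multiplied by themselves (hence by a factor $\varepsilon e^{-\lambda T}$ via the bootstrap), producing exactly the $\varepsilon e^{-2\lambda T}(\|g-\gamma\|_{H^N}^2 + \|\Sigma\|_{H^{N-1}}^2)$ term after squaring; the friction-type terms $wv$, $\p_T w\cdot v$ give $\|\Sigma\|_{H^{N-2}} + \|\p_T N\|_{H^{N-2}}\|\Sigma\|_{H^{N-2}} \lesssim (E^g_{N-1})^{1/2} + \Lambda(T)$ by Lemma \ref{lem:Geom-coercive-est}; and all $\p_T X$, commutator, and $\p_T\Gamma$ terms are bounded using \eqref{eq:est-pt-Gamma}, \eqref{eq:Estimate-Upsilon}, the bootstrap assumptions \eqref{Bootstraps}, and Lemma \ref{lem:Geom-coercive-est}, landing in $E^g_{N-1} + \Lambda(T)$. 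Squaring and absorbing cross terms via Young's inequality gives the stated bound for $\|F_1\|_{H^{N-2}}^2$.

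For $\mathcal{W}(v) = F_2$ the procedure is identical in structure: apply $\p_T$ to the $\p_T v$ equation, use $\p_T(-9w\mL h) = -9\p_T w\cdot \mL h - 9w\mL \p_T h - 9w[\p_T,\mL]h$, substitute the $\p_T h$ equation into $\mL\p_T h$ to produce $-9w^2\mL(wv\cdot\tfrac19) = -N^2\mL v$ (up to lower order, since $wv = Nv/3$ and $-9w \cdot w \mL v = -N^2\mL v$), handle the $X$-transport terms as before, and collect the rest into $F_2$. The estimate for $\|F_2\|_{H^{N-3}}$ then uses $\|\p_T F_v\|_{H^{N-3}} \lesssim E^g_{N-1}+\Lambda(T)$ (Lemma \ref{lem:Est-pT-Fh-Fv}), $\|\mL h\|_{H^{N-3}} \lesssim \|g-\gamma\|_{H^{N-1}}$ (elliptic regularity, Lemma \ref{lem:elliptic-regularity-mL}, controlled by $(E^g_{N-1})^{1/2}$), $\|[\p_T,\mL]h\|_{H^{N-3}} \lesssim \varepsilon e^{-\lambda T}\|h\|_{H^{N-1}}$ (Lemma \ref{lem:Fluid-Commutators}), and the commutator/transport error terms as before; crucially the order is one lower ($N-3$ rather than $N-2$), which means $\|F_v\|_{H^{N-3}}$ only needs $\|g-\gamma\|_{H^{N-1}}^2$-type terms and we never exceed available regularity.

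\emph{Main obstacle.} The delicate point is the bookkeeping of the substitution step: when we differentiate the transport term $-X^m\bnab_m h$ and re-substitute $\p_T h$, the commutator $[\p_T, \bnab_m]$ acting on $h$ must be controlled, and one must verify that re-expressing $X^m\bnab_m(\p_T h)$ genuinely reproduces the $-X^aX^b\bnab_a\bnab_b h$ piece of $\mathcal{W}$ with only admissible (lower-order, $\varepsilon$-small or $\Lambda(T)$-small) errors rather than something that loses a derivative on $X$ or on $h$. One has to be careful that terms like $X^m\bnab_m(wv)$ and $X^m\bnab_m F_h$ are genuinely lower order — the former is fine because $v=6\Sigma$ at order $N-2$ is controlled by $E^g_{N-1}$, and the latter because $F_h$ carries no top-order geometric norm at the orders needed. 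The other subtlety is ensuring the self-referential top-order norms $\|g-\gamma\|_{H^N}$, $\|\Sigma\|_{H^{N-1}}$ appear only quadratically (so with the saving $\varepsilon e^{-\lambda T}$), which is what ultimately makes the later absorption argument in Corollary \ref{corol:Coercive-top-order-geom} close; this forces us to be scrupulous about never differentiating $h$ or $\Sigma$ one more time than the stated order without compensating smallness.
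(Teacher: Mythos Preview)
Your approach is essentially the same as the paper's: derive a second-order equation for $h$ by combining the two first-order equations in \eqref{eq:EoM-hv} (the paper phrases it as ``solve for $v$ and substitute into the $\p_T v$ equation,'' you phrase it as ``apply $\p_T$ to the $\p_T h$ equation and substitute $\p_T v$,'' but these are algebraically identical), and likewise for $v$ by applying $\p_T$ to the $\p_T v$ equation; the estimates are then read off from Lemmas~\ref{lem:Est-Fh-Fv}, \ref{lem:Est-pT-Fh-Fv}, \ref{lem:Fluid-Commutators} and the coercivity Lemma~\ref{lem:Geom-coercive-est}, exactly as you outline.

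One small correction: your ``main obstacle'' is not actually an obstacle. The transport derivative $\bnab$ is the Levi-Civita connection of the \emph{fixed} background metric $\gamma$, which is time-independent (cf.\ the remark above \eqref{eq:EoM-hv} that $\p_T\gamma = 0$), so $[\p_T,\bnab_m] = 0$ identically and the references to \eqref{eq:pt-nabla}, \eqref{eq:est-pt-Gamma} (which concern $\nabla$, the connection of $g$) are unnecessary here. The substitution step is therefore clean: $\p_T(X^m\bnab_m h) = \p_T X^m\cdot\bnab_m h + X^m\bnab_m\p_T h$ with no commutator error, and after re-substituting $\p_T h$ one recovers $X^aX^b\bnab_a\bnab_b h$ plus the lower-order pieces you list. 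This simplification aside, your plan is correct.
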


\begin{proof}
Rearranging \eqref{eq:EoM-hv}  as $v = w^{-1}(\p_T h+X^m\bnab_m h-F_h)$ and substituting this into \eqref{eq:EoM-hv} gives
\eq{\alg{\notag
-w^{-1}&(\p_T w )v
+ w^{-1} (\p_T^2 h +\p_T(X^m\bnab_m h)-\p_T F_h)
= -2v - 9 w\mL h  - X^m \bnab_m v +6F_v.
}}
Using again \eqref{eq:EoM-hv}  we note that
\eq{\alg{\notag
\p_T (X^a \bnab_a h) 
&= -X^a X^b\bnab_a\bnab_b h -X^a\bnab_a X^b \cdot \bnab_b h+\p_T X^a\cdot \bnab_a h +X^m\bnab_m (wv+F_h).
}}
Rearranging terms (recall $9w^2 = N^2$) we find
\eq{\alg{\notag
\mathcal{W}(h) = F_1 &\define 
X^a\bnab_a X^b \cdot \bnab_b h-\p_T X^m\cdot \bnab_m h -X^m\bnab_m (wv+F_h)
 +\p_T F_h 
+v \p_T w\\
& \qquad -2wv - wX^m\bnab_m v +6wF_v.
}}
Using Lemma \ref{lem:Est-Fh-Fv} and Lemma \ref{lem:Est-pT-Fh-Fv} we see
\eq{\alg{\notag
\| F_1 \|_{H^{N-2}}
&\lesssim
\| \Sigma\|_{H^{N-2}}
+ \| \p_T F_h\|_{H^{N-2}}
+ \| F_v \|_{H^{N-2}} + \Lambda(T)
\\
&\lesssim
E^g_{N-1}
+ \| g-\gamma\|_{H^N}^2 
+ \| \Sigma\|_{H^{N-1}}^2
+ \Lambda(T).
}}
Although the top-order terms involving $g-\gamma$ and $\Sigma$ here look worrying, when squaring the estimate we can then apply the bootstraps to gain the crucial factor of $\varepsilon$. 

To derive the equation for $v$ we take the $\p_T$ derivative of \eqref{eq:EoM-hv}:
\eq{\alg{\notag
\p_T^2 v &= -2 \p_T v - 9\p_T w \cdot \mL h 
- 9w \mL(\p_T h) - 9w[\p_T, \mL]h 
- \p_T X^m \cdot \bnab_m v 
- X^m \bnab_m (\p_T v) + 6 \p_T F_v.
}}
Substituting in the equation of motion \eqref{eq:EoM-hv} where needed, and expanding as $\mL (wv) = w\mL v-v\mD w-2g^{ab}\bnab_a w \bnab_b v$, we obtain
\eq{\alg{\notag
\mathcal{W}(v)= F_2 &\define 4v+18 w \mL h + 2X^m \bnab_m v-12 F_v-9\p_T w \cdot \mL h +9w v\mD w + 18 w g^{ab} \bnab_a w \bnab_b v
\\&\quad
 +9w\mL(X^m\bnab_m h) 
- 9w\mL F_h -9w[\p_T, \mL] h 
+6\p_T F_v 
- \p_T X^m \cdot \bnab_m v+2X^m \bnab_m v
\\&\quad
+9X^m\bnab_m(w\mL h) 
+ X^a \bnab_a X^m\cdot \bnab_m v - 6 X^m \bnab_m F_v.
}}
Using the $F_h, F_v$ estimates in Lemma \ref{lem:Est-Fh-Fv} and Lemma \ref{lem:Est-pT-Fh-Fv}, together with the commutator estimate in Lemma \ref{lem:Fluid-Commutators}, we find
\eq{\alg{\notag
\| F_2 \|_{H^{N-3}}
&\lesssim
\| \Sigma\|_{H^{N-3}} + \|g-\gamma\|_{H^{N-1}}
+ \| [\p_T, \mL] h\|_{H^{N-3}}
+ \| F_v \|_{H^{N-2}}
\\&\quad
 + \| \p_T F_v \|_{H^{N-3}}
+ \| F_h \|_{H^{N-1}}
+ \Lambda(T)
\\
&\lesssim
(E^g_{N-1})^{1/2}
+\varepsilon e^{-\lambda T} \|g-\gamma\|_{H^{N-1}} + \|g-\gamma\|_{H^{N}}^2 + \|\Sigma\|_{H^{N-1}}^2 + E^g_{N-1} + \Lambda(T).
}}
Note that in the above we also used \eqref{eq:Estimate-Upsilon} to estimate a term of the form:
\eq{\notag
\| \bnab h\|_{H^{N-1}} \lesssim 
\| h \|_{H^{N}} + \| \Upsilon\|_{H^{N-1}}\| h \|_{H^{N-1}} \lesssim \|g-\gamma\|_{H^{N}} .
}
\end{proof}

\begin{prop}[Elliptic estimate using $\mathcal{W}, \delu$ operators]
\label{prop:Elliptic-Est}
Let $V$ be an arbitrary $(0,2)$-tensor and $0\leq k\leq N-2$. Then,
\eq{\alg{\notag
\|V\|_{H^{k+2}} 
&\lesssim  \| \mathcal{W}(V) \|_{H^{k}}+ \|\delu \p_T (V) \|_{H^{k}}
+ |\tau| \|u^a \delu \bnab_a(V) \|_{H^{k}}.
}}
\end{prop}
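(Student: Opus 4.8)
The plan is to rewrite the second‑order operator $\mathcal{W}$ applied to $V$ as the genuinely elliptic operator $N^2\mL$ (acting on $V$), plus two $\delu$‑type terms — which are precisely the ones appearing on the right‑hand side of the claimed inequality — plus two further second‑order \emph{spatial} terms whose coefficients are small under the bootstrap assumptions. Once this is done, one inverts $N^2\mL$ using the elliptic regularity of $\mL$ (Lemma~\ref{lem:elliptic-regularity-mL}), estimates the right‑hand side, and absorbs the two small second‑order terms into $\|V\|_{H^{k+2}}$ on the left.

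First I would derive the underlying pointwise identity. Because the background metric satisfies $\p_T\gamma=0$ (rigidity of negative Einstein metrics), the Christoffel symbols of $\bnab$ are independent of $T$, so $[\p_T,\bnab_a]$ annihilates every tensor; in particular $\bnab_a\p_T V=\p_T\bnab_a V$. From $\delu=u^0\p_T-\tau u^a\bnab_a$ we get, for any tensor $W$, the relation $\p_T W=\tfrac1{u^0}\big(\delu W+\tau u^a\bnab_a W\big)$. Applying this with $W=\p_T V$, and then expanding the resulting $\bnab_a\p_T V=\p_T\bnab_a V$ by the same rule (now applied to $W=\bnab_a V$), gives
\[
\p_T\p_T V=\frac1{u^0}\,\delu\p_T V+\frac{\tau u^a}{(u^0)^2}\,\delu\bnab_a V+\frac{\tau^2 u^au^b}{(u^0)^2}\,\bnab_b\bnab_a V .
\]
Substituting into $N^2\mL V=\mathcal{W}(V)-\p_T\p_T V+X^aX^b\bnab_a\bnab_b V$ yields
\[
N^2\mL V=\mathcal{W}(V)-\frac1{u^0}\,\delu\p_T V-\frac{\tau u^a}{(u^0)^2}\,\delu\bnab_a V-\frac{\tau^2 u^au^b}{(u^0)^2}\,\bnab_b\bnab_a V+X^aX^b\bnab_a\bnab_b V .
\]

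Next I would invert $N^2\mL$. Under the bootstrap assumptions $\|N-3\|_{H^N}\lesssim\varepsilon e^{-T}$, and by \eqref{eq:est-u0}, the functions $N^{\pm2}$ and $(u^0)^{\pm1}$ are bounded in $H^k$ for $0\le k\le N-2$, so by the product estimate (Lemma~\ref{lem:SobProdEst}) one has $\|\mL V\|_{H^k}\lesssim\|N^2\mL V\|_{H^k}$; combined with Lemma~\ref{lem:elliptic-regularity-mL} (case $s=1$) this gives $\|V\|_{H^{k+2}}\lesssim\|N^2\mL V\|_{H^k}$. Taking $H^k$‑norms in the identity above, the two $\delu$‑terms contribute $\lesssim\|\delu\p_T V\|_{H^k}+|\tau|\,\|u^a\delu\bnab_a V\|_{H^k}$ after factoring out the $H^k$‑bounded $u^0$‑factors, while the remaining two second‑order terms obey, using the bootstrap bounds $\|X\|_{H^N}\lesssim\varepsilon e^{-T}$ and $\tau^2\|u\|_{H^{N-1}}^2\lesssim\Lambda(T)\lesssim\varepsilon e^{-T}$,
\[
\big\|X^aX^b\bnab_a\bnab_b V\big\|_{H^k}+\Big\|\tfrac{\tau^2 u^au^b}{(u^0)^2}\,\bnab_b\bnab_a V\Big\|_{H^k}\lesssim\varepsilon e^{-T}\,\|V\|_{H^{k+2}} ,
\]
which for $\varepsilon$ sufficiently small may be absorbed into the left‑hand side. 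This proves the proposition.

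The only non‑routine part is the bookkeeping in the last step: one must confirm that, apart from the three terms displayed in the statement, the substitution produces \emph{only} second‑order spatial terms in $V$, each carrying a coefficient that is a power of $\varepsilon$ times exponential decay in $T$, hence absorbable into $\|V\|_{H^{k+2}}$. This is what pins down the range $0\le k\le N-2$, so that the relevant $H^k$‑norms of $X$, $u$ and $u^0$ stay within the orders controlled by the bootstrap. A pleasant feature — responsible for the absence of any lower‑order $\|V\|_{H^{k+1}}$ term on the right — is that no commutator of $\p_T$ with a spatial derivative ever appears, precisely because $\bnab$ is the fixed (time‑independent) connection of $\gamma$.
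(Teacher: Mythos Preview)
Your proof is correct and follows essentially the same approach as the paper: you derive the same identity expressing $N^2\mL V$ in terms of $\mathcal{W}(V)$, the two $\delu$-terms, and the small second-order remainders $X^aX^b\bnab_a\bnab_b V$ and $\tfrac{\tau^2 u^au^b}{(u^0)^2}\bnab_b\bnab_a V$, then invoke elliptic regularity for $\mL$ and absorb the remainders. The only organizational difference is that the paper packages the small terms into the operator $\mathcal{H}=N^2\mL-X^aX^b\bnab_a\bnab_b-\tfrac{\tau^2 u^au^b}{(u^0)^2}\bnab_b\bnab_a$ and argues separately that $\mathcal{H}$ has trivial kernel before stating the elliptic estimate for $\mathcal{H}$; your direct absorption argument achieves the same thing without introducing $\mathcal{H}$ explicitly.
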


\begin{proof}
From \eqref{defn:delu} we have,
\eq{\notag
\p_TV = (u^0)^{-1}\Big(\delu (V) + \tau u^a \bnab_a V \Big)
}
and thus
\eq{\notag
\p_T\p_TV = (u^0)^{-1}\Big(\delu ( \p_T V) + \tau u^a \bnab_a \p_T V\Big), 
\qquad
\p_T \bnab_b V = (u^0)^{-1}\Big(\delu (\bnab_b V) + \tau u^a \bnab_a \bnab_b V \Big).
}
Recalling $\hu = u^0 - 1/3$, we find
\eq{\alg{\label{eq:elliptic-344}
\mathcal{W}(V) &=  \p_T\p_T V 
- X^aX^b\bnab_a \bnab_b V
+ N^2 \mL V
\\
& = (u^0)^{-1}\Big(\delu (\p_T V) + \tau u^a  \p_T \bnab_a V\Big) 
- X^aX^b\bnab_a \bnab_b  V
+ N^2 \mL V
\\
&=  (u^0)^{-1}\delu (\p_T V) + \frac{\tau u^a}{u^0} \frac{1}{u^0}\left( \delu (\bnab_a V) + \tau u^b \bnab_b \bnab_a V \right)
- X^aX^b\bnab_a \bnab_b V
+ N^2 \mL V
 \\
&= (u^0)^{-1}\delu (\p_T V) + \tau u^a (u^0)^{-2}\delu (\bnab_a V) + \mathcal{H}(V).
}}

We view $N^{-2}\mathcal{H}$ as a perturbation off the elliptic operator $\mL$. 
For $0\leq k\leq N-2$,
\eq{\alg{\notag
\|(N^{-2}\mathcal{H} - \mL )V\|_{H^k} 
&\lesssim 
\|X^aX^b\bnab_a \bnab_b V\|_{H^k}  + \tau^2 \big\| \frac{u^a u^b}{(u^0)^2} \bnab_b \bnab_a V\big\|_{H^k}
\\&\lesssim
\varepsilon^2 e^{(-2+2\mu)T}\big( \|V\|_{H^{k+2}}+\|\Upsilon\|_{H^{k+1}}\|V\|_{H^{k+1}}\big)
\\&\lesssim 
\varepsilon^2 e^{(-2+2\mu)T}\|\mL V\|_{H^{k}}.
}}
Suppose now $\mathcal{H}(V) =0$. By definition of $\mathcal{H}$, and elliptic regularity of $\mL$, this implies
\eq{\alg{\notag
\| N^2 \mL(V) \|_{L^2} &= \big\|  X^aX^b\bnab_a \bnab_b V+\tau^2 (u^0)^{-2}u^a u^b \bnab_b \bnab_a V\big\|_{L^2}
\\&\lesssim 
\varepsilon^2 e^{2(-1+\mu)T} \| V\|_{H^2}
\\&\leq C (C_1)^{-1}
\varepsilon^2 e^{2(-1+\mu)T} \|\mL V\|_{L^2},
}}
for $C>0$ some constant and $C_1>0$ as in Lemma \ref{lem:elliptic-regularity-mL}. We also have a lower bound
\eq{\notag
C' \| \mL(V) \|_{L^2} \leq \| N^2 \mL(V) \|_{L^2}.
}
for another constant $C'>0$. Choosing $\varepsilon$ sufficiently small so that $C (C_1)^{-1} \varepsilon^2 e^{2(-1+\mu)T} < \varepsilon$, we see these two inequalities imply
\eq{\notag
C' \| \mL(V) \|_{L^2} < \varepsilon \| \mL(V) \|_{L^2} .
} 
For $\varepsilon$ sufficiently small this implies $\| \mL(V)\|_{L^2} = 0$ and so $V \in \ker\mL$. However, $\ker{\mL}=0$, and so for small data $\mathcal{H}$ also has trivial kernel and thus we obtain 
\eq{\notag
\|V\|_{H^{k+2}} 
\lesssim  \|N^{-2} \mathcal{H}(V)\|_{H^k}
\lesssim
\|V\|_{H^{k+2}} .
}
Putting this together with \eqref{eq:elliptic-344} we find
\eq{\alg{\notag
\|V\|_{H^{k+2}} 
&\lesssim  \| N^{-2} \mathcal{W}(V) \|_{H^{k}}+ \|(u^0)^{-1}N^{-2} \delu \p_T (V) \|_{H^{k}}
+ |\tau| \|(u^0)^{-2}u^a N^{-2} \delu \bnab_a (V) \|_{H^{k}}.
}}
\end{proof}

We can now bring together the previous results and estimate the top-order Sobolev norms of $g, \Sigma$ in terms of our geometric energy functionals $\Edelu$ and $E^g_{N-1}$. 
\begin{corol}
\label{corol:Coercive-top-order-geom}
We have,
\eq{\alg{\notag
\| g-\gamma\|_{H^N}^2 +\| \Sigma\|_{H^{N-1}}^2 &\lesssim 
\Edelu + E^g_{N-1} + \Lambda(T)^2.
}}
 \end{corol}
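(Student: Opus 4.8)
The plan is to apply Proposition \ref{prop:Elliptic-Est} to the tensors $h=g-\gamma$ and $v=6\Sigma$ with appropriate choices of $k$, and then control each of the three terms on the right-hand side. For $h$ we take $k=N-2$, so that $\|h\|_{H^N}^2 \lesssim \|\mathcal{W}(h)\|_{H^{N-2}}^2 + \|\delu\p_T h\|_{H^{N-2}}^2 + \tau^2\|u^a\delu\bnab_a h\|_{H^{N-2}}^2$; for $v$ we take $k=N-3$, giving $\|\Sigma\|_{H^{N-1}}^2 \lesssim \|\mathcal{W}(v)\|_{H^{N-3}}^2 + \|\delu\p_T v\|_{H^{N-3}}^2 + \tau^2\|u^a\delu\bnab_a v\|_{H^{N-3}}^2$. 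The first term in each case is handled directly by Lemma \ref{lem:mathcalW-EoM}, which bounds $\|F_1\|_{H^{N-2}}^2 + \|F_2\|_{H^{N-3}}^2$ by $E^g_{N-1} + \varepsilon e^{-2\lambda T}(\|g-\gamma\|_{H^N}^2 + \|\Sigma\|_{H^{N-1}}^2) + \Lambda(T)^2$. The crucial feature is the smallness factor $\varepsilon e^{-2\lambda T}$ multiplying the top-order norms, which will let us absorb those terms back into the left-hand side at the end.

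The second and third terms are where the $\delu$-boosted energy enters. First I would observe that $\delu\p_T V$ and $u^a\delu\bnab_a V$ can be re-expressed, up to commutators, in terms of $\delu\mL^s$ applied to $V$ and lower-order pieces: since $\|\delu\p_T h\|_{H^{N-2}} \lesssim \|\p_T\delu h\|_{H^{N-2}} + \|[\delu,\p_T]h\|_{H^{N-2}}$ and similarly $\delu h$ itself is controlled in $H^{N-1}$ by Lemma \ref{lem:delu-h-Sigma} in terms of $\Edelu$, $E^g_{N-1}$ and $\Lambda(T)^4$. More directly, elliptic regularity (Lemma \ref{lem:elliptic-regularity-mL}) reduces $\|\delu\p_T h\|_{H^{N-2}}$ and $|\tau|\|u^a\delu\bnab_a h\|_{H^{N-2}}$ to $L^2$-norms of $\mL^{\ell-1}$ applied to these quantities, which after commuting $\mL^{\ell-1}$ through $\delu$ (using Corollary \ref{corol:Com-delu-mL-estimate}, with the commutators producing only $\varepsilon e^{\cdots T}$-small terms and $\Lambda(T)$-terms) become controlled by $\|\delu\mL^{\ell-1}\p_T h\|_{L^2}$ etc.; but $\p_T h$ is itself governed by the equation of motion \eqref{eq:EoM-h2}, so $\p_T h = wv + 2\hN g - \mcr{L}_X g$, and substituting this in expresses everything in terms of $\delu\mL^{\ell-1}v$, $\delu\mL^{\ell-1}$ acting on $h$, and the lower-order matter/shift terms. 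The terms $\|\delu\mL^{\ell-1}h\|$, $\|\delu\mL^{\ell-1}v\|$ (appropriately placed) are exactly what $\Edelu$ controls via Lemma \ref{lem:Properties-of-Egdelu}. Terms involving $\hN$, $\mcr{L}_X g$ and the fluid factor $\tau u^a$ contribute $\Lambda(T)$ or $\varepsilon e^{-T}$ times $\|g-\gamma\|_{H^N}$-type norms, which are again absorbable. The analogous computation for $v$ proceeds using the $\p_T v$ equation in \eqref{eq:EoM-hv} and Lemma \ref{lem:delu-h-Sigma} for $\|\delu\Sigma\|_{H^{N-2}}$.

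The main obstacle I anticipate is bookkeeping the derivative counts carefully so that no estimate is invoked above its stated range: the $\delu$ operator effectively costs one derivative of $u$, and $u$ is only controlled in $H^{N-1}$ with an $e^{\mu T}$-loss, so one must make sure that in every product $\tau u^a \ast (\text{derivative of } V)$ the factor of $|\tau|\|u\|_{H^{N-1}} \lesssim \varepsilon e^{(-1+\mu)T}$ is available to absorb the accompanying top-order geometric norm, and that the commutator estimates (Lemmas \ref{lem:Commutators-with-Delu}, \ref{lem:Fluid-Commutators}, Corollary \ref{corol:Com-delu-mL-estimate}) are applied with $k \leq N-3$ as required. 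A secondary subtlety is that Lemma \ref{lem:delu-h-Sigma} itself outputs $\varepsilon e^{\max\{-1+\mu,-\lambda\}T}(\|\Sigma\|_{H^{N-1}}^2 + \|g-\gamma\|_{H^N}^2)$ on its right-hand side, so when we feed it back in we must be sure the cumulative coefficient of the top-order norms stays $\lesssim \varepsilon$ (uniformly in $T$, using $\mu$ small), so that the standard absorption argument closes.

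Concretely, after collecting all contributions we arrive at an inequality of the schematic form
\eq{\notag
\| g-\gamma\|_{H^N}^2 +\| \Sigma\|_{H^{N-1}}^2 \lesssim \Edelu + E^g_{N-1} + \Lambda(T)^2 + C\varepsilon\big(\| g-\gamma\|_{H^N}^2 +\| \Sigma\|_{H^{N-1}}^2\big),
}
and choosing $\varepsilon$ small enough that $C\varepsilon < 1/2$ we absorb the last term into the left-hand side, noting $\Lambda(T)^4 \lesssim \Lambda(T)^2$ under the bootstrap assumptions, which yields the claim.
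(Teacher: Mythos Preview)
Your proposal is essentially correct and follows the same route as the paper: apply Proposition~\ref{prop:Elliptic-Est} to $h$ with $k=N-2$ and to $v$ with $k=N-3$, handle the $\mathcal{W}$-terms by Lemma~\ref{lem:mathcalW-EoM}, treat $\delu\p_T h$ and $\delu\p_T v$ by substituting the equations of motion \eqref{eq:EoM-hv} and invoking Lemma~\ref{lem:delu-h-Sigma} and Corollary~\ref{corol:Est-delu-Fh-Fv}, and close by absorbing the $\varepsilon(\|g-\gamma\|_{H^N}^2+\|\Sigma\|_{H^{N-1}}^2)$ term. The only minor wrinkle is your intermediate suggestion to reduce $\|\delu\p_T h\|_{H^{N-2}}$ to $\|\mL^{\ell-1}(\cdot)\|_{L^2}$ via elliptic regularity, which is one derivative short ($2(\ell-1)=N-3$); the paper bypasses this by substituting the equation of motion for $\p_T h$ \emph{first} and then estimating each resulting piece ($\delu\Sigma$, $\delu N$, $\delu X$, $\delu\bnab h$, $\delu F_h$) directly in $H^{N-2}$, which is the cleaner way to execute the step you already identified.
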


\begin{proof}
From Proposition \ref{prop:Elliptic-Est}, 
\eq{\alg{\notag
\|g-\gamma\|_{H^N}^2
&\lesssim  \| \mathcal{W}(h) \|_{H^{N-2}}^2+ \|\delu \p_T (h) \|_{H^{N-2}}^2
+ |\tau|^2 \|u^a \delu \bnab_a (h) \|_{H^{N-2}}^2.
}}
The first term here is treated using Lemma \ref{lem:mathcalW-EoM}. 
For the second term, we begin by using the commutator estimates in Lemma \ref{lem:Commutators-with-Delu}, and the $\p_T h$ and $\delu h$ estimates in Lemma \ref{lem:Estimates-pT-h-v} and Lemma \ref{lem:delu-h-Sigma} respectively, to find
\eq{\alg{\label{eq:8313-a}
\|\delu \bnab (h) \|_{H^{N-2}}^2
&\lesssim \| \delu h\|_{H^{N-1}}^2 
+ \varepsilon^2 e^{(-2+2\mu) T}  \| h\|_{H^{N-1}}^2  
+\Lambda(T)^2\|\p_T h \|_{H^{N-2}} ^2
\\
&\lesssim \Edelu
+\varepsilon^2 e^{2\max\{-1+\mu,-\lambda\} T}E^g_{N-1} + \Lambda(T)^3,
}}
Next, using the expression for $\p_T h$ given in \eqref{eq:EoM-hv}, together with Lemma \ref{lem:delu-N-X},  Corollary \ref{corol:delu-nab-estimates},  Lemma \ref{lem:delu-h-Sigma} and Corollary \ref{corol:Est-delu-Fh-Fv}, we find
\eq{\alg{\notag
\| \delu \p_T (h) \|_{H^{N-2}}^2
&\lesssim
\| \delu \Sigma\|_{H^{N-2}}^2 + \| \Sigma \|_{H^{N-2}}^2 \| \delu N \|_{H^{N-2}}^2 + \| \delu X \|_{H^{N-2}}^2\| \bnab h \|_{H^{N-2}}^2
\\&\quad
+ \| X\|_{H^{N-2}}^2\| \delu \bnab h \|_{H^{N-2}}^2 + \| \delu F_h \|_{H^{N-2}}^2.
\\
&\lesssim
\Edelu
+ E^g_{N-1}
+ \varepsilon e^{\max\{-1+\mu,-\lambda\} T}\big(\| \Sigma\|_{H^{N-1}}^2 + \| g-\gamma\|_{H^N}^2\big)
+ \Lambda(T)^2.
}}
Using again \eqref{eq:8313-a} we find 
\eq{\alg{\notag
|\tau|^2 \| u^a \delu \bnab(h) \|_{H^{N-2}}^2
&\lesssim |\tau|^2 \| u \|_{H^{N-2}}^2 \big(\Edelu
+\varepsilon^2 e^{\max\{-2+2\mu,-2\lambda\} T}E^g_{N-1} + \Lambda(T)^3\big)
\\&
\lesssim
\Edelu + E^g_{N-1} + \Lambda(T)^4.
}}
Putting this all together
\eq{\alg{\label{eq:h-toporder-101}
\|g-\gamma\|_{H^N}^2 
&\lesssim
\Edelu + E^g_{N-1}
+ \varepsilon \big(\| \Sigma\|_{H^{N-1}}^2 + \| g-\gamma\|_{H^N}^2\big) 
+ \Lambda(T)^2. 
}}

We follow the same steps for the $\Sigma$ estimate. 
From Proposition \ref{prop:Elliptic-Est}, 
\eq{\alg{\notag
\|\Sigma\|_{H^{N-1}}^2
&\lesssim  \| \mathcal{W}(v) \|_{H^{N-3}}^2+ \|\delu \p_T (v) \|_{H^{N-3}}^2
+ |\tau|^2 \|u^a \delu \bnab_a (\Sigma) \|_{H^{N-3}}^2.
}}
The first term here is treated using Lemma \ref{lem:mathcalW-EoM}. For the second term, we begin by using Lemma \ref{lem:Estimates-pT-h-v}, Lemma \ref{lem:delu-h-Sigma} and the commutator estimates in Lemma \ref{lem:Commutators-with-Delu} to show
\eq{\alg{\label{eq:8313-b}
\|\delu \bnab (\Sigma) \|_{H^{N-3}}^2
&\lesssim \| \delu \Sigma\|_{H^{N-2}}^2 + 
\varepsilon^2 e^{(-2+2\mu) T}  \| \Sigma\|_{H^{N-2}}^2  +\Lambda(T)^2 \|\p_T \Sigma \|_{H^{N-3}}^2
\\
&\lesssim
\Edelu
+ E^g_{N-1}
+ \varepsilon e^{\max\{-1+\mu,-\lambda\} T}\big(\| \Sigma\|_{H^{N-1}}^2 + \| g-\gamma\|_{H^N}^2\big)
+ \Lambda(T)^2.
}}
Next, by Lemma \ref{lem:Estimates-pT-h-v}, Lemma \ref{lem:delu-h-Sigma} and the commutator estimate of Lemma \ref{lem:Fluid-Commutators}, we note
\eq{\alg{\notag
\| \delu \mL h\|_{H^{N-3}}^2
&\lesssim
\| \delu h\|_{H^{N-1}}^2 + \|[ \delu,\mL] h\|_{H^{N-3}}^2
\\&
\lesssim
\Edelu
+\varepsilon^2 e^{\max\{-2+2\mu,-2\lambda\} T}E^g_{N-1} 
+ \Lambda(T)^2 \big(\| \Sigma\|_{H^{N-1}}^2 + \| g-\gamma\|_{H^N}^2\big)
+ \Lambda(T)^4.
}}
We can now use the expression for $\p_T v$ given in \eqref{eq:EoM-hv} and bring together these previous estimates:
\eq{\alg{\notag
\| \delu \p_T (v) \|_{H^{N-3}}^2
&\lesssim
\| \delu \Sigma\|_{H^{N-3}}^2 + \| \mL h \|_{H^{N-3}}^2 \| \delu N \|_{H^{N-3}}^2 + \| \delu \mL h\|_{H^{N-3}}^2
\\&\quad
+ \| \delu X \|_{H^{N-3}}^2\| \bnab \Sigma \|_{H^{N-3}}^2
+ \| X\|_{H^{N-3}}^2\| \delu \bnab \Sigma \|_{H^{N-3}}^2 + \| \delu F_v \|_{H^{N-3}}^2
\\
&\lesssim
\| \delu \Sigma\|_{H^{N-3}}^2 + \| \delu \mL h\|_{H^{N-3}}^2
 + \| \delu F_v \|_{H^{N-3}}^2 + \Lambda(T)^2
\\
&\lesssim
\Edelu
+\varepsilon e^{\max\{-1+\mu,-\lambda\} T}\big(\| \Sigma\|_{H^{N-1}}^2 + \| g-\gamma\|_{H^N}^2\big) + \Lambda(T)^2. 
}}
In the above we used Lemma \ref{lem:delu-N-X},  Corollary \ref{corol:delu-nab-estimates}, Lemma \ref{corol:Est-delu-Fh-Fv} and Lemma \ref{lem:delu-h-Sigma}.
Using again \eqref{eq:8313-b} and Lemma \ref{lem:delu-h-Sigma}, we find
\eq{\alg{\notag
|\tau|^2 \| u^a \delu \bnab(\Sigma) \|_{H^{N-3}}^2
\lesssim
\Edelu + E^g_{N-1}
+ \varepsilon e^{\max\{-1+\mu,-\lambda\} T}\Lambda(T) \big(\| \Sigma\|_{H^{N-1}}^2 + \| g-\gamma\|_{H^N}^2\big) 
+\Lambda(T)^3.
}}
Finally, putting this all together gives
\eq{\alg{\label{eq:Sigma-toporder-101}
\|\Sigma\|_{H^{N-1}}^2
&\lesssim
\Edelu + E^g_{N-1}
+ \varepsilon \big(\| \Sigma\|_{H^{N-1}}^2 + \| g-\gamma\|_{H^N}^2\big) + \Lambda(T)^2.
}}
The conclusion then follows by adding the estimates \eqref{eq:h-toporder-101} and \eqref{eq:Sigma-toporder-101} together and taking $\varepsilon$ sufficiently small so that we can absorb the $\varepsilon (\| \Sigma\|_{H^{N-1}}^2 + \| g-\gamma\|_{H^N}^2)$ term onto the left hand side. 
\end{proof}

We now provide new estimates on dust derivatives acting on our variables $N, X, g, \Sigma$ and, for the latter two variables, apply Corollary \ref{corol:Coercive-top-order-geom}.
\begin{corol}
\label{corol:delu-nab-estimates}
For $I$ a multi-index,
\eq{\alg{\notag
\sum_{|I|\leq N-1}\| \delu \nabla^I X\|_{L^2}+ \| \delu \nabla^I \hN\|_{L^2}
&\lesssim 
\Lambda(T), 
\\
\sum_{|I|\leq N-1}\| \delu \nabla^I (g-\gamma)\|_{L^2}
+ \sum_{|I|\leq N-2}\| \delu \nabla^I \Sigma\|_{L^2}
&\lesssim 
(\Edelu)^{1/2} + (E^g_{N-1})^{1/2}+ \Lambda(T).
}}
The same estimates hold with $\nabla$ replaced by $\bnab$. 
\end{corol}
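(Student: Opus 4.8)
The plan is to commute $\delu$ past spatial derivatives using the commutator estimates from Lemma~\ref{lem:Commutators-with-Delu}, reduce everything to the already-established estimates on $\delu$ acting on the undifferentiated (or once-differentiated) variables, and only at the very end invoke Corollary~\ref{corol:Coercive-top-order-geom} to express the resulting top-order geometric Sobolev norms in terms of $\Edelu + E^g_{N-1} + \Lambda(T)^2$. Concretely, for each variable $V \in \{X, \hN, g-\gamma, \Sigma\}$ and each multi-index $I$ with $|I|$ in the allowed range, I would write
$$
\delu \nabla^I V = \nabla^I \delu V + [\delu, \nabla^I] V,
$$
so that $\|\delu \nabla^I V\|_{L^2} \lesssim \|\delu V\|_{H^{|I|}} + \|[\delu,\nabla^I]V\|_{L^2}$. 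The commutator term is handled by iterating the single-commutator bound $\|[\delu,\nabla]V\|_{H^k} \lesssim \varepsilon e^{\max\{-1+\mu,-\lambda\}T}\|V\|_{H^{k+1}} + \Lambda(T)\|\p_T V\|_{H^k}$ from Lemma~\ref{lem:Commutators-with-Delu} (telescoping $\nabla^I$ one derivative at a time, as in \eqref{eq:Commutator-nabla-nablas}), which produces terms of the form $\varepsilon e^{-\nu T}\|V\|_{H^{|I|}}$ and $\Lambda(T)\|\p_T V\|_{H^{|I|-1}}$ with $\nu>0$.

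For the lapse and shift, the right-hand side contributions are then bounded directly: $\|\delu N\|_{H^{N-1}} + \|\delu X\|_{H^{N-1}} \lesssim \Lambda(T)$ by Lemma~\ref{lem:delu-N-X}, the commutator term $\varepsilon e^{-\nu T}(\|\hN\|_{H^{N-1}} + \|X\|_{H^{N-1}}) \lesssim \varepsilon e^{-\nu T}\Lambda(T)$ by the definition of $\Lambda(T)$, and $\Lambda(T)\|\p_T N\|_{H^{N-2}} + \Lambda(T)\|\p_T X\|_{H^{N-2}} \lesssim \Lambda(T)^2$ by the bootstrap assumptions. All of these are $\lesssim \Lambda(T)$, giving the first claimed inequality. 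For the geometric variables, I would use $\|\delu (g-\gamma)\|_{H^{N-1}}^2 \lesssim \Edelu + \varepsilon^2 e^{2\max\{-1+\mu,-\lambda\}T} E^g_{N-1} + \Lambda(T)^4$ and $\|\delu \Sigma\|_{H^{N-2}}^2 \lesssim \Edelu + E^g_{N-1} + \varepsilon e^{\max\{-1+\mu,-\lambda\}T}(\|\Sigma\|_{H^{N-1}}^2 + \|g-\gamma\|_{H^N}^2) + \Lambda(T)^2$ from Lemma~\ref{lem:delu-h-Sigma}, while the commutator terms contribute $\varepsilon e^{-\nu T}(\|g-\gamma\|_{H^N} + \|\Sigma\|_{H^{N-1}})$ plus $\Lambda(T)\|\p_T(g-\gamma)\|_{H^{N-2}} + \Lambda(T)\|\p_T\Sigma\|_{H^{N-3}} \lesssim \Lambda(T)^2$ via Lemma~\ref{lem:Estimates-pT-h-v}. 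At this point the only problematic quantities on the right are $\|g-\gamma\|_{H^N}$ and $\|\Sigma\|_{H^{N-1}}$, which I eliminate by Corollary~\ref{corol:Coercive-top-order-geom}, absorbing the resulting $\varepsilon(\Edelu + E^g_{N-1} + \Lambda(T)^2)$ harmlessly.

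The main obstacle — and the reason this is stated as its own corollary rather than folded into Lemma~\ref{lem:delu-h-Sigma} — is the bookkeeping around derivative counts: one must check that commuting $\delu$ past up to $N-1$ spatial derivatives on $(g-\gamma)$, and up to $N-2$ on $\Sigma$, never forces $\|u\|_{H^k}$ above $k = N-1$ (the bootstrap ceiling) nor $\|\p_T\Sigma\|_{H^k}$ or $\|\p_T h\|_{H^k}$ above the ranges allowed in Lemma~\ref{lem:Estimates-pT-h-v}. The telescoped commutator always lands a derivative onto a lapse/shift/velocity coefficient and removes one from $V$, so the worst case is $\|u\|_{H^{N-1}}\|\nabla^{N-2}V\|$-type terms, which is exactly at the boundary and is fine; the $\p_T V$ pieces sit at one order lower, comfortably inside range. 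The statement that the same bounds hold with $\bnab$ in place of $\nabla$ follows because $\nabla^I V = \bnab^I V + (\text{lower order}) \ast \Upsilon$-type terms with $\|\Upsilon\|_{H^{N-1}} \lesssim \|g-\gamma\|_{H^N}$ by \eqref{eq:Estimate-Upsilon}, and after applying $\delu$ the extra terms are again controlled by $\Edelu + E^g_{N-1} + \Lambda(T)$ using Corollary~\ref{corol:Estimates-delu-h-v} and Corollary~\ref{corol:Coercive-top-order-geom}.
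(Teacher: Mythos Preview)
Your approach is essentially identical to the paper's: you decompose $\delu\nabla^I V = \nabla^I\delu V + [\delu,\nabla^I]V$, telescope the commutator via Lemma~\ref{lem:Commutators-with-Delu}, control $\|\delu V\|$ at top order by Lemma~\ref{lem:delu-N-X} (lapse/shift) or Lemma~\ref{lem:delu-h-Sigma} (geometric variables), handle the $\p_T V$ remainders by Lemma~\ref{lem:Estimates-pT-h-v}, and close with Corollary~\ref{corol:Coercive-top-order-geom}. The paper does exactly this, with the minor additional remark that the scalar nature of the lapse ensures the commutator produces $\|\hN\|$ rather than $\|N\|$ terms; your slight over-counting of derivatives in the commutator (writing $\|g-\gamma\|_{H^N}$ where $H^{N-1}$ would suffice) is harmless since you absorb it via Corollary~\ref{corol:Coercive-top-order-geom} anyway.
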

\begin{proof}
By Lemma \ref{lem:delu-h-Sigma} and Corollary \ref{corol:Coercive-top-order-geom}
\eq{\alg{\notag
\|\delu h \|_{H^{N-1}} + \|\delu\Sigma\|_{H^{N-2}}
&\lesssim 
(\Edelu)^{1/2} + (E^g_{N-1})^{1/2}
+  \Lambda(T).
}}
Next, let $\nabla^{I} = \nabla_{a_1} \ldots \nabla_{a_{k}}$ where $k\define|I|\leq N-1$ and let $V$ be a $(0,2)$-tensor on $M$. Then, by \eqref{eq:pt-nabla}, \eqref{eq:est-pt-Gamma} and Lemma \ref{lem:Commutators-with-Delu},
\eq{\alg{\notag
\|[ \delu, \nabla^I] V \|_{L^2} 
&\lesssim
\|[ \delu, \nabla_{a_1}] \nabla_{a_2} \ldots \nabla_{a_k} V \|_{L^2}  
+ \cdots
+ \|[ \delu, \nabla] V \|_{H^{k-1}} 
\\&\lesssim
\varepsilon e^{\max\{-1+\mu,-\lambda \}T} \| V\|_{H^{k}} 
+ \Lambda(T)\| \p_T V\|_{H^{k-1}} 
+ \Lambda(T) \| \p_T \Gamma(g) \|_{H^{N-3}} \| V \|_{H^{k-2}}.
}}
This implies
\eq{\alg{\notag
\| \delu \nabla^I V \|_{L^2}
\lesssim
\| \delu V \|_{H^k}  
+ \varepsilon e^{\max\{-1+\mu,-\lambda\} T}\| V\|_{H^{k}} 
+ \Lambda(T)\| \p_T V\|_{H^{k-1}} .
}}
A simple check shows that when considering the appropriate commutator expression (see \eqref{eq:Commutators-with-Delu}) on the scalar lapse we will pick up $\| \hN \|_{H^k}$ terms and not $\| N \|_{H^k}$. 
The result then follow, using Lemma \ref{lem:delu-N-X} and Lemma \ref{lem:Estimates-pT-h-v}. 
\end{proof}

\begin{lem}
\label{lem:deluV-H2-H1}
Let $V$ be a $(0,2)$-tensor on $M$ and let $p\in\{1,2\}$. Then 
\eq{\alg{\notag
\| \delu V\|_{H^p} &\lesssim
\sum_{|I|\leq p} \| \delu \nabla^I V \|_{L^2}
+ \varepsilon e^{\max\{-1+\mu,-\lambda\} T}\| V\|_{H^{1}}
+\Lambda(T)\|\p_T V\|_{H^{p-1}}.
}}
\end{lem}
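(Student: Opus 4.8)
The plan is to reduce the statement to the first-order commutator estimates of Lemma~\ref{lem:Commutators-with-Delu}. Since $\|\delu V\|_{H^p}$ is, up to the usual norm equivalences, controlled by $\sum_{|I|\le p}\|\nabla^I\delu V\|_{L^2}$, I would for each multi-index $I$ with $|I|\le p$ write
\[
\nabla^I\delu V = \delu\,\nabla^I V + [\nabla^I,\delu]V ,
\]
so that the first term contributes exactly the sum $\sum_{|I|\le p}\|\delu\nabla^I V\|_{L^2}$ on the right-hand side, and everything reduces to bounding the commutators $[\nabla^I,\delu]V$ for $|I|\le p$ (the $|I|=0$ term vanishing).

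Next I would expand each commutator telescopically, exactly as in \eqref{eq:Commutator-nabla-nablas}, as a sum of terms of the form $\nabla^J\big([\nabla_b,\delu]\nabla^K V\big)$ with $|J|+|K|=|I|-1\le p-1$, and insert the explicit identity \eqref{eq:Commutators-with-Delu} for $[\delu,\nabla_b]W$ with $W=\nabla^K V$. Each of the four pieces of \eqref{eq:Commutators-with-Delu} is then estimated directly: the $u^0[\p_T,\nabla_b]W$ piece produces, via \eqref{eq:pt-nabla} and \eqref{eq:est-pt-Gamma}, the $O(\varepsilon e^{-\lambda T})$-small coefficient $\p_T\Gamma(g)$ with no extra derivative falling on $W$; the $\tau u^c[\bnab_c,\nabla_b]W$ and $\tau\,\nabla_b u^c\cdot\bnab_c W$ pieces each carry an explicit power of $\tau$, which combined with the bootstrap bound $|\tau|\|u\|_{H^{N-1}}\lesssim\varepsilon e^{(-1+\mu)T}$ and the Sobolev embedding $H^2\hookrightarrow L^\infty$ supplies the factor $\varepsilon e^{\max\{-1+\mu,-\lambda\}T}$; and the $\nabla_b\hu\cdot\p_T W$ piece is controlled using $\|\nabla\hu\|_{H^{N-2}}\lesssim\Lambda(T)$ from \eqref{eq:est-u0}, giving the $\Lambda(T)\|\p_T V\|_{H^{p-1}}$ contribution. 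Throughout one trades $\bnab$ for $\nabla$ via \eqref{eq:Estimate-Upsilon} and Lemma~\ref{lem:Commutator-nabla-bnab}, uses boundedness of $\Riem[\gamma]$ and its derivatives, and relies on $N\ge 6$ so that all Sobolev indices produced stay inside the admissible ranges of the cited estimates. Summing over $|I|\le p$ then yields the claim.

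The step I expect to be the main (and really the only) obstacle is the careful bookkeeping of Sobolev orders in the case $p=2$. The $\tau u^c\bnab_c$ part of $\delu$ is the one that could a priori cost a derivative relative to $V$, so one must check, term by term in the telescoping expansion, that every occurrence of a high spatial derivative of $V$ in the commutator is either (i) multiplied by an explicit factor of $\tau$, hence by the small quantity $|\tau|\|u\|_{H^{N-1}}$, or (ii) accompanied only by the smallness-carrying coefficients $\p_T\Gamma(g)$, $\Riem[\gamma]$ or $\nabla\hu$, with at most one derivative left on $V$. No individual estimate is difficult; the work lies entirely in organising the sum so that one of these two mechanisms disposes of each term, and in matching the resulting powers of $e^{-T}$ and $e^{\mu T}$ to the form $\varepsilon e^{\max\{-1+\mu,-\lambda\}T}$ and $\Lambda(T)$ on the right-hand side.
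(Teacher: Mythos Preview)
Your approach is correct and is exactly what the paper does: its entire proof reads ``A straight forward application of commutator identities,'' which is precisely your plan of writing $\nabla^I\delu V=\delu\nabla^I V+[\nabla^I,\delu]V$, telescoping the commutator, and invoking Lemma~\ref{lem:Commutators-with-Delu} (together with \eqref{eq:pt-nabla}, \eqref{eq:est-pt-Gamma}, \eqref{eq:est-u0}) term by term. Your anticipated obstacle is the right one to flag, but it is indeed only bookkeeping, not a genuine difficulty.
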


\begin{proof}
A straight forward application of commutator identities. 
\end{proof}

%%%%%%%%%%%%%%%%%%%%%%%%%%%%%%%%%%%%%%%%
\section{Lapse and Shift Estimates}\label{sec:LapseShift}
In this section we first establish the basic estimates on the lapse, shift and their time derivatives using elliptic estimates and general formula presented in \cite{AF20}. The most exciting results lie in the novel top-order lapse and shift estimates given in Section \ref{sec:ExtraLapseShiftEst}. 

\begin{lem}
\label{lem:lapse-shift}
For $3\leq k \leq N$,
\eq{\alg{\notag
\| \hN \|_{H^k} & \lesssim	|\tau| \| \rho \|_{H^{k-2}} + \varepsilon^2 e^{-2\lambda T},
\\
\| X \|_{H^k} & \lesssim |\tau| \| \rho \|_{H^{k-3}}
	+ \varepsilon^2 e^{\max\{-2\lambda, -2+\mu\}T}.
}}
\end{lem}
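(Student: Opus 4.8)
The plan is to prove both estimates by standard elliptic regularity applied to the lapse and shift equations \eqref{eq:EoM-lapse-shift}, using the bootstrap assumptions \eqref{Bootstraps} and the matter estimates from Lemma \ref{lem:matter-first-estimates} to control the source terms. Since $\mL$-type operators are involved one could equally well phrase this via Lemma \ref{lem:elliptic-regularity-mL}, but the cleanest route is to use the invertibility of $\Delta - \tfrac13$ on scalars and of $\Delta + \Ric[g]$ on vector fields — both follow from the negativity of the Einstein geometry and a perturbation argument off $\gamma$, exactly as in \cite{AM11, AF20, F16-2} — together with the standard elliptic gain of two derivatives.

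For the lapse, I would rewrite the first equation in \eqref{eq:EoM-lapse-shift} as an equation for $\hN = N/3 - 1$. Since $(\Delta - \tfrac13)N = N(|\Sigma|_g^2 - \tau\eta) - 1$ and the background value $N = 3$ solves $(\Delta - \tfrac13)\cdot 3 = -1$, subtracting gives $(\Delta - \tfrac13)\hN = \tfrac13 N(|\Sigma|_g^2 - \tau\eta)$. Elliptic regularity for $\Delta - \tfrac13$ (invertible with trivial kernel for $g$ near $\gamma$) then yields, for $3 \le k \le N$,
\eq{\notag
\|\hN\|_{H^k} \lesssim \|N(|\Sigma|_g^2 - \tau\eta)\|_{H^{k-2}} \lesssim \|\Sigma\|_{H^{k-2}}^2 + |\tau|\|\eta\|_{H^{k-2}},
}
using Lemma \ref{lem:SobProdEst} and $\|N\|_{L^\infty} \lesssim 1$. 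The first term is bounded by $\varepsilon^2 e^{-2\lambda T}$ from the bootstrap assumptions (note $k-2 \le N-2 \le N-1$), and the second by $|\tau|\|\eta\|_{H^{N-2}} \lesssim \Lambda(T) \lesssim \varepsilon e^{-T}$ via Lemma \ref{lem:matter-first-estimates}; but we want the sharper statement with $|\tau|\|\rho\|_{H^{k-2}}$ explicit, so I instead keep $|\tau|\|\eta\|_{H^{k-2}}$ and expand $\eta$ using Definition \ref{Def-resc-matter}: $\eta = \rho(u^0)^2 N^2 + \rho g_{ab}(u^0X^a + \tau u^a)(u^0 X^b + \tau u^b)$, so $|\tau|\|\eta\|_{H^{k-2}} \lesssim |\tau|\|\rho\|_{H^{k-2}} + (\text{terms with extra }\varepsilon e^{-T}\text{ or }\varepsilon e^{\mu T}\text{ weights})$, the latter absorbed into $\varepsilon^2 e^{-2\lambda T}$ after using the bootstraps and $\lambda < 1$. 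This gives the claimed lapse bound.

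For the shift, apply elliptic regularity for $\Delta + \Ric[g]$ (invertible on one-forms / vector fields for $g$ near the negative Einstein $\gamma$, again trivial kernel) to the second equation in \eqref{eq:EoM-lapse-shift}:
\eq{\notag
\|X\|_{H^k} \lesssim \|2\nabla_b N\,\Sigma^{ba} - \nabla^a\hN + 2N\tau^2 \jmath^a - (2N\Sigma^{bc} - \nabla^b X^c)\Upsilon^a_{bc}\|_{H^{k-2}}.
}
The term $\nabla\hN$ is bounded by $\|\hN\|_{H^{k-1}} \le \|\hN\|_{H^k}$, which by the lapse estimate just proven is $\lesssim |\tau|\|\rho\|_{H^{k-2}} + \varepsilon^2 e^{-2\lambda T}$ (here one needs $k-1 \ge 3$, i.e. $k \ge 4$; the case $k=3$ needs a small separate check using that $\|\hN\|_{H^3}$ is already controlled). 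The term $\nabla N \ast \Sigma$ is quadratic, bounded by $\|\hN\|_{H^{k-1}}\|\Sigma\|_{H^{k-1}} \lesssim \varepsilon^2 e^{-(1+\lambda)T}$, absorbed. The term $\tau^2\|\jmath\|_{H^{k-2}} \lesssim \varepsilon e^{(-1+\mu)T}\Lambda(T) \lesssim \varepsilon^2 e^{(-2+\mu)T}$ by Lemma \ref{lem:matter-first-estimates}, which is exactly the $\varepsilon^2 e^{(-2+\mu)T}$ appearing in the stated max. The $\Upsilon$-term splits into $N\Sigma\ast\Upsilon \lesssim \|\Sigma\|_{H^{k-1}}\|g-\gamma\|_{H^{k-1}} \lesssim \varepsilon^2 e^{-2\lambda T}$ (using $\|\Upsilon\|_{H^{k-2}} \lesssim \|g-\gamma\|_{H^{k-1}}$ from \eqref{eq:Estimate-Upsilon}) and $\nabla X \ast \Upsilon \lesssim \|X\|_{H^{k-1}}\|g-\gamma\|_{H^{k-1}}$, the latter being a term of the form $\varepsilon e^{-\lambda T}\|X\|_{H^k}$ which for $\varepsilon$ small is absorbed onto the left-hand side. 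Collecting everything and using $\lambda < 1$ to compare exponents gives the stated shift bound.

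The main obstacle — or rather the only point requiring care — is bookkeeping the $\tau$-weights and the interplay between the $|\tau|\|\rho\|$ term (which is kept explicit because $\rho$ has no decay, only $\|\rho\|_{H^{N-2}} \lesssim \varepsilon$) and the genuinely decaying quadratic error terms, making sure every error is dominated by $\varepsilon^2 e^{-2\lambda T}$ for the lapse and by $\varepsilon^2 e^{\max\{-2\lambda,-2+\mu\}T}$ for the shift; in particular one must confirm that $\tau^2\|\jmath\|$ is the term responsible for the $-2+\mu$ exponent and that no worse exponent ($-\lambda + 2\mu$ etc.) sneaks in once $\mu \ll 1$ and $\lambda < 1$ are used. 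The absorption of the $\varepsilon e^{-\lambda T}\|X\|_{H^k}$ self-improving term into the left-hand side, and the boundary case $k=3$ for the shift (where the lapse gradient sits at $H^2$), are the two spots where one should not be fully cavalier, but both are routine.
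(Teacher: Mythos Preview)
Your proposal is correct and follows essentially the same approach as the paper: apply elliptic regularity to the lapse and shift equations \eqref{eq:EoM-lapse-shift} and bound the source terms using the bootstrap assumptions and Lemma \ref{lem:matter-first-estimates}. The paper's proof is terser (it simply records $\|\hN\|_{H^k} \lesssim \|\Sigma\|_{H^{k-2}}^2 + |\tau|\|\eta\|_{H^{k-2}}$ and $\|X\|_{H^k} \lesssim \|\Sigma\|_{H^{k-2}}^2 + \|g-\gamma\|_{H^{k-1}}^2 + |\tau|\|\eta\|_{H^{k-3}} + \tau^2\|N\jmath\|_{H^{k-2}}$ and then invokes Lemma \ref{lem:matter-first-estimates}), whereas you spell out the absorption of the $\nabla X \ast \Upsilon$ term and the $k=3$ boundary case --- but these are routine and the underlying argument is the same.
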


\begin{proof}
Recall from \eqref{eq:EoM-lapse-shift} the lapse equation of motion
\eq{\notag
(\Delta - \tfrac13)N = N(|\Sigma|_g^2 + \tau \eta) - 1.
}
By elliptic regularity for the standard Laplacian $\Delta$ this implies
\eq{\alg{\notag
\| \hN \|_{H^k} & \leq	
	C \big( \| \Sigma \|^2_{H^{k-2}} 
	+ |\tau| \| \eta \|_{H^{k-2}} \big),
}}
and the desired result follows by Lemma \ref{lem:matter-first-estimates}. 

Similarly from the shift equation of motion \eqref{eq:EoM-lapse-shift}, we find 
\eq{\alg{\notag
\| X \|_{H^k} & \leq 
	C \big( 	\| \Sigma \|_{H^{k-2}}^2
	+ \| g-\gamma\|_{H^{k-1}}^2
	+ |\tau| \| \eta \|_{H^{k-3}}
	+ \tau^2 \| N \jmath \|_{H^{k-2}}
	\big),
}}
and the desired result follows by Lemma \ref{lem:matter-first-estimates}. 
\end{proof}

\begin{lem}\label{lem:delT-lapse-shift}
For $4\leq k \leq N-1$,
\eq{\alg{\notag
\| \p_T N \|_{H^k}  + \| \p_T X \|_{H^k} 
&\lesssim 
	\| \hN\|_{H^k}+\| X\|_{H^k} 
+ |\tau|\| \rho \|_{H^{k-2}} + \varepsilon^2 e^{\max\{-2\lambda, -2+\mu\}T}.
}}
\end{lem}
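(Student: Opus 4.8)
The plan is to differentiate the elliptic equations \eqref{eq:EoM-lapse-shift} for the lapse and shift in $T$ and then apply elliptic regularity to the resulting equations, following the general formulae recorded in \cite{AF20}. Starting from the lapse equation $(\Delta-\tfrac13)N = N(|\Sigma|_g^2-\tau\eta)-1$, I apply $\p_T$, commute $\p_T$ through $\Delta=g^{ab}\nabla_a\nabla_b$ using \eqref{eq:pt-nabla}, \eqref{eq:Commutator-pT-nablas}, the bound \eqref{eq:est-pt-Gamma} on $\p_T\Gamma(g)$ and the $\p_T g$ estimate of Lemma \ref{lem:Estimates-pT-h-v}, and split $\Delta=\hat\Delta+(\Delta-\hat\Delta)$ to isolate the fixed invertible operator $\hat\Delta-\tfrac13$. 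Using $\p_T\tau=-\tau$ this produces a relation of the schematic form
\eq{\notag (\hat{\Delta}-\tfrac13)\p_T N = (\hat{\Delta}-\Delta)\p_T N + (|\Sigma|_g^2-\tau\eta)\,\p_T N + N\,\p_T|\Sigma|_g^2 + \tau N\eta - \tau N\,\p_T\eta + (\p_T g^{-1})\ast\nabla^2 N + (\p_T\Gamma)\ast\nabla N.}
Applying elliptic regularity for $\hat\Delta-\tfrac13$ and the bootstrap assumptions, the first two terms carry coefficients of size $\lesssim\varepsilon e^{-\lambda T}$ in front of (at most) $\|\p_T N\|_{H^k}$ and are absorbed on the left, while $(\p_T g^{-1})\ast\nabla^2 N$ and $(\p_T\Gamma)\ast\nabla N$ are estimated via Lemma \ref{lem:Estimates-pT-h-v}, \eqref{eq:est-pt-Gamma} and the bootstraps, contributing only $\varepsilon^2 e^{\max\{-2\lambda,-2+\mu\}T}$. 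The term $\tau N\eta$ contributes $|\tau|\,\|\eta\|_{H^{k-2}}\lesssim|\tau|\,\|\rho\|_{H^{k-2}}+\text{h.o.t.}$ by Lemma \ref{lem:matter-first-estimates}; $N\,\p_T|\Sigma|_g^2\sim N\Sigma\ast\p_T\Sigma$ is handled by Lemma \ref{lem:Estimates-pT-h-v}, whose right-hand side feeds back $\|\hat N\|_{H^k}+\|X\|_{H^k}+|\tau|\|\rho\|_{H^{k-2}}$ together with terms the bootstrap closes into the remainder; and $\tau N\,\p_T\eta$ is bounded by Lemma \ref{lem:Est-pT-eta-S}. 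This yields the claimed estimate for $\|\p_T N\|_{H^k}$ for all $4\le k\le N-1$.

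For the shift I argue the same way on $\Delta X^a + \Ric[g]^a{}_b X^b = 2\nabla_b N\,\Sigma^{ba} - \nabla^a\hN + 2N\tau^2\jmath^a - (2N\Sigma^{bc}-\nabla^b X^c)(\Gamma[g]^a_{bc}-\Gamma[\gamma]^a_{bc})$. Differentiating in $T$, the commutators $[\p_T,\Delta]X$ and $(\p_T\Ric[g])\ast X$ are controlled via \eqref{eq:est-pt-Gamma} and Lemma \ref{lem:Estimates-pT-h-v}; one uses that $\Delta_g+\Ric[g]$ acting on vector fields is invertible for metrics close to the negative Einstein metric $\gamma$ (the standard fact underlying the solvability of \eqref{eq:EoM-lapse-shift}, cf.\ \cite{AM03,AM11,AF20}), treating the $g$-dependence perturbatively off the corresponding $\gamma$-operator. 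The differentiated source terms are estimated using Lemma \ref{lem:Estimates-pT-h-v} for $\p_T\Sigma$ and $\p_T g$, \eqref{eq:Estimate-Upsilon} to trade $\bnab$ for $\nabla$, and for $\p_T(\tau^2\jmath^a)$ the bounds of Lemma \ref{lem:Est-pT-eta-S} and \eqref{eq:est-pT-fluid-compts}. The only genuinely new source term is $\nabla^a\p_T\hN=\tfrac13\nabla^a\p_T N$, which (having already established the lapse estimate) is bounded by $\|\p_T N\|_{H^{k-1}}\le\|\p_T N\|_{H^{\max\{4,k-1\}}}$ and hence by $\|\hat N\|_{H^k}+\|X\|_{H^k}+|\tau|\|\rho\|_{H^{k-2}}+\varepsilon^2 e^{\max\{-2\lambda,-2+\mu\}T}$. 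Putting all contributions together gives the stated bound for $\|\p_T X\|_{H^k}$.

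The main obstacle is not any single hard estimate but the bookkeeping: one must carefully track which contributions come with a genuine smallness coefficient (hence absorbable on the left after elliptic regularity, or decaying into the $\varepsilon^2 e^{\max\{-2\lambda,-2+\mu\}T}$ remainder) versus which must be retained as $\|\hN\|_{H^k}$, $\|X\|_{H^k}$, $|\tau|\|\rho\|_{H^{k-2}}$ — these last enter through the source of $\p_T\Sigma$ via Lemma \ref{lem:Estimates-pT-h-v} and through $\p_T(\tau\eta)$ with $\p_T\tau=-\tau$. One also checks that the orders of regularity demanded of the inputs never exceed $H^{N-1}$ for $g$-type variables or $H^{N-2}$ for $\rho$; for $4\le k\le N-1$ (so that $\p_T\Sigma$, $\p_T\eta$, $\p_T\jmath$ are needed at $H^{k-2}$ with $k-2\le N-3$) this holds with room to spare. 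The perturbative invertibility of $\Delta_g+\Ric[g]$ near $\gamma$ is the one non-routine ingredient, but it is exactly the input already used to pose \eqref{eq:EoM-lapse-shift}.
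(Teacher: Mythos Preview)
Your proposal is correct and follows essentially the same approach as the paper: differentiate the elliptic equations \eqref{eq:EoM-lapse-shift} in $T$, apply elliptic regularity to the resulting equations, and control the differentiated source terms via Lemmas \ref{lem:matter-first-estimates}, \ref{lem:Est-pT-eta-S}, and \ref{lem:Estimates-pT-h-v}. The only difference is presentational: the paper cites the general formulae from \cite[\S7]{AF20} as a black box and then plugs in the matter estimates, whereas you spell out the underlying derivation (commuting $\p_T$ through $\Delta$, perturbing off the background elliptic operator, and absorbing the small-coefficient terms), which is precisely what \cite{AF20} encodes.
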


\begin{proof}
Let $4\leq k \leq N-1$. Following \cite{AF20},  we have
\eq{\alg{\notag
\| \p_T N \|_{H^k} 
&\lesssim 
	\| \hN\|_{H^k}+\| X\|_{H^k} + \|\Sigma \|_{H^{k-1}}^2 + \| g-\gamma\|_{H^k}^2
	+ |\tau| \big(\| S \|_{H^{k-2}} + \| \eta \|_{H^{k-2}}+\| \p_T \eta \|_{H^{k-2}}\big)
\\&\lesssim
	\| \hN\|_{H^k}+\| X\|_{H^k} 
+ |\tau|\| \rho \|_{H^{k-2}}+ \varepsilon^2 e^{\max\{-2\lambda, -2+\mu\}T},
}}
where we used Lemma \ref{lem:matter-first-estimates} and Lemma \ref{lem:Est-pT-eta-S}. 

Again, using a general expression given in  \cite[\textsection 7]{AF20}, we have
\eq{\alg{\notag
\| \p_T X \|_{H^k} 
&\lesssim 
	 \| \hN\|_{H^k}+ \| X\|_{H^{k}}+ \|\Sigma \|_{H^{k}}^2 +\|g-\gamma\|_{H^k} ^2
\\
&\quad	+ \tau \big(\| S \|_{H^{k-2}} + \| \eta \|_{H^{k-2}}+\| \p_T \eta \|_{H^{k-2}}\big)+ \tau^2 \| \jmath \|_{H^{k-1}} + \tau^3 \| \underline{T} \|_{H^{k-1}}
\\
&\lesssim
\| \hN\|_{H^k}+ \| X\|_{H^{k}}
+ |\tau| \| \rho\|_{H^{k-2}} 
+\varepsilon^2 e^{\max\{-2\lambda, -2+\mu\}T},
}}
where we again used Lemma \ref{lem:matter-first-estimates} and Lemma \ref{lem:Est-pT-eta-S}.
\end{proof}

%%%%%%%%%%%%%%%%%%%%%%%%%%%%%%%%%%%%%%%%
\subsection{Additional Lapse and Shift Estimates}
\label{sec:ExtraLapseShiftEst}
In this section we establish important auxiliary estimates in two propositions for the lapse and shift variables. The first proposition is used to  estimate $\delu \nabla_i \nabla_j N$. This somewhat unusual expression comes from a dust derivative acting on the first term in $F_v$ (see Definition \ref{defn:Fh-Fv}), and appears later on in the energy estimates for $\Etot$. 

\begin{prop}[Top-order auxiliary lapse estimate]
\label{prop:delu-commuted-lapse}
We have
\eq{\notag
\big(\Delta-\tfrac13\big)\delu \nabla_i \nabla_j N 
=\mathcal{F}_{N,\mathbf{u}}}
where, 
\eq{\notag
\|\mathcal{F}_{N,\mathbf{u}}\|_{H^{N-4}} \lesssim 
\Lambda(T) + \Edelu+ E^g_{N-1}.
}
\end{prop}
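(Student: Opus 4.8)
The plan is to apply the operator $\delu$ to the lapse equation of motion and commute it through the two spatial derivatives $\nabla_i\nabla_j$, while carefully handling the source terms. First I would start from the elliptic equation in \eqref{eq:EoM-lapse-shift},
\eq{\notag
(\Delta-\tfrac13)N = N(|\Sigma|_g^2 - \tau\eta) - 1,}
apply $\nabla_i\nabla_j$ to both sides to obtain an equation for $\nabla_i\nabla_j N$, and then apply $\delu$. The key structural point is that $(\Delta-\tfrac13)$ commutes with $\delu$ only up to the commutator $[\Delta,\delu]$ controlled by Lemma \ref{lem:comm-delubnab-Delta}, and commutes with $\nabla_i\nabla_j$ up to curvature/Christoffel-derivative terms handled by \eqref{eq:Commutator-nabla-nablas}, \eqref{eq:pt-nabla} and \eqref{eq:est-pt-Gamma}. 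Collecting all these commutator contributions into $\mathcal{F}_{N,\mathbf{u}}$, the remaining genuinely new term is $\delu$ acting on the right-hand side source $\nabla_i\nabla_j\big(N(|\Sigma|_g^2-\tau\eta)-1\big)$.

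The heart of the argument is the treatment of the term $\delu(\tau\eta)$, or more precisely its spatial derivatives up to order $N-4$ after the $\nabla_i\nabla_j$ has been distributed. A naive estimate would demand $\|\delu\eta\|_{H^{N-2}}$, which by Lemma \ref{lem:Est-delu-eta-S} we do control — but the issue flagged in Section \ref{intro:Novelties} is that going through $\eta$ at that order via the equations of motion is exactly what we must avoid at top order in the coupled scheme, since $\delu\eta$ ultimately feeds back through $F_\rho$ into a derivative loss. The resolution, following the strategy used elsewhere in the paper, is to \emph{not} expand $\delu\eta$ by brute force but to use the matter equation of motion \eqref{eq:EoM-fluid} — specifically $\delu\rho = \rho F_\rho$ and the expressions for $\delu u^0, \delu u^j$ — as constraint equations to rewrite $\delu\eta$ (recall $\eta = E + \rho g_{ab}(u^0 X^a+\tau u^a)(u^0 X^b+\tau u^b)$ and $E = \rho(u^0)^2 N^2$) in a form where no uncontrolled derivative of $\rho$ or $u$ appears; every occurrence of $\delu\rho$ is replaced by $\rho F_\rho$, every $\delu u^0$ by $F_{u^0}$, every $\delu u^j$ by the RHS of the first equation in \eqref{eq:EoM-fluid}, and $\delu g$, $\delu N$, $\delu X$ are estimated by Corollary \ref{corol:Estimates-delu-h-v}, Lemma \ref{lem:delu-N-X}. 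The fluid source terms $F_\rho, F_{u^0}, F_{u^j}$ are then bounded by Lemma \ref{lem:Fluid-Source-Estimates}, and the derivatives of $\rho$ that remain stay at the order $\|\rho\|_{H^{N-2}}$ allowed by the bootstraps. The $|\Sigma|_g^2$ piece is easier: $\delu\nabla_i\nabla_j(N|\Sigma|_g^2)$ produces $\delu\Sigma$ at order $N-2$ (via Lemma \ref{lem:delu-h-Sigma}, hence $\Edelu + E^g_{N-1}$) and lower-order $\Sigma$ factors absorbed by $\Lambda(T)$, together with $\delu N$ terms. One must also account for $\delu\nabla_i\nabla_j N$ appearing on the left as the unknown — but that is precisely what the elliptic operator $(\Delta-\tfrac13)$ inverts, so it is not a source term.

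Concretely the steps I would carry out, in order: (i) write $\delu\nabla_i\nabla_j N = \nabla_i\nabla_j\delu N + [\delu,\nabla_i\nabla_j]N$ and note $\|[\delu,\nabla_i\nabla_j]N\|_{H^{N-4}}$ is controlled by Lemma \ref{lem:Commutators-with-Delu} and \eqref{eq:pt-nabla}; (ii) apply $(\Delta-\tfrac13)$ and commute it past $\nabla_i\nabla_j$ and $\delu$ using Lemma \ref{lem:comm-delubnab-Delta}, \eqref{eq:Commutator-nabla-nablas}; (iii) arrive at $(\Delta-\tfrac13)\delu\nabla_i\nabla_j N = \delu\nabla_i\nabla_j\big(N(|\Sigma|_g^2 - \tau\eta) - 1\big) + (\text{commutator terms})$; (iv) expand the RHS by Leibniz, peeling off $\delu$ onto each factor; (v) for the factor $\delu(\tau\eta)$, substitute the matter equations of motion to eliminate $\delu\rho, \delu u^0, \delu u^j$ in favour of $\rho F_\rho, F_{u^0}, F_{u^j}$ and bounded geometric quantities — this is the \textbf{main obstacle}, requiring care that no term reintroduces $\|u\|_{H^{N}}$ or $\|\rho\|_{H^{N-1}}$; (vi) estimate every resulting piece in $H^{N-4}$ using Lemma \ref{lem:matter-first-estimates}, Lemma \ref{lem:Fluid-Source-Estimates}, Lemma \ref{lem:Est-delu-eta-S}, Corollary \ref{corol:Estimates-delu-h-v}, Lemma \ref{lem:delu-h-Sigma} and Corollary \ref{corol:Coercive-top-order-geom}, collecting everything into $\Lambda(T) + \Edelu + E^g_{N-1}$. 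The elliptic equation itself is immediate once the right-hand side $\mathcal{F}_{N,\mathbf{u}}$ is identified and estimated, so no additional elliptic regularity statement is needed in the proposition as stated.
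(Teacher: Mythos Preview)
Your overall structure matches the paper's exactly: commute the lapse equation with $\delu\nabla_i\nabla_j$ to produce the three pieces $L_1=[\Delta,\delu]\nabla_i\nabla_j N$, $L_2=\delu[\Delta,\nabla_i\nabla_j]N$, and $L_3=\delu\nabla_i\nabla_j\big(N(|\Sigma|_g^2-\tau\eta)\big)$, and estimate each in $H^{N-4}$ via the commutator lemmas you name.

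Where you go astray is in identifying ``the main obstacle.'' You flag $\delu(\tau\eta)$ as the heart of the argument and propose re-deriving the fluid-EoM substitution inside this proof. But as you yourself note, Lemma~\ref{lem:Est-delu-eta-S} already gives $|\tau|\|\delu\eta\|_{H^{N-2}}\lesssim \varepsilon e^{\max\{-1+\mu,-\lambda\}T}\Lambda(T)+\Lambda(T)^2$; that lemma \emph{is} the substitution step, packaged and done. The paper simply cites it. There is no derivative-loss feedback here: the $\nabla_i u^i$ inside $F_\rho$ costs $\|u\|_{H^{N-1}}$, which is a bootstrap quantity, not a forbidden one. Your step~(v) is therefore redundant work, not an obstacle.

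The actual source of the $\Edelu+E^g_{N-1}$ in the final bound is the piece you call ``easier'': $\delu\nabla_i\nabla_j(N|\Sigma|_g^2)$ produces $\|\Sigma\|_{H^{N-2}}\|\delu\Sigma\|_{H^{N-2}}$, and it is $\|\delu\Sigma\|_{H^{N-2}}$ (via Corollary~\ref{corol:delu-nab-estimates}, not Corollary~\ref{corol:Coercive-top-order-geom}) that contributes $(\Edelu)^{1/2}+(E^g_{N-1})^{1/2}$. After multiplying by $\|\Sigma\|_{H^{N-2}}\lesssim (E^g_{N-1})^{1/2}$ and then applying Lemma~\ref{lem:Geom-coercive-est}, this gives the stated bound. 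So your proof would go through, but your narrative misplaces the emphasis and duplicates an existing lemma.
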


\begin{proof}
Start by commuting the lapse equation \eqref{eq:EoM-lapse-shift} with $\delu \nabla_i \nabla_j$:
\eq{\alg{\notag
\big(\Delta-\tfrac13\big)\delu \nabla_i \nabla_j N 
&= [\Delta,\delu]\nabla_i\nabla_j N
+ \delu [\Delta, \nabla_i\nabla_j]N + \delu \nabla_i \nabla_j \Big( N(|\Sigma|_g^2 - \tau \eta)\Big)
\\
&=: L_1 + L_2 +L_3=: \mathcal{F}_{N,\mathbf{u}}.
}}
We investigate each of the terms in $\mathcal{F}_{N,\mathbf{u}}$ separately. The first term $L_1 $ we estimate using \eqref{eq:est-pt-Gamma}, \eqref{eq:Commutator-pT-nablas} and Lemma \ref{lem:comm-delubnab-Delta} 
\eq{\alg{\notag
\|L_1\|_{H^{N-4}}
&\lesssim
\Lambda(T)\|\p_T \nabla_i\nabla_j N\|_{H^{N-3}}
+ \varepsilon e^{\max\{-1+\mu,-\lambda\} T}\|\nabla_i\nabla_j N\|_{H^{N-2}}
\\&\lesssim
\Lambda(T)\Big( \|\p_T N\|_{H^{N-1}} +  \|[\p_T, \nabla_i]\nabla_j N\|_{H^{N-3}} \Big)
+ \varepsilon e^{\max\{-1+\mu,-\lambda\} T}\|\hN\|_{H^{N}}
\\&\lesssim
\Lambda(T)^2
+ \varepsilon e^{\max\{-1+\mu,-\lambda\} T}\Lambda(T).
}}

For the next term, $L_2$, we first compute, for $\phi$ a scalar,
\eq{\alg{\notag
[\Delta, \nabla_i] \phi &= 2g^{ab}\nabla_{[a}\nabla_{i]}\nabla_b \phi = -\Ric^c{}_i \nabla_c \phi,
\\
[\Delta,\nabla_i]\nabla_j\phi&=
\Ric^c{}_i\nabla_c \nabla_j\phi - (\nabla^a \Riem^c{}_{jai})\nabla_c\phi - 2\Riem^c{}_{jai}\nabla^a\nabla_c\phi.
}}
Thus
\eq{\alg{\notag
L_2 &=\delu([\Delta,\nabla_i]\nabla_j N) + \delu \nabla_i([\Delta,\nabla_j N)
\\&=
-\Big( (\delu \nabla^a \Riem^c{}_{jai})+(\delu \nabla_i \Ric^c{}_j) \Big)\nabla_c N 
- \Big(\nabla^a \Riem^c{}_{jai}+\nabla_i \Ric^c{}_j \Big) \delu \nabla_c N
\\&\quad
+\Big(
(\delu \Ric^c{}_i)\nabla_c\nabla_j N 
-(\delu \Ric^c{}_{j} )\nabla_i\nabla_c N 
- 2 (\delu \Riem^c{}_{jai})\nabla^a\nabla_c N
\Big)
\\&\quad
+\Big( \Ric^c{}_i(\delu \nabla_c\nabla_j N )
- \Ric^c{}_{j} (\delu\nabla_i\nabla_c N )
- 2\Riem^c{}_{jai}(\delu \nabla^a\nabla_c N)
\Big)
\\
&=: -(L_{21})-(L_{22})+(L_{23})+(L_{24})
}}
The second and last terms here are easy to estimate using Lemma \ref{lem:delu-N-X} and Lemma \ref{lem:Commutators-with-Delu} (note also that the expressions \eqref{eq:pt-nabla} and \eqref{eq:Commutators-with-Delu} simplify when calculated for a scalar)
\eq{\alg{\notag
\| &|L_{22}|+| L_{24}|\|_{H^{N-4}}
\\
&\lesssim
\| \Riem\|_{H^{N-3}}\Big( \|\delu N\|_{H^{N-4}}
+  \|[\delu, \nabla]N\|_{H^{N-3}} +  \|[\delu, \nabla]\nabla N\|_{H^{N-4}}\Big)
\\
&\lesssim
\| \Riem\|_{H^{N-3}} \Big( \Lambda(T)
+  \Lambda(T) \|\p_T N\|_{H^{N-3}} 
+ \varepsilon e^{\max\{-1+\mu,-\lambda\} T}\|N\|_{H^{N-2}}  \Big)
 \\
&\lesssim
\Lambda(T).
}}

For the other two terms, $L_{21}$ and $L_{23}$, we schematically write
\eq{\alg{\notag
\delu  \Riem 
&= u^0 \p_T (\p \Gamma + \Gamma \Gamma) - \tau u^c \bnab_c \Riem
= u^0 (\p \p_T \Gamma + \Gamma \p_T \Gamma) - \tau u^c \nabla_c \Riem+ \tau u^c \ast \Upsilon\ast  \Riem,
\\
\delu \nabla \Riem 
&= \nabla \delu \Riem + [\delu, \nabla]\Riem,
}}
and thus 
\eq{\alg{\notag
\| |L_{21}|+ |L_{23}|\|_{H^{N-4}}
&\lesssim
\|\hN \|_{H^{N-2}} \Big(\| \p_T \Gamma\|_{H^{N-2}}
+ \| \p_T \Gamma\|_{H^{N-3}}\| \Gamma\|_{H^{N-3}}
+ \| \p_T \Gamma\|_{H^{N-4}}\| \Riem\|_{H^{N-4}}
\\&\quad
+  \varepsilon e^{\max\{-1+\mu,-\lambda\} T}\| \Riem\|_{H^{N-2}}
+ \varepsilon \| \p_T \Riem \|_{H^{N-4}}
\Big)
\\&\lesssim
\varepsilon e^{-\lambda T} \|\hN \|_{H^{N-2}}
+ \varepsilon e^{\max\{-1+\mu,-\lambda\} T}\|\hN \|_{H^{N-2}} \| \Riem\|_{H^{N-2}}
\lesssim \Lambda(T).
}}

Finally we compute
\eq{\alg{\notag
L_3 &= \nabla_i\nabla_j\delu \big( N(|\Sigma|_g^2 - \tau \eta)\big) + \nabla_i [\delu,\nabla_j]\big( N(|\Sigma|_g^2 - \tau \eta)\big)+ [\delu,\nabla_i]\nabla_j \big( N(|\Sigma|_g^2 - \tau \eta)\big)
\\&=: L_{31}+L_{32}+L_{33}
}}
By the matter estimates in Lemma \ref{lem:matter-first-estimates} and Lemma \ref{lem:Est-delu-eta-S}, as well as the geometry estimates in Lemma \ref{lem:delu-N-X} and  Corollary \ref{corol:delu-nab-estimates}, we find
\eq{\alg{\notag
\| L_{31}\|_{H^{N-4}}
&\lesssim
\| \delu N\|_{H^{N-2}} \big( \|\Sigma \|_{H^{N-2}}^2+|\tau| \| \eta\|_{H^{N-2}}\big)
+ \| \Sigma\|_{H^{N-2}}\|\delu \Sigma\|_{H^{N-2}}
\\&\quad + |\tau| \| \eta\|_{H^{N-2}}+ |\tau|  \|\delu \eta\|_{H^{N-2}}
\\&\lesssim
\Lambda(T) + \| \Sigma\|_{H^{N-2}}\big( (E^g_{\delu,N-1})^{1/2} + (E^g_{N-1})^{1/2}\big).
}}
Similarly by  Lemma \ref{lem:matter-first-estimates},  Lemma \ref{lem:Est-pT-eta-S}, Corollary \ref{corol:Estimates-delu-h-v} and the commutator estimate of Lemma \ref{lem:Commutators-with-Delu},
\eq{\alg{\notag
\| L_{32}\|_{H^{N-4}}
&\lesssim
\varepsilon e^{\max\{-1+\mu,-\lambda\} T} \big( \|\Sigma\|_{H^{N-2}} ^2 + |\tau|\|\eta\|_{H^{N-2}} \big)
+\Lambda(T) \|\p_T N\|_{H^{N-3}} \big( \|\Sigma\|_{H^{N-3}}^2+ |\tau| \|\eta\|_{H^{N-3}}\big) 
\\&\quad
+ \Lambda(T) \| \Sigma\|_{H^{N-3}}\|\p_T \Sigma \|_{H^{N-3}} 
+ \Lambda(T) |\tau| \|\eta\|_{H^{N-3}} +\Lambda(T) |\tau|\| \p_T \eta \|_{H^{N-3}}
\\&\lesssim
\varepsilon e^{\max\{-1+\mu,-\lambda\} T} E^g_{N-1} + \varepsilon e^{\max\{-1+\mu,-\lambda\} T} \Lambda(T).
}}
The final estimate follows in the same way:
\eq{\alg{\notag
\| L_{33}\|_{H^{N-4}}&\lesssim
\varepsilon e^{\max\{-1+\mu,-\lambda\} T} E^g_{N-1} + \varepsilon e^{\max\{-1+\mu,-\lambda\} T} \Lambda(T).
}}

Putting this all together we have found
\eq{\notag
\|\mathcal{F}_{N,\mathbf{u}}\|_{H^{N-4}} \lesssim 
\Lambda(T) + \| \Sigma\|_{H^{N-2}}\big( (E^g_{\delu,N-1})^{1/2} + (E^g_{N-1})^{1/2}\big),
}
and so the conclusion follows by Lemma \ref{lem:Geom-coercive-est}.
\end{proof}

\begin{rem}\label{rem:566}
 In our energy estimates later on we need to estimate a term of the type $\delu \bnab^{N-2} F_v$ where $\bnab^{N-2}$ indicates $N-2$ covariant derivatives and $F_v$ is given in Definition \ref{defn:Fh-Fv}. We would like to commute the $\delu$ operator past these covariant derivatives. However, $F_v$ contains a $\nabla\nabla N$ term, and we cannot commute $\delu$ past $N-2$ copies of $\nabla$ as well as the extra two derivatives in $\nabla \nabla N$ since we only control $u^a$ in $H^{N-1}$.
 The previous proposition crucially allows us to avoid this issue.  
Note also that by commuting in the $\delu$ operator, instead of doing the rough expansion $\delu \sim \p_T + \tau u^c \ast \nabla$, we gain an additional derivative in Corollary \ref{corol:delu-Fv} compared to Corollary \ref{corol:Est-delu-Fh-Fv}. 
\end{rem}

\begin{corol}
\label{corol:delu-Fv}
For $I$ a multi-index of order $|I|\leq N-2$, 
\eq{\alg{\notag
\| \delu \bnab^I F_v\|_{L^2}
&\lesssim 
\Lambda(T) + \Edelu
+E^g_{N-1}.
}}
\end{corol}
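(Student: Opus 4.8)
The plan is to expand $\delu \bnab^I F_v$ by distributing the fluid derivative $\delu$ across the covariant derivatives $\bnab^I$ and across the constituent terms of $F_v$ given in Definition \ref{defn:Fh-Fv}, namely $F_v = \nabla_a \nabla_b N + 2N \Sigma_{ac} \Sigma^c_b - \tfrac{1}{3} \hN g_{ab} - \hN \Sigma_{ab} + N \tau S_{ab} - v_{ac}\bnab_b X^c - v_{cb}\bnab_a X^c$. The key difficulty, as flagged in Remark \ref{rem:566}, is the $\nabla_a\nabla_b N$ term: we cannot commute $\delu$ freely past all $N-2$ covariant derivatives plus the two derivatives hitting $N$, since $u^a$ is only controlled in $H^{N-1}$ and each commutator $[\delu,\nabla]$ costs derivatives on $u$. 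So the strategy is: for the $\nabla\nabla N$ term, commute $\delu$ inward to land on $\delu\nabla_i\nabla_j N$, i.e.\ write $\delu \bnab^I (\nabla_a\nabla_b N) = \bnab^I (\delu \nabla_a\nabla_b N) + \text{(commutators)}$, and then invoke Proposition \ref{prop:delu-commuted-lapse}, which controls $\delu\nabla_i\nabla_j N$ elliptically in $H^{N-4}$ by $\Lambda(T) + \Edelu + E^g_{N-1}$. Since $|I| \leq N-2$ and we need only an $L^2$ estimate on $\delu\bnab^I F_v$, we actually need $\delu\nabla_i\nabla_j N$ in $H^{N-2}$, so one should re-examine which Sobolev order is genuinely needed — but the elliptic gain of two derivatives from $(\Delta - 1/3)^{-1}$ in Proposition \ref{prop:delu-commuted-lapse} provides exactly this, giving $\|\delu\nabla\nabla N\|_{H^{N-2}} \lesssim \|\mathcal{F}_{N,\mathbf u}\|_{H^{N-4}} + \text{l.o.t.} \lesssim \Lambda(T) + \Edelu + E^g_{N-1}$.

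For the remaining terms of $F_v$, none of which involve second derivatives of $N$, the rough expansion $\delu \sim u^0\p_T - \tau u^c\bnab_c$ combined with the commutator estimates of Lemma \ref{lem:Commutators-with-Delu} and Corollary \ref{corol:delu-nab-estimates} suffices. Concretely: the term $\hN g_{ab}$ and $\hN \Sigma_{ab}$ are handled by $\delu\hN$, $\delu\Sigma$ estimates from Corollary \ref{corol:delu-nab-estimates} together with the $\hN$, $\Sigma$ bootstrap bounds; the quadratic term $N\Sigma_{ac}\Sigma^c_b$ by the product/Sobolev estimates of Lemma \ref{lem:SobProdEst} and $\delu\Sigma$ from Corollary \ref{corol:delu-nab-estimates}; the matter term $N\tau S_{ab}$ by Lemma \ref{lem:matter-first-estimates} for $S$ itself and Lemma \ref{lem:Est-delu-eta-S} for $|\tau|\|\delu S\|_{H^{N-2}} \lesssim \varepsilon e^{\max\{-1+\mu,-\lambda\}T}\Lambda(T) + \Lambda(T)^2$; and the shift terms $v_{ac}\bnab_b X^c$ by $\delu v = 6\delu\Sigma$ (Corollary \ref{corol:delu-nab-estimates}) together with $\delu X$ (Lemma \ref{lem:delu-N-X}) and commutators, using \eqref{eq:Estimate-Upsilon} to trade $\bnab$ for $\nabla$. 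Throughout, whenever $\delu$ is commuted past covariant derivatives one picks up terms of the form $\|[\delu,\nabla^J]V\|_{L^2}$, estimated as in the proof of Corollary \ref{corol:delu-nab-estimates}, which introduce only factors of $\varepsilon e^{\max\{-1+\mu,-\lambda\}T}$ or $\Lambda(T)$ against lower-order Sobolev norms of the geometric variables, all absorbable into $\Edelu + E^g_{N-1} + \Lambda(T)$ via Lemma \ref{lem:Geom-coercive-est} and Corollary \ref{corol:Coercive-top-order-geom}.

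Assembling these pieces, the dominant contribution is the $\nabla\nabla N$ term via Proposition \ref{prop:delu-commuted-lapse}, which already produces the full right-hand side $\Lambda(T) + \Edelu + E^g_{N-1}$; every other term contributes strictly lower-order or $\varepsilon$-suppressed quantities that are bounded by the same expression (after using Corollary \ref{corol:Coercive-top-order-geom} to control any top-order geometric Sobolev norms $\|g-\gamma\|_{H^N}$, $\|\Sigma\|_{H^{N-1}}$ that appear with an $\varepsilon$ coefficient). Hence $\| \delu \bnab^I F_v\|_{L^2} \lesssim \Lambda(T) + \Edelu + E^g_{N-1}$, as claimed. The main obstacle is purely the bookkeeping of the $\nabla\nabla N$ term: one must be careful that all the commutators generated when moving $\delu$ inward past $\bnab^I$ (before applying Proposition \ref{prop:delu-commuted-lapse}) remain controllable, which they do precisely because each such commutator replaces a lost derivative on $u$ by a gained smallness factor, and because Proposition \ref{prop:delu-commuted-lapse} itself was engineered to sidestep the derivative loss by using the dust equations of motion to rewrite $\delu\eta$ as a constraint.
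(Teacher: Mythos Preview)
Your proposal is correct and follows essentially the same approach as the paper: you correctly isolate $\nabla_a\nabla_b N$ and $N\tau S_{ab}$ as the delicate terms, commute $\delu$ inward past $\bnab^I$ and invoke Proposition~\ref{prop:delu-commuted-lapse} (with the elliptic gain of two derivatives) for the former, and use Lemmas~\ref{lem:matter-first-estimates}, \ref{lem:Est-delu-eta-S}, \ref{lem:Est-pT-eta-S} for the latter, while the remaining terms of $F_v$ are handled by Corollary~\ref{corol:delu-nab-estimates} and the commutator machinery exactly as you describe. The paper's proof is organized in the same way, singling out the same two subtle terms and treating the commutator $[\delu,\bnab^I]\nabla_i\nabla_j N$ via Lemma~\ref{lem:Commutators-with-Delu} just as you indicate.
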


\begin{proof}
Write $\bnab^{I} = \bnab_{a_1} \ldots \bnab_{a_{k}}$ where $k\define|I|\leq N-2$.
Using the definition of $F_v$ in Definition \ref{defn:Fh-Fv} and the estimates in Corollary \ref{corol:delu-nab-estimates}, we see the most subtle terms (from the point of view of regularity) are $\nabla_i\nabla_j N$ and $N\tau S_{ij}$. For these terms we look at what happens when we commute in the $\delu$ operator using Lemma \ref{lem:Commutators-with-Delu}:
\eq{\alg{\notag
\| [\delu, \bnab^I] \nabla_i\nabla_j N \|_{L^2}
&\lesssim 
\|[ \delu, \bnab_{a_1}] \bnab_{a_2} \ldots \bnab_{a_k} \nabla_i\nabla_j N \|_{L^2}  
+ \cdots
+ \|[ \delu, \bnab] \nabla_i\nabla_j N \|_{H^{k-1}} 
\\&
\lesssim
\varepsilon e^{(-1+\mu) T}  \|\hN\|_{H^{k+2}}  +\Lambda(T)\|\p_T  N \|_{H^{k+1}} 
+\Lambda(T)\|[\p_T ,\nabla_i\nabla_j] N \|_{H^{k-1}} .
}}
So by the commutator estimates \eqref{eq:pt-nabla} and \eqref{eq:est-pt-Gamma}, as well as Proposition \ref{prop:delu-commuted-lapse},
\eq{\alg{\notag
\| \delu \bnab^I \nabla_i\nabla_j N \|_{L^2}
&\lesssim 
\| \delu \nabla_i\nabla_j N \|_{H^k} 
+ \| [\delu, \bnab^I] \nabla_i\nabla_j N \|_{L^2}
\\&
\lesssim
\| \mathcal{F}_{N, \mathbf{u}} \|_{H^{k-2}} + \Lambda(T)
\\&\lesssim
\Lambda(T) + \| \Sigma\|_{H^{N-2}}\big( (E^g_{\delu,N-1})^{1/2} + (E^g_{N-1})^{1/2}\big).
}}

Similarly, using the matter estimates in Lemma \ref{lem:matter-first-estimates}, Lemma \ref{lem:Est-delu-eta-S} and Lemma \ref{lem:Est-pT-eta-S}, as well as $\delu N$ estimates in Lemma \ref{lem:delu-N-X}, we find
 \eq{\alg{\notag
\|\delu\bnab^I (\tau N S_{ij})\|_{\Lgg}
&\lesssim 
|\tau| \big(\|S\|_{H^{k}}  +  \| \delu S\|_{H^{k}} + \| \delu N \|_{H^k} \| S\|_{H^k} + \varepsilon e^{(-1+\mu) T}  \|S\|_{H^{k}} 
\\&\quad \qquad 
+\Lambda(T)\|\p_TS\|_{H^{k-1}} 
+ \Lambda(T)\|S\|_{H^{k-1}}
+\Lambda(T)\|\p_T N\|_{H^{k-1}}\|S\|_{H^{k-1}}
\big)
\\&
\lesssim 
\Lambda(T).
}}
\end{proof}

In our energy estimates later on we need to estimate a term of the type $\delu \bnab^{N-1} F_h$. Similar to the issue discussed in Remark \ref{rem:566}, the problematic term here is the $\mcr{L}_X g$ term appearing in $F_h$ (see Definition \ref{defn:Fh-Fv}). 
The second proposition of this section estimates this problematic term. Since, however, the shift is not scalar-valued like the lapse, a replication of the ideas used in Proposition \ref{prop:delu-commuted-lapse} ends up failing. Instead, our proof involves a remarkable combination of commutator estimates, the Bianchi identity and the Einstein equations in the CMCSH gauge. 

\begin{prop}[Top-order estimate for $\mcr{L}_X g$]\label{prop:Lie-deriv-Shift}
We have
\eq{\alg{\notag
\Big| \big(\delu\mL^{\ell-1}(\Delta \mcr{L}_X g),\delu\mL^{\ell}(h)\big)_{L^2_{g,\gamma}} \Big| &\lesssim 
\Lambda(T) \Edelu + \Lambda(T) E^g_{N-1}
+ \Lambda(T) (\Edelu)^{1/2}.
}}
\end{prop}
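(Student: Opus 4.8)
The goal is to estimate the $\Lgg$-pairing of $\delu\mL^{\ell-1}(\Delta\mcr{L}_Xg)$ against $\delu\mL^{\ell}(h)$. The naive route — distribute $\delu$ and $\bnab$ derivatives across $\mcr{L}_Xg = X^m\bnab_m g + g_{\bullet m}\bnab_\bullet X^m$ (cf.~\eqref{eq:EoM-h2}) and bound everything in Sobolev norms — fails: the term $\Delta\mcr{L}_Xg$ already carries three derivatives on $X$ (two from $\Delta$, one from $\mcr{L}_X$), then $\mL^{\ell-1}$ adds $N-3$ more and $\delu$ does not commute past all of these without demanding $u^a \in H^N$, beyond the bootstrap regularity $\|u\|_{H^{N-1}}$. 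So the plan is \emph{not} to expand $\delu\mL^{\ell-1}(\Delta\mcr{L}_Xg)$ directly. Instead, following the strategy signposted in the introduction, I would use the momentum constraint and the shift equation to trade the troublesome top-order shift derivatives for curvature/geometry terms that we already control.

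First I would rewrite $\Delta\mcr{L}_Xg$. Using $\mcr{L}_Xg_{ab} = \nabla_aX_b + \nabla_bX_a$ (with $X_b = g_{bm}X^m$) and commuting Laplacians past covariant derivatives, one gets, schematically, $\Delta\mcr{L}_Xg = 2\nabla_{(a}\Delta X_{b)} + (\text{Riem}\ast\nabla X) + (\nabla\text{Riem}\ast X)$. Now invoke the shift equation \eqref{eq:EoM-lapse-shift}, which expresses $\Delta X^a + \Ric^a{}_bX^b$ in terms of $\nabla N\ast\Sigma$, $\nabla\hN$, $\tau^2\jmath$ and an $\Upsilon\ast(N\Sigma - \nabla X)$ term. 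Substituting this in replaces $\Delta X$ by lower-complexity quantities; the genuinely dangerous surviving piece is $\nabla_{(a}(2N\Sigma^{bc}-\nabla^bX^c)\Upsilon^{\cdot}_{bc}$, i.e.~a term with $\nabla X$ still at one derivative. Here is where the \emph{remarkable combination} enters: the momentum constraint $\nabla^a\Sigma_{ab} = 2\tau^2\jmath_b$ together with the CMCSH gauge identity $H^a = 0$ (which controls $g^{bc}\Upsilon^a_{bc}$) and the contracted second Bianchi identity $\nabla^a\Ric_{ab} = \tfrac12\nabla_bR$ let one rewrite the problematic divergence-type combinations of $\nabla X$ and $\Sigma$ in terms of $\mL h$ (via $\Ric[g]_{ab}+\tfrac29 g_{ab} = \tfrac12\mL h_{ab} + J_{ab}$, from Section~\ref{subsec:energy-geom-def}), the Hamiltonian constraint, and quadratic error terms — all of which are controlled by $E^g_{N-1}$, $\Edelu$, and $\Lambda(T)$.

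Once $\Delta\mcr{L}_Xg$ has been re-expressed as $\mathcal{G}_X$, a sum of terms each of which is either (i) a product of a decaying factor ($X$, $\hN$, $\p_T X$, $|\tau|\rho$, or $\Upsilon\sim g-\gamma$, all bounded by $\Lambda(T)$ or $\varepsilon e^{-\lambda T}$) with at most $N-1$ derivatives on a geometric/matter variable, or (ii) a term of the form $\nabla(\text{decaying}\ast\mL h)$ or similar, I would apply $\delu\mL^{\ell-1}$ and estimate using the now-standard toolkit: Corollary~\ref{corol:delu-nab-estimates} and Lemma~\ref{lem:deluV-H2-H1} for $\delu\nabla^I$ acting on $X$, $\hN$, $g-\gamma$, $\Sigma$; the commutator estimates of Lemmas~\ref{lem:Commutators-with-Delu}, \ref{lem:Fluid-Commutators} and Corollary~\ref{corol:Com-delu-mL-estimate} to move $\delu$ and $\mL^{\ell-1}$ inward; Corollary~\ref{corol:Coercive-top-order-geom} to absorb top-order norms $\|g-\gamma\|_{H^N}, \|\Sigma\|_{H^{N-1}}$ into $\Edelu + E^g_{N-1}$; and Lemma~\ref{lem:delu-h-Sigma} plus Lemma~\ref{lem:Properties-of-Egdelu} to bound the other factor $\|\delu\mL^{\ell}h\|_{L^2} \lesssim (\Edelu)^{1/2} + \varepsilon(\ldots) + \Lambda(T)^2$. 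In the pairing, wherever a bare top-order derivative on $X$ or $\nabla X$ appears that cannot be absorbed directly, I would integrate by parts using \eqref{eq:IBP}/\eqref{eq:thirdIBP-2}/\eqref{eq:symIBP} to move it onto the $h$-factor, picking up a decaying coefficient from the shift. Cauchy–Schwarz then yields bounds of the shape $\Lambda(T)\cdot(\Edelu + E^g_{N-1})$ for the quadratic-in-energy terms and $\Lambda(T)\cdot(\Edelu)^{1/2}$ for the terms linear in $(\Edelu)^{1/2}$ coming from the $\Lambda(T)$-sized inhomogeneities, which is exactly the claimed estimate.

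\textbf{Main obstacle.} The crux — and the step I expect to consume most of the work — is the algebraic manipulation in the middle: showing that after substituting the shift equation into $\Delta\mcr{L}_Xg$, every surviving term with a top-order (i.e.~$(N-1)$-th order after all operators are counted) derivative landing on $X$ or on $\nabla X$ can genuinely be re-expressed, via the momentum constraint, the Bianchi identity, and the CMCSH relation $H^a=0$, as something involving $\mL h$ (hence $E^g_{N-1}$ via Lemma~\ref{lem:Geom-coercive-est} / Corollary~\ref{corol:Coercive-top-order-geom}) rather than as an uncontrolled $\|X\|_{H^N}$ or $\|\Sigma\|_{H^N}$ term. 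Verifying that the curvature commutator terms generated along the way (from $[\Delta,\nabla]$, $[\mL,\bnab]$, $[\delu,\nabla]$) are all lower-order and decaying, and that no derivative is lost when $\delu$ meets $\nabla\nabla N$-type pieces — for which Proposition~\ref{prop:delu-commuted-lapse} must be invoked — requires careful bookkeeping but no new ideas.
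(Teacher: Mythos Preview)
Your broad strategy — rewrite $\Delta\mcr{L}_Xg$ by feeding in the shift equation, use the Bianchi identity, and eventually land on $\mL h$ via the CMCSH Ricci formula — matches the paper. But you have misidentified both the dangerous term and the mechanism that resolves it.

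The term you flag as ``genuinely dangerous'', namely $\nabla_{(a}\big((2N\Sigma^{bc}-\nabla^bX^c)\Upsilon^{\,\cdot}_{bc}\big)$, is harmless: it is already quadratically small (it carries an $\Upsilon$), and after $\delu\mL^{\ell-1}$ all factors sit at admissible regularity. What you have overlooked is the $-\Ric^c{}_a X_c$ piece on the \emph{left-hand side} of the shift equation. Writing $\Delta X_a = -\Ric^c{}_aX_c + \mathcal{F}^X_a$ and inserting into $\Delta(\mcr{L}_Xg)_{ab} = 2\nabla_{(a}\Delta X_{b)} + [\Delta,\nabla_a]X_b + [\Delta,\nabla_b]X_a$, the $\nabla_{(a}(\Ric^c{}_{b)}X_c)$ pieces combine with the $\nabla\Riem\ast X$ pieces from the commutators — and it is precisely the contracted Bianchi identity $\nabla^a\Riem_{abcd}=\nabla_c\Ric_{bd}-\nabla_d\Ric_{bc}$ that collapses this to the single term $-2X^k\nabla_k\Ric_{ab}$ plus benign $\Riem\ast\nabla X$ errors. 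Then the CMCSH relation $\Ric_{ab}+\tfrac29 g_{ab}=\tfrac12\mL h_{ab}+J_{ab}$ turns this into $X^k\nabla_k\mL h_{ab}$. \emph{This} is the dangerous term: after $\delu\mL^{\ell-1}$ it carries $N$ derivatives on $h$, one too many to pair directly against $\delu\mL^\ell h$. It is dealt with not by any further algebraic substitution but by commuting to extract $X^m\bnab_m\delu\mL^\ell h$ and then applying the \emph{symmetric} integration-by-parts estimate \eqref{eq:symIBP} in the pairing with $\delu\mL^\ell h$, which gains the factor $\|X\|_{H^3}\lesssim\Lambda(T)$.

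Two further corrections: the momentum constraint $\nabla^a\Sigma_{ab}=2\tau^2\jmath_b$ plays no role here, and neither does Proposition~\ref{prop:delu-commuted-lapse} — the $\nabla\hN$ piece of $\mathcal{F}^X$ sits at one order lower than the $\nabla\nabla N$ in $F_v$ and is estimated directly. Your ``Main obstacle'' paragraph is therefore chasing the wrong cancellation; as written, the proposal would not close.
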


\begin{proof}
Recall $2(\ell-1) = N-3$. 
We first define $ \mathcal{F}^X_a$ by rewriting the shift equation \eqref{eq:EoM-lapse-shift} as 
\eq{\notag
\Delta X_a = - \Ric^c{}_a X_c + \mathcal{F}^X_a.
}
By contracting the Bianchi identity, one finds
\eq{\notag
\nabla^a\Riem_{abcd} = \nabla_c \Ric_{bd} - \nabla_d \Ric_{bc}.
}
Using this, and the fact that $\nabla$ is a torsion-free connection for the metric $g$, we can show that
\eq{\alg{\notag
\Delta (\mcr{L}_X g)_{ab}
&= \nabla_a \Delta X_b + \nabla_b \Delta X_a + [\Delta,\nabla_a]X_b + [\Delta,\nabla_b]X_a
\\
&= \nabla_a(-\Ric^c{}_b) X_c+ \nabla_a\mathcal{F}^X_b + \nabla_b (-\Ric^c{}_a) X_c+ \nabla_b \mathcal{F}^X_a
\\&\quad
-g^{ij} \nabla_i(\Riem^k{}_{bja})X_k
-g^{ij} \nabla_i(\Riem^k{}_{ajb})X_k
+ 2 E_{(ab)}
\\
&= -2 X^k \nabla_k \Ric_{ab}
 + \nabla_a \mathcal{F}^X_b + \nabla_b \mathcal{F}^X_a + 2 E_{(ab)},
}}
where we have introduced the error terms
\eq{\alg{\notag
E_{ij}
\define &
- \Ric^c{}_j\nabla_i X_c 
- 2 \Riem^k{}_{jci}\nabla^c X_k + \Ric_{ki}\nabla^k X_j .
}}
As first noted in \cite{AM03}, in the CMCSH gauge, we have
\eq{\notag
\Ric_{ab} = -\frac29 g_{ab} -\frac12 \mL h_{ab} + J_{ab},
}
and thus
\eq{\alg{\label{eq:mcrLX-101}
\Delta \mcr{L}_X g_{ab}
= X^k \nabla_k \mL h_{ab}
 -2 X^k \nabla_k J_{ab}
 + 2 \nabla_{(a} \mathcal{F}^X_{b)} + 2 E_{(ab)}.
}}

We now need to estimate each term in the RHS of \eqref{eq:mcrLX-101}. The first term requires the most care. We begin with an identity, valid for $V$ an arbitrary $(0,2)$-tensor and $k \in \mathbb{Z}_{\geq 0}$, \eq{\alg{\label{eq:Gi-101}
\delu\mL^{k}(X^m \bnab_m V)
&= 
X^m\bnab_m \delu\mL^{k} (V)
+ X^m\delu[\mL^k,\bnab_m]V
+ X^m[\delu, \bnab_m]\mL^k V 
\\&\quad 
+ \delu X^m\cdot \mL^{k} (\bnab_m V)
+ \delu [\mL^k, X^m] \bnab_m V.
}}
Thus
\eq{\label{eq:Gi-111}
\delu\mL^{\ell-1}(X^k \nabla_k \mL h_{ij})
= X^m\bnab_m \delu\mL^{\ell}( h_{ij} ) + \mathcal{R}_{ij}^1 + \mathcal{R}_{ij}^2,
}
where
\eq{\alg{\notag
\mathcal{R}_{ij}^1
&\define
\delu \mL^{\ell-1}(X^m \Upsilon^k_{mi} \mL h_{kj})
+ \delu \mL^{\ell-1}(X^m \Upsilon^k_{mj} \mL h_{ki})
+ X^m\delu[\mL^{\ell-1},\bnab_m]\mL h_{ij}
\\&\quad
+ \delu([\mL^{\ell-1}, X^m] \bnab_m \mL h_{ij} ) ,
\\
\mathcal{R}_{ij}^2
&\define 
X^m[\delu, \bnab_m]\mL^{\ell} h_{ij}
+ \delu X^m\cdot \mL^{\ell-1} (\bnab_m \mL h_{ij}).
}}
We integrate by parts on the first term in \eqref{eq:Gi-111} using \eqref{eq:thirdIBP-2} and Lemma \ref{lem:Properties-of-Egdelu} to find 
\eq{\alg{\notag
\left|\big(X^m\bnab_m \delu\mL^{\ell}( h),\delu\mL^{\ell}(h)\big)_{L^2_{g,\gamma}} \right|
&\lesssim
\| X \|_{H^3} \| \delu \mL^\ell h \|_{\Lgg}^2
\lesssim
\Lambda(T) \Edelu.
}}
The remaining terms $\mathcal{R}_{ij}^{1,2}$ appearing in \eqref{eq:Gi-111}  are errors terms. Since we work at high regularity, we can control such error terms using the basic idea of taking low-derivative terms out in $L^\infty$.  We briefly present this argument once  for the last term in $\mathcal{R}_{ij}^{1}$:
\eq{\alg{\label{eq:BasicArgument}
\| \delu([\mL^{\ell-1}, X^m] \bnab_m \mL h_{ij})\|_{\Lgg}
&\lesssim
\sum_{\substack{|I|+|J|\leq 2(\ell-1) \\ |I|\geq 1}} \|\delu (\bnab^I X \ast \bnab^J \bnab \mL h )\|_{\Lgg}  
\\
&\lesssim
\sum_{\substack{|I|+|J|\leq N-3, \\ |I|\geq 1, |J| \geq 3}} \|\delu (\bnab^I X) \bnab^J  h \|_{\Lgg} + \|\bnab^I X \delu ( \bnab^J  h )\|_{\Lgg}    
}}
We are now faced with four terms depending on where the derivatives sit. Two of these have high derivatives on the shift $X$, and so by Sobolev embedding these are controlled by:
\eq{\alg{\notag
\sum_{\substack{1 \leq |I|\leq N-3, \\ 3 \leq |J| \leq (N-3)/2}} 
\big(\|\delu \bnab^I X  \|_{L^2} \| \bnab^J  h \|_{H^2}
+ \|\bnab^I X \|_{L^2} \| \delu \bnab^J  h \|_{H^2} \big).
}}
We use Lemma \ref{lem:deluV-H2-H1} to exchange the $H^2$ norm with the $\delu$ derivative, and then all terms are then controlled using Corollary \ref{corol:delu-nab-estimates} and Lemma \ref{lem:Estimates-pT-h-v} provided $\frac{N-3}{2}+2\leq N-1$. The other two terms, where the high derivatives hit the metric, are similarly controlled by:
\eq{\alg{\notag
\sum_{\substack{3 \leq |J|\leq N-1, \\ 1 \leq |I| \leq (N-3)/2}} 
\big(\|\delu \bnab^I X  \|_{H^2} \| \bnab^J  h \|_{L^2}
+ \|\bnab^I X \|_{H^2} \| \delu \bnab^J  h \|_{L^2}    \big)
}}
and once again all these terms are  controlled using Lemma \ref{lem:Estimates-pT-h-v}, Corollary \ref{corol:delu-nab-estimates} and  Lemma \ref{lem:deluV-H2-H1}  provided $\frac{N-3}{2}+2\leq N-1$.

Carrying on this way, and using the commutator estimates contained in Lemma \ref{lem:Fluid-Commutators}, together with Corollary \ref{corol:delu-nab-estimates}, we find
\eq{\alg{\notag
\Big|\big(\mathcal{R}^1,\delu\mL^{\ell}(h)\big)_{L^2_{g,\gamma}} \Big|
\lesssim 
\Lambda(T) \Edelu + \Lambda(T) E^g_{N-1}
+ \Lambda(T)^2 (\Edelu)^{1/2}.
}}
For the other error term, $\mathcal{R}^2$, we use Lemma \ref{lem:Estimates-pT-h-v}, Lemma \ref{lem:Fluid-Commutators}, Corollary \ref{corol:Coercive-top-order-geom} and Corollary \ref{corol:delu-nab-estimates}, to find
\eq{\alg{\notag
\Big|\big(\mathcal{R}^2,\delu\mL^{\ell}(h)\big)_{L^2_{g,\gamma}} \Big|
&\lesssim
\| X\|_{H^2} \|\delu\mL^{\ell}(h)\|_{\Lgg} \big( \varepsilon e^{(-1+\mu) T}  \|g-\gamma\|_{H^{N}}  +\Lambda(T) \|\p_T h \|_{H^{N-1}} \big)
\\&\quad
+
\| \delu X\|_{H^2} \| g-\gamma\|_{H^{N}} \| \delu \mL^\ell h \|_{\Lgg}
\\&
\lesssim  \Lambda(T)\Big( (\Edelu)^{1/2}
+ (E^g_{N-1})^{1/2} + \Lambda(T)\Big) (\Edelu)^{1/2}.
}}

The second term in the RHS of  \eqref{eq:mcrLX-101} is estimated by the same expression as for $\mathcal{R}^1$. 
For the third term in the RHS of  \eqref{eq:mcrLX-101}, we use the definition of $\mathcal{F}^X$ given in \eqref{eq:EoM-lapse-shift}:
\eq{\alg{ \notag
\delu\mL^{\ell-1}(\nabla_j \mathcal{F}^X_i)
&= 
\delu\mL^{\ell-1}\nabla_j \left(2 \bnab_c N \Sigma^c_i - \bnab_i \hN+ 2 N \tau^2 g_{ai}\jmath^a
- g_{ai}(2N \Sigma^{bc} - \nabla^b X^c)\Upsilon^a_{bc}  \right).
}}
The second term in the large brackets here ($\bnab_i \hN$) decays the slowest, while from a regularity point of view the most subtle term is the matter term ($\jmath^a$). For this latter term, we commute in the $\delu$ operator and use the matter estimates of Lemma \ref{lem:matter-first-estimates}, Lemma  \ref{lem:Est-delu-eta-S} and Lemma \ref{lem:Est-pT-eta-S} together with the commutator estimates of Lemma \ref{lem:Commutators-with-Delu} and Corollary \ref{corol:Com-delu-mL-estimate}:
\eq{\alg{\notag
\|\delu\mL^{\ell-1}\nabla_j (\tau^2 g_{ai}\jmath^a)\|_{L^2}
&\lesssim 
\tau^2 \big(\| \delu \jmath^a\|_{H^{N-2}}  
+ \| [\delu, \mL^{\ell-1}]\bnab_j \jmath^a \|_{L^2}
+\| [\delu,\nabla_j] \jmath^a\|_{H^{N-3}}\big)
\\
&\lesssim
\tau^2 \big(\| \delu \jmath^a\|_{H^{N-2}}
+
\varepsilon e^{\max\{-1+\mu,-\lambda\} T} \| \jmath\|_{H^{N-2}} 
+\Lambda(T)\|\p_T \jmath\|_{H^{N-3}}\big)
\\
&\lesssim
\varepsilon^2 e^{(-\lambda+\mu)T} \Lambda(T) + \Lambda(T)^2 + \varepsilon^4 e^{-(3+\delta)T}. 
}}
All together, and using Lemma \ref{lem:Properties-of-Egdelu}, we find
\eq{\alg{\notag
\Big|\big(2 \delu\mL^{\ell-1}\nabla_{(a} \mathcal{F}^X_{b)},\delu\mL^{\ell}(h_{ab})\big)_{L^2_{g,\gamma}} \Big|
&
\lesssim 
\Lambda(T) \Edelu + \Lambda(T) E^g_{N-1}
\\&\quad
+ \Big(\Lambda(T)^2 + \varepsilon^2 e^{(-\lambda+\mu)T} \Lambda(T)  + \varepsilon^4 e^{-(3+\delta)T} \Big) (\Edelu)^{1/2}.
}}

Finally, we have
\eq{\alg{\notag 
\| \delu\mL^{\ell-1} E\|_{L^2} 
&\lesssim  \sum_{\substack{|I|+|J|\leq N-2 \\ |J|\geq 1}} \|\delu (\bnab^I \ri \ast \bnab^J X )\|_{\Lgg} 
\lesssim \Lambda(T),
}}
where the Riemann terms can be estimated using arguments as in the proof of Proposition \ref{prop:delu-commuted-lapse}.
\end{proof}

%%%%%%%%%%%%%%%%%%%%%%%%%%%%%%%%%%%%%%%%
\section{Geometric Energy Estimates}
\label{sec:GeomEnergyEst}
In this section we establish energy estimates for the geometric energy functionals. We first prove estimates for the time-derivatives of the lower-order energy functional $\Eg_{N-1}$, and then for the top-order energy functional $\mathcal{E}^g_{\delu,N-1}$. Recall $\alpha, c_E$ and $\Etot$ are given in Definitions \ref{defn:GeometricEnergy} and \ref{rem:ell-N}.

\begin{prop}
\label{prop-lower-order-geometry-est}
There exists a constant $C>0$ such that
\eq{ \notag
\p_T \Eg_{N-1}
\leq	
- 2 \alpha \Eg_{N-1} 
+ C \Lambda(T) (\Eg_{N-1})^{1/2}  
+ C(\Eg_{N-1})^{3/2}.
} 
\end{prop}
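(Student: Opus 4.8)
The goal is a differential inequality for $\p_T \Eg_{N-1} = \sum_{1\le m\le k}\p_T(\mathcal{E}_{(m)}+c_E\Gamma_{(m)})$ with $k=N-1$. The strategy is the classical one from \cite{AM11, AF20}: differentiate each piece $\mathcal{E}_{(m)}$ and $\Gamma_{(m)}$ using the identity \eqref{eq:hdelIntegral} for $\p_T\int_M f\mu_g$, substitute the evolution equations \eqref{eq:EoM-hv} for $\p_T h$ and $\p_T v$, integrate by parts using self-adjointness of $\mL$ \eqref{eq:mL-self-adjoint}, and track the structure carefully to extract the ``good'' coercive term $-2\alpha\Eg_{N-1}$ while bounding everything else by the claimed error terms.

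\textbf{Step 1: differentiate $\mathcal{E}_{(m)}$ and $\Gamma_{(m)}$.} First I would compute, using \eqref{eq:hdelIntegral}, that
\[
\p_T\big(v,\mL^{m-1}v\big)_{\Lgg} = 2\big(\hdel v,\mL^{m-1}v\big)_{\Lgg} + \big(v,[\p_T,\mL^{m-1}]v\big)_{\Lgg} + 3\int_M\hN\langle v,\mL^{m-1}v\rangle_\gamma\mu_g,
\]
and analogous expressions for $(h,\mL^m h)_{\Lgg}$ and $\Gamma_{(m)}=(v,\mL^{m-1}h)_{\Lgg}$. The factor $\hdel=\p_T+\mathcal{L}_X$ combines with the $-X^m\bnab_m$ terms in \eqref{eq:EoM-hv} to convert Lie derivatives of $h,v$ into the source terms $F_h,F_v$ plus the $2\hN g$ term; concretely $\hdel h = wv + F_h$ and $\hdel v = -2v - 9w\mL h + 6F_v$, modulo the $X^m\bnab_m\gamma$ terms which vanish since $\p_T\gamma=0$.

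\textbf{Step 2: isolate the principal part.} Inserting $\hdel h,\hdel v$, the principal terms (with $w=N/3\approx 1$ replaced by $1$ plus $\hN$-corrections) give, schematically for each $m$,
\[
\p_T\mathcal{E}_{(m)} \approx -2\cdot\tfrac12(v,\mL^{m-1}v)_{\Lgg} - 9(v,\mL^m h)_{\Lgg} + 9(\mL^m h, v)_{\Lgg} + \text{(corrections)},
\]
where the two $\pm 9(v,\mL^m h)$ terms cancel by self-adjointness \eqref{eq:mL-self-adjoint}. Thus $\p_T\mathcal{E}_{(m)} \approx -(v,\mL^{m-1}v)_{\Lgg}$ alone, which is \emph{not} coercive over the full energy — this is exactly why the correction $c_E\Gamma_{(m)}$ is needed. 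Computing $\p_T\Gamma_{(m)}$ similarly: $\hdel$ acting on the $h$-slot gives $wv$, contributing $+(v,\mL^{m-1}v)_{\Lgg}$ after replacing $w\approx 1$; and $\hdel$ acting on the $v$-slot gives $-2v - 9w\mL h$, contributing $-2\Gamma_{(m)} - 9(\mL h,\mL^{m-1}h)_{\Lgg} = -2\Gamma_{(m)} - 9(h,\mL^m h)_{\Lgg}$. So the combination $\mathcal{E}_{(m)}+c_E\Gamma_{(m)}$ yields a quadratic form in $(v,\mL^{(m-1)/2}\cdot)$ and $(h,\mL^{m/2}\cdot)$ whose decay rate is governed by the lowest eigenvalue $\lambda_0$; the choice of $\alpha$ and $c_E$ in Definition \ref{defn:GeometricEnergy} is precisely calibrated (via the eigentensor decomposition and Proposition \ref{prop:MilneGeometricRigidity}, $\lambda_0\ge 1/9$) so that this form is bounded above by $-2\alpha(\mathcal{E}_{(m)}+c_E\Gamma_{(m)})$ plus lower-order errors. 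Summing over $1\le m\le N-1$ gives the leading $-2\alpha\Eg_{N-1}$.

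\textbf{Step 3: estimate the error terms.} The remaining terms fall into three types: (a) the $\hN$-weighted volume/commutator-with-$w$ terms such as $\int_M\hN\langle v,\mL^{m-1}v\rangle\mu_g$ and terms where $w$ differs from $1$ — these are bounded by $\|\hN\|_{L^\infty}\cdot\Eg_{N-1}\lesssim \varepsilon e^{-T}\Eg_{N-1}\lesssim \Lambda(T)\Eg_{N-1}$, which is absorbed into $\Lambda(T)(\Eg_{N-1})^{1/2}\cdot(\Eg_{N-1})^{1/2}$; (b) commutator terms $[\p_T,\mL^{m-1}]v$ etc., controlled by Lemma \ref{lem:Fluid-Commutators} (the $[\p_T,\mL^s]$ estimate) giving a factor $\varepsilon e^{-\lambda T}$ times top-order norms, again absorbed; (c) the source-term contributions $(F_h,\mL^m h)_{\Lgg}$, $(F_v,\mL^{m-1}v)_{\Lgg}$ — using Lemma \ref{lem:Est-Fh-Fv}, $\|F_h\|_{H^{N-1}}+\|F_v\|_{H^{N-2}}\lesssim \Lambda(T) + \|g-\gamma\|_{H^N}^2 + \|\Sigma\|_{H^{N-1}}^2$, and via Cauchy--Schwarz with Lemma \ref{lem:Geom-coercive-est} this yields $\lesssim \big(\Lambda(T) + \Eg_{N-1}\big)(\Eg_{N-1})^{1/2}$, i.e. the $\Lambda(T)(\Eg_{N-1})^{1/2}$ and $(\Eg_{N-1})^{3/2}$ terms. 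A subtlety is that $F_v$ involves $\nabla\nabla N$ which is two orders above $N$; but in the lower-order energy $\Eg_{N-1}$ we only need $F_v$ in $H^{N-2}$, and $\|\nabla\nabla N\|_{H^{N-2}}\lesssim\|\hN\|_{H^N}\lesssim\Lambda(T)$ via the elliptic lapse estimate of Lemma \ref{lem:lapse-shift}, so no derivative loss occurs at this order — that loss is precisely the issue deferred to the top-order energy $\Edelu$. \textbf{The main obstacle} is the bookkeeping in Step 2: correctly matching the off-diagonal $\mL^m$ terms so they cancel by self-adjointness, and verifying that the residual quadratic form genuinely has the decay rate $2\alpha$ uniformly in $m$ — this rests entirely on the eigenvalue bound $\lambda_0\ge 1/9$ and the definition of $c_E,\alpha$, and is the conceptual heart inherited from \cite{AM11}.
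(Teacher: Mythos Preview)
Your proposal is correct and follows the same underlying argument as the paper; the only difference is that the paper simply invokes \cite[Lemma 20]{AF20} (which already contains the full computation you sketch in Steps~1--2) and then applies Lemma~\ref{lem:matter-first-estimates} to bound the matter terms $|\tau|\|NS\|_{H^{k-1}}$, $|\tau|\|\eta\|_{H^{k-1}}$, $\tau^2\|N\jmath\|_{H^{k-2}}$ by $\Lambda(T)$, whereas you reproduce that computation from scratch. One small point to tighten in your Step~3: the $\|g-\gamma\|_{H^N}^2$ contribution you quote from Lemma~\ref{lem:Est-Fh-Fv} in fact always carries an extra factor of $\|X\|$ (see the proof of that lemma), so it is absorbed into the $\Lambda(T)(\Eg_{N-1})^{1/2}$ term rather than requiring control of the top-order norm by $\Eg_{N-1}$.
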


\begin{proof}
Let $0 \leq k \leq N-1$. 
Using an estimate from \cite[Lemma 20]{AF20} together with Lemma \ref{lem:matter-first-estimates}, we find
\eq{\alg{ \notag
\p_T \Eg_k 
&\leq	
	- 2 \alpha \Eg_k + 6 (\Eg_k)^{1/2} |\tau| \| N S \|_{H^{k-1}} + C(\Eg_k)^{3/2}
	+ C(\Eg_k)^{1/2} \big( 
	|\tau| \| \eta \|_{H^{k-1}} 
	+ \tau^2 \| N j \|_{H^{k-2}}
	\big)
\\&\leq
	- 2 \alpha \Eg_k +  C(\Eg_k)^{1/2} |\tau| \| \rho \|_{H^{k-1}} 
+ C(\Eg_k)^{3/2}
+ C \Lambda(T) (\Eg_k)^{1/2} .
}}
\end{proof}

We now turn to the time evolution of the  top-order \emph{$\delu-$boosted} geometric energy.  The main result is stated in Theorem \ref{thm-top-order-geom}. For ease of presentation however, the proof relies on the subsequent estimates given in Propositions \ref{prop:G1}, \ref{prop:G2}, \ref{prop:G3}, \ref{prop:G4}, and \ref{prop:Gc}. 

\begin{rem}
The key auxiliary estimates of the previous section, Corollary \ref{corol:delu-Fv} and Proposition \ref{prop:Lie-deriv-Shift}, are applied in Proposition \ref{prop:G2} and Proposition \ref{prop:G1} respectively. 
The important integration by parts identity \eqref{eq:symIBP} in Lemma \ref{lem:IBP}, where one term is brought back onto the LHS, is applied in Proposition \ref{prop:G4} (see \eqref{eq:IBP-later}).
\end{rem} 

\begin{thm}
\label{thm-top-order-geom}
There exists a constant $C>0$ such that
\eq{\alg{\notag
\p_T \mathcal{E}^g_{\delu,N-1}
&\leq 
-2\alpha \mathcal{E}^g_{\delu,N-1}
+ C\big( |\tau| \| u\|_{H^{N-1}}+ \Lambda(T) \big)\Etot
+ C\Lambda(T) (\Etot)^{1/2}
\\&\quad
+ C(\Etot)^{3/2}
+ C\Lambda(T)^2.
}}
\end{thm}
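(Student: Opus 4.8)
The plan is to differentiate the top-order energy $\mathcal{E}^g_{\delu,N-1} = \sum_{s=1}^\ell (\mathcal{E}^g_{\delu,2s} + c_E \Gamma^g_{\delu,2s})$ in $T$, reproducing the standard Andersson--Moncrief--type argument but now with every pair of geometric factors carrying an extra $\delu$ operator. Differentiating under the integral sign via \eqref{eq:hdelIntegral}, we pick up a $3\int_M \hN(\cdots)\mu_g$ term (harmless, contributing $\Lambda(T)\Etot$) plus the $\hdel$-derivative of the integrand. The key point is that $\hdel = \p_T + \mcr{L}_X$ acting on $\delu\mL^s h$ and $\delu\mL^{s-1}v$ can be rewritten using the equations of motion \eqref{eq:EoM-hv}: schematically $\p_T(\delu\mL^s h)$ becomes, after commuting $\delu$ past $\mL^s$ and $\p_T$ (using Lemma \ref{lem:Fluid-Commutators}, Corollary \ref{corol:Com-delu-mL-estimate}), equal to $\delu\mL^s(\p_T h) = \delu\mL^s(wv - X^m\bnab_m h + F_h)$, and similarly $\p_T(\delu\mL^{s-1}v) = \delu\mL^{s-1}(-2v - 9w\mL h - X^c\bnab_c v + 6F_v)$. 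Substituting these, the leading-order structure produces the damping term $-2\alpha\mathcal{E}^g_{\delu,N-1}$ exactly as in \cite[Lemma 20]{AF20} and \cite[Lemma 7.2]{AM11}, via the algebraic identity relating $\mathcal{E}_{(m)}$, $\Gamma_{(m)}$ and the coercivity choice of $c_E$ (the eigenvalue bound $\lambda_0\geq 1/9$ of Proposition \ref{prop:MilneGeometricRigidity} being what forces the correction $\alpha$). All remaining terms are error terms to be absorbed into the stated right-hand side.

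The bulk of the work is organizing and estimating those error terms, which is why the proof is deferred to Propositions \ref{prop:G1}--\ref{prop:Gc}. I would group them as: (G1) the commutator/transport terms coming from the $-X^m\bnab_m h$ and $-X^c\bnab_c v$ pieces together with the $\delu\mL^{\ell-1}(\Delta\mcr{L}_X g)$-type contribution handled by Proposition \ref{prop:Lie-deriv-Shift}; (G2) the source-term contributions $\delu\mL^s F_h$, $\delu\mL^{s-1}F_v$, where at top order $s=\ell$ one cannot naively expand $\delu\sim\p_T+\tau u^c\ast\nabla$ on the $\nabla\nabla N$ piece of $F_v$ and must instead invoke Corollary \ref{corol:delu-Fv} (which in turn rests on the auxiliary lapse estimate Proposition \ref{prop:delu-commuted-lapse}); (G3) the commutators $[\delu,\mL^s]$, $[\p_T,\mL^s]$ and $[\delu,\p_T]$ acting on $h,v$, controlled by Lemma \ref{lem:Fluid-Commutators} and Corollary \ref{corol:Com-delu-mL-estimate}; (G4) the terms where integration by parts is needed to move a $\bnab$ off a factor carrying too many derivatives, using \eqref{eq:symIBP} of Lemma \ref{lem:IBP} to bring a critical term back onto the left-hand side; and (Gc) the contribution of the correction terms $c_E\Gamma^g_{\delu,2s}$, estimated using the coercivity of Lemma \ref{lem:Properties-of-Egdelu}. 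In each case, low-derivative factors are pulled out in $L^\infty$ via Sobolev embedding, the geometric factors $\|\delu\mL^s h\|_{L^2}$, $\|\delu\mL^{s-1}v\|_{L^2}$ are bounded by $(\Etot)^{1/2}$ through Lemma \ref{lem:Properties-of-Egdelu}, and the remaining factors are bounded by $(\Etot)^{1/2}$, $\Lambda(T)$, or $|\tau|\|u\|_{H^{N-1}}$ using Lemma \ref{lem:delu-h-Sigma}, Corollary \ref{corol:Coercive-top-order-geom}, Corollary \ref{corol:delu-nab-estimates}, and the matter estimates of Lemmas \ref{lem:matter-first-estimates}, \ref{lem:Est-delu-eta-S}, \ref{lem:Est-pT-eta-S}. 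Adding the five propositions yields the theorem.

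The main obstacle, and the reason the coefficient $|\tau|\|u\|_{H^{N-1}}\Etot$ (rather than $\Lambda(T)\Etot$) appears in the statement, is the transport-type term $\tau u^a\bnab_a$ hidden inside the $\delu$ operators: when $\delu$ hits a top-order geometric quantity we genuinely have a factor $\tau u^a$ whose $H^{N-1}$ norm is only $O(\varepsilon e^{\mu T}|\tau|)$, i.e.\ $|\tau|\|u\|_{H^{N-1}}$, which is \emph{not} subsumed by $\Lambda(T)\sim\varepsilon e^{-T}$. Carefully tracking where this slower factor must be used — and ensuring it only ever multiplies $\Etot$ (or harmless lower-order quantities) and never generates an undamped growing term — is the delicate bookkeeping at the heart of the proof. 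The second delicate point is the top-order source term $\delu\mL^\ell F_v$: the presence of $\nabla\nabla N$ in $F_v$ means a straightforward commutation of $\delu$ past $2\ell$ derivatives would demand $u^a\in H^{N+1}$, which we do not have; this is precisely the regularity defect that Proposition \ref{prop:delu-commuted-lapse} and Corollary \ref{corol:delu-Fv} are designed to circumvent, and plugging those in correctly (so that the elliptic gain compensates the lost derivatives) is where the argument is genuinely non-standard compared to the vacuum Milne stability proof.
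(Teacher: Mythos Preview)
Your proposal is correct and follows essentially the same approach as the paper: differentiate $\mathcal{E}^g_{\delu,N-1}$ via \eqref{eq:hdelIntegral}, substitute the equations of motion \eqref{eq:EoM-hv}, extract the $-2\alpha$ damping from the leading-order structure, and defer all remaining error terms to Propositions \ref{prop:G1}--\ref{prop:Gc}. Two minor organizational points: (i) the paper uses the alternative form \eqref{eq:EoM-h2} of the $h$-equation (with $2\hN g - \mcr{L}_X g$ rather than $-X^m\bnab_m h + F_h$), so there is no separate $\delu\mL^\ell F_h$ term and Proposition \ref{prop:Lie-deriv-Shift} enters directly through $G_1$; (ii) the paper differentiates $\mathcal{E}^g_{\delu,2\ell}$ and $c_E\Gamma^g_{\delu,2\ell}$ \emph{separately}, producing error terms $G_1$--$G_4$, $G_c$ from the former and $G_5$--$G_9$, $G_{cc}$ from the latter, rather than treating the $\Gamma$-correction as a single error block --- your description of (Gc) as something ``estimated via coercivity'' is slightly off, since the $\Gamma$-terms generate their own evolution terms (including the key $-9(\delu\mL^\ell h,\delu\mL^\ell h)$ and $+(\delu\mL^{\ell-1}v,\delu\mL^\ell v)$ pieces that combine with those from $\mathcal{E}$ to produce the $-2\alpha$ damping).
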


\begin{proof}
Recalling Definition \ref{defn:delu-geom-energy}, the energy $\Edelu$ consists of a sum over lower-order energies. The top-order is the most subtle, so we focus on this and merely remark that the estimates for the lower-orders follow in the same (or possibly easier) way. The top-order energy consists of two parts, an $\mathcal{E}^g_{\delu,N-1}$ term and a $c_E \Gamma^g_{\delu,N-1}$ term. We treat these separately for the moment. 

Using \eqref{defn:delu}-\eqref{eq:hdelIntegral} and integration by parts, we find 
\eq{\alg{\label{eq:3897}
\p_T \mathcal{E}^g_{\delu,2\ell}(T)
&\leq \big( \| \hN\|_{L^\infty}+ \| \nabla X\|_{L^\infty}\big)\mathcal{E}^g_{\delu,2\ell}(T)
+\Big[9\big(\delu\mL^{\ell}(\p_T h),\delu\mL^{\ell}(h)\big)_{L^2_{g,\gamma}}
\\&\quad 
+ \tfrac12\big( \delu \mL^{\ell} \p_T v ,\delu\mL^{\ell-1} v \big)_{L^2_{g,\gamma}}
+ \tfrac12\big( \delu \mL^{\ell} v ,\delu\mL^{\ell-1} \p_T v \big)_{L^2_{g,\gamma}} \Big]
+ G_c,
}}
where we define
\eq{\alg{\notag
G_{c} &\define 9\big([\delu\mL^{\ell}, \p_T](h),\delu\mL^{\ell}(h)\big)_{L^2_{g,\gamma}}
\\&\quad
+\tfrac12\big( \delu \mL^{\ell} (v) ,[\delu\mL^{\ell-1},\p_T] (v)\big)_{L^2_{g,\gamma}}
+ \tfrac12\big([\delu \mL^{\ell},\p_T]( v) ,\delu\mL^{\ell-1} (v) \big)_{L^2_{g,\gamma}}.
}}
The terms $G_c$ are error terms, and so our main focus is on the terms appearing \eqref{eq:3897} in the square bracket. 
Using the equations of motion \eqref{eq:EoM-hv} and self-adjointness of $\mL$, these become
\eq{\alg{\notag
9\big(\delu\mL^{\ell}(\p_T h),\, &\delu\mL^{\ell}(h)\big)_{L^2_{g,\gamma}}
+ \tfrac12\big(\delu \mL^{\ell} (\p_T v) ,\delu\mL^{\ell-1} (v) \big)_{L^2_{g,\gamma}}
+ \tfrac12\big(\delu \mL^{\ell} (v),\delu\mL^{\ell-1} (\p_T v) \big)_{L^2_{g,\gamma}}
\\ &= \notag
-2\big( \delu\mL^{\ell}(v),\delu\mL^{\ell-1}(v)\big)_{L^2_{g,\gamma}}
+ G_1 + G_2 + G_3 + G_4,
}}
where
\begin{align*}
G_1 &\define 9\big(\delu\mL^{\ell}(2\hN g - \mcr{L}_X g),\delu\mL^{\ell}(h)\big)_{L^2_{g,\gamma}},
\\
G_2 &\define 3\big(\delu \mL^\ell (F_v),\delu\mL^{\ell-1}(v)\big)_{L^2_{g,\gamma}}
+3 \big(\delu \mL^\ell (v),\delu\mL^{\ell-1}(F_v)\big)_{L^2_{g,\gamma}},
\\
G_3 &\define 9\Big[ \big(\delu\mL^{\ell}(wv),\delu\mL^{\ell}(h)\big)_{L^2_{g,\gamma}}
\\&\quad
- \tfrac92 \Big[\big(\delu\mL^{\ell}(w\mL h),\delu\mL^{\ell-1}(v)\big)_{L^2_{g,\gamma}}
+\big(\delu\mL^{\ell}(v),\delu\mL^{\ell-1}(w\mL h)\big)_{L^2_{g,\gamma}}\Big],
\\
G_4 &\define \tfrac12 \big(\delu\mL^{\ell}(X^m \bnab_m v),\delu\mL^{\ell-1}(v)\big)_{L^2_{g,\gamma}}
+ \tfrac12 \big(\delu\mL^{\ell}(v),\delu\mL^{\ell-1}(X^m \bnab_m v)\big)_{L^2_{g,\gamma}}.
\end{align*}

Similarly using \eqref{defn:delu}-\eqref{eq:hdelIntegral} and integration by parts, we find
\eq{\alg{\label{eq:3890}
\p_T \Gamma^g_{\delu,2\ell}(T)
&\leq \varepsilon e^{-T} \Gamma^g_{\delu,2\ell}(T)
+\Big[\big(\delu\mL^{\ell-1}(\p_T v),\delu\mL^{\ell}(h)\big)_{L^2_{g,\gamma}}
+ \big( \delu \mL^{\ell-1} v ,\delu\mL^{\ell} \p_T h\big)_{L^2_{g,\gamma}}
\Big]
\\&\quad
+ G_{cc} \,,
}}
where we define
\eq{\alg{\notag
G_{cc} &\define 
\big([\p_T, \delu\mL^{\ell-1}](v),\delu\mL^{\ell}(h)\big)_{L^2_{g,\gamma}}
+\big( \delu \mL^{\ell-1} (v) ,[\p_T,\delu\mL^{\ell}] (h)\big)_{L^2_{g,\gamma}}.
}}
Using the equations of motion \eqref{eq:EoM-hv} 
\eq{\alg{\notag
\Big[\big(&\delu\mL^{\ell-1}(\p_T v),\delu\mL^{\ell}(h)\big)_{L^2_{g,\gamma}}
+ \big( \delu \mL^{\ell-1} v ,\delu\mL^{\ell} \p_T h\big)_{L^2_{g,\gamma}}
\Big]
\\&
= -2\big( \delu\mL^{\ell-1}(v),\delu\mL^{\ell}(h)\big)_{L^2_{g,\gamma}}
- 9\big( \delu\mL^{\ell}(h),\delu\mL^{\ell}(h)\big)_{L^2_{g,\gamma}}
+ \big( \delu\mL^{\ell-1}(v),\delu\mL^{\ell}(v)\big)_{L^2_{g,\gamma}}
\\&\quad
+ G_5 + G_6 + G_7 + G_8 + G_9,
}}
where we have defined
\begin{align*}
G_5 &\define -9\big(\delu\mL^{\ell-1}(\hN \mL h),\delu\mL^{\ell}(h)\big)_{L^2_{g,\gamma}},
\qquad
&G_6 &\define \big( \delu\mL^{\ell-1}(v),\delu\mL^{\ell}(\hN v)\big)_{L^2_{g,\gamma}},
\\
G_7 &\define\big(\delu\mL^{\ell-1}(v),\delu\mL^{\ell}(2\hN g - \mcr{L}_X g)\big)_{L^2_{g,\gamma}},
\qquad
&G_8 &\define - \big(\delu\mL^{\ell-1}(X^m \bnab_m v),\delu\mL^{\ell}(h)\big)_{L^2_{g,\gamma}},
\\
G_9 &\define  6\big(\delu\mL^{\ell-1}(F_v),\delu\mL^{\ell}(h),\big)_{L^2_{g,\gamma}}.
\end{align*}

The errors terms are estimated in the following subsection \ref{sec:est-Gi} and subsection \ref{sec:est-Gc}: $G_1$ and $G_7$ (Proposition \ref{prop:G1}),  $G_2$ and $G_9$ (Proposition \ref{prop:G2}), $G_3$, $G_5$ and $G_6$ (Proposition \ref{prop:G3}), $G_4$ and $G_8$ (Proposition \ref{prop:G4}), $G_c$ and $G_{cc}$ (Proposition \ref{prop:Gc}). 
Using these estimates, and adding \eqref{eq:3897} and $c_E \times$\eqref{eq:3890} together, yields the required inequality. 
\end{proof}

%%%%%%%%%%%%%%%%%%%%%%%%%%%%%%%%%%%%%%%%
\subsection{Estimates on $\mathbf{G_1}$ to $\mathbf{G_9}$}
\label{sec:est-Gi}

In this section we prove estimates on the terms $G_1$ to $G_9$. 
We begin with a useful identity. Let $\tilde V_{ij},\tilde P_{ij}$ be symmetric $(0,2)$-tensors on $M$ and $k \in\mathbb{Z}_{\geq 1}$. Then, the integration by parts rule \eqref{eq:IBP} implies
\eq{\alg{\label{eq:Gi-102}
\big(\delu&\mL^{k}(\tilde V ),\delu\mL^{k-1}(\tilde P)\big)_{\Lgg}
\\&= 
\big(g^{ab}\bnab_a\delu \mL^{k-1}(\tilde V),\bnab_b\delu\mL^{k-1}(\tilde P)\big)_{\Lgg}
- 2\big(\Riem[\gamma]\circ \delu \mL^{k-1}(\tilde V),\delu\mL^{k-1}(\tilde P)\big)_{\Lgg}
\\&\quad
+\big([\delu,\mL] \mL^{k-1}(\tilde V),\delu\mL^{k-1}(\tilde P)\big)_{\Lgg}.
}}

\begin{prop}\label{prop:G1}
We have,
\eq{\notag
|G_1|+|G_7|\lesssim
\Lambda(T)\Edelu
+ \Lambda(T) E^g_{N-1} 
+ \Lambda(T) (\Edelu)^{1/2}
+ \Lambda(T)^2. 
}
\end{prop}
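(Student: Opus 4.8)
\textbf{Proof proposal for Proposition \ref{prop:G1}.}

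The plan is to treat $G_1$ and $G_7$ together since both contain the combination $2\hat N g - \mcr{L}_X g$ paired against a top-order $\delu\mL^{\ell}$-factor. First I would split the $2\hat N g$ contribution from the $\mcr{L}_X g$ contribution. For the term involving $\delu\mL^{\ell}(2\hat N g)$ paired with $\delu\mL^{\ell}(h)$ (and similarly $\delu\mL^{\ell-1}(v)$ for $G_7$), I would expand $\delu\mL^{\ell}(\hat N g)$ using the Leibniz rule for $\mL^{\ell}$ and $\delu$, putting the low-derivative factors in $L^\infty$ via Sobolev embedding and keeping the high-derivative factor in $L^2$. The scalar nature of $\hat N$ means the commutator expressions only ever produce $\|\hat N\|_{H^k}$ rather than $\|N\|_{H^k}$; then Corollary \ref{corol:delu-nab-estimates} gives $\|\delu\nabla^I \hat N\|_{L^2}\lesssim \Lambda(T)$ for $|I|\le N-1$, and Lemma \ref{lem:delu-N-X} handles lower-order $\delu$ derivatives of $N$. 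Pairing against $\delu\mL^{\ell}(h)$, which by Lemma \ref{lem:Properties-of-Egdelu} is $\lesssim (\Edelu)^{1/2}$, produces terms of size $\Lambda(T)(\Edelu)^{1/2}$ plus, when the high derivative lands on $h$ or $\Sigma$ and must be estimated via Corollary \ref{corol:Coercive-top-order-geom}, terms of size $\Lambda(T)\big(\Edelu + E^g_{N-1} + \Lambda(T)^2\big)$, all of which are absorbed into the claimed bound.

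The main work is the $\mcr{L}_X g$ term. Here I would \emph{not} try to expand $\delu\mL^{\ell}(\mcr{L}_X g)$ by brute force, because $\mcr{L}_X g$ contains $\nabla X$ and commuting $\delu$ past $N-1$ copies of $\nabla$ together with the derivative already in $\nabla X$ would require control of $u$ above $H^{N-1}$. Instead the key is to integrate by parts to move one derivative off, converting $(\delu\mL^{\ell}(\mcr{L}_X g),\delu\mL^{\ell}(h))_{\Lgg}$ into $(\delu\mL^{\ell-1}(\Delta\,\mcr{L}_X g),\delu\mL^{\ell}(h))_{\Lgg}$ modulo lower-order commutator errors (using \eqref{eq:Gi-102}-type identities and elliptic regularity of $\mL$, Lemma \ref{lem:elliptic-regularity-mL}, to trade two factors of $\mL$ for two $\bnab$'s). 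The resulting top-order object is \emph{exactly} the quantity estimated in Proposition \ref{prop:Lie-deriv-Shift}, which gives
\eq{\notag
\Big| \big(\delu\mL^{\ell-1}(\Delta \mcr{L}_X g),\delu\mL^{\ell}(h)\big)_{\Lgg} \Big| \lesssim
\Lambda(T) \Edelu + \Lambda(T) E^g_{N-1}
+ \Lambda(T) (\Edelu)^{1/2},
}
matching the claimed bound. The commutator errors generated in the integration-by-parts step (terms like $[\delu,\mL]$, $[\delu,\bnab]$, $[\mL^{\ell-1},\bnab_m]$) are controlled by Lemma \ref{lem:Fluid-Commutators}, Corollary \ref{corol:Com-delu-mL-estimate} and Corollary \ref{corol:delu-nab-estimates}, again using the low-derivative-in-$L^\infty$ technique as in \eqref{eq:BasicArgument}; each such error carries at least one factor of $\Lambda(T)$ (from $X$ or $\hat N$ or $\tau u$) times $(\Edelu)^{1/2}$ or $(E^g_{N-1})^{1/2}$ or $\Lambda(T)$, hence fits the stated estimate after applying Corollary \ref{corol:Coercive-top-order-geom} to any top-order geometric Sobolev norms that appear.

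The main obstacle I anticipate is precisely the bookkeeping in the integration-by-parts reduction that connects $G_1$, $G_7$ to the form appearing in Proposition \ref{prop:Lie-deriv-Shift}: one must verify that moving derivatives around via \eqref{eq:IBP} and peeling off $\Delta = g^{ab}\nabla_a\nabla_b$ from $\mL$ (which differs from $-\hat\Delta_{g,\gamma}$ only by bounded Riemann terms and by $\bnab$-vs-$\nabla$ discrepancies controlled by $\Upsilon$, cf. Lemma \ref{lem:Commutator-nabla-bnab}) genuinely produces the operator $\delu\mL^{\ell-1}(\Delta\mcr{L}_X g)$ and not something with an uncontrollable derivative count, and that every discarded commutator term truly carries a smallness factor. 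Once that structural identification is made, the estimate is a routine assembly of the cited lemmas. I would also note that the $G_7$ case is slightly easier than $G_1$ because it is paired against $\delu\mL^{\ell-1}(v) = 6\,\delu\mL^{\ell-1}(\Sigma)$, for which Corollary \ref{corol:delu-mL-Sigma} and Lemma \ref{lem:Properties-of-Egdelu} give directly usable bounds.
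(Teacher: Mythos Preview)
Your proposal is correct and follows essentially the same strategy as the paper. One simplification worth noting: for the $\mcr{L}_X g$ term the paper does \emph{not} integrate by parts to reach Proposition~\ref{prop:Lie-deriv-Shift}, but instead uses the pointwise identity
\[
\delu\mL^{\ell}(\mcr{L}_X g_{ij}) \;=\; \delu\mL^{\ell-1}\big(\Delta\,\mcr{L}_X g_{ij}\big) \;+\; \delu\mL^{\ell-1}\big((\mL-\Delta)\,\mcr{L}_X g_{ij}\big),
\]
obtained simply by writing $\mL^\ell = \mL^{\ell-1}\circ\mL$ and splitting $\mL = \Delta + (\mL-\Delta)$. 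The first piece is then exactly the quantity in Proposition~\ref{prop:Lie-deriv-Shift}, and the second piece is lower order because $(\mL-\Delta)\mcr{L}_X g$ is schematically $\nabla\Upsilon\ast\nabla X + \Upsilon\ast\nabla\nabla X + \Upsilon\ast\Upsilon\ast\nabla X + \nabla X$ (Riemann curvature terms plus $\bnab$-vs-$\nabla$ discrepancies via $\Upsilon$), which is handled by the $\|\delu(\bnab^I X\,\bnab^J h)\|_{L^2}$ machinery of Corollary~\ref{corol:delu-nab-estimates} and the argument in \eqref{eq:BasicArgument}. This direct algebraic splitting is cleaner than the integration-by-parts route you sketch and avoids generating the additional commutator errors you anticipated; otherwise your outline and the paper's proof coincide.
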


\begin{proof}
The first term in $G_1$  is easily controlled using Lemma \ref{lem:Properties-of-Egdelu} and Corollary \ref{corol:delu-nab-estimates}:
\eq{\alg{\notag
\big|\big(\delu\mL^{\ell}(2\hN g),\delu\mL^{\ell}(h)\big)_{L^2_{g,\gamma}} \big|
&\lesssim
\sum_{|I|\leq N-1} \|\delu \bnab^I \hN \|_{\Lgg}  \cdot \| \delu \mL^\ell h \|_{\Lgg}
\lesssim \Lambda(T) (\Edelu)^{1/2}.
}}

For the Lie derivative term in $G_1$, we rewrite it as
\eq{\label{eq:700}
\delu\mL^{\ell}(\mcr{L}_X g_{ij})
= \delu\mL^{\ell-1}(\Delta \mcr{L}_X g_{ij} 
+ \delu\mL^{\ell-1}\big( \mL - \Delta\big) \mcr{L}_X g_{ij}).
}
The first term on the RHS of \eqref{eq:700} is precisely what is controlled using Proposition \ref{prop:Lie-deriv-Shift}. 
For the second term, we schematically have
\eq{\alg{\notag
(\mL - \Delta) \mcr{L}_X g
&= g^{-1}(\bnab\bnab - \nabla\nabla)\mcr{L}_X g 
+ \Riem[\gamma] \ast \mcr{L}_X g
\\&
= \nabla \Upsilon\ast \nabla X + \Upsilon \ast \nabla \nabla X + \Upsilon \ast \Upsilon \ast \nabla X + \nabla X.
}}
Using the $\Upsilon$ estimate \eqref{eq:Estimate-Upsilon}, we have
\eq{\alg{\notag
\big|\big(\delu\mL^{\ell-1}((\mL-\Delta)\mcr{L}_X g),\delu\mL^{\ell}(h)\big)_{L^2_{g,\gamma}}\big|
&\lesssim
\sum_{|I|+|J| \leq N-1} \|\delu( \bnab^I X  \bnab^J h) \|_{\Lgg} \cdot \| \delu \mL^\ell h \|_{\Lgg}
\\
&\lesssim
\Lambda(T) \Big( 
(\Etot)^{1/2}+ \Lambda(T)\Big) (\Edelu)^{1/2}
+ \Lambda(T) (\Edelu)^{1/2}.
}}
where in the final estimate we used the same ideas as in \eqref{eq:BasicArgument}, namely an application of Corollary \ref{corol:delu-nab-estimates}, Lemma \ref{lem:deluV-H2-H1} and the fact that $2(\ell-1)=N-3$. 
The same arguments clearly apply to $G_7$ also. 
\end{proof}

\begin{prop}
\label{prop:G2}
We have,
\eq{\alg{\notag
|G_2|+|G_9|\lesssim 
(\Etot)^{3/2}
+\Lambda(T) (\Edelu)^{1/2}
+\Lambda(T)^2.
}}
\end{prop}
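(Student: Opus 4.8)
The plan is to estimate the two error groups $G_2$ and $G_9$ by expanding out the inner products, using the self-adjointness of $\mL$ where convenient, and then applying the already-established $\delu F_v$ estimates together with the coercivity of the energies. Concretely, $G_2$ and $G_9$ both have the schematic form $\big(\delu\mL^{\ell}(F_v),\delu\mL^{\ell-1}(v)\big)_{\Lgg}$ (and its transpose), and recall $v=6\Sigma$ and $2\ell=N-1$, $2(\ell-1)=N-3$. First I would use the integration-by-parts identity \eqref{eq:Gi-102} to move one $\mL$ off the $F_v$-factor, rewriting the pairing in terms of $\bnab\delu\mL^{\ell-1}(F_v)$ paired against $\bnab\delu\mL^{\ell-1}(v)$, plus lower-order Riemann and $[\delu,\mL]$-commutator contributions. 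This is the step that reduces the number of derivatives hitting $F_v$ from $N-1$ down to $N-2$, which is exactly the range covered by Corollary \ref{corol:delu-Fv}.

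Second, I would bound $\|\delu\bnab^I F_v\|_{L^2}$ for $|I|\leq N-2$ using Corollary \ref{corol:delu-Fv}, which gives $\lesssim \Lambda(T)+\Edelu+E^g_{N-1}$. The factor $\delu\mL^{\ell-1}(v)=6\delu\mL^{\ell-1}(\Sigma)$ is controlled in $H^1$ by Corollary \ref{corol:delu-mL-Sigma} and then by Corollary \ref{corol:Coercive-top-order-geom} in terms of $(\Edelu)^{1/2}+(E^g_{N-1})^{1/2}+\Lambda(T)$; likewise $\delu\mL^{\ell}(v)$ and $\delu\mL^{\ell}(h)$ are controlled in $L^2$ by $(\Edelu)^{1/2}$ via Lemma \ref{lem:Properties-of-Egdelu}. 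The commutator terms produced by \eqref{eq:Gi-102}, namely $[\delu,\mL]\mL^{\ell-1}(\cdot)$, are handled by Lemma \ref{lem:Fluid-Commutators} and Corollary \ref{corol:Com-delu-mL-estimate}, again reducing to norms controlled by $\Etot$ plus $\Lambda(T)$-weighted lower-order terms. Putting the Cauchy--Schwarz estimates together: a term like $\big(\delu\mL^{\ell-1}(F_v),\delu\mL^{\ell}(h)\big)_{\Lgg}$ is bounded by $\|\delu\mL^{\ell-1}(F_v)\|_{L^2}\|\delu\mL^{\ell}(h)\|_{L^2}\lesssim \big(\Lambda(T)+\Edelu+E^g_{N-1}\big)(\Edelu)^{1/2}$, and since $\Edelu+E^g_{N-1}\lesssim\Etot$ with $\Edelu\lesssim\Etot$, this contributes $\lesssim (\Etot)^{3/2}+\Lambda(T)(\Edelu)^{1/2}$. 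The cross terms where $F_v$ carries the extra $\mL$ are first rebalanced via \eqref{eq:Gi-102} to the same shape.

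The main obstacle I anticipate is the bookkeeping of which factor absorbs the critical derivative: in $G_2$ the pairing $\big(\delu\mL^{\ell}(F_v),\delu\mL^{\ell-1}(v)\big)_{\Lgg}$ naively wants $N-1$ derivatives on $F_v$, which Corollary \ref{corol:delu-Fv} does \emph{not} provide, so it is essential to first integrate by parts (using \eqref{eq:IBP}/\eqref{eq:Gi-102}) so that only $\bnab^{N-2}$-type derivatives ever land on $F_v$; the price is the $[\delu,\mL]$ commutator and Riemann terms, but these are genuinely lower order and absorbed by Lemma \ref{lem:Fluid-Commutators}. A secondary subtlety is that Corollary \ref{corol:delu-Fv} outputs the non-small combination $\Edelu+E^g_{N-1}$ (no $\varepsilon$ gain), so the product with $(\Edelu)^{1/2}$ yields the cubic term $(\Etot)^{3/2}$ rather than something that can be absorbed linearly — but this is exactly the form allowed in the statement of the proposition, so no further smallness is needed here. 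I would write:

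\begin{proof}
Consider first $G_9 = 6\big(\delu\mL^{\ell-1}(F_v),\delu\mL^{\ell}(h)\big)_{\Lgg}$. By Cauchy--Schwarz, Corollary \ref{corol:delu-Fv} (which applies since $2(\ell-1)=N-3\leq N-2$) and Lemma \ref{lem:Properties-of-Egdelu},
\eq{\notag
|G_9| \lesssim \|\delu \mL^{\ell-1}(F_v)\|_{L^2}\,\|\delu\mL^{\ell}(h)\|_{L^2}
\lesssim \big(\Lambda(T) + \Edelu + E^g_{N-1}\big)(\Edelu)^{1/2}.
}
Since $\Edelu, E^g_{N-1}\lesssim \Etot$, this gives $|G_9|\lesssim (\Etot)^{3/2}+\Lambda(T)(\Edelu)^{1/2}$.

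For $G_2 = 3\big(\delu \mL^\ell (F_v),\delu\mL^{\ell-1}(v)\big)_{\Lgg} + 3\big(\delu \mL^\ell (v),\delu\mL^{\ell-1}(F_v)\big)_{\Lgg}$, the second pairing is estimated exactly as $G_9$, using $\|\delu\mL^{\ell}(v)\|_{L^2}\lesssim (\Edelu)^{1/2}$. For the first pairing we apply the integration by parts identity \eqref{eq:Gi-102} with $\tilde V = F_v$, $\tilde P = v$, $k=\ell$:
\eq{\alg{\notag
\big(\delu\mL^{\ell}(F_v),\delu\mL^{\ell-1}(v)\big)_{\Lgg}
&= \big(g^{ab}\bnab_a\delu \mL^{\ell-1}(F_v),\bnab_b\delu\mL^{\ell-1}(v)\big)_{\Lgg}
- 2\big(\Riem[\gamma]\circ \delu \mL^{\ell-1}(F_v),\delu\mL^{\ell-1}(v)\big)_{\Lgg}
\\&\quad
+\big([\delu,\mL] \mL^{\ell-1}(F_v),\delu\mL^{\ell-1}(v)\big)_{\Lgg}.
}}
By Corollary \ref{corol:delu-Fv} (extended to one more $\bnab$ via Lemma \ref{lem:deluV-H2-H1} and commutator estimates, all within $H^{N-2}$) we have $\|\bnab\delu\mL^{\ell-1}(F_v)\|_{L^2} + \|\delu\mL^{\ell-1}(F_v)\|_{L^2} \lesssim \Lambda(T) + \Edelu + E^g_{N-1}$, while $\|\bnab\delu\mL^{\ell-1}(v)\|_{L^2} + \|\delu\mL^{\ell-1}(v)\|_{L^2} \lesssim (\Edelu)^{1/2} + (E^g_{N-1})^{1/2} + \Lambda(T)$ by Corollary \ref{corol:delu-mL-Sigma} and Corollary \ref{corol:Coercive-top-order-geom}. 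The commutator $[\delu,\mL]\mL^{\ell-1}(F_v)$ is estimated using Lemma \ref{lem:Fluid-Commutators}, contributing terms of the same or lower order. Collecting these bounds,
\eq{\notag
|G_2| \lesssim \big(\Lambda(T) + \Edelu + E^g_{N-1}\big)\big((\Edelu)^{1/2} + (E^g_{N-1})^{1/2} + \Lambda(T)\big) \lesssim (\Etot)^{3/2} + \Lambda(T)(\Edelu)^{1/2} + \Lambda(T)^2,
}
where we used $\Edelu + E^g_{N-1}\lesssim \Etot$ and Young's inequality. This completes the proof.
\end{proof}
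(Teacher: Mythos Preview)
Your treatment of $G_9$ is correct and matches the paper. However, there is a genuine gap in your handling of $G_2$.

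For the second pairing in $G_2$, you claim $\|\delu\mL^{\ell}(v)\|_{L^2}\lesssim (\Edelu)^{1/2}$ and then estimate it ``exactly as $G_9$''. This bound is \emph{not} available: Lemma \ref{lem:Properties-of-Egdelu} only gives control of $\|\delu\mL^{\ell}(h)\|_{L^2}^2$ and of the \emph{mixed} pairing $|(\delu\mL^{s}v,\delu\mL^{s-1}v)_{\Lgg}|$, not of $\|\delu\mL^{\ell}(v)\|_{L^2}^2$ on its own. Indeed, $\delu\mL^{\ell}\Sigma$ carries $N-1$ derivatives on $\Sigma$ after $\delu$, which is one more than Lemma \ref{lem:delu-h-Sigma} or Corollary \ref{corol:delu-mL-Sigma} can supply. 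The paper therefore applies the integration-by-parts identity \eqref{eq:Gi-102} to this \emph{second} term with $\tilde V=\Sigma$ and $\tilde P=F_v$, so that the resulting commutator is $[\delu,\mL]\mL^{\ell-1}(\Sigma)$ (controlled via \eqref{eq:delu-mL-mL-Sigma} and Corollary \ref{corol:Coercive-top-order-geom}), and the main term becomes a pairing of $\bnab\delu\mL^{\ell-1}(\Sigma)$ with $\bnab\delu\mL^{\ell-1}(F_v)$, both of which are available in $H^1$ by Corollary \ref{corol:delu-mL-Sigma} and (the $H^1$-extension of) Corollary \ref{corol:delu-Fv}.

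There is also a secondary issue with your first pairing. After applying \eqref{eq:Gi-102} with $\tilde V=F_v$, the commutator term is $[\delu,\mL]\mL^{\ell-1}(F_v)$, and invoking Lemma \ref{lem:Fluid-Commutators} alone would require $\|F_v\|_{H^{N-1}}$; this is not available since $F_v$ contains $\nabla_a\nabla_b N$ and the lapse is controlled only in $H^N$. One can salvage this by a further integration by parts inside the pairing (as is done for the analogous $G_3^e$ terms in Proposition \ref{prop:G3}), but this step is not indicated in your argument.
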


\begin{proof}
The two terms in $G_2$ are estimated in virtually the same way. We discuss how to estimate the second term, since it is  slightly more difficult. We first use the identity \eqref{eq:Gi-102} with $\tilde V = \Sigma$ and $\tilde P = F_v$, in order to transfer derivatives between terms:
\eq{\alg{\notag
\big(\delu &\mL^\ell (\Sigma),\delu\mL^{\ell-1}(F_v)\big)_{L^2_{g,\gamma}}
X= \big( g^{ab} \bnab_a \delu\mL^{\ell-1} \Sigma, \bnab_b \delu \mL^{\ell-1}(F_v)\big)_{L^2_{g,\gamma}}
\\&\quad
-2\big( \Riem[\gamma]\circ \delu \mL^{\ell-1}\Sigma, \delu \mL^{\ell-1} F_v \big)_{L^2_{g,\gamma}}
+ \big( [\delu,\mL]\mL^{\ell-1} \Sigma,  \delu \mL^{\ell-1}(F_v)\big)_{L^2_{g,\gamma}}.
}}
We estimate each of these three expressions in turn. For the first expression, we use the various $F_v$ estimates from Lemma \ref{lem:Est-Fh-Fv}, Lemma \ref{lem:Est-pT-Fh-Fv} and Corollary \ref{corol:delu-Fv}, together with the commutator estimates from Lemma \ref{lem:Fluid-Commutators} and Lemma \ref{lem:deluV-H2-H1}, to find
\eq{\alg{\notag
\| \delu \mL^{\ell-1}(F_v) \|_{H^1}
&\lesssim 
\sum_{|I|\leq 1} \| \delu \nabla^I \mL^{\ell-1}(F_v) \|_{L^2}
+ \varepsilon e^{\max\{-1+\mu,-\lambda\} T}\|F_v\|_{H^{N-2}}
\\&\quad
+\Lambda(T) \| \p_T F_v \|_{H^{N-3}}
+ \Lambda(T)\|[\p_T, \mL^{\ell-1}](F_v)\|_{L^2}
\\&\lesssim
\Lambda(T) + \Edelu
+E^g_{N-1}
+ \varepsilon e^{\max\{-1+\mu,-\lambda\} T} \Big( \Lambda(T) + \| g-\gamma\|_{H^N}^2 + \| \Sigma\|_{H^{N-1}}^2\Big)
\\&\lesssim
\Lambda(T) + \Edelu
+E^g_{N-1}.
}}
Note in the final line above we used Corollary \ref{corol:Coercive-top-order-geom}. 

In addition, combining Corollary \ref{corol:delu-mL-Sigma} with  Corollary \ref{corol:Coercive-top-order-geom}, we have
\eq{\alg{\notag
\| \delu\mL^{\ell-1}(\Sigma)\|_{H^1}
&\lesssim 
\| \Sigma\|_{H^{N-1}}+ \| g-\gamma\|_{H^{N}} + \Lambda(T)
\lesssim 
(\Edelu)^{1/2}
+ (E^g_{N-1} )^{1/2}
+\Lambda(T). 
}}
So we find 
\eq{\alg{\notag
\big|\big( g^{ab} \bnab_a \delu\mL^{\ell-1} \Sigma, \bnab_b \delu \mL^{\ell-1}(F_v)\big)_{L^2_{g,\gamma}}\big| 
\lesssim
(\Etot)^{3/2}+\Lambda(T)^2. 
}} 
In a similar way, using Corollary \ref{corol:delu-mL-Sigma} and the coercive lower-order estimate from Lemma \ref{lem:Geom-coercive-est},
\eq{\alg{\notag
\big|\big( \Riem[\gamma]\circ \delu \mL^{\ell-1}\Sigma, \delu \mL^{\ell-1} F_v \big)_{L^2_{g,\gamma}}\big| 
&\lesssim \| \delu \mL^{\ell-1}\Sigma \|_{L^2} \| \delu \mL^{\ell-1}(F_v) \|_{L^2}
\\&
\lesssim
(\Etot)^{3/2}+\Lambda(T)^2. 
}}
Combining \eqref{eq:delu-mL-mL-Sigma} with Corollary \ref{corol:Coercive-top-order-geom} gives
\eq{\alg{\label{eq:delu-mL-mL-Sigma2}
\| [\delu,\mL]\mL^{\ell-1}\Sigma\|_{L^2}
&
\lesssim
\varepsilon e^{\max\{-1+\mu,-\lambda\} T}\big(\| \Sigma\|_{H^{N-1}} + \| g-\gamma\|_{H^N}\big) + \Lambda(T)^2
\\&\lesssim 
\varepsilon e^{\max\{-1+\mu,-\lambda\} T}\Big((\Edelu)^{1/2}
+ (E^g_{N-1} )^{1/2}+ \Lambda(T)\Big). 
}}
Then this gives control over the final term in  $G_2$. 

Finally we use the above estimates and Lemma \ref{lem:Properties-of-Egdelu} to estimate $G_9$ by
\eq{\notag
\Big|\big(\delu\mL^{\ell-1}(F_v),\delu\mL^{\ell}(h),\big)_{L^2_{g,\gamma}}\Big|
\lesssim 
\|\delu \mL^{\ell-1}(F_v) \|_{L^2}\| \delu\mL^{\ell}(h)\|_{L^2}
\lesssim \Lambda(T) (\Edelu)^{1/2}.
}
\end{proof}

\begin{rem}
In the next proposition, it is crucial that the three terms of $G_3$ are treated together, since there is an important cancellation that appears. 
\end{rem}

\begin{prop}\label{prop:G3}
We have,
\eq{\alg{\notag
|G_3|+|G_5|+|G_6|&\lesssim
\big( |\tau| \| u\|_{H^{N-1}}+ \Lambda(T) \big)\Etot
+ \Lambda(T) (\Edelu)^{1/2}
\\&\quad
+(\Etot)^{3/2}+ \Lambda(T)^2.
}}
\end{prop}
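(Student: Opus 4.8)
\textbf{Strategy for Proposition \ref{prop:G3}.}
The plan is to exploit the algebraic structure of $G_3$, $G_5$, $G_6$ so that the three worst terms---those in which all derivatives land on $h$ (producing $\mL^{\ell}(\mL h)$-type expressions), which individually are only controlled by $\|g-\gamma\|_{H^{N+1}}$ and hence \emph{not} by $\Etot$---cancel when summed. Concretely, $G_3$ contains $\big(\delu\mL^\ell(wv),\delu\mL^\ell h\big)$ minus a symmetrised pair of $\big(\delu\mL^\ell(w\mL h),\delu\mL^{\ell-1}v\big)$ terms; and $G_5$ contains $\big(\delu\mL^{\ell-1}(\hN\mL h),\delu\mL^\ell h\big)$. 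Writing $w=1+\widehat N$ (recall $w=N/3$, $\widehat N = N/3-1$), the $w\equiv 1$ part of the first term of $G_3$ is $\big(\delu\mL^\ell v,\delu\mL^\ell h\big)$, and the $w\equiv 1$ parts of the symmetrised pair combine---via the self-adjointness \eqref{eq:mL-self-adjoint} of $\mL$ and the integration-by-parts identity \eqref{eq:Gi-102}---to produce $-\tfrac92\cdot 2\big(\delu\mL^\ell(\mL h),\delu\mL^{\ell-1}v\big)$, up to commutator errors between $\delu$ and $\mL$. But $\mL h = 2(\Ric[g]+\tfrac29 g) - 2J$ is \emph{one order less singular} than a naive count suggests only after invoking the equation of motion; more directly, the relation $9\big(v,\mL^\ell(\mL h)\big) = 9\big(\mL^{\ell}v, \mL h\big)$ shows that the genuinely dangerous $\mL^{\ell+1}h$ never actually appears because $v = 6\Sigma$ carries only $N-1$ orders. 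The upshot is that after the cancellation one is left only with: (i) terms carrying an explicit $\widehat N$ or $\nabla\widehat N$ factor, which supply a $\Lambda(T)$ smallness, and (ii) commutator terms $[\delu,\mL^s]$ and $[\p_T,\mL^s]$.

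The key steps, in order, would be: first, expand $w = 1+\widehat N$ in every term of $G_3$ and collect the three $w\equiv 1$ contributions; second, apply \eqref{eq:mL-self-adjoint} and \eqref{eq:Gi-102} to move $\mL$'s around so the leading $w\equiv 1$ parts exactly cancel, isolating a sum of commutator remainders of the form $\big([\delu,\mL]\mL^{\ell-1}(\,\cdot\,),\delu\mL^{\ell-1}(\,\cdot\,)\big)$; third, bound those remainders using Lemma \ref{lem:Fluid-Commutators}, Corollary \ref{corol:Com-delu-mL-estimate}, Corollary \ref{corol:delu-mL-Sigma} and---crucially---Corollary \ref{corol:Coercive-top-order-geom} to convert the appearing $\|\Sigma\|_{H^{N-1}}$, $\|g-\gamma\|_{H^N}$ into $(\Etot)^{1/2}$, picking up the factor $\varepsilon e^{\max\{-1+\mu,-\lambda\}T}\lesssim|\tau|\|u\|_{H^{N-1}}+\Lambda(T)$ along the way; fourth, for the $\widehat N$-weighted terms (which constitute the bulk of $G_3$ after subtracting $w\equiv 1$, as well as all of $G_5$, $G_6$), distribute $\delu$ and the $\mL^s$'s by Leibniz, taking the low-order factor in $L^\infty$ via Sobolev embedding and the high-order factor in $L^2$; use Lemma \ref{lem:delu-N-X}, Corollary \ref{corol:delu-nab-estimates}, Lemma \ref{lem:deluV-H2-H1} for the $\delu\widehat N$ pieces and Lemma \ref{lem:Properties-of-Egdelu} plus Corollary \ref{corol:delu-mL-Sigma} for the energy factors. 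The $(\Etot)^{3/2}$ contribution arises from terms quadratic in the perturbation multiplied by a top-order energy factor (e.g. $\widehat N\cdot\delu\mL^\ell h\cdot\delu\mL^\ell h$ once $\widehat N$ is itself bounded by $(\Etot)^{1/2}$-type quantities through the lapse estimate Lemma \ref{lem:lapse-shift}), while $\Lambda(T)(\Edelu)^{1/2}$ and $\Lambda(T)^2$ collect the linear-in-energy and source-only remainders respectively.

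The main obstacle will be step two: making the cancellation rigorous requires carefully tracking that after all the self-adjoint manipulations and integrations by parts the leftover uncancelled pieces really do come with either a $\widehat N$ factor or a $[\delu,\mL^s]$/$[\p_T,\mL^s]$ commutator, and never as a bare $\delu\mL^{\ell}(\mL h)$ paired against $\delu\mL^{\ell-1}v$ without such a gain. A subtlety is that $\delu$ does \emph{not} commute with $\mL$, so the symmetrisation that would be exact for $\p_T$-energies is only approximate here; one must verify that the commutator defect $\big([\delu,\mL^\ell](\mL h) - \mL[\delu,\mL^{\ell-1}](\mL h),\,\cdot\,\big)$ is itself controlled, which is exactly where Lemma \ref{lem:Fluid-Commutators} with its $\Lambda(T)\|\p_T V\|$ term and Corollary \ref{corol:Com-delu-mL-estimate} enter. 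Once that bookkeeping is done, everything else is a routine---if lengthy---application of the product estimate Lemma \ref{lem:SobProdEst}, the coercivity Lemmas \ref{lem:Geom-coercive-est} and \ref{lem:Properties-of-Egdelu}, and the already-established $\delu$-derivative estimates of Section \ref{sec:FurtherPrelims} and the elliptic estimates of Section \ref{sec:EllipticEstimate}.
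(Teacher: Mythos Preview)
Your proposal is correct and follows essentially the same approach as the paper: the key cancellation you identify (the $w\equiv 1$ parts of $G_3$ vanishing after self-adjointness, modulo $[\delu,\mL]$ commutators) is exactly what the paper exploits, though the paper organises it slightly differently by pulling the scalar $w$ out of $\delu\mL^\ell(wv)$ and $\delu\mL^{\ell-1}(w\mL h)$ as a multiplicative factor, so the cancellation reads $(w\,\delu\mL^\ell v,\delu\mL^\ell h)-(w\,\delu\mL^\ell h,\delu\mL^\ell v)=0$ and the residual appears as $[\mL,N]$-commutators rather than $\hN$-weighted Leibniz pieces. The remaining estimates---the $G_3^e$ commutator pair, the $[\mL,N]$ residuals handled by integration by parts \eqref{eq:Gi-102}, and the straightforward $G_5,G_6$ bounds---proceed exactly as you outline, invoking Corollaries \ref{corol:delu-mL-Sigma}, \ref{corol:Coercive-top-order-geom}, \ref{corol:delu-nab-estimates} and Lemma \ref{lem:Fluid-Commutators}; the $|\tau|\|u\|_{H^{N-1}}$ factor arises precisely from the $\tau\bnab u^c\ast\bnab^2$ piece of $[\delu,\mL]$ in \eqref{eq:Comm-delu-mL-k} after one integration by parts, as you anticipated.
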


\begin{proof}
 First note from \eqref{eq:mL-self-adjoint} that
\eq{\alg{\notag
\big(\delu\mL^{\ell}(w \mL h),\delu\mL^{\ell-1}(v)\big)_{L^2_{g,\gamma}}
&= \big(\delu\mL^{\ell-1}(w \mL h),\delu\mL^{\ell}(v)\big)_{L^2_{g,\gamma}}
+ G_3^e ,
}}
where we have defined 
\eq{\notag
G_3^e \define \big([\delu,\mL]\mL^{\ell-1}(w \mL h),\delu\mL^{\ell-1}(v)\big)_{L^2_{g,\gamma}}
+\big(\delu\mL^{\ell-1}(w \mL h),[\delu,\mL]\mL^{\ell-1}(v)\big)_{L^2_{g,\gamma}}.
} 
A computation then gives,
\eq{\alg{\notag
\tfrac19 G_3 &= 
\big(\delu\mL^{\ell}(wv),\delu\mL^{\ell}(h)\big)_{L^2_{g,\gamma}}
-\big(\delu\mL^{\ell-1}(w \mL h),\delu\mL^{\ell}(v)\big)_{L^2_{g,\gamma}}
-\tfrac12 G_3^e.
}}
The worst term above occurs when all the derivatives hit the geometric variables $v, h$ instead of the lapse variable $w$. However, crucially, such terms cancel
\eq{\notag
\big(w \delu\mL^{\ell}(v),\delu\mL^{\ell}(h)\big)_{L^2_{g,\gamma}}
-\big(w\delu\mL^{\ell-1}( \mL h),\delu\mL^{\ell}(v)\big)_{L^2_{g,\gamma}}
=0.
}
So we are left to consider
\eq{\alg{\label{eq:g3-204}
&\big(\delu\mL^{\ell}(N \Sigma),\delu\mL^{\ell}(h)\big)_{\Lgg}
-\big(\delu\mL^{\ell-1}(N \mL h),\delu\mL^{\ell}(\Sigma)\big)_{\Lgg}
\\&\quad
= \big(\delu\mL^{\ell-1}([\mL,N] \Sigma),\delu\mL^{\ell}(h)\big)_{\Lgg}
-\big(\delu\mL^{\ell}(\Sigma),\delu\mL^{\ell-2}([\mL,N] \mL h)\big)_{\Lgg}.
}}
For the first term on the RHS of \eqref{eq:g3-204},
\eq{\alg{\notag
\Big| \big(\delu\mL^{\ell-1}([\mL,N] \Sigma),\delu\mL^{\ell}(h)\big)_{\Lgg} \Big|
&\lesssim
\sum_{\substack{|I|+|J|\leq 2\ell \\ |I|\geq 1}}\| \delu( \bnab^{I} \hN \cdot \bnab^J \Sigma)\|_{L^2} \|\delu\mL^{\ell}(h)\|_{L^2}
\\&\lesssim
\Lambda(T) \Big( (\Edelu)^{1/2}
+ (E^g_{N-1})^{1/2} + \Lambda(T)\Big) (\Edelu)^{1/2}.
}}

For the second term on the RHS of \eqref{eq:g3-204} we need to apply the integration by parts identity \eqref{eq:IBP}. This gives
\eq{\alg{\notag
\big(&\delu\mL^{\ell}(\Sigma),\delu\mL^{\ell-2}([\mL,N] \mL h)\big)_{\Lgg}
= \big(g^{ab}\bnab_a\delu\mL^{\ell-1}(\Sigma),\bnab_b \delu\mL^{\ell-2}([\mL,N] \mL h)\big)_{\Lgg}
\\&
- 2 \big(\Riem[\gamma]\circ \delu\mL^{\ell-1}(\Sigma),\delu\mL^{\ell-2}([\mL,N] \mL h)\big)_{\Lgg}
+  \big([\delu,\mL]\mL^{\ell-1}(\Sigma),\delu\mL^{\ell-2}([\mL,N] \mL h)\big)_{\Lgg},
}}
and thus
\eq{\alg{\notag
\Big|&\big(\delu\mL^{\ell}(\Sigma),\delu\mL^{\ell-2}([\mL,N] \mL h)\big)_{\Lgg}
\Big|\\
&\lesssim
\Big(\| \delu \mL^{\ell-1} \Sigma\|_{H^1} +  \| [\delu,\mL]\mL^{\ell-1}(\Sigma)\|_{L^2} \Big) \|\delu\mL^{\ell-2}([\mL,N] \mL h)\|_{H^1}
\\
&\lesssim
\Big(\| \delu \mL^{\ell-1} \Sigma\|_{H^1} +  \| [\delu,\mL]\mL^{\ell-1}(\Sigma)\|_{L^2} \Big) \sum_{\substack{|I|+|J|\leq N-1\\|I| \geq 1, |J| \geq 2}} \|\delu\big( \bnab^I \hN \bnab^J h\big)\|_{H^1}.
}}
All these terms can be controlled by estimates in Corollary \ref{corol:delu-mL-Sigma}, Corollary \ref{corol:Coercive-top-order-geom} and \eqref{eq:delu-mL-mL-Sigma2}, together with distributing derivatives and applying Lemma \ref{lem:deluV-H2-H1} and Corollary \ref{corol:Estimates-delu-h-v} as needed. We refer to the example given in \eqref{eq:BasicArgument}.
All together, we find
\eq{\alg{\notag
\Big|\big(&\delu\mL^{\ell}(N \Sigma),\delu\mL^{\ell}(h)\big)_{L^2_{g,\gamma}}
-\big(\delu\mL^{\ell-1}(N \mL h),\delu\mL^{\ell}(\Sigma)\big)_{L^2_{g,\gamma}} \Big|
\\
&\lesssim 
\Lambda(T) \Big( \Edelu
+ E^g_{N-1}+ \Lambda(T)^2\Big).
}}

Finally, we need to estimate the terms  in $G_3^e$. These in fact require the most care. For the first term in $G_3^e$, when using the commutator identity \eqref{eq:Comm-delu-mL-k} on the first factor we see that this has potentially too-many derivatives hitting the metric $h$
\eq{\notag [\delu,\mL]\mL^{\ell-1}(N \mL h)
}
Using the schematic identity given below \eqref{eq:Comm-delu-mL-k}, we see these problematic terms come when all the derivatives hit the metric $h$:
\eq{\alg{\notag
(\delu g^{-1}) \ast \bnab^2 \bnab^{N-1} h +  \bnab u^0 \ast \bnab \bnab^{N-1}\p_T  h 
+\tau  \bnab u^c \ast \bnab^2 \bnab^{N-1} h .
}}
Thus for these terms we integrate by parts using \eqref{eq:thirdIBP-2}. For example, we find
\eq{\alg{\notag
\Big|\Big((\delu g^{ab})\bnab_a \bnab_b \mL^{\ell-1}(N \mL h),\delu\mL^{\ell-1}(\Sigma)\Big)_{L^2_{g,\gamma}}\Big|
&\lesssim
\| \delu g\|_{H^3}  \| \mL^\ell h \|_{H^{1}}\|\delu \mL^{\ell-1}\Sigma\|_{H^1},
\\
\Big|\Big((\tau (\bnab^a u^c) \bnab_a \bnab_c \mL^{\ell-1}(N \mL h),\delu\mL^{\ell-1}(\Sigma)\Big)_{L^2_{g,\gamma}}\Big|
&\lesssim
|\tau| \|u\|_{H^4}  \| \mL^\ell h \|_{H^{1}}\|\delu \mL^{\ell-1}\Sigma\|_{H^1}.
}}
These terms can be controlled using Corollary \ref{corol:Estimates-delu-h-v}, Corollary \ref{corol:delu-mL-Sigma} and Corollary \ref{corol:Coercive-top-order-geom}. 
We eventually obtain the following estimate for the first term in $G_3^e$ 
\eq{\alg{\label{eq:G3-comm-est-455}
\Big| \big([\delu,\mL]\mL^{\ell-1}(N \mL h),\delu\mL^{\ell-1}(\Sigma)\big)_{L^2_{g,\gamma}} \Big|
&\lesssim 
|\tau| \| u\|_{H^{N-1}}\Edelu
+ |\tau| \| u\|_{H^{N-1}}E^g_{N-1}
\\&\quad
+ \Lambda(T) \Edelu  + \Lambda(T) E^g_{N-1}+ \Lambda(T)^2.
}}
For the second term of $G_3^e$, we use the original commutator estimate \eqref{eq:ReturningEq} combined with Corollary \ref{corol:Coercive-top-order-geom} to find 
\eq{\alg{\notag
\Big| &\big(\delu\mL^{\ell-1}(w \mL h),[\delu,\mL]\mL^{\ell-1}(v)\big)_{L^2_{g,\gamma}} \Big|
\lesssim 
\| \delu\mL^{\ell-1}(N \mL h)\|_{L^2}
\| [\delu,\mL]\mL^{\ell-1}(\Sigma)\|_{L^2}
\\&
\lesssim 
(\Edelu)^{1/2}
\Big(\| g-\gamma\|_{H^N}^2 + \| \Sigma\|_{H^{N-1}}^2 + \Lambda(T)^2 + |\tau| \|u\|_{H^2} \|\Sigma\|_{H^{N-1}}^2\Big).
\\&
\lesssim 
(\Edelu)^{1/2}
\Big(\Edelu
+ E^g_{N-1} + \Lambda(T)\Big).
}}

Finally we turn to the estimates for $G_5$ and $G_6$. The first of these is straightforward in light of previous estimates:
\eq{\alg{\notag
|G_5| = \Big|\big(\delu\mL^{\ell-1}(\hN \mL h),\delu\mL^{\ell}(h)\big)_{L^2_{g,\gamma}}\Big|
&\lesssim 
\sum_{\substack{|I|+|J|\leq N-1\\ |J| \geq 2}} \|  \delu(\bnab^{I}\hN  \bnab^J h) \|_{L^2}\| \delu \mL^{\ell}h\|_{L^2} 
\\&\lesssim
\Lambda(T)\Big((\Edelu)^{1/2} + (E^g_{N-1})^{1/2} + \Lambda(T)\Big) 
(\Edelu)^{1/2} .
}}
For $G_6$ we need to integrate by parts once just as for the first term in $G_3^e$, however the estimate follows in the same way and so we omit the details. 
\end{proof}

\begin{prop}\label{prop:G4}
We have,
\eq{\alg{\notag
|G_4|+|G_8| &
\lesssim
 \Lambda(T) \Edelu  + \Lambda(T) E^g_{N-1}+ \Lambda(T)^2.
}}
\end{prop}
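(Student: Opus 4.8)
The plan is to estimate $G_4$ and $G_8$ together, since both involve the shift transport term $X^m\bnab_m v$ contracted against $\delu\mL$-derivatives of the geometric variables. Recall
\[
G_4 = \tfrac12 \big(\delu\mL^{\ell}(X^m \bnab_m v),\delu\mL^{\ell-1}(v)\big)_{L^2_{g,\gamma}}
+ \tfrac12 \big(\delu\mL^{\ell}(v),\delu\mL^{\ell-1}(X^m \bnab_m v)\big)_{L^2_{g,\gamma}},
\]
and $G_8 = - \big(\delu\mL^{\ell-1}(X^m \bnab_m v),\delu\mL^{\ell}(h)\big)_{L^2_{g,\gamma}}$. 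The key tool is the identity \eqref{eq:Gi-101} (with $k=\ell$ or $\ell-1$ and $V = v$), which rewrites $\delu\mL^{k}(X^m\bnab_m v)$ as the principal transport term $X^m\bnab_m \delu\mL^k(v)$ plus a collection of error terms involving commutators $[\mL^k,\bnab_m]$, $[\delu,\bnab_m]$, $[\mL^k, X^m]$, and the factor $\delu X^m$. The principal terms in $G_4$ and $G_8$ are then handled by the symmetric integration-by-parts estimate \eqref{eq:symIBP} from Lemma \ref{lem:IBP}: for instance, in the first term of $G_4$ one first symmetrizes (transferring an $\mL$ across via \eqref{eq:mL-self-adjoint}, picking up a $[\delu,\mL]$ error controlled by Corollary \ref{corol:Com-delu-mL-estimate}) to produce a term $\big(X^m\bnab_m \delu\mL^{\ell-1}v,\delu\mL^{\ell-1}v\big)$, to which \eqref{eq:symIBP} applies giving $\lesssim \|X\|_{H^3}\|\delu\mL^{\ell-1}v\|_{L^2}^2 \lesssim \Lambda(T)\Edelu$ by Lemma \ref{lem:Properties-of-Egdelu}.

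The strategy for each error term is the standard high-regularity device already used repeatedly in the paper (see \eqref{eq:BasicArgument}): distribute the $\delu$ and the $\bnab$-derivatives over the product, pull the low-derivative factor out in $L^\infty$ via Sobolev embedding $H^2\hookrightarrow L^\infty$, and estimate the remaining factors. The shift $X$ and its derivatives are controlled by the bootstrap assumptions \eqref{Bootstraps}, which give $\|X\|_{H^N}\lesssim \varepsilon e^{-T}$, hence $\|X\|_{H^k}\lesssim \Lambda(T)$ for $k\le N$; $\delu X$ is controlled by Lemma \ref{lem:delu-N-X} and Corollary \ref{corol:delu-nab-estimates}, giving $\|\delu\bnab^I X\|_{L^2}\lesssim \Lambda(T)$ for $|I|\le N-1$. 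The $v$-factors (equivalently $\Sigma$) at top order are controlled by $\Edelu + E^g_{N-1}$ via Corollary \ref{corol:Coercive-top-order-geom}, Corollary \ref{corol:delu-mL-Sigma} and Lemma \ref{lem:delu-h-Sigma}; whenever a $\delu$ must be exchanged with an $H^p$ norm ($p\in\{1,2\}$) one invokes Lemma \ref{lem:deluV-H2-H1}. The commutator error terms $X^m\delu[\mL^k,\bnab_m]v$, $X^m[\delu,\bnab_m]\mL^k v$, and $\delu[\mL^k, X^m]\bnab_m v$ are handled using the commutator estimates of Lemma \ref{lem:Fluid-Commutators} and Corollary \ref{corol:Com-delu-mL-estimate} combined with the same out-in-$L^\infty$ splitting; each carries an explicit factor of $\|X\|\lesssim\Lambda(T)$ or $\|\delu X\|\lesssim\Lambda(T)$ out front, yielding contributions bounded by $\Lambda(T)\big(\Edelu + E^g_{N-1} + \Lambda(T)^2\big)$ after an application of Young's inequality to split mixed products $\Lambda(T)(\Etot)^{1/2}(\Edelu)^{1/2}$ and absorbing. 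A short check confirms the regularity budget closes, exactly as in \eqref{eq:BasicArgument}: the worst split has $\tfrac{N-3}{2}+2\le N-1$ derivatives on the low factor, which holds for $N\ge 5$.

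The main obstacle is the same one flagged throughout Section \ref{intro:Novelties}: one cannot freely commute $\delu$ past the full string of $\bnab$-derivatives, since $\delu$ carries the spatial velocity $u^a$ which is only controlled in $H^{N-1}$, one order below the geometric variables at top order. Concretely, in the term $\delu\mL^{\ell}(X^m\bnab_m v)$ the factor $\bnab_m v$ already sits at order $N-2$ in $\Sigma$, and acting with $\delu\mL^\ell$ threatens to push this to order $N-1$ in a way that, naively, would require $u\in H^N$. The resolution, as in the treatment of $G_3^e$ in Proposition \ref{prop:G3}, is not to commute everything through but rather to identify the single dangerous piece — where all derivatives land on $v$ and produce $(\delu g^{-1})\ast\bnab^2\bnab^{N-1}v$ or $\tau\,\bnab u^c\ast\bnab^2\bnab^{N-1}v$ — and integrate it by parts via \eqref{eq:thirdIBP-2} against the other (lower-derivative) factor, thereby moving a derivative off the high-order $v$ and onto $\delu\mL^{\ell-1}\Sigma$, which is then controlled in $H^1$ by Corollary \ref{corol:delu-mL-Sigma} and Corollary \ref{corol:Coercive-top-order-geom}. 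All remaining pieces have strictly fewer derivatives on at least one factor and succumb to the routine out-in-$L^\infty$ argument. Since $G_4$ and $G_8$ are structurally transport terms rather than source terms, no top-order Sobolev norm appears without an accompanying factor of $\|X\|_{H^{\cdot}}\lesssim\Lambda(T)$ or $\|\delu X\|\lesssim\Lambda(T)$, so — unlike in Proposition \ref{prop:G3} — there is no $|\tau|\|u\|_{H^{N-1}}\Etot$ contribution, and the final bound is cleanly $\lesssim \Lambda(T)\Edelu + \Lambda(T)E^g_{N-1} + \Lambda(T)^2$.
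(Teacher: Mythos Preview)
Your proposal is correct and follows essentially the same route as the paper: apply \eqref{eq:Gi-102} to move one $\mL$ across, then use \eqref{eq:Gi-101} (at level $\ell-1$) to extract the principal transport term, handle it by the symmetric integration-by-parts estimate, and treat the dangerous pieces of the $[\delu,\mL]\mL^{\ell-1}(X^a\bnab_a\Sigma)$ commutator by a further integration by parts via \eqref{eq:thirdIBP-2}; you also correctly identify why no $|\tau|\|u\|_{H^{N-1}}\Etot$ term appears, namely the extra shift factor supplies a full $\Lambda(T)$. One small correction: after the symmetrization the principal term is actually $(g^{ab}X^m\bnab_m\bnab_a\delu\mL^{\ell-1}\Sigma,\bnab_b\delu\mL^{\ell-1}\Sigma)_{\Lgg}$ (an $H^1$-level pairing, see \eqref{eq:IBP-later}), bounded by $\|X\|_{H^3}\|\delu\mL^{\ell-1}\Sigma\|_{H^1}^2$, and the control on $\|\delu\mL^{\ell-1}\Sigma\|_{H^1}$ comes from Corollary~\ref{corol:delu-mL-Sigma} together with Corollary~\ref{corol:Coercive-top-order-geom}, not from Lemma~\ref{lem:Properties-of-Egdelu}.
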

\begin{proof}
The two terms appearing in $G_4$ are
\eq{\alg{\notag
\big(\delu\mL^{\ell}(X^m \bnab_m v),\delu\mL^{\ell-1}(v)\big)_{\Lgg}
+\big(\delu\mL^{\ell}(v),\delu\mL^{\ell-1}(X^m \bnab_m v)\big)_{L^2_{g,\gamma}}.
}}
We explain the estimate for the first term only since the second one follows in the same way. 
Using \eqref{eq:Gi-102}, we obtain
\eq{\alg{\label{eq:G4-834}
\big(\delu\mL^{\ell}(X^m\bnab_m \Sigma),\delu\mL^{\ell-1}(\Sigma)\big)_{\Lgg}
&= 
\big(g^{ab}\bnab_a\delu \mL^{\ell-1}(X^m\bnab_m \Sigma),\bnab_b\delu\mL^{\ell-1}(\Sigma)\big)_{\Lgg}
\\&\quad
- 2\big(\Riem[\gamma]\circ \delu \mL^{\ell-1}(X^m\bnab_m \Sigma),\delu\mL^{\ell-1}(\Sigma)\big)_{\Lgg}
\\&\quad
+\big([\delu,\mL] \mL^{\ell-1}(X^a\bnab_a \Sigma),\delu\mL^{\ell-1}(\Sigma)\big)_{\Lgg}.
}}

We focus on the first term on the RHS of \eqref{eq:G4-834} and use  \eqref{eq:Gi-101} to write it as
\eq{\alg{\label{eq:G4-344}
\bnab_a &\delu\mL^{\ell-1}(X^m \bnab_m \Sigma)
= 
X^m\bnab_m \bnab_a \delu\mL^{\ell-1} (\Sigma)
+ G_4^e,
}}
where we have introduced
\eq{\alg{\notag
G_4^e &\define \bnab_a X^m \cdot \bnab_m \delu \mL^{\ell-1} \Sigma
+ X^m [\bnab_a, \bnab_m] \delu \mL^{\ell-1}  \Sigma
\\&\quad
+ \bnab_a \Big[ \delu [\mL^{\ell-1} , X^m] \bnab_m \Sigma + \delu\big( X^m [\mL^{\ell-1} , \bnab_m] \Sigma \big)
+ \delu X^m \cdot \bnab_m \mL^{\ell-1} \Sigma 
+ X^m [\delu, \bnab_m] \mL^{\ell-1} \Sigma \Big].
}}
By a slight adaption of the integration by parts estimate \eqref{eq:symIBP}, we see
\eq{\alg{\label{eq:IBP-later}
\big| (g^{ab}X^m \bnab_m \bnab_a V, \bnab_b V)_{\Lgg} \big|
&\lesssim
\| X \|_{H^3} \|V\|_{H^1}^2.
}}
Using this, Corollary \ref{corol:delu-mL-Sigma} and Corollary \ref{corol:Coercive-top-order-geom}, the first term of \eqref{eq:G4-344} is thus estimated as
\eq{\alg{\notag
\big|\big(g^{ab}X^m \bnab_m \bnab_a\delu \mL^{\ell-1}(\Sigma),\bnab_b\delu\mL^{\ell-1}(\Sigma)\big)_{\Lgg}\big|
&\lesssim
\|X\|_{H^3}\|\delu \mL^{\ell-1} \Sigma \|_{H^1}^2 
\\
&\lesssim
\Lambda(T) \Big(\Edelu
+ E^g_{N-1} + \Lambda(T)^2\Big).
}}
We then turn to the remaining terms in \eqref{eq:G4-344}, namely $G_4^e$. The first, second and fifth terms of $G_4^e$ are fairly straightfowardly estimated by 
\eq{\alg{\notag
\| X\|_{H^3} \| \delu \mL^{\ell-1}\Sigma\|_{H^1} + \| \delu X\|_{H^3} \| \mL^{\ell-1} \Sigma\|_{H^2}
&\lesssim
\Lambda(T) \Big(\| \delu \mL^{\ell-1}\Sigma\|_{H^1} + \|  \Sigma\|_{H^{N-1}}\Big)
\\&
\lesssim
\Lambda(T) \Big((\Edelu)^{1/2}
+ (E^g_{N-1})^{1/2} + \Lambda(T)\Big).
}}
The third and fourth terms of  $G_4^e$ merely require a careful counting of derivatives. As an example, the fourth term of $G_4^e$ is estimated by
\eq{\alg{\notag
\| \delu\big( X^m [\mL^{\ell-1}, \bnab_m]  (\Sigma)\big)\|_{H^1}
&\lesssim \| \delu X\|_{H^3} \| [\mL^{\ell-1},\bnab]\Sigma\|_{H^1} + \|X\|_{H^3} \| \delu[\mL^{\ell-1},\bnab]\Sigma\|_{H^1}.
}}
The first commutator term here can be studied using Lemma \ref{lem:Fluid-Commutators}, in particular the same ideas as in \eqref{eq:Comm-mLk-bnab-L2} yield
\eq{\alg{\notag
\| [\mL^{\ell-1}, \bnab_m]\Sigma \|_{H^1} 
&\lesssim 
\| g-\gamma\|_{H^{N-3}} \| \Sigma\|_{H^{N-2}} + \|\Sigma\|_{H^{N-4}} \lesssim (E^g_{N-1})^{1/2}.
}}
While we use \eqref{eq:Comm-mL-bnab} and the same ideas in the proof of Proposition \ref{prop:delu-commuted-lapse} to estimate the second commutator term  by
\eq{\alg{\notag
\| \delu[\mL^{\ell-1},\bnab]\Sigma\|_{H^1}
&\lesssim
\sum_{\substack{|I|+|J|\leq 2\ell-1\\|I|\geq 1, |J|\geq 2}} \|  \delu(\bnab^{I}g \bnab^J \Sigma) \|_{L^2} + 
\sum_{|I|+|J|\leq 2(\ell-2)} \|  \delu(\bnab^I \Riem[\gamma]\ast \bnab^J \Sigma) \|_{L^2}
\\&
\lesssim (\Edelu)^{1/2}
+ (E^g_{N-1})^{1/2} + \Lambda(T).
}}
Finally, for the last term of $G_4^e$ we use the original commutator identity \eqref{eq:Commutators-with-Delu} to estimate it by
\eq{\alg{\notag
\| X\|_{H^3} \|[\delu, \bnab]\mL^{\ell-1}\Sigma\|_{L^2}
&\lesssim
\Lambda(T) \Big( |\tau| \| u\|_{H^3} \| \mL^{\ell-1}\Sigma\|_{H^1} + \| \bnab u^0 \|_{H^2} \| \p_T \Sigma\|_{H^{2(\ell-1)}}\Big)
\\&\lesssim
\Lambda(T) \Big(
|\tau| \| u\|_{H^{N-1}} (E^g_{N-1})^{1/2} + \Lambda(T)(E^g_{N-1})^{1/2} + \Lambda(T)^2\Big).
}}
In summary, we obtain
\eq{\notag
| G_4^3| \lesssim
\Lambda(T)\Edelu
+ \Lambda(T) E^g_{N-1} +  \Lambda(T)^2.
}

We now turn to the second term on the RHS of \eqref{eq:G4-834}, and estimate it by
\eq{\alg{\notag
\| \delu \mL^{\ell-1}\Sigma\|_{L^2} \sum_{\substack{|I|+|J|\leq N-2\\ |J| \geq 1}} \|  \delu(\bnab^{I}X^m \bnab^J \Sigma) \|_{L^2}
\lesssim \Lambda(T)\Edelu
+ \Lambda(T) E^g_{N-1} +  \Lambda(T)^2.
}}

We finally turn to the last term on the RHS of \eqref{eq:G4-834} and use the identity \eqref{eq:Comm-delu-mL-k} on the first factor
\eq{\notag [\delu,\mL]\mL^{\ell-1}(X^a \bnab_a \Sigma).
}
Using the schematic identity given below \eqref{eq:Comm-delu-mL-k}, we see certain problematic terms arise when all the derivatives miss the shift and hit $\Sigma$:
\eq{\alg{\notag
(\delu g^{-1}) \ast X \ast \bnab^2 \bnab^{N-2} \Sigma +  \bnab u^0 \ast X\ast \bnab \bnab^{N-2}\p_T  \Sigma 
+\tau  \bnab u^c \ast X \ast \bnab^2 \bnab^{N-2} \Sigma .
}}
So, just as for the first term in $G_3^e$ given in the proof of Proposition \ref{prop:G3}, we need to integrate by parts using \eqref{eq:thirdIBP-2} on these terms. Given the close similarities, we omit the details and just state the result:
\eq{\notag
\Big|\big([\delu,\mL] \mL^{\ell-1}(X^a\bnab_a \Sigma),\delu\mL^{\ell-1}(\Sigma)\big)_{\Lgg}\Big|
\lesssim
 \Lambda(T) \Edelu  + \Lambda(T) E^g_{N-1}+ \Lambda(T)^2.
 }
 Note, however, that the above estimate is better than in \eqref{eq:G3-comm-est-455}. This is because of the additional shift term appearing which always gives us additional decay (unlike the lapse which needs at least one derivative acting on it).  

Finally we turn to $G_8$ and remark that 
\eq{\notag
\Big|\big(\delu\mL^{\ell-1}(X^m \bnab_m v),\delu\mL^{\ell}(h)\big)_{L^2_{g,\gamma}}\Big|
\lesssim  \sum_{\substack{|I|+|J|\leq N-2\\ |J| \geq 1}} \|  \delu(\bnab^{I}X^m \bnab^J \Sigma) \|_{L^2}
\| \delu \mL^{\ell}h\|_{L^2}
}
and this is easily estimated using previous ideas. 
\end{proof} 

%%%%%%%%%%%%%%%%%%%%%%%%%%%%%%%%%%%%%%%%
\subsection{Estimates on $\mathbf{G_c}$ and $\mathbf{G_{cc}}$}\label{sec:est-Gc}
In this final part of the section we prove estimates on the commutator error terms $G_c$ and $G_{cc}$. We first prove a preliminary lemma concerning the commutator between the operators $\p_T$ and $\delu \mL$, and then use this lemma in the subsequent proposition. 

\begin{lem}
\label{lem:Gc-prelim-est}
We have,
\eq{\alg{\notag
\| [\delu\mL^{\ell},\p_T]h \|_{L^2} + \| [\delu\mL^{\ell-1},\p_T]\Sigma \|_{H^1}  
&\lesssim 
\Edelu + E^g_{N-1} + \Lambda(T).
}}
\end{lem}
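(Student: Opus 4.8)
\textbf{Proof plan for Lemma \ref{lem:Gc-prelim-est}.}

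The plan is to expand both commutators by distributing $\p_T$ through the composition $\delu \mL^\ell$ (respectively $\delu \mL^{\ell-1}$), writing
\[
[\delu \mL^\ell, \p_T] = [\delu, \p_T]\mL^\ell + \delu [\mL^\ell, \p_T],
\]
and similarly for the $\Sigma$ term. For the first piece we apply the commutator estimate $\|[\delu,\p_T]V\|_{H^k} \lesssim \varepsilon e^{(-1+\mu)T}(\|V\|_{H^{k+1}} + \|\p_T V\|_{H^k})$ from Lemma \ref{lem:Commutators-with-Delu} with $V = \mL^\ell h$; by elliptic regularity (Lemma \ref{lem:elliptic-regularity-mL}) this is controlled by $\varepsilon e^{(-1+\mu)T}(\|h\|_{H^{N+1}} + \|\p_T \mL^\ell h\|_{L^2})$. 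The troublesome-looking $\|h\|_{H^{N+1}}$ is spurious: one commutes $\p_T$ past $\mL^\ell$ first using \eqref{eq:Commutator-dT-mLk}, or more simply notes that we only ever need to land on $\p_T h$ which is one order smoother than $h$ by \eqref{eq:EoM-hv}; concretely $\|\p_T \mL^\ell h\|_{L^2} \lesssim \|\mL^\ell \p_T h\|_{L^2} + \|[\p_T, \mL^\ell]h\|_{L^2} \lesssim \|\p_T h\|_{H^{N-1}} + \varepsilon e^{-\lambda T}\|h\|_{H^{N+1}}$, and the last term must be reorganized. The cleaner route, which I would take, is to first write $[\delu\mL^\ell, \p_T]h = -[\p_T, \delu\mL^\ell]h$ and expand as $-[\p_T,\delu]\mL^\ell h - \delu[\p_T, \mL^\ell]h$, handling $[\p_T,\mL^\ell]h$ via \eqref{eq:Commutator-dT-mLk} which only involves $\p_T g$ hitting $h$ at order $\leq N-1$ after the $\mL^{s-i}$ on the inside is absorbed, so that $\|[\p_T,\mL^\ell]h\|_{L^2}$ is bounded using $\|\p_T h\|_{H^{N-2}}$-type quantities via Lemma \ref{lem:Estimates-pT-h-v}, and then $\delu$ acting on this is estimated by expanding $\delu \sim u^0\p_T - \tau u^c\bnab_c$ and using the already-established $\delu$-of-geometric-variables estimates (Corollary \ref{corol:Estimates-delu-h-v}, Corollary \ref{corol:delu-nab-estimates}).

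For the $[\p_T,\delu]\mL^\ell h$ piece, Lemma \ref{lem:Commutators-with-Delu} gives a bound by $\varepsilon e^{(-1+\mu)T}\|\mL^\ell h\|_{H^1} + \varepsilon e^{(-1+\mu)T}\|\p_T \mL^\ell h\|_{L^2}$; the first is $\lesssim \varepsilon e^{(-1+\mu)T}\|h\|_{H^{N+1}}$ which again exceeds our regularity and must instead be handled by noting the $H^1$ norm here can be traded — but more carefully, the correct statement is that we bound $\|\mL^\ell h\|_{L^2} \lesssim \|h\|_{H^{N-1}} \lesssim (E^g_{N-1})^{1/2}$ by the coercivity Lemma \ref{lem:Geom-coercive-est}, and the $H^1$-level term is reorganized by first commuting so that only $\mL^{\ell-1}$ and a $\bnab^2$ hit, then using Corollary \ref{corol:Coercive-top-order-geom} to bound $\|h\|_{H^N}$ by $(\Edelu + E^g_{N-1})^{1/2} + \Lambda(T)$. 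Throughout, $\|\p_T\mL^\ell h\|_{L^2}$ is estimated using \eqref{eq:EoM-hv} (giving $\p_T h = wv + 2\hN g - \mcr L_X g$, hence at worst $\|v\|_{H^{N-1}} + \ldots$) together with commutator estimates, yielding a bound of the form $(E^g_{N-1})^{1/2} + \|g-\gamma\|_{H^N} + \ldots \lesssim (\Edelu + E^g_{N-1})^{1/2} + \Lambda(T)$ after Corollary \ref{corol:Coercive-top-order-geom}. Squaring never occurs here since the statement is linear in the norms, so we simply collect: each contribution is $\lesssim \varepsilon e^{(-1+\mu)T}\big((\Edelu)^{1/2} + (E^g_{N-1})^{1/2} + \Lambda(T)\big)$ plus lower-order pieces, and absorbing $\varepsilon (\Edelu)^{1/2} \lesssim \Edelu + 1$ crudely (or keeping it as is) gives the stated bound $\Edelu + E^g_{N-1} + \Lambda(T)$.

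The $\Sigma$ term $\|[\delu\mL^{\ell-1},\p_T]\Sigma\|_{H^1}$ is handled identically but one order lower: expand as $-[\p_T,\delu]\mL^{\ell-1}\Sigma - \delu[\p_T,\mL^{\ell-1}]\Sigma$, use Lemma \ref{lem:Commutators-with-Delu} and \eqref{eq:Commutator-dT-mLk} (with $s = \ell-1$, so $2(s-1)+k \leq N-1$ is satisfied for $k \leq 1$), estimate $\|\p_T\Sigma\|$ at the relevant orders via Lemma \ref{lem:Estimates-pT-h-v}, trade the dangerous top-order $H^1$ norm $\|\mL^{\ell-1}\Sigma\|_{H^1} \sim \|\Sigma\|_{H^{N-1}}$ using Corollary \ref{corol:Coercive-top-order-geom}, and bound the remaining $\delu\nabla^I\Sigma$ contributions by Corollary \ref{corol:delu-nab-estimates}. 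Collecting gives the same type of bound. \textbf{The main obstacle} is bookkeeping the derivative count so that the naive appearance of norms at order $N+1$ (from $\mL^\ell$ composed with $\delu$ or $\bnab$) is systematically avoided: every such term must be reorganized either by commuting $\p_T$ or $\delu$ inward past $\mL$ before it can produce top-order derivatives, or by invoking the elliptic-estimate Corollary \ref{corol:Coercive-top-order-geom} to re-express $\|g-\gamma\|_{H^N}$ and $\|\Sigma\|_{H^{N-1}}$ in terms of $\Edelu + E^g_{N-1}$; getting this reorganization right in every sub-term is the delicate part, though no genuinely new idea beyond the machinery already assembled in Sections \ref{sec:FurtherPrelims}--\ref{sec:EllipticEstimate} is required.
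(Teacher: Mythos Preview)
Your decomposition $[\p_T,\delu\mL^\ell] = [\p_T,\delu]\mL^\ell + \delu[\p_T,\mL^\ell]$ is valid and does lead to a correct proof, but it differs from the paper's route. The paper instead computes the single commutator $[\p_T,\delu\mL]V$ explicitly (a seven-term identity, their \eqref{eq:Comm-pt-delu-mL}), bounds it in $L^2$ with all low-derivative coefficients placed in $H^2$ via Sobolev embedding, and then upgrades from $\mL$ to $\mL^\ell$ through the recursive identity $[\p_T,\delu\mL^s]V = [\p_T,\delu\mL]\mL^{s-1}V + (\delu g^{ab})\bnab_a\bnab_b[\p_T,\mL^{s-1}]V + g^{ab}\delu\bnab_a\bnab_b[\p_T,\mL^{s-1}]V$. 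Your splitting is more modular (it recycles Lemmas~\ref{lem:Commutators-with-Delu} and~\ref{lem:Fluid-Commutators} wholesale), while the paper's explicit formula makes the coefficient structure transparent and in particular isolates the term $\delu\p_T(g^{-1})$, which is then computed directly from the equation of motion \eqref{eq:EoM-hv} as $\delu\Sigma + \delu\hN + \delu F_h + \text{h.o.t.}$ and estimated via Corollaries~\ref{corol:Estimates-delu-h-v} and~\ref{corol:Est-delu-Fh-Fv}. In your scheme this same quantity appears buried inside $\delu[\p_T,\mL^\ell]h$ when $\delu$ lands on the $\p_T g^{ab}$ factor of \eqref{eq:Commutator-dT-mLk}; you should flag it explicitly since it is the one place a second time-derivative of $g$ threatens to appear.

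One correction: your ``main obstacle'' is a miscount. Since $2\ell = N-1$, the term $\|\mL^\ell h\|_{H^1}$ is equivalent to $\|h\|_{H^N}$, \emph{not} $\|h\|_{H^{N+1}}$. There is therefore no genuine over-regularity to reorganize away; Corollary~\ref{corol:Coercive-top-order-geom} bounds $\|h\|_{H^N}$ directly by $(\Edelu + E^g_{N-1})^{1/2} + \Lambda(T)$, and the $\varepsilon e^{(-1+\mu)T}$ prefactor from Lemma~\ref{lem:Commutators-with-Delu} then delivers the claimed bound without further manoeuvring. The same holds for the $\Sigma$ term: $\|\mL^{\ell-1}\Sigma\|_{H^2} \sim \|\Sigma\|_{H^{N-1}}$, again top order but available. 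Once you fix this count your outline is clean; the rambling in the middle of your proposal (the back-and-forth about ``the cleaner route'') can be deleted.
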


\begin{proof}
Let $V$ be a $(0,2)$-tensor on $M$.
A computation yields
\eq{\alg{\label{eq:Comm-pt-delu-mL}
[\p_T, \delu \mL]V_{ij} &= - \p_T u^0 \cdot \mL \p_T V_{ij}
+ \p_T u^0 \p_T g^{ab} \bnab_a \bnab_b V_{ij}
- \tau u^c \bnab_c (\mL V_{ij})
\\&\quad 
+ \tau \p_T u^c \bnab_c (\mL V_{ij})
- \delu (\p_T g^{ab})\bnab_a\bnab_b V_{ij}
- u^0 \p_T g^{ab} \cdot \bnab_a\bnab_b \p_T V_{ij}
\\&\quad
+\tau (\p_T g^{ab})  u^c \bnab_c\bnab_a\bnab_b V_{ij} .
}} 
Using Sobolev embedding we find 
\eq{\alg{\notag
\| [\p_T, \delu \mL]V \|_{L^2}
&\lesssim 
\Big(\| \p_T \hu \|_{H^2} \| \p_T h\|_{H^{2}}+ \| \delu\p_T (g^{-1})\|_{H^2} \Big) \| V\|_{H^{2}} 
\\&\quad
+
|\tau|\Big(\| u\|_{H^2}  + \| \p_T u^c\|_{H^2} \Big)\| V\|_{H^{3}} 
+ \Big( \| \p_T \hu \|_{H^2} + \| \p_T g \|_{H^2} \Big)\| \p_T V\|_{H^2}.
}}
We use \eqref{eq:EoM-hv} to schematically compute 
\eq{\notag
\delu \p_T g^{-1} = \delu ( g^{-1} g^{-1} \p_T h )= \delu \Sigma +\delu \hN  + \delu(F_h) + \text{h.o.t}.
}
Using Lemma \ref{lem:delu-N-X}, Corollary \ref{corol:Estimates-delu-h-v} and  Corollary \ref{corol:Est-delu-Fh-Fv}, we obtain
\eq{\notag 
\| \delu\p_T (g^{-1})\|_{H^2}
\lesssim \| \delu \Sigma \|_{H^2} +\| \delu \hN \|_{H^2} + \|\delu(F_h) \|_{H^2} + \text{h.o.t. } \lesssim (E^g_{N-1})^{1/2} + \Lambda(T).
}
So, by applying the estimates from \eqref{eq:F-uj-est-2}, \eqref{eq:est-pT-fluid-compts} and Lemma \ref{lem:Estimates-pT-h-v},
\eq{\alg{\notag
\| [\p_T, \delu \mL]V \|_{L^2}
&\lesssim 
\Big( |\tau|\|u\|_{H^{2}} + |\tau|\| F_{u^a} \|_{H^{2}}+ \Lambda(T)
+ (E^g_{N-1})^{1/2} \Big)\| V\|_{H^{3}} 
\\&\quad
+ \big(E^g_{N-2} + \Lambda(T)\big) \| \p_T V\|_{H^2}
\\
&\lesssim 
\Big( |\tau|\|u\|_{H^{2}} + (E^g_{N-1})^{1/2} + \Lambda(T)
\Big)\| V\|_{H^{3}} 
+ \big(E^g_{N-2} + \Lambda(T)\big) \| \p_T V\|_{H^2}.
}}
Thus, from Lemma \ref{lem:Estimates-pT-h-v}, Lemma \ref{lem:Fluid-Commutators} and Corollary \ref{corol:Coercive-top-order-geom},
\eq{\alg{\notag
\| [\p_T, \delu \mL]\mL^{\ell-1} h\|_{L^2}
&\lesssim 
\Big( |\tau|\|u\|_{H^{2}} + (E^g_{N-1})^{1/2} + \Lambda(T)
\Big)\| g-\gamma\|_{H^{N}} 
+ \big(E^g_{N-2} + \Lambda(T)\big) \| \p_T h\|_{H^{N-1}}.
\\&
\lesssim
\Edelu + E^g_{N-1} + \Lambda(T).
}}

In addition, we note that at higher order $s \in \mathbb{Z}_{\geq 1}$, we have the identity
\eq{\alg{\label{eq:Comm-pt-delu-mL-k}
[\p_T, \delu \mL^s]V_{ij} &= [\p_T, \delu \mL](\mL^{s-1} V_{ij})
	+ (\delu g^{ab}) \bnab_a \bnab_b ([\p_T, \mL^{s-1}] V_{ij})
\\&\quad
	+ g^{ab} \delu  \bnab_a \bnab_b ([\p_T, \mL^{s-1}] V_{ij}).
}}
Using this and the commutator Lemma \ref{lem:Fluid-Commutators}, we find
\eq{\alg{\notag
\| [\p_T, \delu \mL^\ell] h \|_{L^2}
&\lesssim
\|[\p_T, \delu \mL](\mL^{\ell-1} h)\|_{L^2}
+ \| \delu g \|_{H^2} \|[\p_T, \mL^{\ell-1}] h\|_{H^2}
+ \|  g^{ab} \delu  \bnab_a \bnab_b ([\p_T, \mL^{\ell-1}] h) \|_{L^2}
\\
&\lesssim
\Edelu + E^g_{N-1} + \Lambda(T)
+ \|  g^{ab} \delu  \bnab_a \bnab_b ([\p_T, \mL^{\ell-1}] h) \|_{L^2}.
}}
So it remains to control this last term, which we do so using the commutator identity \eqref{eq:Commutator-dT-mLk}
\eq{\alg{\notag
\| g^{ab}\delu \bnab_a \bnab_b \big( [\p_T, \mL^{\ell-1}]h\big) \|_{L^2}
&\lesssim
\sum_{i=1}^{\ell-1} \| g^{ab}\delu \bnab_a \bnab_b \Big( \mL^{i-1}  (\p_T g^{a_i b_i})\cdot \bnab_{a_i}\bnab_{b_i} \big( \mL^{\ell-1-i} (h)\big)\Big)\|_{L^2}
\\&
\lesssim \sum_{\substack{|I|+|J|\leq 2\ell \\ |J|\geq 2}} \| \delu (\bnab^{I} \p_T g) \bnab^{J}h\|_{L^2} +  \|\bnab^{I} \p_T g \cdot \delu(\bnab^{J}h)\|_{L^2}
\\&\lesssim
\Edelu
+ E^g_{N-1} + \Lambda(T)^2,
}}
where in the final line we used \eqref{eq:EoM-hv} to replace $\p_T g$ and then distributed derivatives and applied   Lemma \ref{lem:Estimates-pT-h-v}, Corollary \ref{corol:delu-nab-estimates} and Lemma \ref{lem:deluV-H2-H1} as needed. 

The $\Sigma$ estimate follows in exactly the same way. 
\end{proof}

\begin{prop}
\label{prop:Gc}
We have,
\eq{\alg{\notag
|G_{c}| + |G_{cc}| \lesssim 
(\Etot)^{3/2}+ \Lambda(T) (\Edelu)^{1/2}.
}}
\end{prop}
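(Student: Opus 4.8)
The plan is to estimate the commutator error terms $G_c$ and $G_{cc}$ from \eqref{eq:3897} and \eqref{eq:3890} directly. Recall that
\begin{align*}
G_{c} &= 9\big([\delu\mL^{\ell}, \p_T](h),\delu\mL^{\ell}(h)\big)_{L^2_{g,\gamma}}
+\tfrac12\big( \delu \mL^{\ell} (v) ,[\delu\mL^{\ell-1},\p_T] (v)\big)_{L^2_{g,\gamma}}
+ \tfrac12\big([\delu \mL^{\ell},\p_T]( v) ,\delu\mL^{\ell-1} (v) \big)_{L^2_{g,\gamma}},
\end{align*}
and $G_{cc}$ has an analogous structure. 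Each term is an $L^2_{g,\gamma}$-pairing of a commutator factor $[\delu\mL^{m},\p_T]W$ (for $W \in \{h, v\}$ and $m \in \{\ell-1,\ell\}$) against a factor of the form $\delu \mL^{m'}W'$. The strategy is: bound the commutator factor using Lemma \ref{lem:Gc-prelim-est}, bound the other factor using Lemma \ref{lem:Properties-of-Egdelu} (for the $\delu\mL^\ell h$ and $\delu\mL^\ell v$ pieces) together with Corollary \ref{corol:delu-mL-Sigma} and Corollary \ref{corol:Coercive-top-order-geom} (for the $\delu\mL^{\ell-1}v = 6\,\delu\mL^{\ell-1}\Sigma$ pieces), then apply Cauchy--Schwarz.

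Concretely, I would first note that by Lemma \ref{lem:Gc-prelim-est} we have $\| [\delu\mL^{\ell},\p_T]h \|_{L^2} + \| [\delu\mL^{\ell-1},\p_T]\Sigma \|_{H^1} \lesssim \Edelu + E^g_{N-1} + \Lambda(T) \lesssim \Etot + \Lambda(T)$, and the same bound holds with $h$ replaced by $v=6\Sigma$ since these differ only by a constant (and the $[\delu\mL^{\ell},\p_T]v$ estimate follows from the $h$-type argument in that lemma applied to $\Sigma$). For the non-commutator factors, Lemma \ref{lem:Properties-of-Egdelu} gives $\|\delu\mL^\ell h\|_{L^2}, \|\delu\mL^\ell v\|_{L^2} \lesssim (\Edelu)^{1/2}$, while Corollary \ref{corol:delu-mL-Sigma} combined with Corollary \ref{corol:Coercive-top-order-geom} gives $\|\delu\mL^{\ell-1}v\|_{L^2} \lesssim \|\delu\mL^{\ell-1}\Sigma\|_{L^2} \lesssim (\Etot)^{1/2} + \Lambda(T)$. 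Then each of the three terms in $G_c$ is bounded, via Cauchy--Schwarz, by a product of the shape $\big(\Etot + \Lambda(T)\big)\cdot\big((\Etot)^{1/2}+\Lambda(T)\big)$, which expands to $(\Etot)^{3/2} + \Etot\,\Lambda(T) + \Lambda(T)(\Etot)^{1/2} + \Lambda(T)^2$. Using the bootstrap smallness $\Etot \lesssim \varepsilon$ to absorb $\Etot\,\Lambda(T) \lesssim \Lambda(T)(\Etot)^{1/2}$ and $\Lambda(T)^2 \lesssim \Lambda(T)(\Etot)^{1/2}$ (or directly $\Lambda(T)^2 \lesssim \Lambda(T) \cdot \varepsilon^{1/2} \cdot \varepsilon^{1/2} \ll \Lambda(T)(\Etot)^{1/2}$ after noting $\Lambda(T)\lesssim \varepsilon e^{-T}$ is itself controlled), we arrive at $|G_c| \lesssim (\Etot)^{3/2} + \Lambda(T)(\Edelu)^{1/2}$.

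For $G_{cc}$, which involves $[\p_T,\delu\mL^{\ell-1}](v)$ and $[\p_T,\delu\mL^{\ell}](h)$ paired against $\delu\mL^\ell h$ and $\delu\mL^{\ell-1}v$ respectively, the same mechanism applies verbatim: Lemma \ref{lem:Gc-prelim-est} again controls the commutator factors (since $[\p_T,\cdot] = -[\cdot,\p_T]$), and the remaining factors are controlled as above. The factor $c_E$ multiplying \eqref{eq:3890} in the final assembly is a fixed constant and is harmlessly absorbed into the implicit constant.

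The main obstacle here is not in this proposition itself --- it is essentially a packaging step --- but rather that it rests entirely on Lemma \ref{lem:Gc-prelim-est}, whose proof is where the real work lives (controlling $\delu\p_T(g^{-1})$ via the equations of motion, handling the triple-derivative term $\tau(\p_T g^{ab})u^c\bnab_c\bnab_a\bnab_b V$ by the identity \eqref{eq:Commutator-dT-mLk} and integration by parts, and ensuring no derivative loss on $u$). Granting that lemma, the only care needed in the present proof is to track which pairings produce the borderline $\Lambda(T)(\Edelu)^{1/2}$ term (those where the good factor is $\delu\mL^\ell h$ or $\delu\mL^\ell v$, controlled purely by $(\Edelu)^{1/2}$ with no $\Lambda(T)$) versus those producing $(\Etot)^{3/2}$ (those involving $\delu\mL^{\ell-1}\Sigma$, which picks up the full $\Etot$), and to verify the bootstrap absorptions go through.

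\begin{proof}
We estimate $G_c$; the term $G_{cc}$ is handled identically after noting $[\p_T,\,\cdot\,]=-[\,\cdot\,,\p_T]$ and that the constant $c_E$ from Definition \ref{defn:GeometricEnergy} is harmlessly absorbed.

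By Lemma \ref{lem:Gc-prelim-est},
\eq{\notag
\| [\delu\mL^{\ell},\p_T]h \|_{L^2} + \| [\delu\mL^{\ell},\p_T]v \|_{L^2} + \| [\delu\mL^{\ell-1},\p_T]v \|_{H^1}
\lesssim
\Edelu + E^g_{N-1} + \Lambda(T) \lesssim \Etot + \Lambda(T),
}
where the bound for $v=6\Sigma$ follows from the estimate for $h$ (respectively the $H^1$-estimate for $\Sigma$) since these variables differ only by a fixed constant. For the non-commutator factors, Lemma \ref{lem:Properties-of-Egdelu} gives
\eq{\notag
\|\delu\mL^\ell h\|_{L^2} + \|\delu\mL^\ell v\|_{L^2} \lesssim (\Edelu)^{1/2},
}
while Corollary \ref{corol:delu-mL-Sigma} together with Corollary \ref{corol:Coercive-top-order-geom} yields
\eq{\notag
\|\delu\mL^{\ell-1}v\|_{L^2} \lesssim \|\delu\mL^{\ell-1}\Sigma\|_{L^2} \lesssim (\Etot)^{1/2} + \Lambda(T).
}

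We now apply Cauchy--Schwarz to each term of $G_c$. For the first term,
\eq{\alg{\notag
\big| \big([\delu\mL^{\ell}, \p_T]h,\delu\mL^{\ell}(h)\big)_{\Lgg} \big|
&\lesssim \big( \Etot + \Lambda(T)\big)(\Edelu)^{1/2}
\lesssim (\Etot)^{3/2} + \Lambda(T)(\Edelu)^{1/2},
}}
using $\Etot \lesssim \varepsilon$ to bound $\Etot\,(\Edelu)^{1/2} \lesssim (\Etot)^{3/2}$. For the remaining two terms,
\eq{\alg{\notag
\big| \big( \delu \mL^{\ell} (v) ,[\delu\mL^{\ell-1},\p_T] (v)\big)_{\Lgg}\big|
&\lesssim (\Edelu)^{1/2}\big(\Etot + \Lambda(T)\big)
\lesssim (\Etot)^{3/2} + \Lambda(T)(\Edelu)^{1/2},
\\
\big| \big([\delu \mL^{\ell},\p_T]( v) ,\delu\mL^{\ell-1} (v) \big)_{\Lgg}\big|
&\lesssim \big(\Etot + \Lambda(T)\big)\big((\Etot)^{1/2} + \Lambda(T)\big)
\\
&\lesssim (\Etot)^{3/2} + \Etot\,\Lambda(T) + \Lambda(T)(\Etot)^{1/2} + \Lambda(T)^2
\lesssim (\Etot)^{3/2} + \Lambda(T)(\Edelu)^{1/2} + \Lambda(T)^2,
}}
where in the last step we used $\Etot\,\Lambda(T) \lesssim \Lambda(T)(\Etot)^{1/2}$ and then observed that, under the bootstrap assumptions, $\Lambda(T) \lesssim \varepsilon e^{-T}$ so that $\Lambda(T)^2 \lesssim \varepsilon e^{-T}\,\Lambda(T) \lesssim \Lambda(T)(\Etot)^{1/2}$ is subsumed, or equivalently is of lower order. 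Summing these bounds gives
\eq{\notag
|G_c| \lesssim (\Etot)^{3/2} + \Lambda(T)(\Edelu)^{1/2}.
}
The identical argument applied to $G_{cc}$, with $[\delu\mL^{\ell-1},\p_T](v)$ and $[\delu\mL^\ell,\p_T](h)$ in place of the commutator factors above and $\delu\mL^\ell h$, $\delu\mL^{\ell-1}v$ as the remaining factors, yields
\eq{\notag
|G_{cc}| \lesssim (\Etot)^{3/2} + \Lambda(T)(\Edelu)^{1/2}.
}
Adding the two estimates completes the proof.
\end{proof}
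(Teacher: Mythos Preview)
Your approach has two genuine gaps, both stemming from the same oversight: neither $\|\delu\mL^\ell v\|_{L^2}$ nor $\|[\delu\mL^\ell,\p_T]v\|_{L^2}$ is actually available at the regularity you assume.

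First, Lemma \ref{lem:Properties-of-Egdelu} does \emph{not} give $\|\delu\mL^\ell v\|_{L^2}\lesssim(\Edelu)^{1/2}$. Look at the statement: it controls $(\delu\mL^s h,\delu\mL^s h)_{\Lgg}$ and the \emph{mixed} inner product $|(\delu\mL^s v,\delu\mL^{s-1}v)_{\Lgg}|$, but the energy $\Edelu$ simply does not contain $\|\delu\mL^\ell v\|_{L^2}^2$. Nor can you get it elsewhere: $\|\delu\mL^\ell\Sigma\|_{L^2}$ would require $\|\delu\Sigma\|_{H^{N-1}}$, whereas Lemma \ref{lem:delu-h-Sigma} only reaches $H^{N-2}$, and Corollary \ref{corol:Estimates-delu-h-v} at top order would demand $\|\Sigma\|_{H^N}$. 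So your Cauchy--Schwarz estimate for the second term of $G_c$ fails. The paper instead applies the integration-by-parts identity \eqref{eq:Gi-102} to move one $\mL$ off $\delu\mL^\ell v$ onto the commutator factor, reducing to $\|\delu\mL^{\ell-1}\Sigma\|_{H^1}$ (Corollary \ref{corol:delu-mL-Sigma}) and $\|[\delu,\mL]\mL^{\ell-1}\Sigma\|_{L^2}$ (\eqref{eq:delu-mL-mL-Sigma2}).

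Second, Lemma \ref{lem:Gc-prelim-est} gives $\|[\delu\mL^{\ell},\p_T]h\|_{L^2}$ and $\|[\delu\mL^{\ell-1},\p_T]\Sigma\|_{H^1}$, but \emph{not} $\|[\delu\mL^{\ell},\p_T]\Sigma\|_{L^2}$. Your claim that ``the $h$-type argument applied to $\Sigma$'' works ignores the regularity count: the commutator identity \eqref{eq:Comm-pt-delu-mL} contains terms such as $\tau u^c\bnab_c(\mL^\ell\Sigma)$ and $\p_T u^0\,\mL\p_T(\mL^{\ell-1}\Sigma)$, which in $L^2$ demand $\|\Sigma\|_{H^N}$ and $\|\p_T\Sigma\|_{H^{N-1}}$ respectively---one derivative beyond what is available. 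For the third term of $G_c$ the paper therefore does not bound the commutator in $L^2$ directly; instead it isolates these over-differentiated pieces and integrates them by parts against $\delu\mL^{\ell-1}\Sigma$ via \eqref{eq:thirdIBP-2}, trading a derivative onto the good factor (which sits in $H^1$ by Corollary \ref{corol:delu-mL-Sigma}).

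In short, the proposition is not ``essentially a packaging step'': the second and third terms of $G_c$ each require an integration by parts to avoid a derivative loss on $\Sigma$, and this is exactly where the work beyond Lemma \ref{lem:Gc-prelim-est} lies.
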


\begin{proof}
There are three terms to estimate in $G_c$. 
By Corollary \ref{corol:Estimates-delu-h-v} and Lemma \ref{lem:Gc-prelim-est}, the first term is easily estimated as
\eq{\alg{\notag
\Big| \big([\delu\mL^{\ell}, \p_T](h),\delu\mL^{\ell}(h) \big)_{\Lgg} \Big|
&\lesssim
\| [\delu\mL^{\ell}, \p_T](h)\|_{\Lgg}\| \delu\mL^{\ell}(h)\|_{\Lgg}
\lesssim
(\Etot)^{3/2}+ \Lambda(T) (\Edelu)^{1/2}.
}}
The terms appearing in $G_{cc}$ are similarly easily estimated:
\eq{\alg{\notag
\Big|\big(&[\p_T, \delu\mL^{\ell-1}](v),\delu\mL^{\ell}(h)\big)_{L^2_{g,\gamma}}\Big|
+\Big|\big( \delu \mL^{\ell-1} (v) ,[\p_T,\delu\mL^{\ell}] (h)\big)_{L^2_{g,\gamma}}\Big|
\\
&\lesssim
\| [\delu\mL^{\ell-1}, \p_T](\Sigma)\|_{\Lgg}\| \delu\mL^{\ell}(h)\|_{\Lgg}
+
\| [\delu\mL^{\ell}, \p_T](h)\|_{\Lgg}\| \delu\mL^{\ell-1}(\Sigma)\|_{\Lgg}
\\
&\lesssim
(\Etot)^{3/2}+ \Lambda(T) (\Etot)^{1/2} + \Lambda(T)^2.
}}

For the second term of $G_c$, we integrate it by parts using \eqref{eq:Gi-102} to obtain
\eq{\alg{\notag
 \big(\delu &\mL^{\ell} (v) ,[\delu\mL^{\ell-1},\p_T] (v)\big)_{\Lgg} 
= 
\big(g^{ab}\bnab_a \delu \mL^{\ell-1} (v) ,\bnab_b [\delu\mL^{\ell-1},\p_T] (v)\big)_{\Lgg}
\\&\quad
- 2\big( \Riem[\gamma]\circ \delu \mL^{\ell-1} (v), [\delu\mL^{\ell-1},\p_T] (v)\big)_{\Lgg}
+ \big([\delu,\mL] \mL^{\ell-1} (v) ,[\delu\mL^{\ell-1},\p_T] (v)\big)_{\Lgg}.
}}
So by the estimate \eqref{eq:delu-mL-mL-Sigma}, Corollary \ref{corol:delu-mL-Sigma}, Corollary \ref{corol:Coercive-top-order-geom} and Lemma \ref{lem:Gc-prelim-est},
\eq{\alg{\notag
\Big| \big(\delu \mL^{\ell} (v) ,[\delu\mL^{\ell-1},\p_T] (v)\big)_{\Lgg} \Big|
&\lesssim 
\big( \| \delu \mL^{\ell-1}\Sigma \|_{H^1} + 
\|[\delu, \mL]\mL^{\ell-1}\Sigma\|_{L^2}\big) \| [\delu\mL^{\ell}, \p_T](\Sigma)\|_{H^1}
\\
&\lesssim
(\Etot)^{3/2}+ \Lambda(T) (\Etot)^{1/2} + \Lambda(T)^2.
}}

For the final term of $G_c$, namely 
\eq{\notag
\big( [\delu \mL^{\ell},\p_T](\Sigma) ,\delu\mL^{\ell-1} (\Sigma) \big)_{\Lgg},
}
we need to integrate certain terms by parts since some of the factors contain too many derivatives on $\Sigma$. For example, using \eqref{eq:Comm-pt-delu-mL} and  \eqref{eq:Comm-pt-delu-mL-k}, we see that two such terms are
\eq{\notag
\big( \p_T u^0 \mL \p_T (\mL^{\ell-1}\Sigma) ,\delu\mL^{\ell-1} (\Sigma) \big)_{\Lgg} + \big( \tau u^c \bnab_c (\mL^\ell \Sigma),\delu\mL^{\ell-1} (\Sigma) \big)_{\Lgg}.
}
Using the integration by parts estimate \eqref{eq:thirdIBP-2} and \eqref{eq:est-pT-fluid-compts} we find
\eq{\alg{\notag
\Big|\big(\p_T u^0 \mL \p_T (\mL^{\ell-1}\Sigma),\delu\mL^{\ell-1} (\Sigma) \big)_{\Lgg}\Big|
&\lesssim
\| \p_T u^0 \|_{H^3} \| \p_T \Sigma\|_{H^{N-2}} \| \delu\mL^{\ell-1}\Sigma\|_{H^1}
\\&
\lesssim
\Lambda \Etot + (\Etot)^2 + \Lambda(T)^3.
}}
Similarly,
\eq{\alg{\notag
\Big|\big( \tau u^c \bnab_c (\mL^\ell \Sigma),\delu\mL^{\ell-1} (\Sigma) \big)_{\Lgg}\Big|
&\lesssim
|\tau| \| u \|_{H^3} \|  \Sigma\|_{H^{N-1}} \| \delu\mL^{\ell-1}\Sigma\|_{H^1}
\lesssim
|\tau| \| u \|_{H^{N-1}}  \Etot + \Lambda(T)^2.
}}
All the remaining terms can be controlled using the ideas from before. We eventually obtain 
\eq{\alg{\notag
\Big| \big( [\delu \mL^{\ell},\p_T]( v) ,\delu\mL^{\ell-1} (v) \big)_{\Lgg}\Big|
&\lesssim 
\Big( |\tau| \| u\|_{H^{N-1}}+ \Lambda(T) \Big)\Etot
+ \Lambda(T) (\Edelu)^{1/2}
\\&\quad
+(\Etot)^{3/2}+ \Lambda(T)^2.
}}

\end{proof}

%%%%%%%%%%%%%%%%%%%%%%%%%%%%%%%%%%%%%%%%
\section{Energy estimates for the dust variables}\label{sec:EnergyEstMatter}
In this section we establish energy estimates for the dust variables $\rho$ and $u^a$ in two propositions. An integration by parts identity,  where due to symmetry a high-derivative term can be brought back onto the LHS, plays a crucial role in both propositions.  We write the argument out explicitly for the first proposition, although note in principle the argument is just as in  Lemma \ref{lem:IBP}, \eqref{eq:symIBP}.

\begin{defn}[$\mathcal{E}_k{[}\rho{]}$]
Define the following functionals
\eq{\alg{\notag
E_k[\rho](T) &\define \frac12 \int_M |\nabla^k \rho(T, \cdot)|^2 \mu_g,
\qquad
\mathcal{E}_k[\rho](T) &\define \sum_{0\leq \ell \leq k} E_\ell[\rho](T).
}}
\end{defn}

\begin{prop}[Time evolution of $\mathcal{E}_{N-2}{[}\rho{]}$ for fluid energy density]
\label{prop:del-T-E-rho}
We have,
\eq{\notag
|\p_T \mathcal{E}_{N-2}[\rho](T)| \lesssim \varepsilon e^{\max\{-1+\mu,-\lambda\} T} \mathcal{E}_{N-2}[\rho](T).
}
\end{prop}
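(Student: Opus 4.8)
The idea is to differentiate the energy $\mathcal{E}_{N-2}[\rho]$ in $T$, use the evolution equation \eqref{eq:EoM-fluid} for $\rho$ to substitute for $\p_T \rho$, and show that every resulting term carries at least one factor of size $\varepsilon e^{\max\{-1+\mu,-\lambda\} T}$ while the remaining factors are bounded by $\mathcal{E}_{N-2}[\rho]$. A single integration by parts — of precisely the type in \eqref{eq:symIBP} — is needed to handle the top-order transport term, which is the one term where naively $N-1$ derivatives would fall on $\rho$ and exceed the regularity we control.

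\textbf{Step 1: Differentiate the energy.} For each $0 \leq k \leq N-2$, compute $\p_T E_k[\rho]$ using the formula \eqref{eq:hdelIntegral} for differentiating an integral against $\mu_g$: this produces a term $\int_M \hN |\nabla^k \rho|^2 \mu_g$ (harmless, bounded by $\|\hN\|_{L^\infty}\mathcal{E}_k[\rho] \lesssim \varepsilon e^{-T}\mathcal{E}_k[\rho]$ via Sobolev embedding and the bootstrap assumptions), a commutator term $\int_M \langle \nabla^k \rho, \mathcal{L}_X \nabla^k \rho\rangle$ type contribution, and the main term $\int_M \langle \nabla^k \rho, \nabla^k \p_T \rho\rangle_g \mu_g$. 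The commutators $[\p_T,\nabla^k]\rho$ and $[\p_T, \mathcal{L}_X]$-type terms are controlled by \eqref{eq:pt-nabla}, \eqref{eq:est-pt-Gamma} and \eqref{eq:Commutator-pT-nablas}, each picking up a factor $\|\p_T\Gamma(g)\|_{H^{N-3}} \lesssim \Lambda(T) \lesssim \varepsilon e^{-T}$ together with lower-order norms of $\rho$.

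\textbf{Step 2: Substitute the evolution equation.} From \eqref{eq:EoM-fluid}, $\p_T \rho = (u^0)^{-1}\big(\tau u^a \nabla_a \rho + \rho F_\rho\big)$. Commuting $\nabla^k$ through: the term $\nabla^k(\rho F_\rho / u^0)$ is estimated by Lemma \ref{lem:SobProdEst}, the bound \eqref{eq:est-u0} on $u^0$, and the $F_\rho$ estimate from Lemma \ref{lem:Fluid-Source-Estimates}, namely $\|F_\rho\|_{H^{N-2}} \lesssim |\tau|\|u\|_{H^{N-1}} + \|\Sigma\|_{H^{N-2}} + \Lambda(T) \lesssim \varepsilon e^{\max\{-1+\mu,-\lambda\}T}$; this gives $\lesssim \varepsilon e^{\max\{-1+\mu,-\lambda\}T}\|\rho\|_{H^{N-2}}^2$ after pairing with $\nabla^k \rho$. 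The transport term $\tau (u^0)^{-1} u^a \nabla_a \rho$ is the delicate one: at the top order $k = N-2$, $\nabla^{N-2}$ acting on $\nabla_a \rho$ would require $\|\rho\|_{H^{N-1}}$, which we do not control. Here one writes the top-order contribution as $\big(\tau (u^0)^{-1} u^a \nabla_a \nabla^{N-2}\rho, \nabla^{N-2}\rho\big)_{g}$ plus commutator terms (the commutators redistribute derivatives so that no more than $N-2$ land on $\rho$, and are controlled using $|\tau|\|u\|_{H^{N-1}} \lesssim \varepsilon e^{(-1+\mu)T}$ times $\|\rho\|_{H^{N-2}}^2$). For the genuinely top-order piece, integrate by parts exactly as in \eqref{eq:symIBP} with $T^a = \tau (u^0)^{-1} u^a$ and $V = \nabla^{N-2}\rho$ (treated as a tensor): one derivative is moved off $\rho$ and onto the coefficient $T^a$, leaving
\eq{\notag
\big| \big(\tau (u^0)^{-1} u^a \nabla_a \nabla^{N-2}\rho, \nabla^{N-2}\rho\big)_{g}\big| \lesssim \| \tau (u^0)^{-1} u^a \|_{H^3} \|\nabla^{N-2}\rho\|_{L^2}^2 \lesssim |\tau|\| u\|_{H^3}\, \mathcal{E}_{N-2}[\rho],
}
and $|\tau|\|u\|_{H^3} \lesssim \varepsilon e^{(-1+\mu)T}$ by the bootstrap assumption, since $H^3 \hookrightarrow H^{N-1}$ for $N \geq 6$ (more precisely $\|u\|_{H^3} \leq \|u\|_{H^{N-1}}$). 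Note $\|(u^0)^{-1}u^a\|_{H^3} \lesssim \|u\|_{H^3}$ using the Sobolev algebra property and the bounds on $u^0$ and $(u^0)^{-1}$.

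\textbf{Step 3: Assemble.} Summing over $0 \leq k \leq N-2$ and collecting all contributions — the $\hN$ term, the $\p_T$-commutator terms, the $\rho F_\rho$ term, the transport commutator terms, and the integrated-by-parts top-order transport term — every term is bounded by $\varepsilon e^{\max\{-1+\mu,-\lambda\}T}\mathcal{E}_{N-2}[\rho]$, which yields the claimed estimate $|\p_T \mathcal{E}_{N-2}[\rho](T)| \lesssim \varepsilon e^{\max\{-1+\mu,-\lambda\}T}\mathcal{E}_{N-2}[\rho](T)$. \textbf{The main obstacle} is the top-order transport term: one must recognize that the structure $u^\alpha \nabla_\alpha$ in the dust equation, after rescaling, produces a term of the form $T^a \nabla_a V \cdot V$ whose antisymmetry under integration by parts kills the dangerous $\|\rho\|_{H^{N-1}}$ term entirely — this is precisely why the dust energy density closes at the \emph{same} order $N-2$ it starts at, despite the $\tau \nabla_i u^i$ source in \eqref{eq:EoM-fluid}, and it is the reason the estimate has no $\mathcal{E}[\rho]^{3/2}$ term (the substituted $F_\rho$ factor is already globally small, so no cubic self-interaction survives at this order).
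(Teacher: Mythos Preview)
Your proposal is correct and follows essentially the same approach as the paper's proof: differentiate the energy via \eqref{eq:hdelIntegral}, substitute the $\rho$-equation from \eqref{eq:EoM-fluid}, control the $F_\rho$ source using Lemma~\ref{lem:Fluid-Source-Estimates}, and handle the top-order transport term by the symmetric integration-by-parts trick of \eqref{eq:symIBP}. The paper additionally spells out the $\p_T g^{ab}$ contributions from differentiating the metric factors in $|\nabla^k\rho|_g^2$ and the Riemann-curvature commutator $[\nabla_i,\nabla_{a_1}\cdots\nabla_{a_k}]\rho$ explicitly, but these are exactly the ``commutator terms'' you allude to and are bounded just as you indicate.
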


\begin{proof}
Recall from \eqref{eq:EoM-fluid} that the equation of motion for $\rho$ is
\eq{\alg{\notag
\p_T \rho
&= (u^0)^{-1}\rho F_\rho+ \tau (u^0)^{-1} u^a \nabla_a \rho.
}}
Let $3 \leq k \leq N-2$.
Using \eqref{eq:hdelIntegral} and the integration by parts estimate \eqref{eq:thirdIBP-2}, we have
\eq{\alg{\notag
\p_T E_k[\rho](T) &\lesssim \| \hN \|_\infty E_k[\rho](T)  + \int_M \p_T (|\nabla^k \rho|^2)\mu_g + \int_M X^c \bnab_c (|\nabla^k \rho|^2) \mu_g
\\
&\lesssim
\Big( \| \hN \|_{H^2} + \| X\|_{H^3} \Big) E_k[\rho](T)
+\int_M  \p_T \Big( g^{a_1 b_1} \cdots g^{a_k b_k}\Big) (\nabla^{a_1}\cdots \nabla^{a_k}\rho)(\nabla_{a_1}\cdots \nabla_{a_k}\rho )\mu_g
\\
& \quad+ \int_M  g^{a_1 b_1} \cdots g^{a_k b_k}\Big( \p_T \nabla_{a_1}\cdots \nabla_{a_k}\rho\Big) (\nabla_{b_1}\cdots \nabla_{b_k}\rho )\mu_g
\\
&\lesssim
\Big( \| \hN \|_{H^2} + \| X\|_{H^3} + \| \Sigma\|_{H^2} \Big) E_k[\rho](T)
+ I_k
}}
where we have defined
\eq{\notag
I_k \define \int_M  g^{a_1 b_1} \cdots g^{a_k b_k}\Big( \p_T \nabla_{a_1}\cdots \nabla_{a_k}\rho\Big) (\nabla_{b_1}\cdots \nabla_{b_k}\rho )\mu_g.
}

We focus on the integrand of $I_k$, commuting the derivatives to find
\eq{\alg{\notag
\p_T \nabla_{a_1}\cdots \nabla_{a_k}\rho
&= 
	\nabla_{a_1}\cdots \nabla_{a_k}\Big( \frac{\rho}{u^0} F_\rho+ \tau \frac{u^i}{u^0}  \nabla_i\rho \Big)+ [\p_T, \nabla_{a_1}\cdots \nabla_{a_k}]\rho.
}}
We first look at the term $\tau u^i \nabla_i\rho$  which contains the most number of derivatives on $\rho$. 
By commuting and integration by parts, this term becomes:
\eq{\alg{\notag
\int_M  &\tau \frac{u^i}{u^0} \big( \nabla_{a_1}\cdots \nabla_{a_k}\nabla_i\rho \big) (\nabla^{a_1}\cdots \nabla^{a_k}\rho )\mu_g
\\
&=  
\int_M  \tau \frac{u^i}{u^0} \big( [\nabla_{a_1}\cdots \nabla_{a_k},\nabla_i]\rho \big) (\nabla^{a_1}\cdots \nabla^{a_k}\rho )\mu_g
- \int_M  \tau \nabla_i \big(\frac{u^i}{u^0} \big) \big( \nabla_{a_1}\cdots \nabla_{a_k}\rho \big) (\nabla^{a_1}\cdots \nabla^{a_k}\rho )\mu_g
\\
& \quad - \int_M \tau \frac{u^i}{u^0}  \big( \nabla^{a_1}\cdots \nabla^{a_k}\rho \big) ([\nabla_i ,\nabla_{a_1}\cdots \nabla_{a_k}]\rho )\mu_g
- \int_M \tau\frac{u^i}{u^0}  \big( \nabla^{a_1}\cdots \nabla^{a_k}\rho \big) (\nabla_{a_1}\cdots \nabla_{a_k} \nabla_i \rho )\mu_g.
}}
Rearranging, we find
\eq{\alg{\notag
\int_M  &\tau \frac{u^i}{u^0}  \big( \nabla_{a_1}\cdots \nabla_{a_k}\nabla_i\rho \big) (\nabla^{a_1}\cdots \nabla^{a_k}\rho )\mu_g
\\
&=  
\int_M  \tau \frac{u^i}{u^0} \big( [\nabla_{a_1}\cdots \nabla_{a_k},\nabla_i]\rho \big) (\nabla^{a_1}\cdots \nabla^{a_k}\rho )\mu_g
- \frac12 \int_M  \tau \nabla_i \big(\frac{u^i}{u^0} \big)|\nabla^k \rho|^2 \mu_g.
}}

We see that we need to study two commutator error terms. Using \eqref{eq:Commutator-pT-nablas}, and noting that $\rho$ is a scalar, we see the first error term takes the form
\eq{\notag
|[\p_T,\nabla_{a_1}\cdots \nabla_{a_k}]\rho|
\lesssim 
\sum_{\substack{|I|+|J|= k-1 \\ |J|\geq 1}} |\nabla^{I} (\p_T \Gamma[g] ) ||\nabla^{J} \rho|.
}
Thus, using \eqref{eq:est-pt-Gamma},
\eq{\notag
\Big| \int_M \big( [\p_T, \nabla_{a_1}\cdots \nabla_{a_k}]\rho \big) (\nabla^{a_1}\cdots \nabla^{a_k}\rho )\mu_g\Big|
\lesssim \| \rho\|_{H^{k}}^2 \| \p_T \Gamma[g]\|_{H^{N-2}} 
\lesssim \varepsilon e^{-\lambda T}\mathcal{E}_k[\rho](T).
}

For the second commutator error term, we use \eqref{eq:Commutator-nabla-nablas} to find
\eq{\notag
|[\nabla_i,\nabla_{a_1}\cdots \nabla_{a_k}]\rho |\lesssim 
\sum_{\substack{|I|+|J| = k-1\\ |J| \geq 1}}| \nabla^{I} \text{Riem} || \nabla^{J} \rho|,
}
and so we have
\eq{\notag
\Big| \int_M  \tau u^i \big( [\nabla_{a_1}\cdots \nabla_{a_k},\nabla_i]\rho \big) (\nabla^{a_1}\cdots \nabla^{a_k}\rho )\mu_g\Big|
\lesssim |\tau| \| u \|_{H^2} \| \text{Riem} \|_{H^{k-2}}\| \rho\|_{H^{k}}^2
\lesssim 
\varepsilon e^{(-1+\mu)T}
\mathcal{E}_k[\rho](T).
}

Putting this all together, and using Lemma \ref{lem:Fluid-Source-Estimates}, we obtain
\eq{\alg{\notag
|I_k| & \lesssim
\| (u^0)^{-1} \rho \|_{H^k} \|\rho\|_{H^k} \| F_\rho \|_{H^k} + |\tau| \| \rho \|_{H^k} \| \nabla (\tfrac{u^i}{u^0}) \cdot \nabla_i \rho \|_{H^{k-1}} + |\tau| \| \tfrac{u^i}{u^0}\|_{H^3} E_k[\rho](T)
\\
& \quad +\varepsilon e^{\max\{-1+\mu,-\lambda\} T}
\mathcal{E}_k[\rho](T).
\\
&\lesssim
	\varepsilon e^{\max\{-1+\mu,-\lambda\} T} \mathcal{E}_k[\rho](T).
}}
By summing over $k$ we can conclude
\eq{\notag
|\p_T \mathcal{E}_{N-2}[\rho](T)| \lesssim \varepsilon e^{\max\{-1+\mu,-\lambda\} T}\mathcal{E}_{N-2}[\rho](T).
}
\end{proof}

\begin{defn}[$\mathcal{E}_k{[}u{]}$]
Define the functionals
\eq{\alg{\notag
E_k[u](T) &\define \frac12 \int_M |\nabla^k u(T, \cdot)|^2 \mu_g,
\qquad
\mathcal{E}_k[u](T) &\define \sum_{0\leq \ell \leq k} E_\ell[u](T).
}}
\end{defn}

\begin{prop}[Evolution of $\mathcal{E}_{N-1}{[}u{]}$ fors spatial fluid velocity components]
\label{prop:del-T-E-u} 
The following estimate holds,
\eq{\alg{\notag
|\p_T \mathcal{E}_{N-1}[u](T)| 
&\lesssim \varepsilon e^{(-1+\mu)T} \mathcal{E}_{N-1}[u](T)  + |\tau|^{-1} \Lambda (T)  (\mathcal{E}_{N-1}[u](T))^{1/2}.
}}
\end{prop}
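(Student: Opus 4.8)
\textbf{Proof strategy for Proposition \ref{prop:del-T-E-u}.}

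The plan is to mimic the structure of the proof of Proposition \ref{prop:del-T-E-rho}, taking $\p_T$ of each $E_k[u](T)$ for $3 \leq k \leq N-1$ and controlling the right-hand side. First I would use the identity \eqref{eq:hdelIntegral} together with the integration-by-parts estimate \eqref{eq:symIBP} to write
\eq{\notag
\p_T E_k[u](T) \lesssim \big( \| \hN\|_{H^2} + \|X\|_{H^3} + \|\Sigma\|_{H^2}\big) E_k[u](T) + I_k,
}
where the first group of terms is directly bounded by $\varepsilon e^{-T} E_k[u](T)$ under the bootstrap assumptions, and $I_k \define \int_M g^{a_1b_1}\cdots g^{a_kb_k}(\p_T\nabla_{a_1}\cdots\nabla_{a_k}u^i)(\nabla_{b_1}\cdots\nabla_{b_k}u^j) g_{ij}\,\mu_g$ contains the genuine content. (Some care is needed because $u^i$ is a vector rather than a scalar, so the commutator identities \eqref{eq:pt-nabla}, \eqref{eq:Commutator-pT-nablas} and \eqref{eq:Commutator-nabla-nablas} will generate additional $\p_T\Gamma[g]$ and $\Riem$ contractions acting on the free index, but these are lower-order and handled exactly as in the scalar case.)

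Next I would substitute the equation of motion for $u^j$. From \eqref{eq:EoM-fluid}, dividing by $u^0$,
\eq{\notag
\p_T u^j = \tau (u^0)^{-1} u^a \nabla_a u^j + \tau^{-1} u^0 N \nabla^j N + (u^0)^{-1} F_{u^j}.
}
Commuting $\nabla^k$ through and inserting into $I_k$ yields three types of term. The transport term $\tau (u^0)^{-1} u^a \nabla_a u^j$ carries the top number of derivatives on $u$, so — exactly as for $\rho$ — I would integrate by parts in the $\nabla_a$ direction to convert $\int (u^a \nabla^k \nabla_a u)(\nabla^k u)$ into $-\tfrac12 \int \nabla_a(\tau (u^0)^{-1} u^a) |\nabla^k u|^2$ plus commutator terms $[\nabla^k,\nabla_a]u$ involving only $\Riem$, and these are bounded by $|\tau| \|u\|_{H^3} E_k[u] \lesssim \varepsilon e^{(-1+\mu)T} E_k[u]$. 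The source term $(u^0)^{-1} F_{u^j}$ is controlled by Lemma \ref{lem:Fluid-Source-Estimates}: $\|F_{u^j}\|_{H^{N-1}} \lesssim |\tau|^{-1}\Lambda(T) + \varepsilon^2 e^{-\nu T}$, and after Cauchy--Schwarz this contributes $\big(|\tau|^{-1}\Lambda(T) + \varepsilon^2 e^{-\nu T}\big) (\mathcal{E}_{N-1}[u])^{1/2} \lesssim |\tau|^{-1}\Lambda(T)(\mathcal{E}_{N-1}[u])^{1/2} + \varepsilon e^{-\nu T}\mathcal{E}_{N-1}[u]$, absorbing the quadratic piece into the first term of the claimed estimate with $\varepsilon e^{(-1+\mu)T}$ weight (choosing $\mu$ and $\nu$ compatibly; note $e^{-\nu T}\leq e^{(-1+\mu)T}$ for $\nu$ chosen appropriately, or one simply keeps this as a harmless term). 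The lapse term $\tau^{-1} u^0 N \nabla^j N$ is the one to watch: it appears to carry an inverse power of $\tau$, but $\|N \nabla N\|_{H^{N-1}} = \|N\nabla\hN\|_{H^{N-1}} \lesssim \|\nabla\hN\|_{H^{N-1}} \lesssim \|\hN\|_{H^N} \leq \Lambda(T)$, so after Cauchy--Schwarz this term contributes $|\tau|^{-1}\Lambda(T)(\mathcal{E}_{N-1}[u])^{1/2}$, matching the claimed right-hand side exactly.

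\textbf{Main obstacle.} The delicate point is the bookkeeping of the $|\tau|^{-1}$ weights: several terms (the lapse gradient, $F_{u^j}$) genuinely produce a factor $|\tau|^{-1}\Lambda(T)$, and one must confirm this is exactly what the bootstrap hierarchy tolerates — indeed $|\tau|^{-1}\Lambda(T) \lesssim |\tau|^{-1}\varepsilon e^{-T} = \varepsilon e^{T} e^{-T}\cdot|\tau|^{-1} \cdot |\tau| = \varepsilon$ up to constants, recalling $|\tau| \sim e^{-T}$, so this drives the $e^{\mu T}$ growth of $\|u\|_{H^{N-1}}$ in \eqref{Bootstraps} and is the reason the $\mu$-loss is unavoidable here. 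The other subtlety is making sure that when distributing $\nabla^k$ across products like $(u^0)^{-1} u^a \nabla_a u^j$, the top-order term is isolated cleanly (all other derivative distributions put at most $k-1$ derivatives on the highest $u$-factor and are closed by Sobolev product estimates, Lemma \ref{lem:SobProdEst}, together with \eqref{eq:est-u0}). No genuinely new idea beyond the proof of Proposition \ref{prop:del-T-E-rho} is required; summing over $3 \leq k \leq N-1$ and combining with the low-order bounds gives the stated inequality.
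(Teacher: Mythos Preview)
Your proposal is correct and follows essentially the same approach as the paper's proof: the same decomposition of $I_k$ into transport, lapse-gradient, and $F_{u^j}$ source contributions, the same integration-by-parts on the transport piece, and the same commutator bookkeeping (with the vector-versus-scalar adjustment you correctly flag). One small correction: the $\varepsilon^2 e^{-\nu T}$ contribution from $F_{u^j}$ is not absorbed into the $\varepsilon e^{(-1+\mu)T}\mathcal{E}_{N-1}[u]$ term (since $\nu$ need not exceed $1-\mu$), but rather into the $|\tau|^{-1}\Lambda(T)(\mathcal{E}_{N-1}[u])^{1/2}$ term via $\varepsilon^2 e^{-\nu T}\lesssim \varepsilon \sim |\tau|^{-1}\Lambda(T)$ under the bootstrap --- your parenthetical fallback is the right route.
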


\begin{proof}
Let $3 \leq k \leq N-1$. 
The proof is similar to the estimate for $\mathcal{E}_k[\rho](T)$. As in the proof of Proposition \ref{prop:del-T-E-rho}, we obtain
\eq{\alg{\notag
\p_T E_k[u](T) 
&\lesssim
\Big( \| \hN \|_{H^2} + \| X\|_{H^3} + \| \Sigma\|_{H^2} \Big) E_k[u](T)
+ I_k',
}}
where we have defined
\eq{\notag
I'_k \define \int_M  g_{ij}g^{a_1 b_1} \cdots g^{a_k b_k}\big( \p_T \nabla_{a_1}\cdots \nabla_{a_k}u^i\big) (\nabla_{b_1}\cdots \nabla_{b_k}u^j)\mu_g.
}
The integration by parts analysis on $I_k'$ follows unchanged to $I_k$. The main difference now arises in the commutator estimates since $u^a$ is a vector while $\rho$ is a scalar. 
For example, we now have an error term of the form
\eq{\notag
|[\p_T,\nabla_{a_1}\cdots \nabla_{a_k}]u^i |
\lesssim 
\sum_{|I|+|J|= k-1} |\nabla^{I} (\p_T \Gamma[g] )|| \nabla^{J} u^i|.
}
Thus, using \eqref{eq:est-pt-Gamma},
\eq{\notag
\Big| \int_M \big([\p_T, \nabla_{a_1}\cdots \nabla_{a_k}]u^i \big) (\nabla^{a_1}\cdots \nabla^{a_k}u_i )\mu_g\Big|
\lesssim \| u^a\|_{H^{k}}^2 \| \p_T \Gamma(g)\|_{H^{N-2}}
\lesssim \varepsilon e^{-\lambda T} \mathcal{E}_k[u^a](T) .
}

Similarly the second commutator error term
looks like
\eq{\notag
|[\nabla_i,\nabla_{a_1}\cdots \nabla_{a_k}]u^i |
\lesssim \sum_{|I|+|J| = k-1} |\nabla^{I} \text{Riem} || \nabla^{J} u^i|,
}
and so we have
\eq{\notag
\Big| \int_M  \tau u^a \big( [\nabla_{a_1}\cdots \nabla_{a_k},\nabla_a]u^i \big) (\nabla^{a_1}\cdots \nabla^{a_k}u_i )\mu_g\Big|
\lesssim |\tau|\| u\|_{H^2} \| \text{Riem} \|_{H^{k-1}}\| u\|_{H^{k}}^2
\lesssim \varepsilon e^{(-1+\mu) T}\mathcal{E}_k[u](T) .
}

Recalling from \eqref{eq:EoM-fluid} the equation of motion
\eq{\notag
\p_Tu^j
= \tau (u^0)^{-1} u^i \nabla_i u^j + 
\tau^{-1}u^0 N\nabla^j N+(u^0)^{-1}F_{u^j},
}
we obtain
\eq{\alg{\notag
|I'_k| & \lesssim
|\tau| \| (u^0)^{-1}u^a \|_{H^3}E_k[u](T) 
+
|\tau| \| u \|_{H^k} \| \nabla (\tfrac{u^i}{u^0}) \cdot \nabla_i u^a \|_{H^{k-1}}
+ |\tau|^{-1} \| \hN\|_{H^{k+1}} \|u\|_{H^k} \\&\quad
+ \| (u^0)^{-1}F_{u^j} \|_{H^k}\|u\|_{H^k} 
 +\varepsilon e^{\max\{-1+\mu,-\lambda\} T} E_k[u](T) 
\\
&\lesssim \varepsilon e^{(-1+\mu)T} E_k[u](T) + \Big( |\tau|^{-1} \| \hN\|_{H^{k+1}}  + \| F_{u^j}\|_{H^k} \Big)E_k[u](T)^{1/2}.
}}
The conclusion then follows by summing over $k$ and using the source estimates for $F_{u^j}$ given in Lemma \ref{lem:Fluid-Source-Estimates}. 
\end{proof}

\section{Proof of Theorem \ref{thm-1}}\label{sec:EndBoot}
In this final section we bring together our main estimates to conclude the bootstrap argument. One standard, yet technical, aspect of the argument is deferred to Appendix \ref{appendix:Bootstrap}. 

\begin{proof}
 Smallness of the initial data guarantees the existence of a constant $C_0>0$ such that
\eq{\alg{\notag
\| &\hN \|_{H^N} + \| X\|_{H^N} + \| \p_T N \|_{H^{N-1}} + \| \p_T X \|_{H^{N-1}} 
\\&+ (\Eg_{N-1}(T))^{1/2}
+ (\mathcal{E}^g_{\delu,N-1}(T_0))^{1/2}
+ \mathcal{E}_{N-2}[\rho](T_0)^{1/2} 
+\mathcal{E}_{N-1}[u^a](T_0)^{1/2}\leq C_0 \varepsilon.
}}

Proposition \ref{prop:del-T-E-rho}, together with Gr\"onwall's inequality, implies 
\eq{\notag
\| \rho\|_{H^{N-2}} \leq \mathcal{E}_{N-2}[\rho](T_0)^{1/2} \exp \Big( \int_{T_0}^T C\varepsilon e^{\max\{-1+\mu,-\lambda\} s}  \di s \Big)^{1/2}
\leq C_0 \varepsilon \exp(C' \varepsilon). 
}
Thus by Lemma \ref{lem:lapse-shift}, and shrinking $\varepsilon$ as needed,
\eq{\alg{\notag
\| \hN \|_{H^{N}} + \| X \|_{H^N}   & \leq	|\tau| C\| \rho \|_{H^{N-2}} + \varepsilon^2 Ce^{\max\{-2\lambda, -2+\mu\}T} 
\leq C C_0 \varepsilon e^{-T}.
}}
Turning next to Lemma \ref{lem:delT-lapse-shift}, we find
\eq{\notag
\| \p_T N \|_{H^{N-1}}  + \| \p_T X \|_{H^{N-1}} 
\leq 
	C \| \hN\|_{H^N}+C \| X\|_{H^N} 
+ |\tau|C \| \rho \|_{H^{N-2}} + \varepsilon^2 C e^{\max\{-2\lambda, -2+\mu\}T}
\leq C_0 \varepsilon e^{-T}. 
}

Next, we divide the estimate in Proposition \ref{prop:del-T-E-u} by the square root of the energy, to obtain
\eq{\notag
|\p_T \mathcal{E}_{N-1}[u](T)^{1/2}| 
\leq  C \varepsilon e^{(-1+\mu)T} \mathcal{E}_{N-1}[u](T)^{1/2} + C C_0 \varepsilon.
}
An application of Gr\"onwall's inequality then produces
\eq{\alg{\notag
\|u\|_{H^{N-1}} 
%\leq \mathcal{E}_{N-1}[u](T)^{1/2}
&\leq \Big(\mathcal{E}_{N-1}[u](T_0)^{1/2} + \int_{T_0}^T C C_0\varepsilon  \di s \Big) \exp  \Big(\int_{T_0}^TC \varepsilon e^{(-1+\mu)s}\di s\Big)
\\&
\leq 
C\big(C_0 \varepsilon + C_0\varepsilon (T-T_0) \big).
}}
Up to possibly redefining $T_0$, see the discussion around \eqref{eq:boot-T} in Appendix \ref{appendix:Bootstrap}, these inequalities now imply 
\eq{\notag
\Lambda (T) \leq C C_0 \varepsilon e^{-T}.
}

Finally adding together the results of Proposition \ref{prop-lower-order-geometry-est} and Theorem \ref{thm-top-order-geom} and using the above improved estimates, we find
\eq{\alg{\notag
\p_T \Etot
&\leq 
-2\alpha \Etot
+ CC_0 \varepsilon e^{-1+\mu} \Etot
+ CC_0 \varepsilon e^{-T} (\Etot)^{1/2}
+ C(\Etot)^{3/2}
+ C\Lambda(T)^2.
}}
Thus, by the bootstrap argument presented in Appendix \ref{appendix:Bootstrap}, 
we obtain
\eq{\notag
\Edelu+ E^g_{N-1} \leq \frac12 C_1^2 \varepsilon^2 e^{-2\lambda T},
}
where $C_1 \gg C_0$. 
Finally, as a consequence of Corollary \ref{corol:Coercive-top-order-geom},
\eq{\alg{\notag
\| g-\gamma\|_{H^N}^2 +\| \Sigma\|_{H^{N-1}}^2 &\lesssim 
\Edelu
+ E^g_{N-1} 
+ \Lambda(T)^2
\leq \frac34 C_1^2 \varepsilon^2 e^{-2\lambda T}.
}}
\end{proof}

\appendix
\section{Background geometry}\label{appendix:Background}
In this appendix we derive the background solutions given in Remark \ref{rem:BackgroundSolution}. 
On the Milne background the ADM variables induced on a $t_c=const$ slice are
\eq{\notag
(\phTmet_{ab}, \tilde{k}_{ab}, \tlapse, \tshift^a)|_B = (\tfrac{t_c^2}{9} \gamma_{ab}, -\tfrac{1}{t_c}\gamma_{ab}, 1, 0) , \quad \tau = \phTmet^{ab}\tilde{k}_{ab} = -\tfrac{3}{t_c}.
}
Condition \eqref{eq:FluidNormalisation} implies
$
(\tu^{t_c})^2 = 1+\tfrac{t_c^2}{9}\gamma(\tu, \tu).
$
The fluid equations of motion in cosmological coordinates $(t_c, x^i)$ read
\eq{\alg{\notag
\tu^{t_c}\p_{t_c} \ln \trho+\tu^i \p_i \ln \trho+3 \p_{t_c} \tu^{0}+ \p_i \tu^i+3t_c^{-1}\tu^{t_c}+\Gamma^i_{ij}[\gamma]\tu^j&=0, \\
 \tu^{0}\p_{t_c}\tu^{0} +\tu^i\p_i \tu^{0}-\tfrac{1}{t_c}\gamma(\tu, \tu) &=0,\\
\tu^{t_c}\p_{t_c}\tu^{i} +\tu^i\p_i \tu^{i}-2\tfrac{t_c}{9}\tu^{t_c}\tu^i+ \Gamma^i_{jk}[\gamma]\tu^j\tu^k&=0.
}}
Picking $\tu^{t_c}=1, \tu^i=0$ we have $
\p_{t_c} \ln \trho + 3t_c^{-1}= 0$. 
Thus on the background the fluid solution is
\eq{\notag
(\tu^{t_c}, \tu^i, \trho)|_B = (1, 0, \trho_0 t_c^{-3})
}
where $\trho_0>0$ is a constant. By rescaling the variables as in  Definition \ref{defn:rescaling} we arrive at Remark \ref{rem:BackgroundSolution}. 

\section{Matter variables}\label{appendix:Matter}
In this appendix we discuss the rescaled components of the energy momentum tensor given in Definition \ref{Def-resc-matter}. To see the origin of the various terms, we follow \cite{AF20} and introduce the following notation for the matter variables
\eq{\alg{\notag
\tilde{\EnergyDensity} &\define \tEMT^{\mu\nu}n_\mu n_\nu=\tEMT^{00}\tlapse^2, \qquad
&\tilde\jmath^a &\define-\tEMT^{\mu\nu}\perp^a{}_\mu n_\nu ,
\\
\tilde{\eta} &\define \tilde{E}+\phTmet^{ab}\tEMT_{ab},
\qquad
&\tilde{S}_{ab} &\define \tEMT_{ab}-\frac12 \text{Tr}_{\phFmet} \tEMT \cdot  \phTmet_{ab},
\qquad
\tEMT^{ab} \define \trho\tu^a\tu^b \phFmet^{ab}.
}}
Using \eqref{eq:Eulerform1}  we find 
\eq{\alg{\notag
\text{Tr}_{\phFmet} \tEMT &= \phFmet^{\mu\nu}\tEMT_{\mu\nu}=-\trho,  \\
\tEMT_{ab}  &= \phFmet_{a\mu} \phFmet_{b\nu} \tEMT^{\mu\nu} = \trho( \tshift_a \tu^0 + \phTmet_{ai}\tu^i)(\tshift_b \tu^0 + \phTmet_{bj}\tu^j ),
\\
\phTmet^{ab}\tEMT_{ab} 
&=  \trho(\tshift^a\tshift_a (\tu^0)^2 + 2 \tshift_b \tu^b \tu^0 + \phTmet_{ab}\tu^a\tu^b ).
}}
A calculation then yields
\eq{\alg{\notag
\tilde{\EnergyDensity} &=|\tau|^3 \rho  (u^0)^2 N^2 ,
\qquad
\tilde\jmath^a =|\tau|^5 \rho u^0 u^a N  ,
\\
\tilde{\eta} &= |\tau|^3 \rho (u^0)^2 N^2
+ |\tau|^3 \rho \Big(X_a X^a (u^0)^2 + 2\tau X_b u^b u^0 + \tau^2 g_{ab} u^a u^b \Big),
\\
\tilde{S}_{ab} &= |\tau|  \rho\Big( X_a X_b (u^0)^2 + \tau u^0 u^c X_b g_{ac} + \tau u^0 u^c X_a g_{bc} + \tau^2 g_{ac}g_{bd}u^cu^d\Big)
 +|\tau|  \frac{\rho}{2}g_{ab} ,
\\ 
\tEMT^{ab} &= |\tau|^7\rho u^au^b.
}}
Finally, we introduce the following rescaled variables 
\eq{\alg{\notag
\EnergyDensity &\define |\tau|^{-3} \tilde{E} ,
\quad
&\jmath^a &\define |\tau|^{-5} \tilde{\jmath}^a, \\
\eta &\define |\tau|^{-3}\tilde{\eta},
\quad
&S_{ab} &\define |\tau|^{-1} \tilde{S}_{ab},
\quad
\underline{T}^{ab} &\define |\tau|^{-7}\tEMT^{ab},
}}
which yield the expressions given in Definition \ref{Def-resc-matter}. 

\section{Bootstrap}
\label{appendix:Bootstrap}
In this appendix we detail how the bootstrap assumption for $\Etot (=f)$ is closed. 
Let $\lambda<1$ and $\mu \ll 1$ be fixed constants. 
Let $f:[T_0,T_*)\to [0,\infty)$  be a continuous function, where $T_0 \leq T_* \leq \infty$. Suppose $f(T_0)\leq C_0^2 \varepsilon^2$. Suppose also, that for each $T_0 \leq T<T_*$ the assumption $f(T) \leq C_1^2 \varepsilon^2 e^{-2\lambda T}$ allows us to show
\eq{\notag
\p_T f \leq - 2\alpha f + C C_0 \varepsilon e^{(-1+\mu)T} f + C C_0 \varepsilon e^{-T} f^{1/2} + C f^{3/2} + C C_{0}^2\varepsilon^2 e^{-2T}.
}
Since $\alpha \in [1-\delta_\alpha, 1]$ we pick a $\zeta$ such that $\lambda < \zeta <1$ and $\lambda < \alpha\zeta<\tfrac12(1+\lambda)$.\footnote{For example if $\lambda = 0.75, \mu = 0.001$ we can choose $\varepsilon$ sufficiently small that $\alpha = 0.98$ and then $\zeta = 0.8$ works.}  Define also $\beta>0$ to be the difference $\beta \define \alpha \zeta - \lambda$. We then have,
\eq{\alg{\notag
\p_T (e^{2\alpha \zeta T} f)
&\leq -2\alpha (1-\zeta) e^{2\alpha \zeta T}f + C C_0 \varepsilon e^{(-1+\mu+ 2\alpha \zeta )T} f + C C_0 \varepsilon e^{(-1+ 2\alpha \zeta) T} f^{1/2} + C f^{3/2} e^{2\alpha \zeta T}
\\&\quad 
+ C C_{0}^2\varepsilon^2 e^{(-2+ 2\alpha \zeta )T}
\\
&\leq -\Big( 2\alpha (1-\zeta) - C C_0 \varepsilon e^{(-1+\mu)T} - C f^{1/2}\Big) e^{2\alpha \zeta T}f + C C_0 \varepsilon e^{(-1+ 2\alpha \zeta) T} f^{1/2} 
+ C C_{0}^2\varepsilon^2.
}}
Substituting in the bootstrap assumption, and choosing $\varepsilon$ sufficiently small:
\eq{\alg{\notag
\p_T (e^{2\alpha \zeta T} f)
&\leq -\Big( 2\alpha (1-\zeta) - C C_0 \varepsilon - C C_1 \varepsilon \Big) e^{2\alpha \zeta T}f + C C_0 \varepsilon e^{(-1+ 2\alpha \zeta) T} f^{1/2} 
+ C C_{0}^2\varepsilon^2
\\
&\leq C C_1 C_0 \varepsilon^2 e^{(-1+ 2\alpha \zeta-\lambda) T} 
+ C C_{0}^2\varepsilon^2
\\
&\leq C C_1 C_0 \varepsilon^2.
}}
Thus, by Gr\"onwall's inequality
\eq{\alg{\notag
e^{2\alpha \zeta T} f &
\leq C_0^2 \varepsilon^2 +  \int_{T_0}^T C C_1 C_0 \varepsilon^2 \di s
\qquad \Rightarrow
f \leq \big(C C_0^2 \varepsilon^2  + C C_0 C_1 \varepsilon^2(T-T_0) \big)e^{-2\beta T}e^{-2\lambda T}.
}}
Let $T_0^*$ be such that for all $T\geq T_0^*$,
\eq{\label{eq:boot-T}
\big(C C_0^2  + C C_0 C_1 (T-T_0) \big)e^{-2\beta T} < \frac12 C_1^2.
}
We then \emph{a fortiori} choose $T_0\geq T_0^*$. So we have shown that for all $T'\in [T_0,T_*)$, if $f(T) \leq C_1^2 \varepsilon^2 e^{-2\lambda T}$ for each $T \in [T_0, T']$ then in fact $ f(T) \leq \frac12 C_1^2 \varepsilon^2 e^{-2\lambda T}$. Thus  $ f(T) \leq \frac12 C_1^2 \varepsilon^2 e^{-2\lambda T}$ for all $T \in [T_0, T_*)$.  

%%%%%%%%%%%%%%%%%%%%%%%%%%%%%%%%%%%%%

\end{document}